\documentclass[aps,10pt,groupedaddress,prx,superscriptaddress,twocolumn,balancelastpage]{revtex4-2}

\usepackage[tighter]{newpxtext}
\usepackage{newpxmath}

\RequirePackage[utf8]{inputenc}
\RequirePackage[english]{babel}
\RequirePackage[T1]{fontenc}
\RequirePackage{amsmath}
\RequirePackage{amsthm}
\RequirePackage{amssymb}
\RequirePackage{graphicx}
\RequirePackage{subcaption}
\RequirePackage{amsfonts}
\RequirePackage{amsbsy}
\RequirePackage{colonequals}
\RequirePackage{nccmath}
\RequirePackage{mathtools}
\RequirePackage{stmaryrd}
\SetSymbolFont{stmry}{bold}{U}{stmry}{m}{n}
\RequirePackage{bm}
\RequirePackage{dsfont}
\RequirePackage{lipsum}
\RequirePackage[hidelinks]{hyperref}
\RequirePackage{ragged2e}
\RequirePackage{braket}
\RequirePackage{xifthen}
\RequirePackage{makecell}
\RequirePackage[inline]{enumitem}
\RequirePackage{booktabs}
\RequirePackage{tikz}
\newcommand{\tikzcmfont}{\fontfamily{cmr}\selectfont}

\tikzset{
  cm text/.style={
    every node/.append style={font=\tikzcmfont}
  }
}

\RequirePackage[mode=build]{standalone}
\RequirePackage{tcolorbox}
\RequirePackage{mdframed}
\RequirePackage{geometry}
\RequirePackage{titletoc}
\RequirePackage[createShortEnv]{proof-at-the-end}
\RequirePackage[classfont=sanserif,langfont=sanserif,funcfont=sanserif]{complexity}
\RequirePackage[ruled,lined,linesnumbered]{algorithm2e}
\SetKwComment{Comment}{$\triangleright$\ }{}

\RequirePackage{colortbl}

\RequirePackage{booktabs}
\RequirePackage{pifont}
\RequirePackage{subcaption}

\RequirePackage{multirow}

\RequirePackage{hhline}

\RequirePackage{physics}
\RequirePackage{wasysym}

\usetikzlibrary{arrows,3d,shapes,calc,decorations.pathreplacing,decorations.markings,positioning,intersections,shapes.symbols,positioning}

\DeclareMathAlphabet{\mymathbb}{U}{BOONDOX-ds}{m}{n}

\theoremstyle{plain}
\newtheorem{theorem}             {Theorem}
\newtheorem{proposition}{Proposition}

\newtheorem{lemma}{Lemma}

\newtheorem{corollary}{Corollary}
\newtheorem{conjecture}{Conjecture}
\newtheorem*{theorem*}    {Theorem}
\newtheorem*{proposition*}{Proposition}
\newtheorem*{lemma*}      {Lemma}
\newtheorem*{corollary*}  {Corollary}
\newtheorem*{conjecture*} {Conjecture}

\theoremstyle{definition}
\newtheorem{definition}{Definition}

\newtheorem*{definition*}{Definition}
\newtheorem*{example*}   {Example}

\theoremstyle{remark}

\newtcolorbox{mybox}[2][]{%attach boxed title to top center
               = {yshift=-8pt},
  colback      = cyan!6!white,
  colframe     = cyan!1!black,
  halign       = flush left,
  fonttitle    = \bfseries\sffamily,
  colbacktitle = cyan!50!black,
  title        = #2,#1,
  }

\newcommand{\phsp}[2]{%
\ifthenelse{\isempty{#1}}
	{\rule[-.5\baselineskip]{0pt}{.5\baselineskip}}%
	{\rule[#1\baselineskip]{0pt}{#2\baselineskip}}%
}

\newcommand{\Id}{\mathds{I}}

\newcommand{\vbra}[1]{\langle\!\langle #1 |}
\newcommand{\vket}[1]{| #1 \rangle\!\rangle}

\makeatletter
\newcommand{\leqnomode}{\tagsleft@true\let\veqno\@@leqno}
\newcommand{\reqnomode}{\tagsleft@false\let\veqno\@@eqno}
\makeatother

\makeatletter
\newcommand{\proglabel}[2]{%
   \protected@write \@auxout {}{\string \newlabel {#1}{{#2}{\thepage}{#2}{#1}{}} }%
   \hypertarget{#1}{#2}
}
\makeatother

\newcommand{\Cm}{L}
\newcommand{\Dm}{R}

\usepackage{mathrsfs}

\usepackage[nameinlink]{cleveref} % 'nameinlink' to mimic \autoref's behavior
\crefname{chapter}{Chapter}{Chapters}
\crefname{section}{Section}{Sections}
\crefname{algorithm}{Algorithm}{Algorithms}
\crefname{line}{Line}{Lines}
\crefname{equation}{Eq.}{Eqs.}
\crefname{figure}{Fig.}{Figs.}
\crefname{table}{Table}{Tables}
\crefname{appendix}{Appendix}{Appendices}
\crefname{theorem}{Theorem}{Theorems}
\crefname{corollary}{Corollary}{Corollaries}
\crefname{lemma}{Lemma}{Lemmas}
\crefname{conjecture}{Conjecture}{Conjectures}
\crefname{proposition}{Proposition}{Propositions}
\crefname{definition}{Definition}{Definitions}
\crefname{footnote}{Footnote}{Footnotes}

\let\autoref\cref

\newcommand{\pr}[2][]{
	\mathop{
		\ifx &#1&
		\mathrm{Pr}
		\else
			\mathrm{Pr}_{#1}
		\fi
		\left[#2\right]}
}

\newcommand{\e}[2][]{
	\mathop{
		\ifx &#1&
			\mathbb{E}
		\else
			\underset{#1}{\mathbb{E}}
		\fi
		\left[#2\right]}
}
\renewcommand{\var}[2][]{
	\mathop{
		\ifx &#1&
			\mathrm{Var}
		\else
			\underset{#1}{\mathrm{Var}}
		\fi
		\left[#2\right]}
}

\newcommand{\s}{{\bm{s}}}

\newcommand{\x}{\bm{x}}

\newcommand{\dbra}[1]{\langle\!\langle #1 |}
\newcommand{\dket}[1]{| #1 \rangle\!\rangle}

\def\multiset#1#2{\ensuremath{\left(\kern-.3em\left(\genfrac{}{}{0pt}{}{#1}{#2}\right)\kern-.3em\right)}}

\usepackage{pgfplots}
\usepackage{ytableau}
\pgfplotsset{compat=newest}
\usepgfplotslibrary{groupplots}
\usepgfplotslibrary{polar}
\usepgfplotslibrary{smithchart}
\usepgfplotslibrary{statistics}
\usepgfplotslibrary{dateplot}
\usepgfplotslibrary{ternary}
\usetikzlibrary{arrows.meta}
\usetikzlibrary{backgrounds}
\usepgfplotslibrary{patchplots}
\usepgfplotslibrary{fillbetween}
\pgfplotsset{%
    layers/standard/.define layer set={%
        background,axis background,axis grid,axis ticks,axis lines,axis tick labels,pre main,main,axis descriptions,axis foreground%
    }{
        grid style={/pgfplots/on layer=axis grid},%
        tick style={/pgfplots/on layer=axis ticks},%
        axis line style={/pgfplots/on layer=axis lines},%
        label style={/pgfplots/on layer=axis descriptions},%
        legend style={/pgfplots/on layer=axis descriptions},%
        title style={/pgfplots/on layer=axis descriptions},%
        colorbar style={/pgfplots/on layer=axis descriptions},%
        ticklabel style={/pgfplots/on layer=axis tick labels},%
        axis background@ style={/pgfplots/on layer=axis background},%
        3d box foreground style={/pgfplots/on layer=axis foreground},%
    },
}

\renewcommand{\geq}{\geqslant}
\renewcommand{\leq}{\leqslant}

\definecolor{lightblue}{RGB}{84,189,220}
\definecolor{navy}{RGB}{47,60,126}
\definecolor{quandelablue}{HTML}{435BEC}
\definecolor{quandelared}{HTML}{F07362}
\definecolor{darkviolet}{RGB}{99,56,142}
\definecolor{darkgreen}{RGB}{39,174,96}

\definecolor{c1}{HTML}{ffbf00}
\definecolor{c2}{HTML}{e83f6f}
\definecolor{c3}{HTML}{2274a5}

\renewcommand{\selectlanguage}[1]{}

\newclass{\BosonPCC}{BosonP}

\newcommand{\per}[1]{\mathrm{Per}\left(#1\right)}

\RequirePackage{anyfontsize}

\usepackage{scalerel}

\definecolor{darkblue}{HTML}{224C98}
\definecolor{orange}{HTML}{FFAB0D}

\makeatletter

\newif\ifinappendix
\inappendixfalse

\pretocmd{\appendix}{\inappendixtrue}{}{}

\let\orig@section\section
\let\orig@subsection\subsection
\let\orig@subsubsection\subsubsection

\renewcommand{\section}{%
  \@ifstar
    {\orig@section*}%
    {\app@section}%
}

\newcommand{\app@section}[1]{%
  \orig@section{#1}%
  \ifinappendix
    \addcontentsline{atoc}{section}{%
      \protect\numberline{\thesection}#1%
    }%
  \fi
}

\renewcommand{\subsection}{%
  \@ifstar
    {\orig@subsection*}%
    {\app@subsection}%
}

\newcommand{\app@subsection}[1]{%
  \orig@subsection{#1}%
  \ifinappendix
    \addcontentsline{atoc}{subsection}{%
      \protect\numberline{\thesubsection}#1%
    }%
  \fi
}

\renewcommand{\subsubsection}{%
  \@ifstar
    {\orig@subsubsection*}%
    {\app@subsubsection}%
}

\newcommand{\app@subsubsection}[1]{%
  \orig@subsubsection{#1}%
  \ifinappendix
    \addcontentsline{atoc}{subsubsection}{%
      \protect\numberline{\thesubsubsection}#1%
    }%
  \fi
}

\makeatother

\usepackage[toc,page]{appendix}
\usepackage[tight]{minitoc}
\usepackage{tocloft}

\doparttoc % Tell to minitoc to generate a toc for the parts
\faketableofcontents % Run a fake tableofcontents command 

\usepackage{apptools}

\AtAppendix{\setcounter{theorem}{0}}
\AtAppendix{\setcounter{lemma}{0}}
\AtAppendix{\setcounter{corollary}{0}}
\AtAppendix{\setcounter{definition}{0}}

\hypersetup{
    colorlinks=true,
    linkcolor=navy,
    citecolor=navy,
    urlcolor=navy,
    breaklinks=true,
}

\begin{document}

\title{Boson sampling beyond the dilute regime: second moments and anti-concentration}

\author{Hela Mhiri}
\affiliation{Laboratoire d’Informatique de Paris 6, CNRS, Sorbonne Université, Paris, France}
\affiliation{Institute of Physics, Ecole Polytechnique Fédérale de Lausanne (EPFL), Lausanne, Switzerland}

\author{Hugo Thomas}
\affiliation{Laboratoire d’Informatique de Paris 6, CNRS, Sorbonne Université, Paris, France}
\affiliation{Quandela, 7 rue Léonard de Vinci, Massy, France}
\affiliation{DIENS, \'Ecole Normale Supérieure, PSL University, CNRS, INRIA, Paris, France}

\author{Léo Monbroussou}
\affiliation{Laboratoire d’Informatique de Paris 6, CNRS, Sorbonne Université, Paris, France}
\affiliation{School of Informatics, University of Edinburgh, Edinburgh, United Kingdom}
\affiliation{Terra Quantum GmbH, Munich, Germany}

\author{Ulysse Chabaud}
\affiliation{DIENS, \'Ecole Normale Supérieure, PSL University, CNRS, INRIA, Paris, France}

\author{Zoë Holmes}
\affiliation{Institute of Physics, Ecole Polytechnique Fédérale de Lausanne (EPFL), Lausanne, Switzerland}
\affiliation{Centre for Quantum Science and Engineering, École Polytechnique Fédérale de Lausanne (EPFL), Lausanne, Switzerland}

\author{Elham Kashefi}
\affiliation{Laboratoire d’Informatique de Paris 6, CNRS, Sorbonne Université, Paris, France}
\affiliation{School of Informatics, University of Edinburgh, Edinburgh, United Kingdom}

\begin{abstract}
Boson sampling is a leading candidate for demonstrating quantum advantage in photonic systems. Despite significant experimental and theoretical progress, a characterization of its output statistics remains incomplete. 
This is especially true beyond the dilute regime, where photon collisions and bunching become significant. The associated saturated regime, characterized by mode number growing linearly with photon number, or more generally sub-quadratically, is precisely the regime of greatest experimental interest.
As a consequence, anti-concentration of the output distribution--a key ingredient in hardness arguments--remains poorly understood in boson sampling.
In this work, we leverage  representation-theoretic tools to address this gap, obtaining closed-form expressions for second moments of generic particle-number-preserving bosonic observables. We express these quantities in terms of Hilbert–Schmidt norms of projections onto irreducible components of the operator space and show that these projection norms admit compact analytical expressions by exploiting the underlying symmetry structure.
Focusing on Fock state output probabilities, we further establish anti-concentration beyond the dilute regime. Together with recent complexity-theoretic results, our findings strengthen hardness guarantees for boson sampling in experimentally interesting settings.

\end{abstract}

\maketitle

\section{Introduction}

Boson sampling is a promising paradigm for demonstrating quantum advantage in sampling tasks \cite{aaronson_computational_2011}. 
 Over the past decade, photonic implementations of boson sampling have steadily improved in scale and performance \cite{broome_photonic_2013,tillmann_experimental_2013,loredo_boson_2017,wang_highefficiency_2017,hoch_reconfigurable_2022,young_atomic_2024}. In parallel, significant theoretical progress has been made in understanding the computational complexity of these systems, including refined hardness results with minimal conjectures and extensions to experimentally relevant regimes \cite{bouland_complexitytheoretic_2023,bouland2025exponentialimprovementsaveragecasehardness,Bouland_2022}.
Despite these advances, a complete characterization of the statistical properties of the output distribution is still lacking.  In particular, the second moment of the output distribution of such circuits have not been derived in full generality. 

The study of second moments is motivated by their central role in understanding the concentration properties of expectation values of generic observables evolved under random interferometers. Specifically, second moment computations arise in a wide range of quantum information processing tasks, including quantum advantage experiments \cite{aaronson_computational_2011,Hangleiter_2018,Hangleiter_2023,Ehrenberg_2025,Ehrenberg_2025_tansition,oliveira_immanants_2021}, certification \cite{bouland_complexity_2019,martinezcifuentes2024linearcrossentropycertificationquantum, huang_statistical_2017, walschaers_statistical_2016, phillips_benchmarking_2019, giordani_experimental_2018}, randomized benchmarking \cite{heinrich_randomized_2023,arienzo_bosonic_2024,wilkens_benchmarking_2024}, classical shadows \cite{thomas_shedding_2025,aaronson_shadow_2018,west_real_2024,west2026classicalshadowsarbitrarygroup}, classical simulation  \cite{Shchesnovich_2019,angrisani2025simulatingquantumcircuitsarbitrary,Martinez_2025} and  concentration phenomena in variational quantum algorithms \cite{mcclean_barren_2018, sharma_trainability_2022, patti_entanglement_2021, ortiz_marrero_entanglement-induced_2021, holmes_barren_2021, schatzki_theoretical_2024, west_provably_2024, garcia-martin_quantum_2024, liu_laziness_2024, larocca_barren_2025,fontana_characterizing_2024,larocca_diagnosing_2022,diaz2023showcasingbarrenplateautheory,cerezo_cost_2021,mhiri_constrained_2025,mhiri_unifying_2025, thabet_when_2026}.

One particularly important property of boson sampling, captured by second-moment quantities, is the anti-concentration of the output distribution. Broadly, it asserts that, on average, output probabilities are not too small compared to their mean for a non-negligible fraction of outcomes \cite{aaronson_computational_2011}.
In hardness arguments for random  sampling schemes, and in particular boson sampling, a key challenge is the gap between conjectured average-case hardness and what can be rigorously established \cite{movassagh2023hardness,bouland2025exponentialimprovementsaveragecasehardness,bouland_complexitytheoretic_2023}. Anti-concentration, together with multiplicative-error hardness, helps bridge this gap by enabling reductions from additive to multiplicative error guarantees \cite{Hangleiter_2023,bouland_complexitytheoretic_2023}.
Moreover, it constrains the structure of the output distribution by ruling out polynomial sparsity \cite{Pashayan_2020}, a regime that admits efficient classical simulation of boson sampling experiments \cite{lim2025classicalalgorithmsestimatingexpectation}.

Existing analyses of second moments and therefore anti-concentration have primarily focused on the dilute regime, where the number of modes grows at least quadratically with the number of photons. In this limit, photon collisions are rare and the so-called \emph{hiding} property ensures that small submatrices of random interferometers behave as random Gaussian matrices with independent entries, so that the statistics of output probabilities can be treated using random matrix theory \cite{aaronson_computational_2011,Ehrenberg_2025,Ehrenberg_2025_tansition,nezami2021permanentrandommatricesrepresentation}. However, the hiding property breaks down in the  experimentally relevant \emph{saturated} regime, in which the number of modes scales linearly with the number of photons and collisions occur with significant probability \cite{bouland_complexitytheoretic_2023}. This renders current techniques for analysing boson sampling statistics based on the hiding property ineffective in experimentally relevant settings.

In this work, we address this gap by providing: (i) a general account of second moment computations in linear optics and (ii) a proof of the anti-concentration conjecture beyond the dilute regime. Our main contributions are detailed hereafter.

(i) We develop a representation-theoretic framework for linear optics (see \autoref{sec:reptheory}) to derive closed-form expressions of second moments for expectation values of particle-number-preserving observables.
Concretely, we express these second moments in terms of Hilbert–Schmidt norms of projections onto irreducible components of the operator space (\autoref{prop:sec_moment}), and we show that these norms can be computed analytically using the underlying representation-theoretic structure of linear optics (\autoref{th:recursive_irrep_norm}). At a high level, the operator space exhibits a rich symmetry structure, which organizes it into irreducible components that can be identified recursively and used to derive compact expressions for their corresponding projection norms.

(ii) We then leverage this representation-theoretic framework to study anti-concentration of boson sampling. In particular, combining it with combinatorial techniques, we derive a compact closed-form expression for the second moment of averaged output probabilities, corresponding to the \emph{normalized average outcome collision probability} (\autoref{th:P2}). This quantity quantifies the probability of sampling the same outcome twice from a random interferometer and thus provides a standard measure of anti-concentration \cite{aaronson_computational_2011,Hangleiter_2023,Ehrenberg_2025}. Moreover, this quantity is also of independent interest, as it naturally arises in linear cross-entropy benchmarking, where it characterizes the score of the ideal distribution \cite{Ehrenberg_2025}.
By analysing its asymptotic scaling, we prove anti-concentration in the presence of photon collisions (\autoref{cj:scalingP}). This result complements recent complexity-theoretic advances on the hardness of boson sampling in the saturated regime, where the number of modes scales linearly with the number of particles \cite{bouland_complexitytheoretic_2023}, and provides further evidence toward bridging existing gaps in hardness arguments.

\section{Background and preliminary results}\label{sec:moments}

In this section, we review the key concepts and tools underlying our analysis. We begin by introducing the representation-theoretic framework relevant to passive linear-optical systems, with particular emphasis on group actions and the decomposition of bosonic operator spaces (see \autoref{Fig_Intro.pdf}). We then discuss the role of anti-concentration in complexity-theoretic arguments for sampling problems, focusing on boson sampling, and highlight its connection to second-moment quantities as well as prior work addressing this question.

\subsection{Representation theory framework for passive linear optics}
\label{sec:reptheory}

In this section, we review the representation-theoretic framework underlying passive linear-optical transformations of bosonic systems. Readers who are already familiar with this material, or prefer to avoid these technical details, may skip directly to \autoref{sec:results}, as the main results can be understood without relying heavily on this material. 

In \autoref{sec:bosonic}, we introduce the photonic homomorphism, which induces an action of the group $U(m)$ on the operator space of particle-number-preserving bosonic observables. 

In \autoref{sec:howe_duality}, we show that this space decomposes into fixed particle-number sectors, on which the adjoint action defines a unitary representation that further decomposes into irreducible components, as we recall in \autoref{lemma:irrep_decomp}.  

We then uncover, in \autoref{lemma:sl2_commutation_main}, an additional $\mathfrak{sl}_2(\mathbb{C})$ structure acting on this operator space, generated by natural raising and lowering maps that connect different photon-number sectors. As detailed in \autoref{sec:howe} and \autoref{th:howe_duality}, this action commutes with the $U(m)$ action, giving rise to a $U(m)$--$\mathfrak{sl}_2(\mathbb{C})$ Howe duality that organizes the operator space into a ladder of irreducible components. This structure yields a recursive construction of these components and, crucially, enables the efficient evaluation of quantities such as second moments through compact expressions for projection norms.

Further details and derivations are provided in \autoref{app:rep}. While some of these results are likely known, we include them here for completeness, as we were unable to find a self-contained treatment adapted to our setting.

\begin{figure*}[t!]
    \centering
    
    \includegraphics[width=1.0\linewidth]{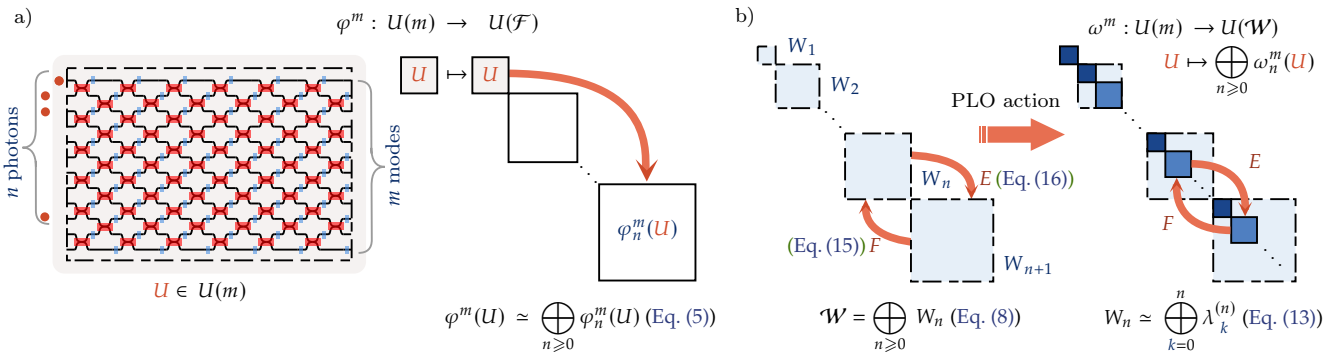}
    \caption{\justifying Schematics of passive linear optics transformations from a representation theory perspective. a) Linear-optical circuit made of beam-splitters and phase-shifters on $m$ spatial modes described by a unitary in $U(m)$, with single photons as input (left); action of a linear interferometer with $n$ input photons through the photonic homomorphism $\varphi$ giving rise to a block-diagonal decomposition of the Fock space under passive linear optics transformations (right). b) Decomposition of the operator space of particle-number-preserving observables into irreps under the action of the adjoint representation $\omega$ (ignoring the trivial  subspace $W_0$), same copy appearing in different particle sector have same color and connected via raising and lowering maps.}
    \label{Fig_Intro.pdf}
\end{figure*}

\subsubsection{Fock space and passive linear optics}\label{sec:bosonic}

The Hilbert space of states of indistinguishable photons in $m$ modes is given by the bosonic Fock space $\mathcal{F} = \bigoplus_{n \geq 0} \mathcal{F}_n$, where each sector $\mathcal{F}_n = \mathrm{Sym}^n(\mathbb{C}^m)$ describes configurations of $n$ identical particles. This symmetric structure is  realized using bosonic creation and annihilation operators satisfying
\begin{equation}
    [a_i,a_j^\dagger]= \delta_{i,j}, \qquad [a_i,a_j]=[a_i^\dagger,a_j^\dagger]=0.
\end{equation}

An orthonormal basis of $\mathcal{F}_n$ is given by occupation-number states
\begin{equation}
    \ket{S} = \prod_{i=1}^m \frac{(a_i^\dagger)^{s_i}}{\sqrt{s_i!}}\ket{\boldsymbol{0}}, 
    \quad \sum_{i=1}^m s_i = n,
\end{equation}
where $\ket{\boldsymbol{0}}$ corresponds to the vacuum state on $m$ modes. These occupation-number states are
labelled by configurations $S \in \Phi_m^n$ defined as
\begin{equation}
    \Phi_m^n := \{(s_1,\dots,s_m)\,|\,\sum_{i=1}^m s_i=n\} \;, | \Phi_m^n| =  \binom{n+m-1}{n}\;.
\end{equation}

In the following, we fix the number of modes $m$ and omit it from the notation; whenever not stated explicitly, all objects are assumed to act on $m$ modes.

Passive linear-optical transformations form a subclass of Gaussian operations that preserve particle number. Also known as linear interferometers, they are generated by quadratic Hamiltonians of the form $a_j^\dagger a_i$ (together with their Hermitian conjugates), which physically correspond to beam splitters and phase shifters \cite{makarov_theory_2022,pan_multiphoton_2012}. These operators close under commutation with their Hermitian combinations generating  the Lie algebra $\mathfrak{u}(m)$.
Exponentiation yields the unitary group $U(m)$, which characterizes all passive linear optical transformations on $m$ modes \cite{garcia-escartin_multiple_2019,parellada_nogo_2023,parellada2025liealgebraicinvariantsquantum}.
 
Concretely, a linear interferometer is fully characterized by its action on the single-particle space $\mathcal{F}_1 = \mathbb{C}^m$, given by an $m \times m$ unitary matrix. This action lifts to the full Fock space via a unitary representation, known as the photonic homomorphism \cite{aaronson_computational_2011},
\begin{equation}\label{eq:photonic_homomorphism}
    \varphi: U(m) \rightarrow U(\mathcal{F})\;.
\end{equation}
Moreover, particle-number conservation implies that this representation decomposes into invariant photon-number sectors and is block-diagonal in the Fock basis, i.e.
\begin{equation}\label{eq:phi_n}
\varphi(U) \simeq \bigoplus_{n \geq 0} \varphi_n(U)\;,    
\end{equation}
where $\varphi_n(U):= U^{\otimes n}|_{\mathcal{F}_n}$ is the maximally symmetric  irreducible representation of the group $U(m)$ of the $n$-particle Fock sector $\mathcal{F}_n$ \cite{VilenkinKlimyk1995,arienzo_bosonic_2024}.

The statistics of output distributions generated by linear interferometers are governed by this symmetric tensor structure (stemming from photon indistinguishability). 
This structure underlies the appearance of matrix permanents in the expression of output probabilities \cite{aaronson_computational_2011}. Concretely, for $U \in U(m)$,  matrix amplitudes of the photonic representation $\varphi_n(U)$, expressed in the Fock basis, are given by permanent of submatrices of $U$, namely
\begin{equation}\label{eq:permanent}
    \mel{T}{\varphi_n(U)}{S} = \frac{\per{U_{S, T}}}{\sqrt{\prod_{i=1}^{m}s_i! \prod_{j=1}^{m}t_j!}},
\end{equation}
where $U_{S, T}$ is obtained from $U$ by taking $s_i$ copies of the $i$-th row
and $t_j$ copies of the $j$-th column \cite{aaronson_computational_2011}, and
${\rm Per(A) = \sum_{\sigma \in S_n} \prod_{i=1}^n A_{i,\sigma(i)}}$ is the matrix permanent function for an $n \times n$ matrix \cite{marcus_permanents_1965}. 

Another determining factor in the output statistics is the relative scaling of the number of photons $n$ and modes $m$. In particular, one distinguishes two regimes.

\begin{definition}[Dilute and saturated regimes]\label{def:regimes}
For a passive linear optical systems with $n$ photons in $m$ modes:
\begin{enumerate}
    \item The \emph{dilute} (or collision-free, no-bunching) regime corresponds to $m = \Omega(n^2)$, where photons are sparsely distributed and collisions are suppressed.
    
    \item The \emph{saturated} (or linear, low-mode) regime corresponds to $m = \Theta(n)$, where output configurations with substantial collisions become typical.
\end{enumerate}
\end{definition}

This distinction is captured by the bosonic birthday paradox, which identifies when collisions become significant in typical output configurations (see \autoref{sec:P2}, \autoref{eq:collision_free,eq:quadratic,eq:intermediate1,eq:intermediate2,eq:linear}) \cite{aaronson_computational_2011,arkhipov_bosonic_2012}. In what follows, we refer to the regime between dilute and saturated (see \autoref{def:regimes}) as the collision, or \emph{intermediate saturated}, regime. While the choice of regime strongly affects the statistics of output probabilities in boson sampling, the representation-theoretic analysis we develop in the remainder of the paper applies uniformly across all regimes.

\subsubsection{Structure of particle-number-preserving bosonic operators and adjoint representation}\label{sec:howe_duality}

An important class of operators relevant to our analysis is the subalgebra 
$\mathcal{W}$ of particle-number–preserving operators, spanned by normally ordered monomials containing equal numbers of creation and annihilation operators,
\begin{equation}\label{eq:subalgebra_commute}
    \mathcal{W} = {\rm Span} \{  a_{j_1}^\dagger \dots a_{j_d}^\dagger a_{i_1} \dots a_{i_d} \;|\;  d\geq 0\}\;.
 \end{equation}
 By construction, operators in 
$\mathcal{W}$ preserve the total photon number and hence act block-diagonally on the Fock Hilbert space,
\begin{equation}\label{eq:decomp_big_W}
    \mathcal{W} \simeq \bigoplus_{n \geq 0} W_n\;,
\end{equation}
where $W_n = \mathcal{F}_n \otimes \mathcal{F}_n^*$ denotes the  space of operators acting on the $n$--photon sector  $\mathcal{F}_n$.

Passive linear optics transformations, described by the group \(U(m)\), act on the bosonic operator space by conjugation via the adjoint representation  $\omega$ given by
\begin{equation}\label{eq:adjoint_rep}
    \omega(U) = \varphi(U) (\cdot)\varphi(U)^\dagger \simeq \varphi(U) \otimes \varphi(U)^*\;.
\end{equation}

Restricting to particle-number-preserving operators \(\mathcal{W}\), the adjoint representation preserves each block $W_n$ introduced in \autoref{eq:decomp_omega_m}, and thus decomposes as
\begin{align}
    \omega(U) \simeq \bigoplus_{n \geq 0} \varphi_n(U) \otimes \varphi_n(U)^* = \bigoplus_{n \geq 0} \omega_n(U)\;, \label{eq:decomp_omega_m}
\end{align}
where $W_n = \mathcal{F}_n \otimes \mathcal{F}_n^*$ is the carrier space of ${\omega_n(U)=\varphi_n(U) \otimes \varphi_n(U)^*}$, and $\simeq$ denotes a unitary change of basis \cite{arienzo_bosonic_2024}.  

While the representation $\varphi_n$  is irreducible (as the maximally symmetric representation of $U(m)$), the corresponding adjoint representation $\omega_n$ acting on $W_n$ is completely reducible by Maschke’s theorem \cite{Folland2016}. The precise structure of this decomposition was established in \cite{arienzo_bosonic_2024}.  

\begin{lemma}[Complete reducibility of passive linear optics adjoint representation, Lemma T1 in \cite{arienzo_bosonic_2024}]\label{lemma:irrep_decomp}
    The space \(W_n\) of operators acting on $n$ particles, equipped with the adjoint representation \(\omega_n\) introduced in \autoref{eq:decomp_omega_m}, decomposes into \(n+1\) inequivalent, multiplicity-free \(U(m)\)-modules,
    \begin{equation}\label{eq:decomp_Wn}
     W_n \simeq \bigoplus_{k=0}^n \lambda_k^{(n)} \;.
 \end{equation}
Each subspace $\lambda_k^{(n)}$  carries an irreducible $U(m)$-representation labelled by $0\leq k\leq n$, and is associated with a Young diagram with $m-1$ rows, where the first row contains $2k$ boxes and the remaining rows each contain $k$ boxes.

The dimension of each irreducible component $\lambda_k^{(n)}$, denoted by $d_k^{(n)}$, is given by (Proposition T2 in \cite{arienzo_bosonic_2024})
 \begin{equation}\label{eq:irrep_dim}
    d_k^{(n)} = \frac{2k+m-1}{m-1} \binom{k+m-2}{k}^2.
\end{equation}
\end{lemma}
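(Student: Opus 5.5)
We prove \autoref{lemma:irrep_decomp} by identifying $W_n = \mathcal{F}_n\otimes\mathcal{F}_n^*$ as a $U(m)$-module with $\mathrm{Sym}^n(\mathbb{C}^m)\otimes \mathrm{Sym}^n(\mathbb{C}^m)^*$, decomposing it with a Pieri-type rule, and then reading off dimensions from the Weyl dimension formula.

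\textbf{Step 1: reduce to $SU(m)$ and dualize.} The center of $U(m)$ acts trivially on $W_n$, because the phases coming from $\varphi_n$ and $\varphi_n^*$ cancel. It therefore suffices to decompose $W_n$ under $SU(m)$. For $SU(m)$, the dual of $\mathrm{Sym}^n(\mathbb{C}^m)$, whose highest weight is the one-row diagram $(n)$, is the irrep with Young diagram $(n,n,\dots,n)$ of $m-1$ rows. This holds because the complement of $(n)$ inside an $m\times n$ rectangle, rotated by $180^\circ$, is $(n^{m-1},0)$.

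\textbf{Step 2: apply Pieri's rule.} Tensoring an irrep with $\mathrm{Sym}^n$ adds $n$ boxes, no two of them in the same column. We start from the rectangle $(n^{m-1})$ of $m-1$ rows, inside $m$ available rows. Because the rectangle is flat, new boxes can only go into the first row or the $m$-th row. Putting $j$ boxes in the $m$-th row and $n-j$ in the first gives $(2n-j, n,\dots,n, j)$ with $0\le j\le n$. The constraint that the $m$-th row is at most $n$ is automatic.

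Each such diagram occurs exactly once, which gives $n+1$ multiplicity-free components. They are pairwise inequivalent because their diagrams differ modulo full columns. For $SU(m)$ we remove the $j$ full columns of height $m$, leaving $(2(n-j), (n-j)^{m-2})$. Setting $k=n-j$, this is the diagram described in the statement: $2k$ boxes in the first row and $k$ boxes in each of the remaining $m-2$ rows. The $k=0$ component is the trivial representation, spanned by the identity operator (the number-operator projector) on $\mathcal{F}_n$. As a consistency check, for $m=2$ we recover the Clebsch--Gordan rule $\tfrac n2\otimes\tfrac n2 = \bigoplus_{k} k$.

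\textbf{Step 3: compute the dimensions.} The highest weight of $\lambda_k^{(n)}$ is $k(\epsilon_1-\epsilon_m)$, i.e.\ $k$ times the highest root. We apply the Weyl dimension formula $\prod_{i<j}\frac{\lambda_i-\lambda_j+j-i}{j-i}$ with $\lambda=(2k,k,\dots,k,0)$ and group the factors as follows.
\begin{itemize}
\item The pair $(1,m)$ gives $\frac{2k+m-1}{m-1}$.
\item The pairs $(1,j)$ with $1<j<m$ give $\prod_{j=2}^{m-1}\frac{k+j-1}{j-1}=\binom{k+m-2}{k}$.
\item The pairs $(i,m)$ with $1<i<m$ give, symmetrically, another factor $\binom{k+m-2}{k}$.
\item The middle pairs contribute $1$.
\end{itemize}
The product is exactly $d_k^{(n)}$. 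As a check, we verify the sum rule $\sum_{k=0}^n d_k^{(n)}=\binom{n+m-1}{n}^2$, for instance by telescoping or by induction on $n$.

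\textbf{Main obstacle.} The delicate point is the dualization and highest-weight bookkeeping in Steps 1--2: translating $\mathrm{Sym}^n{}^*$ to the $(n^{m-1})$ diagram and noticing that flatness forces the two-row-only placement in Pieri's rule. The rest is mechanical. An alternative route is to note that $W_n$ is spanned by the normally ordered monomials $a^{\dagger}_{j_1}\cdots a^\dagger_{j_d}a_{i_1}\cdots a_{i_d}$ with $d\le n$ acting on $\mathcal{F}_n$. The traceless part of the degree-$d$ piece $\mathrm{Sym}^d\otimes\mathrm{Sym}^{d*}$ contributes the new irrep $k=d$, and this gives the same decomposition by induction on $n$.
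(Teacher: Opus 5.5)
Your proof is correct. It follows essentially the route the paper relies on: the paper does not prove this lemma itself but cites \cite{arienzo_bosonic_2024}, describing the argument as a Littlewood--Richardson decomposition of $\varphi_n\otimes\varphi_n^*$ followed by the Weyl dimension formula. Your Pieri-rule computation on the dual rectangle $(n^{m-1})$ is exactly the relevant special case, and your grouping of the Weyl factors for $\lambda=(2k,k,\dots,k,0)$ reproduces $d_k^{(n)}$, with the reduction to $SU(m)$ justified because the center acts trivially on $W_n$.
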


\autoref{lemma:irrep_decomp} shows that the operator space $W_n$ associated with a fixed photon-number sector decomposes into multiplicity-free, inequivalent irreducible $U(m)$-representations labelled by $k=0,\dots,n$. Combining \autoref{eq:decomp_omega_m} and \autoref{eq:decomp_Wn}, we obtain the decomposition of the full operator space $\mathcal{W}$,
\begin{equation}\label{eq:full_W_decomp}
    \mathcal{W} = \bigoplus_{n\geq 0} \bigoplus_{k=0}^n \lambda_k^{(n)}\;.
\end{equation}

Moreover, the Young diagram description implies that, for each fixed $k$, the irreducible components $\lambda_k^{(n)}$ are independent of the photon number $n$ for all $n \geq k$. This suggests reorganizing the decomposition in \autoref{eq:full_W_decomp} as  
\begin{equation}\label{eq:ladder_decomp_main}
    \mathcal{W} \simeq \bigoplus_{k \geq 0} \bigoplus_{n \geq k}  \lambda_k^{(n)}\;.
\end{equation}

This reorganization highlights that irreducible components of the same type (i.e. same label $k$) appear across different photon-number sectors, suggesting the presence of an additional structure relating them. To make this relation explicit, it is useful to introduce maps connecting adjacent sectors, which will play a central role in the constructive identification of the irreducible components.

A natural way to relate adjacent sectors is through operations that remove or add a photon in each mode. Specifically, we define two linear maps on \(\mathcal{W}\), referred to as the \emph{lowering} and \emph{raising} maps, denoted respectively by $\Cm$ and $\Dm$,
\begin{align}
     \Cm (\cdot) & = \sum_{s=1}^m a_s (\cdot) a_s^\dagger \;,\label{eq:C_map}\\
     \Dm(\cdot)  & = \sum_{s=1}^m a_s^\dagger (\cdot) a_s.\label{eq:D_map}
\end{align}

These maps admit a simple interpretation in terms of photon number: when acting on operators supported on a fixed sector $W_n$, $\Cm$ decreases the photon number by one, while $\Dm$ increases it by one. They therefore provide a natural mechanism for relating operator spaces associated with different photon numbers. 

Beyond this operational role, these maps exhibit a deeper algebraic structure. In particular, together with the commutator $H := [\Cm, \Dm]$, they generate a representation of the Lie algebra $\mathfrak{sl}_2(\mathbb{C})$ on $\mathcal{W}$, as we formalize in \autoref{lemma:sl2_commutation_main}. Recall that $\mathfrak{sl}_2(\mathbb{C})$ is the Lie algebra  of traceless $2 \times 2$ matrices, with canonical basis
\begin{equation}\label{eq:conanoical_sl2}
      e = \begin{pmatrix} 0 & 1 \\ 0 & 0 \end{pmatrix}, \quad
    f = \begin{pmatrix} 0 & 0 \\ 1 & 0 \end{pmatrix}, \quad
    h = \begin{pmatrix} 1 & 0 \\ 0 & -1 \end{pmatrix}.
\end{equation}
satisfying the commutation relations ${[h,e]= 2e}$,  ${[h,f] = -2f, \;[e,f]=h}$.  

\begin{lemma}{(\(\mathfrak{sl}_2(\mathbb{C})\)-action on \(\mathcal{W}\)).}\label{lemma:sl2_commutation_main} 
Consider the lowering and raising maps introduced in \autoref{eq:C_map} and \autoref{eq:D_map} respectively, together with their commutator $H=[\Cm,\Dm]$. The triple $\{\Cm,\Dm,H\}$ generates a representation of the Lie algebra $\mathfrak{sl}_2(\mathbb{C})$ on the operator space $\mathcal{W}$. Namely, they satisfy the commutation relations
\begin{align}
    [H, \Dm]= 2 \Dm\;, \quad [H, \Cm]= -2 \Cm \;, \quad [\Cm,\Dm]=H\;.
\end{align}  

In particular, this defines a representation $\pi: \mathfrak{sl}_2(\mathbb{C}) \rightarrow \mathrm{End}(\mathcal{W})$ given by  
\begin{equation}\label{eq:pi_sl2_rep}
    \pi(e)= \Dm \;, \quad \pi(f)= - \Cm \;, \quad \pi(h)=H\;,
\end{equation}
where $\mathrm{End}(\mathcal{W})$ denotes the space of endomorphisms on $\mathcal{W}$ and    $\{e,f,h\}$ is the $\mathfrak{sl}_2$ triple introduced in \autoref{eq:conanoical_sl2}.
\end{lemma}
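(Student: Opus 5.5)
The plan is to compute the commutator $H=[\Cm,\Dm]$ explicitly and show that it acts on each sector $W_n$ as a scalar that depends affinely on $n$. Once that is established, the remaining relations follow from the fact that $\Dm$ and $\Cm$ shift the photon number by $\pm 1$.

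First I will check the grading. For $X\in W_n$, regarded as an operator on $\mathcal{F}$ supported on $\mathcal{F}_n$, the operator $\Dm(X)=\sum_s a_s^\dagger X a_s$ is supported on $\mathcal{F}_{n+1}$. Likewise $\Cm(X)\in W_{n-1}$, with $\Cm(W_0)=0$ because $a_s$ annihilates the vacuum. Both maps therefore preserve $\mathcal{W}$, and they commute with the number operator $N=\sum_s a_s^\dagger a_s$ in the sense that $N\,\Dm(X)=\Dm(X)\,N$. To avoid any issue with unbounded operators, I will work throughout with elements of the algebraic direct sum $\bigoplus_n W_n$, that is, with finite sums of block operators.

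Next I will compute $H$ using only the canonical commutation relations. Substituting $a_s a_t^\dagger=a_t^\dagger a_s+\delta_{st}$ and $a_t a_s^\dagger=a_s^\dagger a_t+\delta_{st}$ gives
\begin{equation*}
\Cm\Dm(X)=\sum_{s,t}\left(a_t^\dagger a_s+\delta_{st}\right)X\left(a_s^\dagger a_t+\delta_{st}\right).
\end{equation*}
The term with no $\delta$'s is exactly $\Dm\Cm(X)=\sum_{s,t}a_t^\dagger a_s X a_s^\dagger a_t$. The two cross terms give $NX$ and $XN$, and the $\delta\delta$ term gives $mX$. Hence
\begin{equation*}
H(X)=NX+XN+mX,
\end{equation*}
so that on $W_n$ we have $H=(2n+m)\,\mathrm{id}$.

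Finally, since $\Dm$ maps $W_n$ to $W_{n+1}$, for $X\in W_n$ we get $H\Dm(X)-\Dm H(X)=\big((2n+2+m)-(2n+m)\big)\Dm(X)=2\Dm(X)$. The same argument gives $[H,\Cm]=-2\Cm$, and $[\Cm,\Dm]=H$ holds by definition. Both statements extend to all of $\mathcal{W}$ by linearity. It then remains to match these with the canonical relations under the assignment $\pi(e)=\Dm$, $\pi(f)=-\Cm$, $\pi(h)=H$:
\begin{itemize}
\item $[h,e]=2e$ corresponds to $[H,\Dm]=2\Dm$;
\item $[h,f]=-2f$ corresponds to $[H,-\Cm]=2\Cm=-2(-\Cm)$;
\item $[e,f]=h$ corresponds to $[\Dm,-\Cm]=[\Cm,\Dm]=H$.
\end{itemize}
Extending $\pi$ linearly from the basis $\{e,f,h\}$ then gives a Lie algebra homomorphism $\mathfrak{sl}_2(\mathbb{C})\to\mathrm{End}(\mathcal{W})$.

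The only delicate step is the reordering computation for $\Cm\Dm$: one must track which side of $X$ each $\delta_{st}$ contraction lands on, so as to obtain $NX+XN$ rather than, for example, $2NX$. This distinction is what makes $H$ diagonal on the grading, and it is where care is needed. Everything else is bookkeeping with the photon-number grading.
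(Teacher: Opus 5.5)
Your proof is correct, and it takes a shorter route than the paper's. The paper never writes $H$ in closed form. Instead it expands the nested commutators as $[H,\Dm]=\Cm\Dm^2+\Dm^2\Cm-2\Dm\Cm\Dm$ and $[H,\Cm]=2\Cm\Dm\Cm-\Dm\Cm^2-\Cm^2\Dm$. It then reorders each cubic composition with the canonical commutation relations into $\Dm\Cm\Dm$ (resp.\ $\Cm\Dm\Cm$) plus corrections in $m$ and $\hat N$. Using $[X,\hat N]=0$, $\Dm(\hat N X)=(\hat N-\Id)\Dm(X)$ and $\Cm(\hat N X)=(\hat N+\Id)\Cm(X)$, everything cancels except $\pm 2$ times the map.

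You instead do a single quadratic reordering to get $H(X)=mX+\hat N X+X\hat N$, so $H|_{W_n}=(m+2n)\,\mathrm{id}$. After that the remaining relations are pure degree bookkeeping. You also check explicitly that the sign in $\pi(f)=-\Cm$ makes $\pi$ bracket-preserving, which the paper only asserts.

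Your route gives more than brevity. The paper obtains the eigenvalue $m+2k$ of $H$ only later, in \autoref{prop:CD_DC}. There it derives the same identity $\Cm\Dm(X)=\Dm\Cm(X)+mX+\{\hat N,X\}$, but applies it only to $X\in\ker\Cm$, and then transports the eigenvalue up the ladder through the irreducible decomposition. Your computation gives it on all of $W_k$ directly, so the telescoping relation $\beta_{r,k}-\beta_{r,k-1}=m+2k$ there becomes immediate.

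Finally, restricting to finite sums of block operators costs nothing. All three maps act block by block (e.g.\ $P_{n+1}\Dm(X)P_{n+1}=\Dm(P_nXP_n)$), so the relations also hold on the span of normally ordered monomials, which is how the paper literally defines $\mathcal W$.
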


\autoref{lemma:sl2_commutation_main} shows that $\mathcal{W}$ carries an additional symmetry generated by $\Cm$ and $\Dm$, acting alongside the adjoint action of passive linear optics. We note that the minus sign in $\pi(f)$ ensures that the map $\pi$ preserves the $\mathfrak{sl}_2$ commutation relations. The full proof of \autoref{lemma:sl2_commutation_main} is given in \autoref{app:subsec_reptheory}.

In the following section, we exploit this additional structure to constructively identify the $U(m)$-irreducible representations appearing in the decomposition of each fixed particle-number operator space $W_n$ described in \autoref{lemma:irrep_decomp}.

\subsubsection{$U(m)-\mathfrak{sl}_2(\mathbb{C})$ Howe duality and iterative identification of irreducible representations}\label{sec:howe}

On top of the adjoint $U(m)$ action on $\mathcal{W}$ and its complete reducibility within each fixed particle-number sector established in \autoref{lemma:irrep_decomp}, the additional $\mathfrak{sl}_2(\mathbb{C})$ action introduced in \autoref{lemma:sl2_commutation_main} provides a natural way to relate the irreducible components appearing across different sectors. As suggested by the decomposition in \autoref{eq:ladder_decomp_main}, irreps with the same label $k$ recur for all $n\geq k$, and we show below that the maps $\Cm$ and $\Dm$ act as ladder maps connecting these occurrences across particle-number sectors.

A key structural feature underlying this organisation is that the adjoint $U(m)$ action and the $\mathfrak{sl}_2(\mathbb{C})$ action commute, as formally established in \autoref{th:commuting_action} (see \autoref{app:howe_proof}). As a consequence, the decomposition of $\mathcal{W}$ into $U(m)$--irreducible components is compatible with the action of the lowering and raising maps: each family of irreps labelled by $k$ is stable under the  $\mathfrak{sl}_2(\mathbb{C})$ action and therefore carries an $\mathfrak{sl}_2(\mathbb{C})$-module.

This compatibility indicates that the multiplicities observed in the decomposition of $\mathcal{W}$ are not arbitrary, but instead organised by the $\mathfrak{sl}_2(\mathbb{C})$ symmetry, which relates copies of the same $U(m)$--irrep across different particle number sectors.
This interplay between the two commuting actions is characteristic of a dual pair (Howe duality) setting, in which   $\mathfrak{sl}_2(\mathbb{C})$ and $U(m)$ generate mutual centralizers \cite{debie2016howedualitypolynomialsolutions,brito2024quantummetaplectichoweduality}. Concretely, this duality implies a multiplicity-free decomposition of the operator space, where each irreducible component indexed by $k$ is uniquely identified by a pair consisting of a $U(m)$ irrep and an $\mathfrak{sl}_2(\mathbb{C})$ irrep.
In the following, we formalize this structure and make explicit how it constrains the decomposition of $\mathcal{W}$.

\begin{proposition}[\texorpdfstring{$U(m)$--$\mathfrak{sl}_2(\mathbb{C})$}{U(m)-sl2} Howe-duality structure on $\mathcal W$ (informal)]\label{th:howe_duality}
Let $\mathcal W$ be the space of particle-number-preserving operators defined in \autoref{eq:subalgebra_commute}. Then the commuting actions of $U(m)$ and $\mathfrak{sl}_2(\mathbb{C})$ induce a graded decomposition of the form
\begin{equation}\label{eq:irreps}
    \mathcal W = \bigoplus_{k\geq 0} \mathcal W_k,
    \qquad
    \mathcal W_k = \bigoplus_{n\geq k} \lambda_k^{(n)},
\end{equation}
where each $\lambda_k^{(n)}$ is an irreducible $U(m)$-submodule satisfying
\[
\lambda_k^{(n)} \cong \lambda_k^{(k)} \quad \text{for all } n \ge k.
\]

Moreover, each subspace $\mathcal W_k$ is organized by the $\mathfrak{sl}_2(\mathbb{C})$-action into a ladder structure of the form
\begin{equation}\label{eq:ladder}
   \lambda_k^{(k)} \xrightarrow{\Dm} \lambda_k^{(k+1)} \xrightarrow{\Dm} \lambda_k^{(k+2)} \xrightarrow{\Dm} \cdots,
\end{equation}
where the lowest-weight component is given by the kernel of the lowering map $\Cm$ within the $k$-particle operator space, i.e.
\begin{equation}\label{eq:lowest_weight}
    \lambda_k^{(k)} = \ker(\Cm)\cap W_k.
\end{equation}
\end{proposition}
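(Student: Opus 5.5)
Assuming \autoref{lemma:irrep_decomp}, \autoref{lemma:sl2_commutation_main} and the commutation of the two actions (\autoref{th:commuting_action}), the plan is to combine multiplicity-freeness with Schur's lemma and $\mathfrak{sl}_2$ weight theory.

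\emph{Step 1: $U(m)$-equivariance of $\Cm$ and $\Dm$.} For $U\in U(m)$ we have $\varphi(U)a_s^\dagger\varphi(U)^\dagger=\sum_t U_{ts}a_t^\dagger$ and the conjugate relation for $a_s$. Hence $\sum_s a_s^\dagger(\cdot)a_s$ and $\sum_s a_s(\cdot)a_s^\dagger$ are invariant under the substitution, by unitarity $\sum_s U_{ts}\bar U_{t's}=\delta_{tt'}$. This gives $\omega(U)\Dm=\Dm\,\omega(U)$ and $\omega(U)\Cm=\Cm\,\omega(U)$. Since $\Dm:W_n\to W_{n+1}$ and $\Cm:W_{n}\to W_{n-1}$, Schur's lemma together with \autoref{lemma:irrep_decomp} (the $\lambda_k^{(n)}$ are pairwise inequivalent within a sector, and the type of $\lambda_k$ depends only on $k$) shows that $\Dm$ maps $\lambda_k^{(n)}$ into $\lambda_k^{(n+1)}$ and $\Cm$ maps it into $\lambda_k^{(n-1)}$ (zero if $k>n-1$). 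Thus each $\mathcal W_k=\bigoplus_{n\ge k}\lambda_k^{(n)}$ is stable under $U(m)$ and $\mathfrak{sl}_2$. The isomorphism $\lambda_k^{(n)}\cong\lambda_k^{(k)}$ is immediate from the Young-diagram label being independent of $n$.

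\emph{Step 2: $H$ acts by scalars on sectors.} I would compute $H=[\Cm,\Dm]$ on $X\in W_n$ directly using the canonical commutation relations. I expect $H(X)=-(2n+m)X$, possibly up to normal-ordering corrections such as $\hat N X+X\hat N+mX$, which equals $(2n+m)X$ on $W_n$. So each $W_n$ is an $H$-eigenspace, and $\mathcal W_k$ is a weight module whose weights are indexed by $n\ge k$. Because the weights are bounded on one side, and each weight space $\lambda_k^{(n)}$ is a $U(m)$-irrep, Schur gives that $\Cm\Dm$ acts by a scalar $c_{k,n}$ on $\lambda_k^{(n)}$.

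\emph{Step 3: the ladder and the lowest weight.} Let $v\in\lambda_k^{(k)}$. Then $\Cm v\in\lambda_k^{(k-1)}=0$, so $\lambda_k^{(k)}\subseteq\ker\Cm\cap W_k$. Conversely, $\Cm$ restricted to $W_k$ maps each $\lambda_j^{(k)}$ with $j<k$ into $\lambda_j^{(k-1)}$, and by Schur this map is either zero or injective. To get injectivity, and hence equality in \autoref{eq:lowest_weight}, I would rule out the zero case by the standard $\mathfrak{sl}_2$ computation. For a vector $v$ annihilated by $\Cm$ at weight $\mu=-(2j+m)$ (a lowest-weight vector of the Verma-type module), iterating gives
\[
\Cm\Dm^{r}v = r(\mu+r-1)\,\Dm^{r-1}v
\]
up to sign conventions. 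Since $\mu<0$ for $m\ge1$, the coefficient never vanishes. Consequently $\Dm^{r}v\neq0$ for all $r$, so $\Dm:\lambda_k^{(n)}\to\lambda_k^{(n+1)}$ is injective, hence an isomorphism (equal dimensions), giving the ladder \autoref{eq:ladder}. For the converse containment, suppose $w\in\lambda_j^{(k)}$ with $j<k$ satisfied $\Cm w=0$. Writing $w=\Dm^{k-j}u$ with $u\in\lambda_j^{(j)}$ (possible by the isomorphisms just shown), we get $\Cm\Dm^{k-j}u=c\,\Dm^{k-j-1}u$ with $c\ne0$ and $\Dm^{k-j-1}u\neq0$, a contradiction. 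Hence $\ker\Cm\cap W_k=\lambda_k^{(k)}$.

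\emph{Main obstacle.} The delicate step is the explicit computation of $H$ in Step 2 (equivalently, of the scalar $c_{k,n}$). Obtaining the correct $n$-dependent eigenvalue and its sign is what guarantees nonvanishing at every rung. I would check it first on $X=\ket{\boldsymbol 0}\!\bra{\boldsymbol 0}$ and on $a_i^\dagger a_j$ before carrying out the general normal-ordered calculation.
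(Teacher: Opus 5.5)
Your strategy works and differs from the paper's, but the step you name as the crux is written inconsistently and, read literally, fails. With $\mu=-(2j+m)$, the coefficient $r(\mu+r-1)$ vanishes at $r=2j+m+1$, so ``$\mu<0$, hence never zero'' does not justify anything.

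The computation you deferred is short. From $a_t a_s^\dagger=a_s^\dagger a_t+\delta_{st}$ and $a_s a_t^\dagger=a_t^\dagger a_s+\delta_{st}$ one gets $\Cm\Dm(X)=\Dm\Cm(X)+\hat{N}X+X\hat{N}+mX$. Hence $H=[\Cm,\Dm]$ acts on $W_n$ as $+(2n+m)$, with no minus sign. Then for $v\in W_j$ with $\Cm v=0$,
\[
\Cm\Dm^{r}v=\sum_{t=0}^{r-1}\Dm^{t}H\Dm^{r-1-t}v=\sum_{s=0}^{r-1}\bigl(m+2(j+s)\bigr)\Dm^{r-1}v=r(m+2j+r-1)\,\Dm^{r-1}v .
\]
This is your formula with $\mu=+(2j+m)>0$. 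Your ``$\mu<0$'' intuition is really the highest-weight form $r(\lambda-r+1)$ with $\lambda=-(2j+m)$, which gives the same number up to sign. The robust reason the coefficient never vanishes is that all $H$-eigenvalues met along the ladder have the same sign, so their sum cannot cancel. With $v\in\lambda_j^{(j)}$ and $r=n-j+1$, this is exactly the paper's $\beta_{j,n}=(n-j+1)(m+j+n)$ from \autoref{prop:CD_DC}.

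With this correction your proof is complete, and its logic runs in the opposite order to the paper's.

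\emph{The paper's route} (\autoref{prop:kernel_top_irrep}, \autoref{prop:lower_irreps}):
\begin{enumerate}
\item It shows $\Dm_k$ is injective by a Fock-basis argument.
\item It deduces that $\Cm_k$ is surjective from the adjointness $\Dm_k=\Cm_k^\dagger$.
\item It obtains $\ker\Cm_k=\lambda_k^{(k)}$ from the Schur inclusion plus the count $\dim W_k-\dim W_{k-1}=d_k^{(k)}$.
\item It derives the ladder isomorphisms from this, and computes the $H$-scalars only afterwards.
\end{enumerate}

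\emph{Your route} computes $H$ first. The lowest-weight computation then gives the ladder isomorphisms and the kernel identification at once, with no adjointness and no dimension count. Injectivity of $\Dm$ on every $\lambda_j^{(n)}$ comes as a byproduct. Pinning the target irrep by the $n$-independent Young-diagram label is also cleaner than the paper's dimension argument for $r^*=r$.

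Your route is also more robust at one point. The paper's injectivity check only shows that distinct basis elements $|p\rangle\langle p'|$ have distinct images, which by itself does not establish injectivity of a linear map. Your computation does establish it. In return, the paper's route yields the Hilbert--Schmidt-orthogonal splitting $W_k=\ker\Cm_k\oplus\Dm_k(W_{k-1})$ and the adjointness, both reused later in the projection and norm formulas.

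Finally, make one step explicit. If $\Cm w=0$ for $w=\sum_j w_j\in W_k$, then each $\Cm w_j=0$, because the $\Cm w_j$ lie in the distinct summands $\lambda_j^{(k-1)}$. This is what upgrades injectivity on each summand to $\ker\Cm\cap W_k=\lambda_k^{(k)}$.
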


\autoref{th:howe_duality} reveals a correspondence between irreducible $U(m)$-representations and $\mathfrak{sl}_2$-modules appearing in $\mathcal{W}$, arising from an underlying  Howe-duality structure.
A formal statement of this duality is given in \autoref{th:howe_duality_app}, with its proof in
 \autoref{app:howe_proof}. While closely related to known $\mathfrak{gl}(m)$–$\mathfrak{gl}(2)$ Howe duality, the present construction is derived independently and tailored to the bosonic setting \cite{Howe1995PerspectivesOI,nezami2021permanentrandommatricesrepresentation}. 
 
 In particular, each $U(m)$-type $\lambda_k$ arises from a primitive (lowest-weight subspace of $\mathfrak{sl}_2(\mathbb{C})$ irrep) component $\lambda_k^{(k)}$ and generates, through repeated application of the raising map $\Dm$, the subspace $\mathcal{W}_k$, which carries the corresponding $\mathfrak{sl}_2(\mathbb{C})$-module.
 Conversely, the lowering map $\Cm$ identifies these primitive components via its kernel, as given in \autoref{eq:lowest_weight}.

Although this structure induces a global decomposition of $\mathcal{W}$ involving generally infinite-dimensional $\mathfrak{sl}_2$-modules, the underlying ladder structure yields a recursive and constructive description within each fixed photon-number sector $W_n$.
In particular, each irreducible component $\lambda_k^{(n)}$ can be obtained by iteratively applying the raising map $\Dm$ to its associated primitive  component $\lambda_k^{(k)}$.
In this way, the $\mathfrak{sl}_2$ ladder structure organizes the decomposition of each finite-dimensional sector $W_n$ into its irreducible $U(m)$-submodules, leading to the explicit decomposition

\begin{equation}\label{eq:decomp_easy}
W_n = \bigoplus_{k=0}^n   \Dm^{k}(\lambda_{n-k}^{(n-k)})\; .
\end{equation}

The identification of irreducible components of $W_n$ through this ladder structure will be key in establishing second moments expressions, as will be further detailed in \autoref{sec:iter_norms}.

\subsection{Anti-concentration of boson sampling and hardness arguments}\label{sec:ac_back}

In boson sampling, and more broadly in random  sampling schemes, anti-concentration refers to the property that for a typical unitary drawn from a random ensemble of interest, a non-negligible fraction of outcome probabilities are of the same order of magnitude as the uniform distribution. This implies that, on average over the unitary ensemble, the output distribution is sufficiently well spread over the output space, thus not concentrated on a few events \cite{Hangleiter_2023}. 

This property plays a significant role in  complexity-theoretic arguments for the hardness of approximate sampling. While anti-concentration is neither strictly necessary nor sufficient in existing proof techniques, it provides an important consistency check supporting hardness conjectures \cite{Hangleiter_2023, aaronson_computational_2011,bouland_complexitytheoretic_2023}. In particular, such arguments typically reduce approximate sampling to estimating output probabilities to within an exponentially small \emph{additive} error $\varepsilon$  via Stockmeyer’s approximate counting algorithm \cite{aaronson_computational_2011,movassagh2023hardness}. 
If such estimates were hard on average, this would rule out efficient classical sampling under standard complexity assumptions.
 However, current techniques  establish  hardness only for much smaller errors $\varepsilon' \ll \varepsilon$, leaving  a persistent robustness gap \cite{Hangleiter_2023,movassagh2023hardness,bouland_complexitytheoretic_2023}.

Anti-concentration is invoked to bridge this gap by converting additive guarantees into  multiplicative ones.
Indeed, since the additive error scales as $1/|\Phi_m^n|$, such a conversion  requires that typical probabilities are not too small.
In this sense, anti-concentration identifies the subset of instances for which additive estimates can be meaningfully upgraded to multiplicative approximations. Combined with average-case multiplicative hardness, this mechanism closes the robustness gap \cite{aaronson_computational_2011,bouland_complexitytheoretic_2023}, as we further detail in \autoref{app:anticon_hardness}.

Concretely, for a given interferometer $U \in U(m)$ and an output configuration $S \in \Phi_m^n$, we define
\begin{equation}\label{eq:prob_S}
    p_U(S) = \Tr\!\left[\ketbra{S_{\mathrm{cf}}}{S_{\mathrm{cf}}}\, \varphi_n(U)^\dagger \ketbra{S}{S}\, \varphi_n(U)\right]\;,
\end{equation}
as the probability of observing outcome $S$ after evolving the collision-free input state $\ket{S_{\mathrm{cf}}} = \ket{1,\dots,1,0,\dots,0}$ under the action of the interferometer $U$ on the $n$-particle sector.

\begin{definition}[Anti-concentration of Boson sampling]\label{def:anticon_def}
    Let $U \sim U(m)$ be a Haar-random interferometer and let $S \sim \Phi_m^n$ denote a uniformly sampled output configuration. 

The output distribution of a boson sampler is said to be anti-concentrated if
\begin{equation}\label{eq:weak_ac_main}
    \Pr_{\substack{U \sim U(m)\\ S \sim \Phi_m^n}}
    \left[
    p_U(S) \geq \frac{\alpha}{|\Phi_m^n|}
    \right]
    \geq \gamma(\alpha),
\end{equation}
for some constants $\alpha,\gamma(\alpha) > 0$.

Moreover, the output distribution  is said to be \emph{weakly} anti-concentrated if $\gamma(\alpha)= \Omega(n^{-a})$ with $a = O(1)$.

\end{definition}

This formulation asserts that   a fraction $\gamma(\alpha)$  of  instances given by a pair $(U,S)$ have probabilities  at least a constant fraction of the mean value $1/|\Phi_m^n|$. Depending on the scaling of $\gamma(\alpha)$, one can distinguish different degrees of anti-concentration. The standard notion corresponds to $\gamma(\alpha) = \Theta(1)$, while a weaker form allows for polynomial decay.  

Anti-concentration on a constant fraction of instances imposes strong structural constraints on the output distribution. It prevents the distribution from being \emph{polynomially-sparse}, i.e., supported on only polynomially many outcomes \cite{Pashayan_2020}. Since such sparse distributions admit efficient classical sampling algorithms \cite{lim2025classicalalgorithmsestimatingexpectation}, this rules out these simulation strategies for typical interferometers.

A natural quantity often used for diagnosing anti-concentration is the normalized average outcome collision probability
\begin{equation}\label{eq:outcome_collision}
    P_2(m,n) := |\Phi_m^n| \sum_{S \in \Phi_m^n} \mathbb{E}_{U \sim U(m)}\!\left[p_U(S)^2\right].
\end{equation}
 This quantity measures the probability that two independent samples from the same output distribution coincide. Equivalently, it corresponds to the squared $\ell_2$-norm of the probability vector and thus quantifies the concentration of the distribution: $P_2(m,n) = 1$ for the uniform distribution and increases as the distribution becomes more peaked (reaching $|\Phi_m^n|$ in the extremal case). It also coincides with the ideal linear cross-entropy benchmarking score, providing an operationally relevant diagnostic of the distribution \cite{Ehrenberg_2025}.

Characterizing the scaling of $P_2$ directly implies  anti-concentration via inequalities such as Paley--Zygmund:

\begin{equation}\label{eq:PZ}
     \Pr_{\substack{U \sim U(m)\\ S \sim \Phi_m^n}}
    \left[
    p_U(S) \geq \alpha \mathbb{E}[ p_U(S)]
    \right] \geq (1-\alpha)^2 \frac{\mathbb{E}[ p_U(S)]^2}{\mathbb{E}[ p_U(S)^2]},
\end{equation}
where expected values are taken over Haar random interferometers and uniformly sampled output configurations. 

As detailed in \autoref{sec:P2}, 
the ratio  of moments  appearing in the right hand side of \autoref{eq:PZ} can be expressed as

\begin{equation}\label{eq:paley_back}
    \frac{\mathbb{E}[ p_U(S)]^2}{\mathbb{E}[ p_U(S)^2]} = \frac{1}{P_2(m,n)},
\end{equation}
so that $P_2(m,n)$ provides a direct quantitive measure of anti-concentration.

So far, analytical results on anti-concentration in boson sampling  has been restricted the dilute regime where the number of modes scales at least quadratically in the number of photons.

For Fock-state boson sampling,   weak  anti-concentration was established  in the seminal work of Aaronson and Arkhipov \cite[Theorem 54]{aaronson_computational_2011}, and later extended to Gaussian boson sampling in \cite{Ehrenberg_2025_tansition,Ehrenberg_2025}, both in the collision-free regime. Progress toward stronger notions of anti-concentration  has been made in  \cite{nezami2021permanentrandommatricesrepresentation} via higher-moments computation, however no complete analytical charcterization has been reached.

These  results rely on the fact that in the dilute regime where $m = \Omega(n^2)$, the output distribution is dominated by collision-free configurations as dictated by the so-called bosonic birthday paradox  \cite{aaronson_computational_2011,arkhipov_bosonic_2012}. In this regime, $P_2(m,n)$ can be approximated by restricting the sum in \autoref{eq:outcome_collision} to collision-free configurations, rather than all Fock states, as detailed in \cite{Ehrenberg_2025} and further discussed in \autoref{sec:P2}. 

Moreover, since the set of collision-free states is invariant under mode permutations and Haar invariance over $U(m)$ is preserved through the photonic homomorphism $\varphi(U(m))$, all collision-free output configurations contribute equally on average. Consequently, it suffices to evaluate the second moment for a single reference configuration, e.g., the collision-free input state $\ket{S_{\mathrm{cf}}}$ \cite{Ehrenberg_2025}.
This yields the approximation
\begin{equation}
    P_2(m,n) \simeq \binom{m}{n}^2 \, \mathbb{E}_{U \sim U(m)}\!\left[p_U(S_{\mathrm{cf}})^2\right].
\end{equation}
The remaining expectation can be evaluated using Gaussian approximations for submatrices of Haar-random unitaries,  due to the so-called \emph{hiding} property. This reduces the problem to computing moments of permanents of i.i.d.\ Gaussian matrices, which can be analysed using tools from random matrix theory \cite{aaronson_computational_2011}. We note that while the hiding property has been proven in regimes such as $n^3 = O(m)$ \cite{Leverrier_2018} and $n^2 = O(m)$ for random orthogonal matrices \cite{jiang_distances_2019}, it breaks down in the  experimentally relevant   \emph{saturated} regime, in which the number of modes scales linearly with the number of photons \cite{bouland_complexitytheoretic_2023,young_atomic_2024,Wang_2019}.

In low-mode regimes where photon collisions occur with significant probability, these simplifications no longer apply. Collision events become prevalent, and the hiding property breaks down. As a result, the second moments of output probabilities must be analyzed over all configurations to accurately evaluate $P_2(m,n)$.
At the same time, standard proof techniques for establishing the complexity-theoretic hardness of boson sampling also fail in this \emph{saturated} regime, where the number of modes scales linearly with the number of photons \cite{aaronson_computational_2011,bouland2025exponentialimprovementsaveragecasehardness}. In \cite{bouland_complexitytheoretic_2023}, new techniques are introduced to address this regime, based on analogous reductions and average-case hardness assumptions for approximating output probabilities (see \autoref{app:anticon_hardness} for more details). However, the robustness gap persists in this setting, and anti-concentration remains a key ingredient toward closing it.
Despite this progress, anti-concentration in the collision-dominated regime is currently supported only by numerical evidence. In the next section, we address this gap by establishing anti-concentration for boson sampling beyond the dilute regime.

\section{Main results}\label{sec:results}

 In this section, we present our main results on the evaluation of second moments and their application to anti-concentration in boson sampling. Our approach builds on the representation-theoretic framework developed in \autoref{sec:reptheory}, which provides a multiplicity-free decomposition of the fixed particle-number operator space into irreducible components together with an associated ladder structure. 

In \autoref{sec:iter_norms},  we show that the second moment of expectation values of particle-number-preserving observables for Haar-random interferometers admits a simple decomposition in terms of Hilbert--Schmidt norms of projections onto irreducible components, as detailed in \autoref{prop:sec_moment}. To evaluate these quantities, we introduce in \autoref{th:iterative_removal_main} a recursive projection method based on the underlying $\mathfrak{sl}_2$ structure, which allows us to isolate irreducible contributions without constructing explicit projectors. This leads to closed-form expressions for the corresponding projection norms given in \autoref{th:recursive_irrep_norm}, which can be efficiently computed for a broad class of observables, including Fock states.

In \autoref{subsec:anticon}, we apply these results to output probabilities of boson sampling circuits to establish anti-concentration. By expressing the normalized average outcome collision probability in terms of second moments, we obtain a closed-form expression valid for arbitrary system sizes in \autoref{th:P2}. Analysing its asymptotic scaling, given in \autoref{thm:asymptoticP2}, allows us to prove anti-concentration in regimes beyond the dilute regime, including in the linear regime where photon collisions are prevalent, as stated in \autoref{cj:scalingP}.

\subsection{The second moment of expectation values}
\label{sec:iter_norms}

We consider expectation values of particle-number–preserving observables of the form  
\begin{equation}\label{eq:linear_form}
    f_U(\rho,O) = \Tr[\rho \varphi(U)^\dagger O \varphi(U)].
\end{equation}

When restricting to states and observables with a fixed number of particles \(n\)—a physically well-motivated setting in passive linear optics, encompassing scenarios such as photon-number-resolving measurements or fixed-photon-number inputs—it suffices to consider the action of passive linear optics on the corresponding operator space \(W_n\), governed by the representation \(\varphi_n\) and its adjoint action \(\omega_n\), introduced in \autoref{eq:phi_n} and \autoref{eq:decomp_omega_m}, respectively. In this case, the expectation value reduces to  
\begin{align}
    f_U(\rho,O) &= \Tr[ \rho^{(n)}  \varphi_n(U)^\dagger O^{(n)}  \varphi_n(U)] \label{eq:linear_form_n}\\
    &= \dbra{O^{(n)}} \omega_n(U) \dket{\rho^{(n)}}\;,\label{eq:vectorized_loss}
\end{align}
where \(\rho^{(n)}\) and \(O^{(n)}\) denote the projections of the initial state and observable onto the \(n\)-particle Fock sector \(\mathcal{F}_n\) and $\dket{\cdot}$ denotes their vectorized form.

Exploiting the decomposition of \(W_n\) into irreducible components \(\lambda_k^{(n)}\), for \(0 \leq k \leq n\), given in \autoref{lemma:irrep_decomp}, we derive closed-form expressions for the second moment of expectation values of the form in \autoref{eq:linear_form}, averaged over Haar-random interferometers $U \in U(m)$. Since passive linear optics transformations are described by the unitary group $U(m)$, where $m$ denotes the number of modes, sampling Haar-random interferometers is experimentally feasible using standard photonic architectures \cite{mezzadri_how_2007, carolan_universal_2015, moralis-pegios_perfect_2024, clements_optimal_2016}. Consequently, analysing concentration phenomena under Haar randomness is directly relevant for realistic implementations.

 Doing so, we obtain the following proposition.

\begin{proposition}\label{prop:sec_moment}
    Consider a particle-number preserving bosonic observable $O \in \mathcal{W}$ and its  expectation value $f_U(\rho,O)$ of the form in \autoref{eq:linear_form} with respect to an  initial state $\rho$ on $m$ modes evolved under passive linear optics transformations. Further, assume that either the initial state or the observable acts exactly on $n$ photons.  Then, the second moment over  Haar-random interferometers $U \sim U(m)$ of $f_U(\rho,O)$ is given by  

    \begin{equation}\label{eq:sec_m_exp}
        \e[U \sim U(m)]{f_U(\rho,O)^2} =  \sum_{k=0}^n \frac{1}{d_{k}^{(n)}}\|P_k^{(n)}(\rho)\|_2^2 \|P_k^{(n)}(O)\|_2^2,
    \end{equation}
    where $P_k^{(n)}$ is a projection map onto the isotypic component $\lambda_k^{(n)}$ introduced in  \autoref{lemma:irrep_decomp} and $d_{k}^{(n)}$ is its dimension given by \autoref{eq:irrep_dim}. 
\end{proposition}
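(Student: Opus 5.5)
The plan is to write the second moment as a quadratic form in the two-copy representation $\omega_n\otimes\omega_n$ and apply Schur orthogonality on the multiplicity-free decomposition of \autoref{lemma:irrep_decomp}.

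\textbf{Reduction to $W_n$.} Since one of $\rho$, $O$ lives exactly in the $n$-photon sector and $\varphi(U)$ is block diagonal (\autoref{eq:phi_n}), the trace only sees the $n$-photon blocks. Hence $f_U(\rho,O)=\dbra{O^{(n)}}\omega_n(U)\dket{\rho^{(n)}}$ as in \autoref{eq:vectorized_loss}. Both vectors can be taken Hermitian, so $f_U$ is real and
\[
f_U^2=f_U\,\overline{f_U}=\dbra{O^{(n)}}\omega_n(U)\dket{\rho^{(n)}}\,\overline{\dbra{O^{(n)}}\omega_n(U)\dket{\rho^{(n)}}} .
\]

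\textbf{Irrep expansion.} Decompose $\dket{\rho^{(n)}}=\sum_k\dket{P_k^{(n)}(\rho)}$ and $\dket{O^{(n)}}=\sum_k\dket{P_k^{(n)}(O)}$ along the orthogonal decomposition $W_n=\bigoplus_k\lambda_k^{(n)}$. Here $P_k^{(n)}$ is the orthogonal projector; it is Hilbert--Schmidt orthogonal because $\omega_n$ is unitary with respect to the HS inner product, and the irreps are inequivalent. Each $\lambda_k^{(n)}$ is $\omega_n$-invariant, so
\[
f_U=\sum_k\dbra{P_k(O)}\omega_n(U)\dket{P_k(\rho)} .
\]
Consequently $\mathbb{E}[f_U\overline{f_U}]$ is a double sum over $(k,k')$ of Haar integrals of matrix coefficients $\int \langle u|\omega^{(k)}(U)|v\rangle\,\overline{\langle u'|\omega^{(k')}(U)|v'\rangle}\,dU$.

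\textbf{Schur orthogonality.} Apply the Schur orthogonality relations for irreducible unitary representations of compact groups. If $k\neq k'$ the integral vanishes, since the irreps are inequivalent by \autoref{lemma:irrep_decomp}. If $k=k'$ it equals $\langle u,u'\rangle\overline{\langle v,v'\rangle}/d_k^{(n)}$. Substituting $u=u'=P_k(O)$ and $v=v'=P_k(\rho)$ gives $\|P_k(O)\|_2^2\|P_k(\rho)\|_2^2/d_k^{(n)}$. Summing over $k$ yields \autoref{eq:sec_m_exp}, with the dimensions taken from \autoref{eq:irrep_dim}.

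\textbf{Main obstacle.} The delicate point is making sure the complex conjugation is handled correctly, so that one genuinely pairs $\omega$ with $\bar\omega$ rather than $\omega$ with $\omega$. Pairing $\omega$ with itself could produce extra terms between $\lambda_k$ and the dual of $\lambda_{k'}$, because these adjoint irreps are self-dual. Working with $f_U\overline{f_U}$ rather than $f_U^2$ sidesteps this completely, and it is justified by Hermiticity: $\rho$ and $O$ are Hermitian, so $f_U$ is real. Hermiticity also shows that each $P_k(\cdot)$ preserves Hermitian operators, since the decomposition commutes with the adjoint map $X\mapsto X^\dagger$. This fact is not needed for the argument, though.
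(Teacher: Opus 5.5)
Your proposal is correct and follows essentially the paper's route. The paper writes $f_U^2=\dbra{O}\omega_n(U)\dket{\rho}\dbra{\rho}\omega_n(U)^\dagger\dket{O}$, which is $f_U\overline{f_U}$ by Hermiticity, and then applies the $G$-twirl theorem to $\dket{\rho}\dbra{\rho}$ over the multiplicity-free decomposition $W_n\simeq\bigoplus_k\lambda_k^{(n)}$; this is exactly the Schur-orthogonality computation you carry out on matrix coefficients, and your identification of the surviving terms as $\|P_k^{(n)}(\cdot)\|_2^2$ for the Hilbert--Schmidt-orthogonal projection is, if anything, cleaner than the paper's final step, which writes the irrep component as $\Pi_k^{(n)}O\,\Pi_k^{(n)}$.
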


\autoref{prop:sec_moment} establishes that, for a particle-number-preserving observable, the  second moment of its expectation value over Haar random interferometers reduces to the computation of projection norms of the initial state and observable into the irreducible representations appearing in \autoref{lemma:irrep_decomp}. This result follows directly from twirling the composition of the adjoint representation and the $n$-particle photonic homomorphism over the group $U(m)$ by means of Schur's lemma \cite{bartlett_reference_2007}.
Specifically, averaging over Haar-random interferometers within a fixed photon-number sector
eliminates inter-irrep coherences, giving rise to the Hilbert--Schmidt norms of irrep projections
appearing in \autoref{eq:sec_m_exp}.
A detailed proof of \autoref{prop:sec_moment} is provided in \autoref{app:sec_moment_proof}.

To obtain a closed-form expression for the second moment, it remains to compute the Hilbert--Schmidt norms of the projections onto the irreducible components, i.e. $\|P_k^{(n)}(\rho)\|_2^2$  and $\|P_k^{(n)}(O)\|_2^2$.

A standard approach to constructing bases for irreducible components relies on Clebsch–Gordan theory, which decomposes the tensor product representation $\omega_n(U)$ into irreducible $U(m)$-modules via a change of basis defined by Clebsch–Gordan coefficients \cite{VilenkinKlimyk1995}. These coefficients provide explicit orthonormal bases for each subspace $\lambda_k^{(n)}$ and yield corresponding projection operators. 
While this construction gives a complete and explicit description of the decomposition, it operates at the level of basis elements and becomes increasingly cumbersome as the dimension of the irreducible component grows \cite{arienzo_bosonic_2024}. We defer the reader to \autoref{sec:CG} for more details about projections with Clebsch–Gordan coefficients.

 By exploiting the representation-theoretic characterization of the particle-number preserving operator space established in \autoref{sec:moments}, we introduce a recursive projection method that bypasses the explicit construction of such bases and leverages the underlying $\mathfrak{sl}_2$ ladder structure to isolate irreducible contributions directly. Rather than relying on Clebsch–Gordan decompositions, this approach is based on invariant maps that act diagonally on irreducible components—effectively isolating each irrep without requiring an explicit basis—thereby providing a basis-independent route to evaluating quantities of interest, most notably Hilbert–Schmidt norms.

\begin{proposition}[Iterative projection onto irreducible components]
\label{th:iterative_removal_main}
Let $O \in W_n$ be an observable acting on the $n$-particle Fock sector. Denote by $P_k^{(n)}(O)$ the projection of $O$ onto the irreducible components $\lambda_k^{(n)}$ for each $0 \leq k \leq n$ introduced in \autoref{lemma:irrep_decomp}.

Then,  the projections $P_k^{(n)}(O)$ can be computed recursively as
\begin{equation}
    P_k^{(n)}(O)
    =
    \frac{1}{\alpha_{k,k+1,n}}
    \!\left(
        \Dm^{\,n-k}\!\circ\!\Cm^{\,n-k}(O)
        -
        \sum_{r=0}^{k-1} \alpha_{r,k+1,n}\, P_r^{(n)}(O)
    \right),
\end{equation}
where $\Cm^{\,n-k}$ and $\Dm^{\,n-k}$ denotes the application of the lowering and raising map respectively introduced in \autoref{eq:C_map} and \autoref{eq:D_map}, iteratively for $n-k$ times. For all $0 \leq r < k \leq n$, the coefficients $\alpha_{r,k+1,n}$ are given by
\begin{equation}
    \alpha_{r,k+1,n}
    =
    \frac{(n-r)! \, (n+m+r-1)!}{(k-r)! \, (k+m+r-1)! }.
\end{equation}
\end{proposition}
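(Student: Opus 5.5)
The plan is to show that the composite map $\Dm^{\,n-k}\circ\Cm^{\,n-k}$ restricted to $W_n$ is diagonal in the irrep decomposition of \autoref{lemma:irrep_decomp}. It should act as the scalar $\alpha_{r,k+1,n}$ on $\lambda_r^{(n)}$ for $r\le k$ and as zero on $\lambda_r^{(n)}$ for $r>k$. Granting this, decompose $O=\sum_{r=0}^n P_r^{(n)}(O)$. Then
\begin{equation*}
\Dm^{\,n-k}\Cm^{\,n-k}(O)=\sum_{r=0}^{k}\alpha_{r,k+1,n}P_r^{(n)}(O).
\end{equation*}
Isolating the $r=k$ term and dividing by $\alpha_{k,k+1,n}$ gives the claimed recursion, provided $\alpha_{k,k+1,n}\neq 0$; this holds from the explicit factorial formula.

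\textbf{Step 1: $\Dm^{\,n-k}\Cm^{\,n-k}$ is a scalar on each irrep.} The map sends $W_n\to W_k\to W_n$. By \autoref{th:howe_duality} (the commuting actions, \autoref{th:commuting_action}), both $\Cm$ and $\Dm$ are $U(m)$-equivariant, so the composite is a $U(m)$-intertwiner on $W_n$. Since $W_n$ is multiplicity-free, Schur's lemma forces it to act as a scalar $c_r$ on each $\lambda_r^{(n)}$.

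\textbf{Step 2: vanishing for $r>k$.} The restriction of $\Cm^{\,n-k}$ to $\lambda_r^{(n)}$ is an equivariant map into $W_k$. By \autoref{lemma:irrep_decomp}, $W_k$ contains only the types $\lambda_0,\dots,\lambda_k$, which are inequivalent to $\lambda_r$ when $r>k$. Schur's lemma then gives $c_r=0$.

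\textbf{Step 3: computing $c_r$ for $r\le k$.} By \eqref{eq:decomp_easy}, $\lambda_r^{(n)}=\Dm^{\,n-r}(\lambda_r^{(r)})$ with $\Cm v=0$ for $v\in\lambda_r^{(r)}$, so it suffices to evaluate the composite on $\Dm^{\,n-r}v$. The key input is the action of $H=[\Cm,\Dm]$ on a fixed sector. Expanding with $a_s a_t^\dagger=a_t^\dagger a_s+\delta_{st}$, I expect
\begin{equation*}
H(X)=NX+XN+mX=(2j+m)X \quad \text{for } X\in W_j,
\end{equation*}
where $N$ is the number operator. The standard $\mathfrak{sl}_2$ telescoping $\Cm\Dm^{\,j}v=\sum_{i=0}^{j-1}\Dm^{\,j-1-i}H\Dm^{\,i}v$ then yields
\begin{equation*}
\Cm\Dm^{\,j}v=j(2r+j+m-1)\,\Dm^{\,j-1}v .
\end{equation*}
Applying $\Cm$ a total of $n-k$ times to $\Dm^{\,n-r}v$ multiplies by $\prod_{j=k-r+1}^{n-r} j(2r+j+m-1)$, which equals
\begin{equation*}
\frac{(n-r)!}{(k-r)!}\cdot\frac{(n+r+m-1)!}{(k+r+m-1)!}=\alpha_{r,k+1,n},
\end{equation*}
and leaves $\Dm^{\,k-r}v$. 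Applying $\Dm^{\,n-k}$ then returns $\Dm^{\,n-r}v$, so $c_r=\alpha_{r,k+1,n}$.

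\textbf{Main obstacle.} I expect the real work to be the precise computation of $H$ on $W_j$: verifying that it is exactly the scalar $2j+m$, including the sign convention for $H=[\Cm,\Dm]$ relative to \autoref{lemma:sl2_commutation_main}, and checking that no normal-ordering corrections spoil it. I would carry out this computation first, directly on matrix units $\ketbra{S}{T}$.
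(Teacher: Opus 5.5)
Your proposal is correct and follows essentially the paper's own argument. Equivariance of $\Cm,\Dm$, the ladder description $\lambda_r^{(n)}=\Dm^{\,n-r}(\lambda_r^{(r)})$ and $\Cm(\lambda_r^{(r)})=0$ reduce the claim to computing the scalar by which $\Cm^{\,n-k}\Dm^{\,n-k}$ acts on $\lambda_r^{(k)}$ for $r\le k$, with the terms $r>k$ vanishing. The paper obtains this scalar as $\prod_{i=k}^{n-1}\beta_{r,i}$, where $\beta_{r,i}=(i-r+1)(m+r+i)$ comes from the telescoping recurrence $\beta_{r,i}-\beta_{r,i-1}=m+2i$ in \autoref{prop:CD_DC}; this is exactly your factor $j(2r+j+m-1)$ at $j=i-r+1$. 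The obstacle you anticipated is harmless, since $H(X)=mX+\{\hat N,X\}$ holds identically, as the paper's computation there also shows.
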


 \autoref{th:iterative_removal_main} provides a recursive procedure to project a given operator acting on $n$ particles onto the irreducible components of the operator space. It follows from the Howe duality established in \autoref{th:howe_duality}, which yields the explicit decomposition of fixed particle-number operator subspaces in \autoref{eq:decomp_easy}. Concretely, the iterative application of the lowering map $\Cm$ progressively removes contributions from higher irreducible components. By subtracting the contributions of the lower ones, one isolates the desired irrep, which can then be lifted back to the appropriate particle-number sector via successive applications of the raising map $\Dm$. The proof of \autoref{th:iterative_removal_main} is provided in \autoref{app:iterative_removal}.

 The iterative projection procedure introduced in \autoref{th:iterative_removal_main} exploits the underlying representation-theoretic structure through the raising and lowering maps, enabling the direct isolation of irreducible components without explicit basis constructions or full projectors, and in particular avoiding the computation of Clebsch–Gordan coefficients. This recursive structure allows for the efficient evaluation of Hilbert–Schmidt norms and leads to the following corollary on irreducible component norms.

\begin{corollary}[Irrep norms]\label{th:recursive_irrep_norm}
Let $O \in W_n$ be an Hermitian operator acting on the $n$-particle Fock sector, and let $P_k^{(n)}(O)$ denote its projection onto the irreducible component $\lambda_k^{(n)}$ introduced in \autoref{lemma:irrep_decomp}. 
For each $0 \le l \le n$, define
\begin{equation}\label{eq:g_k_th_main}
    g_l(O)
    :=
    \Tr\!\left[\left(\Cm^{\,n-l}(O)\right)^2\right],
\end{equation}
where $\Cm^{\,n-l}$ denotes the $(n-l)$-fold application of the lowering map defined in \autoref{eq:C_map}.

Then, for each $0 \le k \le n$, the Hilbert--Schmidt norm of $P_k^{(n)}(O)$ is given by
\begin{equation}
    \Tr\!\left[\left(P_k^{(n)}(O)\right)^2\right]
    =
    \frac{1}{\alpha_{k,k+1,n}}
    \sum_{l=0}^{k}
    \frac{(-1)^{k-l}  g_l(O)}{(k-l)!  (k+l+m-1)_{k-l}}
    ,
\end{equation}
where $(x)_p := x(x+1)\cdots(x+p-1)$ denotes the Pochhammer symbol, and
\begin{equation}
    \alpha_{k,k+1,n}=(n-k)! \, (m+2k)_{\,n-k}.
\end{equation}
\end{corollary}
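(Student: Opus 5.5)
The plan is to turn the operator identity of \autoref{th:iterative_removal_main} into a triangular linear system relating the quantities $g_l(O)$ to the projection norms $N_r:=\Tr[(P_r^{(n)}(O))^2]$, and then to invert that system explicitly.

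\emph{Step 1 (adjointness).} The raising and lowering maps are Hilbert--Schmidt adjoints of each other, $\Cm^\dagger=\Dm$. Indeed, by cyclicity of the trace,
\[
\Tr[A^\dagger a_s B a_s^\dagger]=\Tr[(a_s^\dagger A a_s)^\dagger B].
\]
Since $O$ is Hermitian and $\Cm$ preserves Hermiticity, this gives
\[
g_l(O)=\langle \Cm^{n-l}O,\Cm^{n-l}O\rangle_{HS}=\langle O,\Dm^{n-l}\Cm^{n-l}O\rangle_{HS}.
\]

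\emph{Step 2 (action of $\Dm^{n-l}\Cm^{n-l}$ on each irrep).} Write $O=\sum_r P_r^{(n)}(O)$. Because the $\mathfrak{sl}_2$ action commutes with the $U(m)$ action (\autoref{th:howe_duality}), $\Dm^{n-l}\Cm^{n-l}$ is $U(m)$-equivariant on $W_n$. Since $W_n$ is multiplicity free (\autoref{lemma:irrep_decomp}), Schur's lemma says this map acts as a scalar on each $\lambda_r^{(n)}$.

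\begin{itemize}
\item For $r>l$ the scalar is $0$. By \autoref{eq:decomp_easy}, $\lambda_r^{(n)}=\Dm^{n-r}\lambda_r^{(r)}$, and the $\mathfrak{sl}_2$ ladder shows that $n-l>n-r$ applications of $\Cm$ push below the lowest-weight space $\ker\Cm\cap W_r$.
\item For $r\le l$ the scalar is $\alpha_{r,l+1,n}$. This can be read off directly from \autoref{th:iterative_removal_main} rewritten as $\Dm^{n-l}\Cm^{n-l}(O)=\sum_{r\le l}\alpha_{r,l+1,n}P_r^{(n)}(O)$.
\end{itemize}

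Inequivalent irreps of a unitary representation are Hilbert--Schmidt orthogonal. Hence
\[
g_l(O)=\sum_{r=0}^{l}\alpha_{r,l+1,n}\,N_r,\qquad 0\le l\le n .
\]

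\emph{Step 3 (inverting the triangular system).} Factor
\[
\alpha_{r,l+1,n}=\frac{A_r}{(l-r)!\,(l+r+m-1)!},\qquad A_r=(n-r)!\,(n+m+r-1)!,
\]
and set $x_r=A_rN_r$. The system becomes $g_l=\sum_{r\le l}M_{lr}x_r$ with $M_{lr}=1/\big((l-r)!(l+r+m-1)!\big)$.

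Next rewrite the target formula in the same variables. Using $\alpha_{k,k+1,n}=A_k/(2k+m-1)!$ and $(k+l+m-1)_{k-l}=(2k+m-2)!/(k+l+m-2)!$, the claim is equivalent to
\[
x_k=(2k+m-1)\sum_{l=0}^{k}\frac{(-1)^{k-l}(k+l+m-2)!}{(k-l)!}\,g_l .
\]
As a side check, this also confirms the stated form $\alpha_{k,k+1,n}=(n-k)!\,(m+2k)_{n-k}$.

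It therefore remains to show that the lower-triangular matrices $M$ and
\[
\tilde M_{kl}=(-1)^{k-l}(2k+m-1)\,\frac{(k+l+m-2)!}{(k-l)!}
\]
are mutually inverse. Concretely, this is the identity
\[
\sum_{l=r}^{k}\frac{(-1)^{k-l}(2k+m-1)(k+l+m-2)!}{(k-l)!\,(l-r)!\,(l+r+m-1)!}=\delta_{kr}.
\]

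\emph{Main obstacle.} Proving this identity is the crux. I would first recognize it as an instance of Gould's (Legendre-type) inverse pair with parameter $m-1$. If a direct citation is not clean, I would substitute $j=l-r$ and $N=k-r$, which reduces the sum to a terminating hypergeometric sum ${}_2F_1(-N,\,N+2r+m-1;\,2r+m;\,1)$ up to prefactors. Chu--Vandermonde then evaluates this sum to $(1-N)_N/(2r+m)_N$, which vanishes for $N\ge 1$. The case $N=0$ gives $1$ directly. A Zeilberger-style telescoping check is a fallback if the bookkeeping of prefactors proves delicate.
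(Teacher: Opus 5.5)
Your proposal is correct. It reaches the same lower-triangular system as the paper, but by a noticeably shorter route.

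\textbf{The paper's route.} In the proof of \autoref{th:recursive_irrep_norm_app}, the paper starts from the recursive projection formula of \autoref{th:iterative_removal}. It writes $\Tr[(O_j^{(n)})^2]=\alpha_{j,j+1,n}^{-1}\Tr[(\tilde O_j^{(j)})^2]$ and expands the square of the recursive expression for $\tilde O_j^{(j)}$. This introduces cross terms $f_k=\Tr[\Cm_{k+1\to n}(O)\,\tilde O_k^{(k)}]$. An auxiliary recursion $t_j=\sum_{k<j}a_{k,j}t_k$ with $t_0=0$ is then needed to show that the norm equals $f_j$. This yields $g=Af$ with $A_{k,l}=\alpha_{l,k+1,n}/\alpha_{l,l+1,n}$. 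The paper inverts $A$ by induction on $k-l$, using a Pochhammer-sum identity cited from the literature.

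\textbf{Your route.} You read $g_l$ as the quadratic form $\langle O,\Dm^{n-l}\Cm^{n-l}O\rangle$ of a positive, $U(m)$-equivariant operator that is diagonal on the multiplicity-free decomposition. This gives $g_l=\sum_{r\le l}\alpha_{r,l+1,n}N_r$ at once, with no cross terms. Your $x_r=A_rN_r$ is exactly $(2r+m-1)!$ times the paper's $f_r$. You then verify the inverse directly. I checked that the sum does reduce to ${}_2F_1(-N,N+2r+m-1;2r+m;1)=(1-N)_N/(2r+m)_N$ by Chu--Vandermonde. The prefactors also work out, giving exactly $1$ at $N=0$.

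\textbf{Comparison.} Your argument makes it transparent why $g_l$ is a triangular combination of irrep norms. It needs only the eigenvalues of $\Dm^{n-l}\Cm^{n-l}$ on each $\lambda_r^{(n)}$, which are the products of the $\beta_{r,k}$ from \autoref{prop:CD_DC}, rather than the full projection recursion. It also replaces the induction plus cited identity with a classical summation. The paper's route stays closer to the constructive projection procedure, but at the price of considerably more bookkeeping.

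\textbf{A point both leave implicit.} Identifying $\langle P_r^{(n)}(O),P_r^{(n)}(O)\rangle$ with $\Tr[(P_r^{(n)}(O))^2]$ requires $P_r^{(n)}(O)$ to be Hermitian. This holds, since $\Cm$ and $\Dm$ preserve Hermiticity and the recursion in \autoref{th:iterative_removal_main} has real coefficients.
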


\autoref{th:recursive_irrep_norm} provides a closed-form expression for the Hilbert--Schmidt norm of each irreducible component of an operator acting on the $n$-photon sector. 
The quantities $g_l(O)$ defined in \autoref{eq:g_k_th_main} can be efficiently evaluated for a wide range of observables expressed in the $n$-photon Fock basis over $m$ modes. In particular, closed-form expressions of $g_l(O)$ for Fock states are derived in \autoref{lemma:puR_fock}, together with the proof of \autoref{th:recursive_irrep_norm}  in \autoref{app:irrep_norms}.

Combined with the second moment expression established in \autoref{prop:sec_moment}, the irrep norms computed in \autoref{th:recursive_irrep_norm} yield a compact closed form expression of the second moment of expectation values of particle-number-preserving observables. In particular, this enables the computation of second moments of output probabilities of a boson sampling experiment without relying on the hiding assumption \cite{nezami2021permanentrandommatricesrepresentation}.

\subsection{Anti-concentration of boson sampling}\label{subsec:anticon}

We now turn to the analysis of anti-concentration beyond the dilute regime, where photon collisions occur with significant probability and the hiding property no longer applies. In this setting, the output distribution cannot be reduced to collision-free configurations or treated using permutation invariance, and a direct evaluation of second moments across all output configurations is required.

As discussed in \autoref{sec:ac_back},  a natural quantity capturing anti-concentration is the normalized average outcome collision probability
\begin{equation}\label{eq:P2_recall}
    P_2(m,n) := |\Phi_m^n| \sum_{S \in \Phi_m^n} \mathbb{E}_{U \sim U(m)}\!\left[p_U(S)^2\right],
\end{equation}
which is governed by second moments of output probabilities and provides a proxy for anti-concentration via the Paley–Zygmund inequality, with further details in \autoref{sec:P2}. 
In principle, computing $P_2(m,n)$ requires summing these contributions over all output configurations, leading to a cumbersome expression.

Building on the framework developed in \autoref{sec:iter_norms}, we show that $P_2(m,n)$ can be evaluated exactly for arbitrary numbers of modes and photons.
Specifically, our analysis of second moments and reduced irrep purities yields a closed-form expression for $P_2(m,n)$ over Haar-random interferometers, valid in both the dilute and (intermediate) saturated regimes, which we state in the following \autoref{th:P2}. The detailed derivation is provided in \autoref{app:P2}.

\begin{theorem}[Normalized average outcome-collision probability]\label{th:P2}
The normalized average outcome-collision probability for linear optical
circuits, $P_2(m,n)$, defined in \autoref{eq:P2_recall}, is given by
 \begin{equation}\label{eq:P2_thm}
     P_2(m,n) =  \sum_{k=0}^n \frac{(m+k-1)_{k}}{(m+n)_k} {}_2 F_1(-k,n-k+1;2-m-2k;-1)\;.
 \end{equation}
 The shorthand $(x)_p:= x (x+1) \dots (x+p-1)$ denotes the Pochhammer symbol and ${}_2 F_1$ is the hypergeometric function defined as 
 \begin{equation}
     {}_2 F_1(-k,a;b;z) = \sum_{p=0}^k (-1)^p \binom{k}{p} \frac{(b)_p}{(c)_p} z^p,
 \end{equation}
 for any positive integer $k$.
\end{theorem}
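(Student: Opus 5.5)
We start from \autoref{prop:sec_moment} with $\rho=\ketbra{S_{\mathrm{cf}}}{S_{\mathrm{cf}}}$ and $O=\ketbra{S}{S}$, and write
\begin{equation*}
P_2(m,n)=|\Phi_m^n|\sum_{k=0}^n \frac{1}{d_k^{(n)}}\,\|P_k^{(n)}(\ketbra{S_{\mathrm{cf}}}{S_{\mathrm{cf}}})\|_2^2\,\Big(\sum_{S\in\Phi_m^n}\|P_k^{(n)}(\ketbra{S}{S})\|_2^2\Big).
\end{equation*}
The key simplification is that the sum over outputs can be exchanged with the irrep decomposition. For the sum over $S$ we would first use \autoref{th:recursive_irrep_norm}, which reduces everything to the quantities $\sum_S g_l(\ketbra{S}{S})$. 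Here $\Cm^{\,n-l}(\ketbra{S}{S})$ is diagonal in the Fock basis. Indeed, $\Cm(\ketbra{S}{S})=\sum_s s_s\ketbra{S-e_s}{S-e_s}$, so iterating gives $\Cm^{\,n-l}(\ketbra{S}{S})=\sum_{T\le S}\frac{(n-l)!\,\prod_i s_i!}{\prod_i t_i!\,(s_i-t_i)!}\ketbra{T}{T}$, with $|T|=l$. Then $g_l$ is a sum of squared multinomial-type coefficients. Summing over all $S\in\Phi_m^n$ turns this into a convolution identity: for fixed $T$, $\sum_{S\ge T}\prod_i\binom{s_i}{t_i}^2$ is a coefficient extraction from $\prod_i {}_2F_1$-type generating functions $\sum_s \binom{s}{t}^2x^s$.

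Rather than fight this directly, we would use a symmetry shortcut. Since $\sum_S\ketbra{S}{S}\otimes\ketbra{S}{S}$ is invariant under mode permutations and phases but not under $U(m)$, we instead evaluate the $S$-sum of projected norms via the generating-function route above. For $S_{\mathrm{cf}}$, the computation is direct: $g_l(\ketbra{S_{\mathrm{cf}}}{S_{\mathrm{cf}}})=((n-l)!)^2\binom{n}{l}$, since each $T$ is a collision-free subset with coefficient $(n-l)!$. Plugging this into \autoref{th:recursive_irrep_norm} gives $\|P_k^{(n)}(\ketbra{S_{\mathrm{cf}}}{S_{\mathrm{cf}}})\|_2^2$ as an alternating single sum. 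We expect it to collapse via a Chu--Vandermonde identity to a closed product in $k,n,m$.

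The main obstacle is the output-summed norm $\sum_S\|P_k^{(n)}(\ketbra{S}{S})\|_2^2$. We would first try to compute $\sum_S g_l(\ketbra{S}{S})$ by counting. Writing $S=T+R$ with $|R|=n-l$, one gets $\sum_{T,R}\big(\tfrac{(n-l)!\,(T+R)!}{T!\,R!}\big)^2$. Summing over $T$ and $R$ in $m$ modes produces a hypergeometric expression. After inserting it into the alternating sum of \autoref{th:recursive_irrep_norm} and multiplying by $|\Phi_m^n|/d_k^{(n)}$ from \autoref{eq:irrep_dim}, we would aim to telescope the $l$-sum, using Pochhammer identities, into $\frac{(m+k-1)_k}{(m+n)_k}\,{}_2F_1(-k,n-k+1;2-m-2k;-1)$.

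If the direct summation resists, the fallback is to verify the identity for fixed $k$ as a rational function of $m$ and $n$ via Zeilberger-style recurrences, checking small cases against the dilute-limit result $P_2\to 2$ as a sanity check.
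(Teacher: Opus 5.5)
Your skeleton matches the paper's: \autoref{prop:sec_moment}, then \autoref{th:recursive_irrep_norm}, then the Fock-state $g_l$ of \autoref{lemma:puR_fock}, then summation over outputs. Your formulas for $\Cm^{\,n-l}(\ketbra{S}{S})$ and $g_l(\ketbra{S_{\mathrm{cf}}}{S_{\mathrm{cf}}})$ are correct. However, the proof consists of the two evaluations you leave open, and your expectations about them are reversed, so the plan as written would stall.

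The collision-free alternating sum does not collapse. Substituting $j=k-l$ turns $\sum_{l}\frac{(-1)^{k-l}}{(k-l)!\,(k+l+m-1)_{k-l}}\frac{n!\,(n-l)!}{l!}$ into exactly $\frac{n!\,(n-k)!}{k!}\,{}_2F_1(-k,n-k+1;2-m-2k;-1)$, which is the very factor appearing in the statement. Chu--Vandermonde requires argument $1$, and no product form exists here: for $k=2$ the numerator is $m^2+(5-2n)m+n^2-3n+4$, which is irreducible for $n\ge 2$. It is the output-summed side that must collapse to a product, so trying to telescope it into the ${}_2F_1$ aims at the wrong target.

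The missing ingredient is a closed form for $\sum_{S}g_l(\ketbra{S}{S})$. A fixed-$T$ generating function $\sum_s\binom{s}{t}^2x^s$ is not enough, because you also need a variable marking $|T|$. The paper uses the per-mode bivariate generating function $\sum_{r\ge 0}\sum_{b=0}^{r}\binom{r}{b}^2u^rv^b=(1-2u(1+v)+u^2(1-v)^2)^{-1/2}$ (Legendre). Its $m$-th power is a Gegenbauer generating function, and a Pfaff transformation then gives $\sum_S g_{n-k}(\ketbra{S}{S})=|\Phi_m^n|\frac{n!\,k!}{(n-k)!}\frac{(n-k+(m+1)/2)_k}{((m+1)/2)_k}$.

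Inserting this into the alternating sum of \autoref{th:recursive_irrep_norm} produces a balanced ${}_3F_2(1)$, which Pfaff--Saalsch\"utz sums. After dividing by $\alpha_{k,k+1,n}$ one gets simply $\sum_S\|P_k^{(n)}(\ketbra{S}{S})\|_2^2=\binom{m+k-2}{k}$. You can also see this identity directly. Projecting onto Fock-diagonal operators is the twirl over the diagonal torus, which commutes with $P_k^{(n)}$. Hence the sum equals the zero-weight multiplicity of $\lambda_k$, namely $|\Phi_m^k|-|\Phi_m^{k-1}|$.

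Multiplying by $|\Phi_m^n|/d_k^{(n)}$ and by $\|P_k^{(n)}(\ketbra{S_{\mathrm{cf}}}{S_{\mathrm{cf}}})\|_2^2=\frac{n!}{k!\,(m+2k)_{n-k}}\,{}_2F_1(-k,n-k+1;2-m-2k;-1)$ yields the $k$-th term $\frac{(m+k-1)_k}{(m+n)_k}\,{}_2F_1(-k,n-k+1;2-m-2k;-1)$.

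Finally, your proposed sanity check is wrong. For fixed $n$ and $m\to\infty$, every term of the claimed sum tends to $1$, so $P_2\to n+1$, not $2$.
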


\autoref{th:P2} is obtained by applying the general second-moment expression of \autoref{prop:sec_moment} to a collision-free input state and an arbitrary output configuration $S \in \Phi_m^n$ and then using the general irrep-norm expression established in \autoref{th:recursive_irrep_norm} evaluated on Fock states along with combinatorial identities. 
The resulting expression consists of a sum of contributions, each associated with an irreducible component of the $n$-particle operator space $W_n$. This decomposition of the outcome collision probability which is a relevant quantity in linear cross entropy benchmarking parallels analysis used in filtered randomized benchmarking, where noise contributions are diagnosed within each individual irrep \cite{arienzo_bosonic_2024}.

\begin{figure}[t!]
    \includegraphics[width=1.0\linewidth]{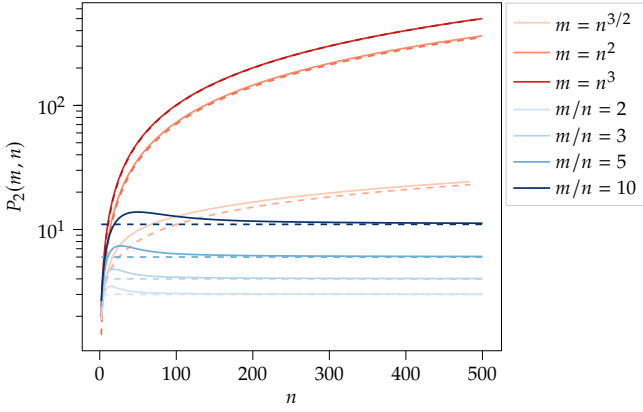}
    \caption{\justifying Numerical evaluation of the normalized average outcome collision probability $P_2(m,n)$ using \autoref{eq:P2_thm}, illustrating the asymptotic scaling established in \autoref{thm:asymptoticP2}. Solid lines show $P_2(m,n)$ in the dilute regime for $m = n^3, n^2$, the intermediate saturated regime for $m = n^{3/2}$, and the saturated (linear) regime for $m = c n$ with $c = 2, 3, 5, 10$. Dotted lines indicate the corresponding asymptotic scalings: $\Theta(n)$ for the dilute regime and  $m/n+1$ for the other regimes.}
    \label{fig:Concentration_with_bounds}
\end{figure}

Since the asymptotic behaviour of $P_2(m,n)$ as a function of $m$ and $n$ directly governs anti-concentration, we further analyse its scaling using the closed-form expression of \autoref{th:P2}, leading to the results stated in \autoref{thm:asymptoticP2}.

\begin{theorem}[Asymptotic behaviour of the normalized average outcome-collision probability]\label{thm:asymptoticP2}
Let $n \in \mathbb N$ and for constants $c\geq 1$ and $\beta \geq 1$, fix $m = cn^\beta$.  Then the normalized average outcome-collision probability $P_2(m,n)$ satisfies the following asymptotic behaviour.

\begin{enumerate}

\item In the dilute regime with $\beta \geq 2$, the collision probability grows linearly with $n$, i.e.
\begin{equation}
    P_2(m,n) = \kappa n + o(1),
\end{equation}
where $\kappa$ is a constant in $n$ that depends on $c$ and $\beta$.

\item In the intermediate saturated regime with
 $\beta<2$, the collision probability is dominated by the ratio $m/n$, i.e.
\begin{equation}
    P_2(m,n)=\frac{m}{n}+1+o(1),
\end{equation}
In particular, in the linear regime $m=cn$, it holds that 
\begin{equation}\label{eq:P2mIsCn}
    P_2(m,n) =  c + 1 + o(1).
\end{equation}
\end{enumerate}
\end{theorem}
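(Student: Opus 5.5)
The plan is to start from the exact formula \autoref{eq:P2_thm} of \autoref{th:P2} and write $P_2(m,n)=\sum_{k=0}^n A_k B_k$. Here
\[
A_k=\frac{(m+k-1)_k}{(m+n)_k}=\prod_{j=0}^{k-1}\frac{m+k-1+j}{m+n+j},\qquad B_k={}_2F_1(-k,n-k+1;2-m-2k;-1).
\]
Since $(-k)_p(-1)^p/p!=\binom{k}{p}$ and $(2-m-2k)_p=(-1)^p(m+2k-2)^{\underline{p}}$, the hypergeometric factor becomes the alternating finite sum
\[
B_k=\sum_{p=0}^k(-1)^p\binom{k}{p}\frac{(n-k+1)_p}{(m+2k-2)^{\underline{p}}}.
\]
This alternating form is awkward for uniform estimates. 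The first step is therefore to apply a Pfaff/Euler transformation, ${}_2F_1(a,b;c;z)=(1-z)^{-a}{}_2F_1(a,c-b;c;z/(z-1))$, taken with $a=-k$. This rewrites $B_k$ as $2^{k}$ times a terminating ${}_2F_1$ at argument $1/2$. After the sign bookkeeping, I expect its terms to be nonnegative, which makes upper and lower bounds by dominated convergence possible.

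The second step is a uniform local approximation of each summand. Taking logarithms of the products gives
\[
\log A_k=-\sum_{j<k}\log\Bigl(1+\tfrac{n-k+1}{m+k-1+j}\Bigr)\approx-\frac{k(n-k)}{m}.
\]
The positive-term representation of $B_k$ should similarly give $B_k\approx\bigl(1-\tfrac{n-k}{m}\bigr)^{k}$ up to $1+O(k^2/m^2+kn/m^2)$ corrections. Combining the two,
\[
A_kB_k=\exp\!\Bigl(-\frac{2k(n-k)}{m}\,(1+o(1))\Bigr),
\]
uniformly in $k$ over the range that matters. I would make this rigorous by bounding $\log(1+x)$ between $x-x^2/2$ and $x$, and by dominating the tail $k>K$ with a geometric series.

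The third step is the case analysis. If $\beta>2$, the exponent is $O(n^2/m)\to0$ uniformly, so every term is $1+o(1)$ and the sum is of order $n$. If $\beta=2$, the sum becomes a Riemann sum: with $x=k/n$ one gets $P_2\approx n\int_0^1 e^{-2x(1-x)/c}\,dx$, so $\kappa=\int_0^1 e^{-2x(1-x)/c}\,dx$. I would pin down the stated error term by an Euler--Maclaurin argument, checking the exact additive constant against the $o(1)$ claim.

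If $\beta<2$, the summand is essentially geometric near both ends of the range, with ratio $e^{-2n/m}$ for $k\ll n$ and also for $n-k\ll n$ by the $k\leftrightarrow n-k$ structure of the exponent. The middle contributes exponentially little. Each end then gives about $m/(2n)$, and the two endpoint terms $k=0$ and $k=n$ supply corrections of $\tfrac12$ each, yielding $\frac{m}{n}+1+o(1)$. In particular this gives $c+1+o(1)$ for $m=cn$.

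I expect the main obstacle to be exactly these constant-order corrections in the $\beta<2$ case. They require second-order accuracy of $A_kB_k$ near both endpoints, including the $k$ near $n$ side where the argument $n-k+1$ is small but $m+2k$ is large, rather than just the leading exponent. I would treat this by computing the ratio $A_{k+1}B_{k+1}/(A_kB_k)$ exactly from the hypergeometric contiguous relations and summing the resulting near-geometric series with explicit error bounds.
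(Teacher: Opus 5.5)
Your route differs from the paper's. The paper rewrites each summand as a Beta moment $T_k=\mathbb{E}[(1-2X)^k]$ with $X\sim\mathrm{Beta}(n-k+1,m+k-1)$, resums to $P_2=N\int_0^{\pi/2}\cos^{N-1}\theta\,\sin(B\theta)\,d\theta$ with $N=n+m-1$, $B=n+1$, and then applies Laplace/Dawson asymptotics. Your $\beta<2$ summation step does not go through, for two reasons.

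First, the local form $A_kB_k\approx e^{-2k(n-k)/m}$ is only first order in $n/m$. It is wrong exactly where the constant is decided. For $m=cn$ one has $A_k\to(c/(c+1))^k$ and $B_k\to(1-1/c)^k$, so $T_k\to((c-1)/(c+1))^k$ for fixed $k$. The Beta representation gives $T_{n-j}\to((c+1)/(c+3))^{j+1}$ for fixed $j$. So there is no $k\leftrightarrow n-k$ symmetry, and neither end ratio is $e^{-2/c}$. Summing your surrogate gives $2/(1-e^{-2/c})=c+1+\tfrac{1}{3c}+\cdots$, whereas the true end sums are $(c+1)/2$ each.

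A $(1+o(1))$ factor in the exponent is also far too weak. A relative error $\epsilon$ shifts each end sum by roughly $\epsilon\,m/(2n)$, so you would need $\epsilon=o(n/m)$.

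The hoped-for positivity fails as well. The Pfaff transform with $a=-k$ gives $2^k\,{}_2F_1(-k,1-m-n-k;2-m-2k;\tfrac12)$, whose terms still alternate; for example $B_1=2-\tfrac{m+n}{m}$.

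Second, and more fundamentally, for $\tfrac32\le\beta<2$ the target $\tfrac mn+1+o(1)$ is false, so no refinement can close the argument there. Your own exponent already shows this: the curvature term $+2k^2/m$ contributes $\sum_k\tfrac{2k^2}{m}e^{-2kn/m}\approx\tfrac{m^2}{2n^3}$ at each end.

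This can be made exact from the integral representation. Integrate by parts four times: odd derivatives of $\cos^{N-1}\theta$ vanish at $0$, and low-order derivatives vanish at $\pi/2$. This gives $P_2=\tfrac NB+\tfrac{N(N-1)}{B^3}+O(N^{5/2}B^{-4})$. Hence $P_2-\tfrac mn-1\to c^2$ at $\beta=\tfrac32$, and the difference diverges on $(\tfrac32,2)$.

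The paper's proof has the same hole. It keeps only $D_+(y)\approx\tfrac{1}{2y}$, but the discarded $O(y^{-3})$ term, multiplied by the prefactor $\tfrac{2N}{\sqrt{2(N-1)}}$, equals about $N(N-1)/B^3\approx m^2/n^3$.

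What can actually be proved is the following:
\begin{itemize}
\item $P_2=\tfrac mn+1+o(1)$ for $1\le\beta<\tfrac32$, including $c+1+o(1)$ at $m=cn$, which still yields constant anti-concentration;
\item $P_2=\tfrac mn(1+o(1))$ for $1<\beta<2$.
\end{itemize}

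Item 1 likewise holds only as $P_2\sim\kappa n$, not with an additive $o(1)$. Your $\kappa=\int_0^1e^{-2x(1-x)/c}\,dx$ at $\beta=2$ does agree with the paper's $\sqrt{2c}\,D_+(1/\sqrt{2c})$. But, for example, $P_2=n+1-\tfrac{n^3}{3m}(1+o(1))$ for $2<\beta<3$, so the Euler--Maclaurin check you propose would fail.
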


The proof of \autoref{thm:asymptoticP2}, given in \autoref{app:P2}, exploits a connection between the hypergeometric function $_2F_1$ and moments of the Beta distribution,  and further provides refined control over the asymptotic convergence as detailed in \autoref{th:asymptotoic_P2_app}. More precisely, this relation enables us  to find the elegant expression 
 \begin{equation}
    P_2(m, n) = (n+m-1)\int_0^{\pi/2}\! d\theta \cos(\theta)^{n+m-2} \sin((n+1)\theta),
\end{equation}
and conclude with asymptotic expansions of Dawson function \cite{ARMSTRONG196761}.
\autoref{thm:asymptoticP2} identifies two distinct scaling behaviours for the averaged outcome-collision probability, depicted in \autoref{fig:Concentration_with_bounds}. For $m=cn^\beta$ with $\beta\geq 2$, we recover the scaling expected in the dilute regime \cite[Theorem 54]{aaronson_computational_2011}, but here derived without relying on the hiding property. By contrast, for $\beta<2$, the collision probability scales linearly with the ratio $m/n$, up to a polynomially decaying correction. This yields a new characterization of $P_2(m,n)$ in regimes where photon collisions are prevalent and, in particular, implies that it remains asymptotically constant in the linear regime.

With this in hand, plugging \autoref{eq:P2mIsCn} in the Paley--Zygmund inequality yields the following \autoref{cj:scalingP}, with the full proof given in \autoref{app:P2}.

\begin{corollary}[Anti-concentration of Boson Sampling in the saturated regime]\label{cj:scalingP}
    Let $m= c n^\beta $ with $c \geq 1$ and $1\leq \beta < 2$. Then in the asymptotic limit, the output  distribution of a Fock state boson sampler is anti-concentrated, i.e.
 \begin{equation}\label{eq:final_ac}
    \Pr_{\substack{S \sim \Phi_m^n \\U\sim U(m)}} \left[p_U(S) \geq  \frac{\alpha}{|\Phi_m^n|}\right] \geq   \frac{(1-\alpha)^2}{m/n+1 + o(1)}.
\end{equation}    
\end{corollary}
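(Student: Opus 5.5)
The plan is to combine the Paley--Zygmund inequality \autoref{eq:PZ} with the moment identity \autoref{eq:paley_back} and the asymptotics of \autoref{thm:asymptoticP2}. Throughout, the random variable is $X = p_U(S)$, with $U$ Haar-distributed on $U(m)$ and $S$ uniform on $\Phi_m^n$, independently.

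The first step is the first moment. For fixed $U$, $\sum_S p_U(S)=1$ because $\varphi_n(U)$ is unitary on $\mathcal F_n$ and the Fock states resolve the identity there. Hence
\[
\mathbb{E}[X]=\frac{1}{|\Phi_m^n|}\sum_{S}\mathbb{E}_U[p_U(S)]=\frac{1}{|\Phi_m^n|}.
\]
This makes the event $\{X \ge \alpha/|\Phi_m^n|\}$ coincide with $\{X\ge \alpha\,\mathbb{E}[X]\}$.

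The second step is the second moment:
\[
\mathbb{E}[X^2]=\frac{1}{|\Phi_m^n|}\sum_S \mathbb{E}_U[p_U(S)^2]=\frac{P_2(m,n)}{|\Phi_m^n|^2}
\]
by the definition \autoref{eq:P2_recall}. Therefore $\mathbb{E}[X]^2/\mathbb{E}[X^2]=1/P_2(m,n)$, which recovers \autoref{eq:paley_back}.

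Next I apply Paley--Zygmund. Since $X\ge 0$ and $0<\alpha<1$, the inequality $\Pr[X\ge\alpha\mathbb{E}X]\ge(1-\alpha)^2\,\mathbb{E}[X]^2/\mathbb{E}[X^2]$ follows from Cauchy--Schwarz applied to $\mathbb{E}[X\mathbf 1_{X\ge\alpha\mathbb{E}X}]\ge(1-\alpha)\mathbb{E}X$. Substituting the two moment identities gives
\[
\Pr\!\left[p_U(S)\ge \frac{\alpha}{|\Phi_m^n|}\right]\ge \frac{(1-\alpha)^2}{P_2(m,n)}.
\]

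Finally, I insert the asymptotics. For $m=cn^\beta$ with $1\le\beta<2$, item 2 of \autoref{thm:asymptoticP2} gives $P_2(m,n)=m/n+1+o(1)$, which yields \autoref{eq:final_ac}.

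The only point needing care is interpreting the result as genuine anti-concentration in the sense of \autoref{def:anticon_def}.
\begin{itemize}
\item For $\beta=1$, $m/n=c$, so the bound tends to the constant $(1-\alpha)^2/(c+1)$. This is standard anti-concentration.
\item For $1<\beta<2$, $m/n=cn^{\beta-1}$, so the bound is $\Omega(n^{-(\beta-1)})$ with $\beta-1<1$. This is weak anti-concentration.
\end{itemize}
I expect no real obstacle here, since all the difficulty is carried by \autoref{thm:asymptoticP2}. The main subtlety is ensuring that the $o(1)$ correction is uniform enough that the denominator stays positive and bounded by a constant multiple of $m/n+1$ for large $n$, which holds because $m/n+1\ge 2$.
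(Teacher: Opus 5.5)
Your proposal is essentially the paper's proof. Like the paper, you combine Paley--Zygmund with $\mathbb{E}[p_U(S)]=1/|\Phi_m^n|$, the identity $\mathbb{E}[p_U(S)]^2/\mathbb{E}[p_U(S)^2]=1/P_2(m,n)$, and the asymptotics of \autoref{thm:asymptoticP2}; the only difference is that the paper gets the first moment by twirling, while your normalization argument $\sum_S p_U(S)=1$ is slightly more elementary. One caveat is inherited from the paper rather than introduced by you: the additive $o(1)$ in \autoref{thm:asymptoticP2} only holds for $\beta<3/2$, because the dropped term $1/(4y^3)$ of the Dawson expansion, multiplied by the prefactor $\sqrt{2(n+m-2)}$, contributes about $m^2/n^3=c^2n^{2\beta-3}$, so for $3/2\le\beta<2$ the argument gives only $P_2(m,n)=(m/n+1)(1+o(1))$ --- enough for the $\Omega(n^{1-\beta})$ weak anti-concentration you state, but not for the additive form of \autoref{eq:final_ac}.
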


\autoref{cj:scalingP} shows that in the saturated regime corresponding to $m=c n^\beta $ with $c \geq 1$ and ${1\leq \beta < 2}$, anti-concentration holds with probability scaling as ${(1+m/n)^{-1}}$. For $\beta>1$, this yields probability scaling as $n^{\beta-1}$ corresponding to the weak-form of anti-concentration according to \autoref{def:anticon_def}. For $\beta=1$, the probability remains constant, yielding standard anti-concentration in the saturated regime. 
Therefore, this result resolves in the affirmative the anti-concentration conjecture of \cite{bouland_complexitytheoretic_2023}, complementing their proof of complexity-theoretic hardness for boson sampling in the linear regime.

\section{Conclusions}\label{sec:conclusion}

In this work, we developed a representation-theoretic framework for computing second moments of expectation values under passive linear-optical transformations. Our approach relies on the decomposition of the operator space into irreducible components and expresses second moments in terms of Hilbert--Schmidt norms of their projections. Crucially, this method avoids both the hiding property and the explicit use of permanent-based descriptions, providing a direct and general route to analytical expressions valid across all parameter regimes.

We then leveraged this framework to study output probabilities of Fock-state boson sampling circuits. By deriving a closed-form expression for the normalized average outcome collision probability and analysing its asymptotic scaling, we established anti-concentration in the linear regime, where photon collisions are prevalent. This provides analytical evidence supporting hardness conjectures in experimentally relevant settings and rules out classical simulation strategies based on sparsity of the output distribution. The same techniques can also be extended to boson sampling with Gaussian inputs, enabling the computation of moments and the characterization of anti-concentration in that setting across all mode regimes.

Beyond this application, our results provide a general toolkit for evaluating second moments of particle-number-preserving observables. These quantities play a central role in a variety of contexts such as linear cross-entropy benchmarking \cite{martinezcifuentes2024linearcrossentropycertificationquantum,Ehrenberg_2025}. They are also relevant for quantifying the variance of estimators arising randomized benchmarking \cite{arienzo_bosonic_2024} and  classical shadows \cite{thomas_shedding_2025} for passive linear optics.
In addition, they provide a natural framework for analysing concentration phenomena in variational algorithms based on passive linear-optical transformations, which we leave for future work.
 More broadly, the projection norms introduced here may serve as useful invariants or monotones in resource-theoretic settings under passive linear optics \cite{mamon2026orbitdimensionslineargaussian,migdal_multiphoton_2014,parellada2025liealgebraicinvariantsquantum,draux_invariants_2025}.

Several open directions follow from this work. A first question is whether similar representation-theoretic techniques can be extended to higher moments \cite{nezami2021permanentrandommatricesrepresentation,sierant2026theorymatchgatecommutant, braccia2026commutantfermionicgaussianunitaries}, which would provide access to finer statistical features of boson sampling distributions and further strengthen complexity-theoretic arguments. Existing approaches going through this route are currently limited to the dilute regime \cite{nezami2021permanentrandommatricesrepresentation}. Another important direction is to relax the fixed particle-number assumption on both observables and input states. This would require handling representations with multiplicities, but would enable the treatment of more general measurement schemes, such as homodyne and heterodyne detection, along with arbitrary bosonic input states. Finally, extending the framework to incorporate realistic experimental imperfections—such as photon loss and partial distinguishability—remains an important challenge for connecting these results to practical photonic implementations.

\emph{Note added.} Upon completion of this work, we became aware of independent work by Kolarovszki et al. \cite{kolarovszki_anticoncentration_xeb} that studies anti-concentration and linear cross entropy benchmarking in Boson Sampling using different group-theoretic techniques and proves anti-concentration in the saturated regime.
\medskip

\begin{acknowledgments}
The authors acknowledge Varun Upreti, Verena Yacoub, Eliott Mamon, Dmitry Grinko and Pierre-Emmanuel Emeriau for fruitful discussions. HM acknowledges support from the grant ANR-22-PETQ-0007.
HT acknowledges support from the European Commission as
part of the EIC accelerator program under the grant agreement 190188855 for
SEPOQC project, the Horizon-CL4 program under the grant agreement 101135288
for EPIQUE project, and the the CIFRE grant N. 2023/1746.
LM and EK acknowledge support from the EPSRC Quantum Advantage Pathfinder research (EP/X026167/1), and the Hub for Quantum Computing via Integrated and Interconnected Implementations (QCI3, EP/Z53318X/1) programs within the UK’s National Quantum Computing Center. ZH acknowledges support from the Sandoz Family Foundation-Monique de Meuron program for Academic Promotion.

\end{acknowledgments}
\begingroup
\makeatletter
\let\addcontentsline\@gobblethree % don't write TOC/minitoc entry
\endgroup
\medskip

\paragraph*{Code Availability.}

The code for irrep decomposition, and second moments calculation with all the figures presented is available at the following GitHub repository: \url{https://github.com/quantumsoftwarelab/Representation_Theory_Framework-Passive_Linear_Optics}.

\bibliographystyle{apsrev4-2}
\bibliography{references}
\appendix

\onecolumngrid
\newpage

\begin{center}
 {\Large \textbf{Appendix} }
\end{center}

\begingroup
\makeatletter
\@starttoc{atoc}
\makeatother
\endgroup

\section{Technical preliminaries}\label{sec:Preliminaries}

In this section, we recall standard-representation theoretic tools that will be relevant for our main analysis.

\begin{theorem}[$G$-twirling \cite{bartlett_reference_2007}]\label{th:G_twirl}
Let $G$ be a compact group and $H$ a Hilbert space. Let $g \mapsto T(g)$ be a unitary representation of $G$ on $H$. The \emph{$G$-twirling} operation $\mathcal{G}$ is defined as the CPTP map:
\begin{equation}
    \mathcal{G}(\rho) = \int_{G} T(g) \rho T^\dagger(g) \, \mathrm{d}g,
\end{equation}
where $\mathrm{d}g$ is the normalized Haar measure on $G$. 

Under the decomposition of $H$ into irreducible representations (irreps) of $G$,
\begin{equation}
    H = \bigoplus_{q} \mathcal{M}_q \otimes \mathcal{N}_q,
\end{equation}
where $\mathcal{M}_q$ is the carrier space of the $q$-th irrep and $\mathcal{N}_q$ is the corresponding multiplicity space, the action of $\mathcal{G}$ is given by
\begin{equation}
    \mathcal{G} = \sum_{q} (\Dm_{\mathcal{M}_q} \otimes \mathcal{I}_{\mathcal{N}_q}) \circ \mathcal{P}_q,
\end{equation}
where
\begin{itemize}
    \item $\mathcal{P}_q(\rho) = \Pi_q \rho \Pi_q$ is the projection superoperator onto the sector $H_q = \mathcal{M}_q \otimes \mathcal{N}_q$, with $\Pi_q$ being the orthogonal projection onto $H_q$.
    \item $\Dm_{\mathcal{M}_q}$ is the completely depolarizing channel on $\mathcal{M}_q$, defined by $\Dm_{\mathcal{M}_q}(\sigma) = \frac{\text{Tr}(\sigma)}{d_q} \mathbb{I}_{\mathcal{M}_q}$, where $d_q = \dim(\mathcal{M}_q)$.
    \item $\mathcal{I}_{\mathcal{N}_q}$ is the identity superoperator on the multiplicity space $\mathcal{N}_q$.
\end{itemize}
\end{theorem}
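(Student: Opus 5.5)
The plan is to prove the $G$-twirling formula in two steps. First I show that $\mathcal{G}$ commutes with the representation. Then I use Schur's lemma to pin down every operator that commutes with it.

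Left-invariance of the normalized Haar measure gives $T(h)\mathcal{G}(\rho)T^\dagger(h)=\mathcal{G}(\rho)$ for every $h\in G$. So $\mathcal{G}(\rho)$ lies in the commutant of $\{T(g)\}$. Complete positivity and trace preservation come directly from the definition: $\mathcal{G}$ is an average of unitary conjugations, hence a convex mixture of CPTP maps.

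Next I fix the decomposition $H=\bigoplus_q \mathcal{M}_q\otimes\mathcal{N}_q$, under which $T(g)=\bigoplus_q \tau_q(g)\otimes \mathbb{I}_{\mathcal{N}_q}$ with the $\tau_q$ irreducible and pairwise inequivalent. I write $\rho$ in blocks $\rho_{qq'}=\Pi_q\rho\Pi_{q'}$. Each block transforms as $\rho_{qq'}\mapsto \int (\tau_q(g)\otimes \mathbb{I})\rho_{qq'}(\tau_{q'}(g)^\dagger\otimes \mathbb{I})\,dg$. Expanding $\rho_{qq'}=\sum_j A_j\otimes B_j$ with $A_j\in\mathrm{Hom}(\mathcal{M}_{q'},\mathcal{M}_q)$, the averaged operator $\int \tau_q(g)A_j\tau_{q'}(g)^\dagger\,dg$ is an intertwiner between $\tau_{q'}$ and $\tau_q$. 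Schur's lemma says such an intertwiner vanishes for $q\neq q'$, so the off-diagonal blocks drop out. This yields the factor $\mathcal{P}_q$ in the formula.

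For $q=q'$, Schur's lemma forces $\int \tau_q(g)A\tau_q(g)^\dagger\,dg=c(A)\,\mathbb{I}_{\mathcal{M}_q}$. I fix the constant by taking the trace: since the integral preserves the trace, $c(A)=\Tr(A)/d_q$. This is exactly the depolarizing channel on $\mathcal{M}_q$, tensored with the identity on the $B_j$, which gives $(\Dm_{\mathcal{M}_q}\otimes\mathcal{I}_{\mathcal{N}_q})\circ\mathcal{P}_q$. Summing over $q$ completes the argument.

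The only step needing care is the off-diagonal vanishing when there are multiplicities. Linearity lets me reduce to the simple tensors $A\otimes B$, with $B$ untouched by the group action. The inequivalence of the labelled irreps $\tau_q$ is what makes Schur's lemma kill these terms, so I must use that the sectors are grouped by isomorphism class.
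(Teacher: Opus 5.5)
Your proof is correct. The paper gives no proof of its own and simply cites the result from Bartlett, Rudolph and Spekkens. Your argument is the standard Schur's-lemma proof that underlies it: inter-isotypic blocks average to intertwiners between inequivalent irreps and vanish, while diagonal blocks average to $\tfrac{\mathrm{Tr}(A)}{d_q}\mathbb{I}_{\mathcal{M}_q}\otimes B$. You could make explicit that irreps of a compact group are finite-dimensional, so the expansion $\rho_{qq'}=\sum_j A_j\otimes B_j$ can always be taken over the finitely many matrix units of $\mathrm{Hom}(\mathcal{M}_{q'},\mathcal{M}_q)$. The argument therefore still works if the multiplicity spaces are infinite-dimensional; the paper only applies it to the finite-dimensional, multiplicity-free space $W_n$.
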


\begin{lemma}\label{lemma:lemma1}
 Let $G$ be a compact group and let $R : G \to \mathrm{GL}(V)$ be a completely reducible representation decomposing into $K$ inequivalent multiplicity free irreducible representations, i.e. 
 \begin{equation}
      V \stackrel{R(g)} {\simeq}  \bigoplus_{r=0}^K V_r \;.
 \end{equation}
 Let $\rho : G \to \mathrm{GL}(W)$ be an irreducible representation on $W$, and let $T :W \rightarrow V$ be a non-zero $G-$intertwiner, i.e.
 \begin{equation}
      T \circ \rho(g) = R(g) \circ T \;, \forall g \in G
 \end{equation}
 Then there exists a unique index $r \in \{0,\dots,K\}$ such that $W$ is isomorphic to $V_r$ through an isomorphism $\tilde{T}: W \rightarrow V_r$ and the intertwiner $T$ is a scalar multiple of $\tilde{T}$.

\end{lemma}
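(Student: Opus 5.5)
This is a Schur's lemma argument: decompose $T$ along the isotypic components of $V$ and use irreducibility and inequivalence to isolate a single component.

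\textbf{Step 1: split $T$ by components.} For each $r$, let $\Pi_r : V \to V_r$ be the projection along the decomposition $V=\bigoplus_r V_r$. Each $V_r$ is $R$-invariant, so $\Pi_r$ commutes with $R(g)$ for all $g\in G$. Set $T_r := \Pi_r\circ T : W \to V_r$. Then
\[
T_r\circ\rho(g) = \Pi_r R(g) T = R_r(g)\, T_r ,
\]
where $R_r$ denotes the restriction of $R$ to $V_r$. Hence each $T_r$ is a $G$-intertwiner between two irreducible representations, and $T=\sum_r T_r$.

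\textbf{Step 2: apply Schur's lemma to each $T_r$.} The kernel $\ker T_r$ is a $\rho$-invariant subspace of $W$, so it is either $0$ or $W$. The image $\mathrm{im}\,T_r$ is an $R_r$-invariant subspace of $V_r$, so it is either $0$ or $V_r$. Consequently each $T_r$ is either zero or an isomorphism $W\cong V_r$.

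\textbf{Step 3: uniqueness and the scalar statement.} Since $T\neq 0$, at least one $T_r$ is nonzero. If two of them, say $T_r$ and $T_{r'}$ with $r\neq r'$, were nonzero, then $V_r\cong W\cong V_{r'}$, contradicting the assumption that the irreps are pairwise inequivalent. So exactly one index $r$ has $T_r\neq 0$, and it follows that $T=T_r$, with $\mathrm{im}\,T\subseteq V_r$.

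Fix any isomorphism $\tilde T : W\to V_r$; one exists by the previous step, and we may take $\tilde T=T_r$. Then $\tilde T^{-1}\circ T$ is a $G$-equivariant endomorphism of the irreducible representation $W$. By Schur's lemma over $\mathbb{C}$, it equals $c\,\mathrm{Id}_W$ for some scalar $c$. To justify this, note that $W$ is finite-dimensional (it is isomorphic to $V_r\subseteq V$), so $\tilde T^{-1}T$ has an eigenvalue $c$, and $\tilde T^{-1}T - c\,\mathrm{Id}_W$ is an equivariant map with nonzero kernel; by Step 2's dichotomy it therefore vanishes. Hence $T=c\,\tilde T$.

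No step is a real obstacle. The only point needing care is the scalar part of Schur's lemma, which requires working over $\mathbb{C}$ with finite-dimensional $W$; both conditions hold here.
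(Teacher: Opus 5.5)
Your proof is correct and is essentially the paper's argument: the paper also projects onto the components via $P_r\circ T$ and combines Schur's lemma with pairwise inequivalence to conclude $T=T_s=\lambda\tilde T$. The only difference is that the paper first shows $T$ is injective and that $\mathrm{Im}(T)$ is a single summand $V_s$, asserting without proof that an invariant subspace of a multiplicity-free module is a sum of the $V_r$. You instead get the existence and uniqueness of the index directly from the projections, which is cleaner and avoids that unproved claim.
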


\begin{proof}
Since $T$ is an intertwiner, its kernel $\ker(T) \subseteq W$ is a $G$-invariant subspace of the irreducible representation $W$. Hence,
\begin{equation}
    \ker(T) \in \{ \{0\},\, W \}.
\end{equation}

If $\ker(T)=W$, then $T=0$, contradicting the assumption that $T\neq 0$. Therefore $\ker(T)=\{0\}$, and $T$ is injective. In particular, $T$ induces a $G$-equivariant isomorphism
\begin{equation}\label{eq:iso1}
    W \simeq \operatorname{Im}(T).
\end{equation}

Since $T$ is an intertwiner, the image $\operatorname{Im}(T)\subseteq V$ is also $G$-invariant. Moreover, the multiplicity free decomposition of the vector space $V$ into irreducible representations $V_r$
implies that there is a subset $\Lambda \subseteq \{0,\dots, K\}$ of size $|\Lambda| \geq 1$ such that 
\begin{equation}
    Im(T) \simeq \bigoplus_{r \in \Lambda} V_r
\end{equation}

Since $W \simeq \operatorname{Im}(T)$ and $W$ is irreducible, the direct sum above must contain exactly one summand. Hence $|\Lambda|=1$, and there exists a unique index $s$ such that
\begin{equation}\label{eq:iso2}
    W \simeq V_s.
\end{equation}

This implies that the irreducible representation $\rho$ acting on $W$ is equivalent to  the  sub-representation $R_s$ of $R$ acting irreducibly on  $V_s$. Moreover since the irreducible components $V_r$ are pairwise inequivalent for all $r \in \{0,\dots, K\}$, it follows that for every $ r\neq s$, the representation  $\rho$ and $R_r$ are inequivalent.

For each $r$, define the induced map $T_r$ as the projection of the intertwiner $T$ onto the irreducible subspace $V_r$, namely $T_r := P_r \circ T$. The equivariance of $T_r$ follows directly from the equivariance of $T$.

By Schur's Lemma, we then obtain
\begin{align}
    T_r = 0 \;, \forall r\neq s\;,
    T_s = \lambda \tilde{T}
\end{align}
where $\tilde{T} : W \rightarrow V_s$ is an isomorphism and $\lambda \in \mathbb{C}$.
Consequently,  $T= T_s$ and therefore $T$  is a scalar multiple of the isomorphism $\tilde{T}$.

\end{proof}

\section{Howe-duality for passive linear optics}\label{app:rep}

In this section, we develop the representation-theoretic framework governing the adjoint action of passive linear optics on the space of photon-number–preserving operators. We identify a $U(m)$--$\mathfrak{sl}_2(\mathbb{C})$ Howe duality that structures this action and enables a systematic decomposition of the operator space, which will be central to our analysis of second moments.

The presentation assumes only basic familiarity with representation theory. For completeness and to fix notation, we recall and re-derive several standard results adapted to our setting.

The section is organized as follows:

\begin{enumerate}
    \item In  \autoref{app:subsec_reptheory}, we introduce the adjoint action of passive linear-optical transformations on the space of photon-number-preserving bosonic operators and recall its decomposition into irreducible components, as established in \cite{arienzo_bosonic_2024}. 
    We then equip this space with a natural $\mathfrak{sl}_2(\mathbb{C})$ representation, defined via the raising and lowering maps introduced in \autoref{lemma:sl2_commutation}. In \autoref{th:commuting_action}, we prove that the resulting $\mathfrak{sl}_2(\mathbb{C})$ and 
  $U(m)$ actions  commute. As a consequence, the operator space decomposes into modules that simultaneously carry irreducible representations of $U(m)$  and representations of $\mathfrak{sl}_2(\mathbb{C})$.
\item In \autoref{app:properties}, we establish key structural properties of the raising and lowering maps. In particular, \autoref{prop:app_interwiner_Ck} shows that these maps act as \(U(m)\)-intertwiners between adjacent photon-number sectors. Furthermore, in \autoref{prop:adjoint_Ck_Dk}, we show that these maps are adjoint with respect to the Hilbert–Schmidt inner product.

\item In \autoref{app:exp_irreps}, we exploit these properties to explicitly identify the \(U(m)\)-irreducible representations appearing in the decomposition of each fixed photon-number operator sector. More precisely, in \autoref{prop:recursive_decomp}, we derive a recursive decomposition relating adjacent photon-number sectors. We then identify, in \autoref{prop:kernel_top_irrep}, the top irreducible component of each sector with  the kernel of the lowering map restricted to that space, thereby corresponding to 
the lowest-weight subspace of an appropriate $\mathfrak{sl}_2(\mathbb{C})$ representation. In \autoref{prop:lower_irreps}, we show that all remaining irreducible components within a photon-number sector can be generated by successive applications of the raising map to these primitive lowest-weight subspaces of different $\mathfrak{sl}_2(\mathbb{C})$ representations, yielding a complete description of the decomposition. Finally, in \autoref{prop:CD_DC}, we compute the eigenvalues of compositions of the raising and lowering maps on each $U(m)$--irreducible component, which will be crucial for constructing  the corresponding projections.

\item In \autoref{app:howe_proof}, we combine the structural and representation-theoretic results established above to prove the \(U(m)\)–\(\mathfrak{sl}_2(\mathbb{C})\) Howe duality on the space of photon-number–preserving bosonic operators.
\end{enumerate}

Henceforth, we denote $\mathfrak{sl}_2(\mathbb{C})$ simply by $\mathfrak{sl}_2$.

\subsection{Commuting $U(m)-\mathfrak{sl}_2$ structure on the photon-number-preserving bosonic operator space }\label{app:subsec_reptheory}

In this work, we consider photon-number-preserving bosonic operators acting on $m$ modes and further study their properties under passive linear optics transformations.

Formally, the algebra $\mathcal{W}$ of photon-number-preserving bosonic operators is spanned by normally ordered monomials containing an equal number of creation and annihilation operators denoted by $d$ and refered to as the operator degree,
\begin{equation}\label{eq:subalgebra_commute_app}
    \mathcal{W} = \text{Span} \{  a_{j_1}^\dagger \dots a_{j_d}^\dagger a_{i_1} \dots a_{i_d} \;|\;  d\geq 0\}\;.
 \end{equation}

 By construction, all operators in  $\mathcal{W} \subset \mathcal{F} \otimes  \mathcal{F}^*$ commute with the total photon number operator  $\hat{N}= \sum_{i=1}^m a_i^\dagger a_i$. Consequently, $\mathcal{W}$ preserves each photon-number sector $\mathcal{F}_n$ of the Fock space and decomposes accordingly as
 \begin{equation}\label{eq:photon_number_decomp}
      \mathcal{W}  = \bigoplus_{n \geq 0} W_n\;,
 \end{equation}
 where  $W_n:= \mathcal{F}_n \otimes \mathcal{F}_n^*$ denotes the  space of operators acting on the $n$-photon sector $\mathcal{F}_n$.

 A normally ordered monomial of degree $d$  acts non-trivially only on sectors $\mathcal{F}_n$ with $n \geq d$. This implies that, for a fixed photon number $n$, the subspace $W_n$ is spanned by monomials of degree up to $n$ projected onto the finite dimensional space $\mathcal{F}_n$, i.e.
 \begin{equation}\label{eq:span_Wn}
     W_n = \text{Span}\{ P_n (a_{j_1}^\dagger \dots a_{j_d}^\dagger a_{i_1} \dots a_{i_d}) P_n\;, d\leq n\}\;,
 \end{equation}
 where $P_n$ denotes the orthogonal projector  onto  $\mathcal{F}_n$ (in the Fock basis).

Passive linear optics transformations described by the group $U(m)$ acts on the Fock space through the photonic representation $\varphi$ \cite{aaronson_computational_2011}. This induces an adjoint action on the operator space $\mathcal{W}$, i.e.
\begin{align}\label{eq:adjoint_app}
     \omega(U)(X) = \varphi(U) (\cdot)  \varphi^\dagger(U) \;, X \in \mathcal{W} \;, U \in U(m)\;.
\end{align}

For convenience, we consider the vectorised representation of the adjoint map given by
\begin{equation}
    \omega(U) \simeq \varphi(U) \otimes \varphi^*(U)\;,
\end{equation}
acting of the vectorized form of the operator space $\mathcal{W}$.

Given that passive linear optics transformations preserve total photon number, this representation respects the photon-number decomposition in \autoref{eq:photon_number_decomp} and acts block diagonally as  
\begin{align}\label{eq:omega_app}
    \omega(U) \simeq \bigoplus_{n\geq 0} \varphi_n(U) \otimes  \varphi_n^*(U) := \bigoplus_{n\geq 0} \omega_n(U) \;, \quad
    \mathcal{W} = \bigoplus_{n \geq 0} W_n\;.
\end{align}

Each sector $W_n$ thus carries the $U(m)$--representation  given by 
\begin{equation}
    \omega_n(U) := \varphi_n(U) \otimes  \varphi_n^*(U).
\end{equation}

While $\varphi_n$ is an irreducible representation of $U(m)$ acting on the $n$-photon Hilbert space $\mathcal{F}_n$, the induced adjoint representation
$\omega_n$ on $W_n$ is indeed reducible. This reflects the presence of a richer internal structure in the photon-number preserving operator space $\mathcal{W}$, which can be resolved by decomposing each sector $W_n$ into irreducible components.

To derive such a decomposition, the authors of
Ref.\ \cite{arienzo_bosonic_2024} employ the Young diagram representation of  $\varphi_n$, corresponding to a  single row with $n$ boxes,
\begin{center}
\begin{equation}\label{eq:Young_irrep}
    \begin{tikzpicture}[baseline=(current bounding box.center), scale=0.5]
        
        \node at (-1.2, 0.5) {$\varphi_n \equiv$};

        \foreach \x in {0,1,2,3,4} {
            \draw (\x,0) rectangle (\x+1,1);
        }
        
        \draw [decorate, decoration={brace, amplitude=5pt, raise=2pt, mirror}] (0,0) -- (5,0) 
            node [midway, yshift=-12pt] {\small $n$};

    \end{tikzpicture} 
    \; .
    \end{equation}
\end{center}

Using Littlewood--Richardson's rule---a general tool to classify the decomposition of tensor product representations---they construct a decomposition of the representation  $\omega_n = \varphi_n \otimes \varphi_n^*$ acting on the subspace $W_n$  into $n+1$ irreducible representations. The irreducible component labeled by $0 \leq k \leq n$ is
associated with the Young diagram:

\begin{center}
\begin{equation}\label{eq:irrep_diagram}
    \begin{tikzpicture}[baseline=(current bounding box.center), scale=0.5]
        
        \node at (-4.5, -0.5) {$\lambda_k^{(n)} \equiv$};

        \draw (0,0) rectangle (1,1);
        \node at (1.5,0.5) {\small $\dots$};
        \draw (2,0) rectangle (3,1);
        
        \draw (3,0) rectangle (4,1);
        \node at (4.5,0.5) {\small $\dots$};
        \draw (5,0) rectangle (6,1);
        
        \draw (0,-2) rectangle (1,-1);
        \node at (1.5,-1.5) {\small $\dots$};
        \draw (2,-2) rectangle (3,-1);
        
        \node at (0.5, -0.5) {\small $\vdots$};
        \node at (1.5, -0.5) {\small $\ddots$};
        \node at (2.5, -0.5) {\small $\vdots$};
        
        \draw [decorate, decoration={brace, amplitude=4pt, raise=2pt}] (0,1) -- (2.9,1) 
            node [midway, yshift=12pt] {\small $k$};
        \draw [decorate, decoration={brace, amplitude=4pt, raise=2pt}] (3.1,1) -- (6,1) 
            node [midway, yshift=12pt] {\small $k$};
            
        \draw [decorate, decoration={brace, amplitude=4pt, raise=2pt, mirror}] (-0.2,1) -- (-0.2,-2) 
            node [midway, xshift=-28pt] {\small $m-1$};

        \node at (6.5, -0.5) {,};

    \end{tikzpicture}
    \end{equation}
\end{center}
and its dimension is obtained via the Weyl dimension formula, as detailed in \cite{arienzo_bosonic_2024} (See proposition T2),
\begin{equation}\label{eq:irrep_dim_app}
     d_k^{(n)} = \frac{2k+m-1}{m-1} \binom{k+m-2}{k}^2.
\end{equation}

Specifically, the fixed photon number operator subspace $W_n$ decomposes  under the representation $\omega_n$ as 
\begin{equation}\label{eq:comp_red}
    W_n \simeq \bigoplus_{k=0}^n \lambda_k^{(n)}\;.
\end{equation}

Combining this decomposition with the photon number decomposition of the full operator space $\mathcal{W}$  given in \autoref{eq:photon_number_decomp}, we obtain
\begin{equation}\label{eq:decomp1_app}
    \mathcal{W} \simeq \bigoplus_{n \geq 0} \bigoplus_{k=0}^n  \lambda_k^{(n)}.
\end{equation}

From the Young diagram description, one observes that for each fixed $k$, the irreducible representations $\lambda_k^{(n)}$ are independent of the
photon number $n$ for all $n \geq k$. 
This suggests
reorganizing the decomposition in \autoref{eq:decomp1_app} as
\begin{equation}\label{eq:ladder_decomp}
    \mathcal{W} \simeq \bigoplus_{k \geq 0} \bigoplus_{n \geq k}  \lambda_k^{(n)}  .
\end{equation}

This reorganization reveals that, for each fixed $k$, the representations $\lambda_k^{(n)}$ appearing in different photon-number sectors can be viewed as multiple copies of the same abstract $U(m)$-irrep, each occurring with multiplicity one in every sector $n \geq k$. This further suggests that these components should be organized into a single structure extending across photon-number sectors, rather than treated independently.

As we show below, this structure is captured by an additional symmetry of the operator space $\mathcal{W}$, which relates these copies in a systematic way. In particular, $\mathcal{W}$ carries an action of the Lie algebra $\mathfrak{sl}_2$, whose generators map between different photon-number sectors while preserving the $U(m)$-type label $k$. 
This perspective will be made precise along this section, where we formalize this organization in terms of $\mathfrak{sl}_2$ modules.

First, we begin by showing that the action of $\mathfrak{sl}_2$ on the bosonic operator space $\mathcal{W}$
 is generated by the raising and lowering maps $\Cm$ and $\Dm$, which we recall here for completeness:
 \begin{align}
     \Cm (\cdot) &: = \sum_{s=1}^m a_s (\cdot) a_s^\dagger \;,\label{eq:lowering_map_A}\\
     \Dm(\cdot)&:= \sum_{s=1}^m a_s^\dagger (\cdot) a_s \;.\label{eq:raising_map_A}
 \end{align}

With the following \autoref{lemma:sl2_commutation}, we prove that these maps along with their commutator $H= [\Cm,\Dm]$ do indeed satisfy the  $\mathfrak{sl}_2$  commutation relations, hence generating an $\mathfrak{sl}_2$--representation.

\begin{lemma}[$\mathfrak{sl}_2$ commutation relations of the lowering and raising maps]\label{lemma:sl2_commutation} Consider the lowering and raising maps introduced in \autoref{eq:lowering_map_A} and \autoref{eq:raising_map_A}, respectively, along with their commutator map $H$ defined as $H= [\Cm,\Dm]$. The triple $\{\Cm,\Dm,H\}$ generates a representation of the $\mathfrak{sl}_2$ Lie algebra on the operator space $\mathcal{W}$.
Namely, these maps satisfy the  $\mathfrak{sl}_2$ commutation relations, i.e.
\begin{align}\label{eq:com_maps_app}
    [H, \Dm]= 2 \Dm\;, [H, \Cm]= -2 \Cm \;, [\Cm,\Dm]=H\;.
\end{align}  

In particular, the $\mathfrak{sl}_2$ Lie algebra acts on $\mathcal{W}$ via the representation $\pi: \mathfrak{sl}_2 \rightarrow \rm End(\mathcal{W})$ given by 
\begin{equation}\label{eq:pi_sl2_rep_app}
    \pi(e)= E  \;, \pi(f)=- F \;, \pi(h)=H\;.
\end{equation}
where $\{e,f,h\}$ define the canonical basis of traceless $2 \times 2$ matrices given by
\begin{equation}
      e = \begin{pmatrix} 0 & 1 \\ 0 & 0 \end{pmatrix}\;, \quad
    f = \begin{pmatrix} 0 & 0 \\ 1 & 0 \end{pmatrix}\;, \quad
    h = \begin{pmatrix} 1 & 0 \\ 0 & -1 \end{pmatrix}\;.
\end{equation}
satisfying the canonical  commutation relations $[h,e]= 2e,  \; [h,f] = -2f, \;[e,f]=h$.
 
\end{lemma}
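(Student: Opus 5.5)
The plan is to verify the three commutation relations by direct computation with the canonical commutation relations. The key step is to find a closed form for $H=[\Cm,\Dm]$ as a superoperator; the other two relations then follow almost immediately. First I will note that $\Cm$ and $\Dm$ map $\mathcal{W}$ into itself. Conjugating a normally ordered monomial with equal numbers of creation and annihilation operators by $a_s(\cdot)a_s^\dagger$ or $a_s^\dagger(\cdot)a_s$ gives, after re-normal-ordering via $[a_i,a_j^\dagger]=\delta_{ij}$, again a combination of such balanced monomials. Both maps also commute with conjugation by $e^{i\theta \hat N}$, so they preserve particle-number conservation. Here $\hat N=\sum_{i} a_i^\dagger a_i$.

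To compute $H$, I write
\begin{equation}
\Cm\Dm(X)=\sum_{s,t} a_s a_t^\dagger\, X\, a_t a_s^\dagger ,
\qquad
\Dm\Cm(X)=\sum_{s,t} a_t^\dagger a_s\, X\, a_s^\dagger a_t .
\end{equation}
In the first expression I substitute $a_s a_t^\dagger = a_t^\dagger a_s+\delta_{st}$ and $a_t a_s^\dagger = a_s^\dagger a_t+\delta_{st}$ and expand. The product of the two non-delta terms reproduces exactly $\Dm\Cm(X)$. The cross terms give $\sum_s a_s^\dagger a_s X=\hat N X$ and $X\hat N$. The $\delta\delta$ term gives $\sum_{s,t}\delta_{st}X=mX$. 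Hence
\begin{equation}
H(X)=[\Cm,\Dm](X)=\hat N X + X\hat N + m X .
\end{equation}
In particular, $H$ acts on $W_n$ as the scalar $2n+m$. This is a useful sanity check and matches the grading picture of \autoref{th:howe_duality}.

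For $[H,\Dm]$ I will use $\hat N a_s^\dagger = a_s^\dagger(\hat N+1)$ and $a_s\hat N=(\hat N+1)a_s$. These give
\begin{equation}
\hat N\,\Dm(X)=\Dm(\hat N X)+\Dm(X),
\qquad
\Dm(X)\,\hat N=\Dm(X\hat N)+\Dm(X).
\end{equation}
Therefore $H\Dm(X)-\Dm H(X)=2\Dm(X)$; the $mX$ terms cancel trivially. Symmetrically, $\hat N a_s=a_s(\hat N-1)$ and $a_s^\dagger \hat N=(\hat N-1)a_s^\dagger$ give $[H,\Cm]=-2\Cm$. This argument holds on all of $\mathcal{W}$ without restricting to a fixed sector. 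One can also argue via the grading: $\Dm$ maps $W_n\to W_{n+1}$, and $H$ has eigenvalue $2n+m$ on $W_n$.

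Finally, I check the representation $\pi$. With $\pi(e)=\Dm$, $\pi(f)=-\Cm$ and $\pi(h)=H$, we get $[\pi(h),\pi(e)]=2\pi(e)$ and $[\pi(h),\pi(f)]=-[H,\Cm]=2\Cm=-2\pi(f)$. We also get $[\pi(e),\pi(f)]=-[\Dm,\Cm]=[\Cm,\Dm]=H=\pi(h)$. So linear extension of $\pi$ is a Lie algebra homomorphism. The $E,F$ in the statement are to be read as $\Dm,\Cm$.

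The only delicate point I expect is bookkeeping in the expansion of $\Cm\Dm$, in particular getting the order of the operator products right on each side of $X$. The identities are formal, but the operators are unbounded. This is not a problem, because $\mathcal{W}$ is spanned by normally ordered polynomials, and each $X\in\mathcal{W}$ can be treated sector-wise on the finite-dimensional blocks $W_n$.
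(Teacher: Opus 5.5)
Your proof is correct, but you reach the result by a different route than the paper.

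The paper never writes $H$ down. It expands the nested commutators into triple compositions, $[H,\Dm]=\Cm\Dm^2+\Dm^2\Cm-2\Dm\Cm\Dm$ and $[H,\Cm]=2\Cm\Dm\Cm-\Dm\Cm^2-\Cm^2\Dm$. It then reorders each of $\Cm\Dm^2$, $\Dm^2\Cm$, $\Dm\Cm^2$, $\Cm^2\Dm$ separately with the canonical commutation relations, using along the way that $\Dm(X)$ and $\Cm(X)$ commute with $\hat{N}$.

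You instead first establish the closed form $H(X)=mX+\hat{N}X+X\hat{N}$. Both remaining relations then follow from the one-line intertwinings $\hat{N}a_s^\dagger=a_s^\dagger(\hat{N}+1)$ and $\hat{N}a_s=a_s(\hat{N}-1)$. Equivalently, they follow from $\Dm,\Cm$ shifting photon number by $\pm1$ while $H|_{W_n}=(m+2n)\,\mathrm{id}$.

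What your route buys:
\begin{itemize}
\item It is shorter and more transparent.
\item Your derivation of the relations never uses $[X,\hat{N}]=0$, so it shows they hold for arbitrary (formal) operators, not only on $\mathcal{W}$.
\item It identifies $\pi(h)$ with the shifted photon-number grading. The paper only re-derives this later, in \autoref{prop:CD_DC}. There it obtains $H|_{\lambda_r^{(k)}}=(m+2k)\,\mathrm{id}$ by computing the eigenvalue $m+2r$ on $\ker\Cm$ and propagating it up the ladder via $[H,\Dm]=2\Dm$ and the decomposition. With your formula this is immediate on all of $W_k$, as is the $U(m)$-equivariance of $H$.
\end{itemize}

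The paper's expansion contains your identity implicitly: its reduction of $\Cm\Dm^2(X)$ is exactly $\Cm\Dm(Y)=\Dm\Cm(Y)+mY+\{\hat{N},Y\}$ with $Y=\Dm(X)$. It never isolates it, though, so the computation is longer and hides the structure.

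Your reading of $E,F$ in the statement as $\Dm,\Cm$ is the intended one.
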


\begin{proof}
We begin the proof by establishing that the raising and lowering maps satisfy the commutation relations established in \autoref{eq:com_maps_app}.
   To this end, we start by showing that $[H, \Dm]= 2 \Dm$. To this end, we expand the nested commutator $[H, \Dm]$. Specifically, we get
    \begin{align}
        [H, \Dm]&= [[\Cm,\Dm], \Dm]\\
        &= [\Cm \Dm - \Dm \Cm, \Dm]\\
        &= \Cm \Dm^2 + \Dm^2 \Cm -2  \Dm \Cm  \Dm\;.
    \end{align}
    
    For $X \in \mathcal{W}$, the three terms appearing in the expansion of the commutator $[H, \Dm]$ are given by 
    \begin{align}
        \Dm  \Cm  \Dm (X) &= \sum_{i,j,k} a_i^\dagger a_k a_j^\dagger X a_j a_k^\dagger  a_i, \label{eq:term1}\\
        \Cm  \Dm^2(X) &= \sum_{i,j,k} a_k a_i^\dagger  a_j^\dagger X a_j   a_i a_k^\dagger,\label{eq:term2}\\
        \Dm^2\Cm(X)&=  \sum_{i,j,k}      a_i^\dagger  a_j^\dagger a_k X a_k^\dagger a_j   a_i\;.  \label{eq:term3}
    \end{align}
    In what follows, we further simplify the expression of the above maps by mainly using the canonical bosonic commutation relations:
    \begin{equation}\label{eq:BCR}
        [a_i,a_j^\dagger]= \delta_{i,j},\, [a_i^\dagger,a_j^\dagger]= [a_i,a_j]=0.
    \end{equation}

    The  term given  in \autoref{eq:term2} can be expressed as 
    \begin{align}
         \Cm \Dm^2(X) &= \sum_{i,j,k} a_k a_i^\dagger  a_j^\dagger X a_j   a_i a_k^\dagger \\
         &= \sum_{i,j,k} ( a_i^\dagger a_k + \delta_i^k \Id)  a_j^\dagger X a_j   ( a_k^\dagger a_i + \delta_i^k \Id)\\
         &= \Dm \Cm \Dm(X) + m \Dm(X) + \{ \hat{N}, \Dm(X)\}\\
         &=  \Dm \Cm \Dm(X) + m \Dm(X) + 2\hat{N}\Dm(X)\;,
    \end{align}
    where in the last equality we used the fact that $E(X) \in \mathcal{W}$, hence it commutes with the total number operator $\hat{N}$.

    Similarly, the  term in \autoref{eq:term3} can be written as 
    \begin{align}
         \Dm^2 \Cm(X)&=  \sum_{i,j,k}      a_i^\dagger  a_j^\dagger a_k X a_k^\dagger a_j   a_i   \\
         &=  \sum_{i,j,k}      a_i^\dagger  ( a_k a_j^\dagger - \delta _k^j \Id) X (a_j a_k^\dagger - \delta_j^k \Id )   a_i\\
         &= \Dm \Cm \Dm(X) + m \Dm(X) - \sum_{i,j=1}^m (a_i^\dagger X a_j a_j^\dagger  + a_i^\dagger a_j a_j^\dagger X a_i)\\
         &= \Dm \Cm \Dm(X) + m \Dm(X) - 2m \Dm(X) - 2 \Dm(\hat{N}X)\\
          &= \Dm \Cm \Dm(X) - m \Dm(X)  - 2 \Dm(\hat{N}X)\;.
    \end{align}

    Moreover, we have 
    \begin{align}
        \Dm ( \hat{N}X) &= \sum_k a_k^\dagger  \hat{N} X a_k\\
        &=   \sum_k   \hat{N} a_k^\dagger X a_k-  \sum_k    a_k^\dagger X a_k\\
        &= (\hat{N} - \Id ) \Dm(X)\;,
    \end{align}
    where we used the identity $[a_k^\dagger, \hat{N}]=-a_k^\dagger$.
    
    Hence, we finally get
    \begin{align}\label{eq:proof_D_Sl2}
        [H, \Dm]= [[\Cm,\Dm], \Dm] = 2 \Dm \;.
    \end{align}

     Now, we focus on proving the commutation relation involving the lowering map, i.e. $[H, \Cm]= -2 \Cm$.

    \begin{align}
        [H,\Cm]&= [[\Cm,\Dm],\Cm]\\
        &= [\Cm \Dm -\Dm \Cm,\Cm]\\
        &= 2 \Cm\Dm \Cm - \Dm \Cm^2-\Cm^2 \Dm\;. \label{eq:expand}
    \end{align}

    Similarly to the proof for the commutator with $\Dm$, we develop each term appearing in \autoref{eq:expand} aside.

    Starting by the term $\Dm \Cm^2$, we get
    \begin{align}
        \Dm \Cm^2 (X) &= \sum_k \sum_{i,j} a_k^\dagger  a_i a_j X    a_j^\dagger a_i^\dagger  a_k\\
        &= \sum_k \sum_{i,j} (a_k  a_i^\dagger - \delta_{k,i}\Id) a_j X    a_j^\dagger (a_i  a_k^\dagger - \delta_{k,i}\Id)\\
        &= \sum_k \sum_{i,j} a_k  a_i^\dagger a_j X    a_j^\dagger a_i  a_k^\dagger + \sum_k \sum_{i,j} \delta_{k,i} a_j X    a_j^\dagger -  \sum_k \sum_{i,j} \delta_{k,i} a_k  a_i^\dagger a_j X    a_j^\dagger - \sum_k \sum_{i,j} \delta_{k,i} a_j X    a_j^\dagger a_i  a_k^\dagger\\
        &= \Cm\Dm \Cm(X) + m \Cm(X) - (m \Id + \hat{N}) \Cm(X)- \Cm(X) (m \Id + \hat{N})\\
        &= \Cm \Dm \Cm(X) - m \Cm(X) - 2 \hat{N}\Cm(X)\;,
    \end{align}
    where in the last equality we used the fact that $\Cm(X) \in \mathcal{W}$, hence commuting with $ \hat{N}$.

    A similar derivation can be done for the term $\Cm^2 \Dm$. Precisely, we have
    \begin{align}
        \Cm^2 \Dm(X) &= \sum_{i,j} \sum_k a_j a_i   a_k^\dagger X a_k     a_i^\dagger a_j^\dagger\\
       &=  \sum_{i,j} \sum_k a_j (a_i^\dagger   a_k + \delta_{i,k} \Id) X (a_k^\dagger  a_i+ \delta_{k,i} \Id) a_j^\dagger\\
       &= \sum_{i,j} \sum_k a_j a_i^\dagger   a_k X a_k^\dagger  a_i a_j^\dagger + \sum_{i,j} \sum_k \delta_{i,k} a_i X a_i^\dagger + \sum_{i,j} \sum_k \delta_{i,k} a_j a_i^\dagger   a_k X a_j^\dagger + \sum_{i,j} \sum_k  \delta_{i,k} a_j X a_k^\dagger  a_i a_j^\dagger\\
       &= \Cm\Dm \Cm(X) + m \Cm(X) + 2\Cm(\hat{N}X)\\
       &= \Cm \Dm \Cm(X) + m \Cm(X) + 2 (\hat{N}+ \Id) \Cm(X)\;,
    \end{align}
    where in the last equality we used the commutation relation $[a_k,\hat{N}]= a_k$.

Having established that the rasing and lowering maps satisfy the  commutation relations in \autoref{eq:pi_sl2_rep_app}, this naturally defines a representation of the algebra $\mathfrak{sl}_2$ on the operator space $\mathcal{W}$ given by
\begin{equation}
     \pi(e)= \Dm  \;, \pi(f)=- \Cm \;, \pi(h)=H\;.
\end{equation}
where we recall that the triple $\{e,f,h\}$ form a basis of the Lie algebra $\mathfrak{sl}_2$ given by 
\begin{equation}\label{eq:conanoical_sl2_app}
      e = \begin{pmatrix} 0 & 1 \\ 0 & 0 \end{pmatrix}, \quad
    f = \begin{pmatrix} 0 & 0 \\ 1 & 0 \end{pmatrix}, \quad
    h = \begin{pmatrix} 1 & 0 \\ 0 & -1 \end{pmatrix}.
\end{equation}
satisfying the  commutation relations $[h,e]= 2e,  \; [h,f] = -2f, \;[e,f]=h$.  This concludes the proof
\end{proof}

The $\mathfrak{sl}_2$ action introduced in \autoref{lemma:sl2_commutation} provides a mechanism to further understand the copy structure of the $U(m)$-irreps $\lambda_k^{(n)}$ appearing across different photon-number sectors
, where the raising and lowering maps connect different photon-number sectors by adding and subtracting a photon in each mode, respectively.

Together with the adjoint $U(m)$ action given by the representation $\omega$, this endows the operator space $\mathcal{W}$ with a joint algebraic structure. In particular, the two actions of $\mathfrak{sl}_2$ and $U(m)$ commute, so each ladder $\bigoplus_{n \geq k} \lambda_k^{(n)}$ is invariant under $\mathfrak{sl}_2$ and hence organizes into $\mathfrak{sl}_2$-modules. The following theorem formalizes this structure.

 \begin{theorem}[Commuting $U(m)$ and $\mathfrak{sl}_2$ actions]\label{th:commuting_action}
The operator space $\mathcal{W}$ of photon-number-preserving operators, defined in \autoref{eq:subalgebra_commute_app}, carries two commuting actions:
\begin{itemize}
    \item a unitary representation of $U(m)$, denoted by $\omega(U)$, of the form in \autoref{eq:adjoint_app},
    \item a representation of $\mathfrak{sl}_2$ generated by the operators
    $\Cm$, $\Dm$, and $H := [\Cm,\Dm]$, defined in \autoref{eq:lowering_map_A} and \autoref{eq:raising_map_A}.
\end{itemize}
In particular, $\mathcal{W}$ admits a decomposition as a $U(m)$-module of the form
\begin{equation}
    \mathcal{W} \simeq \bigoplus_{\nu \in \widehat{U(m)}} V_\nu \otimes M_\nu\;,
\end{equation}
where $\widehat{U(m)}$ denotes the set of equivalence classes of irreducible representations of $U(m)$, $V_\nu$ are finite-dimensional irreducible $U(m)$-modules, and $M_\nu$ are the corresponding multiplicity spaces.

Moreover, each multiplicity space $M_\nu$ is invariant under the $\mathfrak{sl}_2$-action and thus carries an $\mathfrak{sl}_2$-module.
\end{theorem}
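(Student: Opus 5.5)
The core of the proof is showing that the adjoint action commutes with the raising and lowering maps, i.e.
\begin{equation*}
\omega(U)\circ \Cm = \Cm\circ\omega(U),\qquad \omega(U)\circ \Dm = \Dm\circ\omega(U)\qquad \forall U\in U(m),
\end{equation*}
on $\mathcal{W}$. Commutation with $H=[\Cm,\Dm]$ then follows immediately, so the whole $\mathfrak{sl}_2$ action, as defined in \autoref{lemma:sl2_commutation}, commutes with $U(m)$.

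The key input is the transformation law of the mode operators under the photonic homomorphism:
\begin{equation*}
\varphi(U)\, a_s^\dagger\, \varphi(U)^\dagger = \sum_{j} U_{js}\, a_j^\dagger,
\qquad
\varphi(U)\, a_s\, \varphi(U)^\dagger = \sum_{j} \overline{U_{js}}\, a_j .
\end{equation*}
This follows from the fact that $\varphi$ is generated by the quadratic Hamiltonians $a_j^\dagger a_i$ (equivalently, from $\varphi_1(U)=U$ together with $\varphi(U)\ket{\boldsymbol 0}=\ket{\boldsymbol 0}$).

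With this, for $X\in\mathcal W$ I would insert $\varphi(U)^\dagger\varphi(U)=\Id$ around $X$ and compute
\begin{equation*}
\omega(U)(\Cm(X)) = \sum_s \varphi(U)a_s\varphi(U)^\dagger\,\omega(U)(X)\,\varphi(U)a_s^\dagger\varphi(U)^\dagger = \sum_{s,j,l}\overline{U_{js}}U_{ls}\, a_j\,\omega(U)(X)\,a_l^\dagger .
\end{equation*}
Unitarity gives $\sum_s \overline{U_{js}}U_{ls}=\delta_{jl}$, so this equals $\Cm(\omega(U)(X))$. The same computation, with the roles of $U$ and $\bar U$ exchanged, handles $\Dm$. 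I expect this step to be essentially the whole content: it is the statement that $\sum_s a_s\otimes a_s^\dagger$ is a $U(m)$-invariant tensor.

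One technical point needs care first. $\Cm$ and $\Dm$ must genuinely map $\mathcal W$ to $\mathcal W$; they shift the sector $W_n\to W_{n\mp1}$ and are defined via normally ordered monomials. I would check this on the spanning set \eqref{eq:span_Wn}, where it is clear because $a_s X a_s^\dagger$ is still particle-number preserving.

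For the decomposition, I would use \autoref{lemma:irrep_decomp}: each $W_n$ is finite-dimensional and completely reducible. This gives the isotypic decomposition
\begin{equation*}
\mathcal W\simeq\bigoplus_\nu V_\nu\otimes M_\nu,\qquad M_\nu\simeq \mathrm{Hom}_{U(m)}(V_\nu,\mathcal W)=\bigoplus_n \mathrm{Hom}_{U(m)}(V_\nu,W_n),
\end{equation*}
taken as an algebraic direct sum over $n$.

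Finally, an operator commuting with $\omega(U)$ acts on $\mathrm{Hom}_{U(m)}(V_\nu,\mathcal W)$ by post-composition. So each of $\Cm$, $\Dm$, $H$ preserves the isotypic components $V_\nu\otimes M_\nu$ and acts there as $\Id_{V_\nu}\otimes(\cdot)$, by Schur's lemma (as in \autoref{lemma:lemma1}). Since the commutation relations of \autoref{lemma:sl2_commutation} are preserved under this restriction, $M_\nu$ inherits an $\mathfrak{sl}_2$-module structure.
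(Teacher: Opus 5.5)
Your proposal is correct, but the key commutation step takes a different route from the paper. The paper works infinitesimally. It writes $d\omega(h)(O)=\sum_{i,j}h_{i,j}[a_i^\dagger a_j,O]$ for $h\in\mathfrak{u}(m)$ and checks $\Cm\circ d\omega(h)=d\omega(h)\circ\Cm$ and $\Dm\circ d\omega(h)=d\omega(h)\circ\Dm$ by commuting $a_s,a_s^\dagger$ past $a_i^\dagger a_j$ with the canonical commutation relations; the two Kronecker-delta terms cancel. You work directly at the group level, using the covariance law $\varphi(U)a_s^\dagger\varphi(U)^\dagger=\sum_j U_{js}a_j^\dagger$ and unitarity of $U$. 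This is the $U(m)$-invariance of $\sum_s a_s\otimes a_s^\dagger$, the integrated form of the same cancellation.

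Your version gives $\omega(U)\circ\Cm=\Cm\circ\omega(U)$ for every $U$ at once and makes the reason transparent. The paper's ``it suffices to check on $\mathfrak{u}(m)$'' silently relies on connectedness of $U(m)$ and on exponentiating sector by sector. On the other hand, the paper's computation needs nothing beyond the commutation relations and matches the style of its $\mathfrak{sl}_2$ lemma.

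One caveat: derive the covariance law from $\varphi_n(U)=U^{\otimes n}|_{\mathcal F_n}$ (which commutes with symmetrization) or from the quadratic generators, as in your main justification. Your parenthetical alternative, $\varphi_1(U)=U$ together with vacuum invariance, does not by itself determine $\varphi$ on higher sectors. For example, twisting $\varphi_n$ by a power of $\det U$ for $n\geq 2$ preserves both properties but breaks the covariance law.

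Finally, your decomposition step uses finite-dimensional sectors, an algebraic direct sum over $n$, and $M_\nu=\mathrm{Hom}_{U(m)}(V_\nu,\mathcal W)$ with $\Cm,\Dm,H$ acting by post-composition. This is the same Schur argument as the paper's. It is slightly cleaner than invoking Peter--Weyl on the infinite-dimensional space $\mathcal W$, and it makes the induced $\mathfrak{sl}_2$-action on $M_\nu$ canonical.
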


\begin{proof}

In \autoref{lemma:sl2_commutation}, we have already established that the lowering and raising  maps $\Cm$ and $\Dm$ introduced in \autoref{eq:lowering_map_A} and \autoref{eq:raising_map_A} generate along with their commutator $H=[\Cm,\Dm]$ an $\mathfrak{sl}_2$ representation on the operator space $\mathcal{W}$.

On the other hand, the group $U(m)$ acts on the same vector space via the adjoint representation of the photonic homomorphism $\omega(U)$, defined in \autoref{eq:adjoint_app}. For convenience, we henceforth consider the corresponding Lie algebra representation of the $U(m)$-representation $\omega(U)$, given by $d\omega:\mathfrak{u}(m) \rightarrow \rm End(\mathcal{W})$. For any $h\in \mathfrak{u}(m)$ and $O\in \mathcal{W}$, we have 
\begin{align}
    d\omega(h)\cdot O := [d\varphi(h),O] = \sum_{i,j=1}^m h_{i,j} [a_i^\dagger a_j,O]\;,
\end{align}
where $d\varphi(h) = \sum_{i,j=1}^m h_{i,j} a_i^\dagger a_j $ denotes the induced Lie algebra representation of the $U(m)$ representation  $\varphi$  \cite{garcia-escartin_multiple_2019}.

We now show that the  $\mathfrak{sl}_2$ action commutes with the $U(m)$ action. To this end, it suffices to prove that for all $h\in \mathfrak{u}(m)$, 
\begin{align}
    \Cm \circ d\omega(h) &=  d\omega(h) \circ \Cm\;,\label{eq:first_com}\\
     \Dm \circ d\omega(h) &=  d\omega(h) \circ \Dm\;,\label{eq:second_com}\\
      H \circ d\omega(h) &=  d\omega(h) \circ H\;.\label{eq:third_com}
\end{align}

Let $O \in \mathcal{W}$. 
We begin by proving the commutation relation in \autoref{eq:first_com}.  Precisely we have,
\begin{align}
    (\Cm \circ d\omega(h))\cdot O &= \sum_{s=1}^m a_s[d\varphi(h),O] a_s^\dagger\\
         &=   \sum_{i,j=1}^m h_{i,j} \sum_{s=1}^m  a_s [a_i^\dagger a_j,O] a_s^\dagger \\
    &=  \sum_{i,j=1}^m h_{i,j} \sum_{s=1}^m  (a_s a_i^\dagger a_j O a_s^\dagger -  a_s Oa_i^\dagger a_j  a_s^\dagger) \\
    &=  \sum_{i,j=1}^m h_{i,j} \sum_{s=1}^m  (( a_i^\dagger a_s + \delta_s^i \Id) a_j O a_s^\dagger -  a_s Oa_i^\dagger (a_s^\dagger a_j + \delta_s^j \Id )) \\
    &= \sum_{i,j=1}^m h_{i,j} \sum_{s=1}^m [a_i^\dagger a_j , a_s O  a_s^\dagger]\\
    &= \sum_{i,j=1}^m h_{i,j}  [a_i^\dagger a_j , \Cm(O)]\\
    &=  (  d\omega(h) \circ \Cm )\cdot O\;,
\end{align}
where in the fourth equality, we simply used the bosonic commutation relations.

Similarly for the raising map, we prove the commutation relation in \autoref{eq:second_com} as follows.
\begin{align}
    (\Dm \circ d\omega(h)) \cdot O  
    &= \sum_{s=1}^m \sum_{i,j=1}^m h_{i,j}  a_s^\dagger [a_i^\dagger a_j,O] a_s \\
    &= \sum_{s=1}^m \sum_{i,j=1}^m h_{i,j} (a_s^\dagger a_i^\dagger a_j O a_s - a_s^\dagger O a_i^\dagger a_j  a_s)\\
    &= \sum_{s=1}^m \sum_{i,j=1}^m h_{i,j} ( a_i^\dagger (a_j a_s^\dagger - \delta_{s,j} \Id) O a_s - a_s^\dagger O (a_s a_i^\dagger - \delta_{s,i} \Id) a_j )\\
     &= \sum_{s=1}^m \sum_{i,j=1}^m h_{i,j} ( a_i^\dagger a_j a_s^\dagger  O a_s - a_s^\dagger O a_s a_i^\dagger  a_j
     - \delta_{s,j} a_i^\dagger   O a_s 
     +  \delta_{s,i} a_s^\dagger O  a_j)\\
     &= \sum_{s=1}^m \sum_{i,j=1}^m h_{i,j}  [a_i^\dagger a_j , a_s^\dagger O a_s] -  \sum_{i,j=1}^m h_{i,j} a_i^\dagger O a_j - \sum_{i,j=1}^m h_{i,j} a_i^\dagger O a_j \\
     &= \sum_{i,j=1}^m h_{i,j} [a_i^\dagger a_j ,\Dm(O)]\\
     &=  (d\omega(h) \circ \Dm) \cdot O\;.
\end{align}

The commutation relation for the commutator map $H= [\Cm,\Dm]$ follows directly from \autoref{eq:first_com} and \autoref{eq:second_com}. Precisely, we have
\begin{align}
     (H \circ d\omega(h)) \cdot O &=  (\Cm \circ \Dm - \Dm \circ \Cm \circ d\omega(h)) \cdot O \\
     &= (\Cm \circ \Dm \circ d\omega(h)) \cdot O  - (\Dm \circ \Cm \circ d\omega(h)) \cdot O \\
     &= \Cm ((d\omega(h) \circ \Dm)\cdot O) - \Dm ((d\omega(h) \circ \Cm)\cdot O)\\
      &= (d\omega(h) \circ \Cm \circ \Dm  ) \cdot O  - (d\omega(h) \circ \Dm \circ \Cm ) \cdot O \\
     &= (d\omega(h) \circ H)\cdot O\;.
\end{align}

We have thus established that the actions of $U(m)$ and $\mathfrak{sl}_2$ commute on $\mathcal{W}$.

Since $U(m)$ is compact, the Peter--Weyl theorem \cite{hall_lie_2015} implies that  $\mathcal{W}$ decomposes as a direct sum of irreducible $U(m)$-modules: 
\begin{equation}
    \mathcal{W} \cong \bigoplus_{\nu \in \widehat{U(m)}} V_\nu \otimes M_\nu\;,
\end{equation}
where $\widehat{U(m)}$ denotes the set of inequivalent irreducible representations of $U(m)$, $V_\nu$ corresponds to the carrier space of the irreducible representation $\nu$  and $M_\nu$ is the corresponding multiplicity space, which can be infinite dimensional.

Moreover, since the $\mathfrak{sl}_2$-action on $\mathcal{W}$ commutes with the adjoint $U(m)$-action, each isotypic component $V_\nu \otimes M_\nu$ is invariant under $\mathfrak{sl}_2$. In particular, for any $x \in \mathfrak{sl}_2$, the map $\pi(x)$, defined in \autoref{eq:pi_sl2_rep}, acts as a $U(m)$-equivariant endomorphism of $V_\nu \otimes M_\nu$. By Schur's lemma, it follows that
\begin{equation}
    \pi(x)\big|_{V_\nu \otimes M_\nu}
    =
    \mathrm{id}_{V_\nu} \otimes \pi_\nu(x)\;,
\end{equation}
for a uniquely defined sub-representation $\pi_\nu(X)$ of $\pi(x)$.

Consequently, each multiplicity space $M_\nu$ carries a natural representation $\pi_\nu$ of $\mathfrak{sl}_2$.

\end{proof}

\autoref{th:commuting_action} shows that the operator space $\mathcal{W}$ decomposes into irreducible $U(m)$-representations $V_\nu$, with  corresponding multiplicity spaces carrying an $\mathfrak{sl}_2$-module structure.
This is suggestive of a dual-pair (Howe duality) setting, where two commuting actions generate mutual centralizers and yield a multiplicity-free decomposition into irreducible components \cite{debie2016howedualitypolynomialsolutions,brito2024quantummetaplectichoweduality}.

At this stage, however, we have only established the commutation between the $U(m)$ and $\mathfrak{sl}_2$ actions.
A full Howe duality would require the stronger double-commutant property, i.e., that the algebras they generate are mutual centralizers in $\mathrm{End}(\mathcal{W})$, which would in particular force each multiplicity space  $M_\nu$ to be  an irreducible $\mathfrak{sl}_2$-module and lead to a multiplicity-free decomposition. 

While we do not prove this property here, the structure will effectively emerge in the subsequent sections through an explicit ladder construction and a highest-weight analysis of the 
 $\mathfrak{sl}_2$ action.

\subsection{Properties of restricted raising and lowering maps}\label{app:properties}

In what follows, we further analyse the lowering and raising maps more locally by considering their restrictions on fixed photon-number operator subspaces. 
Formally, let $ \Cm_k$ and $ \Dm_k$ denote 
the following restricted maps,  $\forall k \geq 0$.
 \begin{align}
      \Cm_k &:= \Cm|_{W_k} \;,\label{eq:lowering_k}%\Cm \circ \mathcal{P}_k  \\
      \\
      \Dm_k &:= \Dm|_{W_{k-1}}\;.\label{eq:raising_k} %\Dm \circ \mathcal{P}_{k-1} 
 \end{align}
 Here, we adopt the convention that $\Dm_0$ is the zero map. Henceforth, we refer to these restricted maps simply as lowering and raising maps unless stated otherwise.

An important property of the lowering and raising maps which will be a key element in our analysis is the fact that they are $U(m)$-intertwiners.  This statement is formalized in \autoref{prop:app_interwiner_Ck} .

\begin{lemma}[The  lowering and raising maps are  $U(m)$-intertwiners]\label{prop:app_interwiner_Ck}
\hspace{1cm}

\begin{enumerate}
    \item The lowering map $\Cm_k : W_k \rightarrow W_{k-1}$ is a $U(m)-$ intertwiner with respect to the representations $\omega_k$ on $W_k$ and $\omega_{k-1}$ on $W_{k-1}$, introduced in \autoref{eq:decomp_omega_m}. Namely, the lowering map satisfy the following equivariance property, 
    \begin{equation}\label{eq:equiv_Ck}
        \Cm_k\circ \omega_{k} = \omega_{k-1} \circ \Cm_k\;.
    \end{equation}

    \item The raising map $\Dm_k : W_{k-1} \rightarrow W_k$ is a $U(m)-$ intertwiner with respect to the representations $\omega_k$ on $W_k$ and $\omega_{k-1}$ on $W_{k-1}$, introduced in \autoref{eq:decomp_omega_m}. Namely, the lowering map satisfy the following equivariance property, 
    \begin{equation}\label{eq:equiv_Dk}
        \Dm_k\circ \omega_{k-1} = \omega_{k} \circ \Dm_k\;.
    \end{equation}
\end{enumerate}

\end{lemma}

\begin{proof}
   In \autoref{th:commuting_action}, we showed that the action of $U(m)$ on the operator space $\mathcal{W}$ via the representation $\omega(U)$, defined in \autoref{eq:adjoint_app}, commutes with the $\mathfrak{sl}_2$ action generated by the raising and lowering maps introduced in \autoref{eq:lowering_map_A} and \autoref{eq:raising_map_A}. Equivalently, the global maps $\Cm$ and $\Dm$ are $U(m)$-intertwiners with respect to the adjoint representation $\omega(U)$.
Furthermore, by construction the lowering map $\Cm_k$ maps $W_k$ within $W_{k-1}$, while the raising map $\Dm_k$ maps $W_{k-1}$  within $W_k$.
Since the restrictions of $\Cm$ and $\Dm$ to fixed photon-number subspaces preserve this intertwining property, the result follows.
\end{proof}

Now, we show that the maps $\Cm_k$ and $\Dm_k$ are adjoint maps, which we denote henceforth by $\Dm_k = \Cm_k^\dagger$.
\begin{proposition}{(The raising and lowering maps are adjoint maps)}\label{prop:adjoint_Ck_Dk}
    For a fixed $k \geq 0$,  the lowering and raising maps $\Cm_k$ and $\Dm_k$ defined in \autoref{eq:lowering_k} and \autoref{eq:raising_k} are adjoint maps, i.e. \begin{equation}\label{eq:adjoint_maps}
    \langle A, \Cm_k(B) \rangle_{\mathcal{W}} = \langle \Dm_k(A), B \rangle_{\mathcal{W}} \;, \forall A \in W_{k-1}, B \in W_k
\end{equation}
where we equipped the operator subalgebra $\mathcal{W}$ and hence the operator subspaces $W_k$ with the Hilbert--Schmidt inner product.
\end{proposition}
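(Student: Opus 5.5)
The plan is to verify the adjointness directly from the Hilbert--Schmidt inner product, $\langle A, B\rangle_{\mathcal{W}} = \Tr[A^\dagger B]$, using only cyclicity of the trace. The one subtlety is that $\Cm_k(B)=\sum_s a_s B a_s^\dagger$ and $\Dm_k(A)=\sum_s a_s^\dagger A a_s$ are defined via unbounded ladder operators on the full Fock space. So before manipulating traces, I would first record that each individual term is a finite-rank operator supported on a single sector.

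\textbf{Step 1: reduce to finite-dimensional blocks.} For $B\in W_k$ I would write $B=P_kBP_k$. Since $a_s^\dagger$ maps $\mathcal{F}_{k-1}$ into $\mathcal{F}_k$, we have $a_s B a_s^\dagger = P_{k-1}(a_sP_k)B(P_ka_s^\dagger)P_{k-1}$. Here $a_sP_k$ is just the finite matrix $\mathcal{F}_k\to\mathcal{F}_{k-1}$, and its adjoint is $P_ka_s^\dagger P_{k-1}$. Similarly, for $A\in W_{k-1}$, $a_s^\dagger A a_s = (P_k a_s^\dagger P_{k-1})A(P_{k-1}a_sP_k)\in W_k$. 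All objects then live in finite-dimensional spaces, so trace cyclicity is unconditionally valid.

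\textbf{Step 2: cyclicity.} I would then compute
\begin{equation}
\langle A,\Cm_k(B)\rangle_{\mathcal{W}}=\sum_{s=1}^m \Tr\!\left[A^\dagger a_s B a_s^\dagger\right]=\sum_{s=1}^m\Tr\!\left[a_s^\dagger A^\dagger a_s B\right]=\sum_{s=1}^m\Tr\!\left[(a_s^\dagger A a_s)^\dagger B\right]=\langle \Dm_k(A),B\rangle_{\mathcal{W}},
\end{equation}
where the operators appearing are the restricted blocks from Step 1. The second equality moves $a_s^\dagger$ cyclically to the front, and the third uses $(a_s^\dagger A a_s)^\dagger = a_s^\dagger A^\dagger a_s$. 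Summing over $s$ gives \autoref{eq:adjoint_maps}. The convention $\Dm_0=0$ is consistent with this, since $W_{-1}=0$ and the case $k=0$ is vacuous.

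The only point I expect to need care is Step 1, namely justifying that the blocks $P_{k-1}a_sP_k$ and $P_ka_s^\dagger P_{k-1}$ are mutual adjoints as finite matrices. I would check this with the explicit Fock-basis matrix elements $\langle S-e_s|a_s|S\rangle=\sqrt{s_s}=\overline{\langle S|a_s^\dagger|S-e_s\rangle}$. Everything else is the routine cyclicity computation above.
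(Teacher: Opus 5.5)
Your proposal is correct and takes essentially the same route as the paper. The paper also writes $\langle A,\Cm_k(B)\rangle=\sum_s\Tr[A^\dagger a_s B a_s^\dagger]$, moves $a_s^\dagger$ cyclically to get $\Tr[\Dm_k(A^\dagger)B]$, and then uses $\Dm_k(A^\dagger)=\Dm_k(A)^\dagger$; your Step 1 reduction to the finite blocks $P_{k-1}a_sP_k$ and $P_ka_s^\dagger P_{k-1}$ simply makes explicit the justification of trace cyclicity that the paper leaves implicit when it invokes $A\in W_{k-1}$ and $B\in W_k$.
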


\begin{proof}
    We equip the the operator subalgebra $\mathcal{W}$  with the standard Hilbert--Schmidt inner product, defined as $\langle A,B \rangle_{\mathcal{W}}:= \Tr_{\mathcal{F}}[A^\dagger B]$. 

Given the direct sum decomposition of $\mathcal{W}$ given in \autoref{eq:photon_number_decomp}, we have that for all $A\in W_k, B \in W_l $,
\begin{align}
    \langle A,B \rangle_{\mathcal{W}} &=0 &\text{ if } k \neq l\;,\\
    \langle A,B \rangle_{\mathcal{W}} &= \langle A,B \rangle_{W_k}:= \Tr_{\mathcal{F}_k}[A^\dagger B] &\text{ if } k = l\;.
\end{align}

   Let $A \in W_{k-1}, B \in W_k$. We have
   \begin{align}
       \langle A, \Cm_k(B) \rangle_{W_{k-1}} &= \Tr[A^\dagger \Cm_k(B)]\\
       &= \sum_{s=1}^m \Tr[A^\dagger a_s B a_s^\dagger]\\
       &=  \Tr[\Dm_k(A^\dagger)  B ]\\
       &=  \Tr[\Dm_k(A)^\dagger  B ] := \langle \Dm_k(A), B \rangle_{W_k}\;,
   \end{align}
   where in the second equality we used the definition of the lowering map given in \autoref{eq:lowering_k} along with the fact that $B \in W_k$. In the third equality, we used the cyclic property of the trace, the definition of the raising map given in \autoref{eq:raising_k} and the fact that $A \in W_{k-1}$. The last equality simply follows for the fact that the map $\Dm_k$ is hermitian.
   
    Therefore, the raising map $\Dm_k$ is the adjoint of the lowering map $\Cm_k$, which we denote by $\Dm_k = \Cm_k^\dagger$.
\end{proof}

\subsection{Analysis of irreducible representations of the photon-number preserving subalgebra}\label{app:exp_irreps}

In this section, we further characterize the irreducible components $\lambda_r^{(k)}$ arising in the decomposition of the operator subalgebra $\mathcal{W}$ of photon-number preserving bosonic operators, given in \autoref{eq:irrep_diagram}.
In particular, by exploiting the raising and lowering maps, we provide an alternative and more constructive realization of these irreps.

We first establish a recursive relation linking the fixed photon-number operator subspaces $W_k\;,\forall k \geq 0$.	
 Indeed, this relation is an immediate consequence of the adjointness of the raising and lowering maps, proven in \autoref{prop:adjoint_Ck_Dk}.

\begin{proposition}{(Recursive structure of the photon-number preserving subalgebra).}\label{prop:recursive_decomp}
For all $k \geq 0$, the operator subspace $W_k$ defined in \autoref{eq:irreps} decomposes under the $U(m)$-representation $\omega_k$  as 
    \begin{equation}
        W_k = \rm Ker(\Cm_k) \oplus \Dm_k(W_{k-1})\;.
    \end{equation}
\end{proposition}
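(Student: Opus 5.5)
The argument is the standard finite-dimensional linear-algebra fact $V = \ker(T) \oplus \operatorname{Im}(T^\dagger)$, applied to $T=\Cm_k : W_k \to W_{k-1}$ with both spaces carrying the Hilbert--Schmidt inner product. The main input is \autoref{prop:adjoint_Ck_Dk}, which identifies the adjoint of $\Cm_k$ as $\Dm_k$. Equivariance of the summands then follows from \autoref{prop:app_interwiner_Ck}.

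The plan is as follows. First, note that $W_k$ and $W_{k-1}$ are finite-dimensional, since $\mathcal{F}_k$ has dimension $\binom{k+m-1}{k}$. The Hilbert--Schmidt inner product is positive definite on each of them, so orthogonal complements are well defined and $W_k = \ker(\Cm_k) \oplus \ker(\Cm_k)^\perp$.

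Second, show that $\ker(\Cm_k)^\perp = \Dm_k(W_{k-1})$.
\begin{itemize}
\item Let $B \in W_k$. By \autoref{eq:adjoint_maps}, $B \perp \Dm_k(W_{k-1})$ holds if and only if $\langle A, \Cm_k(B)\rangle = \langle \Dm_k(A), B\rangle = 0$ for all $A \in W_{k-1}$.
\item This holds if and only if $\Cm_k(B)=0$, because $\Cm_k(B)$ itself lies in $W_{k-1}$ and we may take $A=\Cm_k(B)$.
\item Hence $\Dm_k(W_{k-1})^\perp = \ker(\Cm_k)$. Taking complements once more in the finite-dimensional space gives $\ker(\Cm_k)^\perp = \Dm_k(W_{k-1})$.
\end{itemize}
For $k=0$, the convention $\Dm_0=0$ and $W_{-1}=\{0\}$ makes the statement trivial, since $\Cm_0$ is then the zero map.

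Third, check that the decomposition is one of $U(m)$-representations under $\omega_k$. By \autoref{prop:app_interwiner_Ck}, $\Cm_k$ intertwines $\omega_k$ with $\omega_{k-1}$, so $\ker(\Cm_k)$ is $\omega_k$-invariant. Likewise $\Dm_k$ intertwines $\omega_{k-1}$ with $\omega_k$, so $\Dm_k(W_{k-1})$ is $\omega_k$-invariant. This invariance is automatic: $\omega_k$ is unitary, so the orthogonal complement of an invariant subspace is invariant.

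One detail needs care. The adjointness identity in \autoref{prop:adjoint_Ck_Dk} must hold for arbitrary, not necessarily Hermitian, $A$. The relevant step is the identity $\Dm(A^\dagger)=\Dm(A)^\dagger$, which holds for any operator $A$ directly from the definition $\Dm(\cdot)=\sum_s a_s^\dagger(\cdot)a_s$. One must also check that $\Cm_k(B)$, computed in the full Fock space, really lies in $W_{k-1}$, so that the trace can be taken within $\mathcal{F}_{k-1}$. I expect this to be routine.
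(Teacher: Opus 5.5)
Your proposal is correct and follows the paper's proof. The paper likewise writes $W_k=\ker(\Cm_k)\oplus\mathrm{Im}(\Cm_k^\dagger)$, citing the fundamental theorem of linear algebra where you prove it via orthogonal complements. It then identifies $\Cm_k^\dagger=\Dm_k$ through \autoref{prop:adjoint_Ck_Dk} and gets $\omega_k$-invariance of both summands from the intertwining property in \autoref{prop:app_interwiner_Ck}. Your additional care about the $k=0$ case and about $\Dm(A^\dagger)=\Dm(A)^\dagger$ for arbitrary $A$ is sound; the latter is what the paper's phrase ``$\Dm_k$ is hermitian'' actually means.
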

\begin{proof}
 Using the fundamental theorem of linear algebra, we have
    \begin{align}
        W_k =\rm Ker(\Cm_k) \oplus\rm Im(\Cm_k^\dagger)  \;.  
    \end{align}
In \autoref{prop:adjoint_Ck_Dk}, we have shown that $\Cm_k^\dagger = \Dm_k$. Hence, we get

    \begin{align}
        W_k &= \rm Ker(\Cm_k) \oplus Im(\Dm_k)\\
        &= \rm Ker(\Cm_k) \oplus \Dm_k(W_{k-1})\;.
    \end{align}

Moreover, since the maps $\Cm_k$ and $\Dm_k$ are $U(m)$-intertwiners according to \autoref{prop:app_interwiner_Ck},  $\rm Ker(\Cm_k)$ and $\Dm_k(W_{k-1})$ are invariant subspaces of $W_k$,
    which concludes the proof.
\end{proof}

Now, we show that the outermost irrep of $W_k$ given by $\lambda_k^{(k)}$ is nothing but the kernel of the lowering map $\Cm_k$.

\begin{proposition}{(Top irrep of the $k$-photon operator subspace)} \label{prop:kernel_top_irrep}
 The kernel of the lowering map $\Cm_k$ coincides with the top irrep in $W_k$, i.e.
    \begin{equation}\label{eq:top_irrep_ker}
   {\rm Ker}(\Cm_k) = \lambda_k^{(k)}\;.
    \end{equation}    
\end{proposition}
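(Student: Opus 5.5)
We will prove the statement by induction on $k$, combining the recursive decomposition of \autoref{prop:recursive_decomp} with the multiplicity-free structure of \autoref{lemma:irrep_decomp} and \autoref{lemma:lemma1}. The idea is to show that $\Dm_k(W_{k-1})$ is exactly the sum of the irreps $\lambda_r^{(k)}$ with $r\le k-1$. Its orthogonal complement $\mathrm{Ker}(\Cm_k)$ is then the remaining irrep $\lambda_k^{(k)}$.

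\textbf{Base case.} For $k=0$ the convention $\Dm_0=0$ gives $\Cm_0=0$ on $W_0$ (acting on vacuum, $a_s(\cdot)a_s^\dagger$ vanishes), so $\mathrm{Ker}(\Cm_0)=W_0=\lambda_0^{(0)}$.

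\textbf{Step 1: injectivity of the raising map.} For $k\ge1$, we first show that $\Dm_k:W_{k-1}\to W_k$ is injective, equivalently that $\Cm_k\Dm_k$ is invertible on $W_{k-1}$. By \autoref{prop:adjoint_Ck_Dk}, $\Cm_k\Dm_k=\Dm_k^\dagger\Dm_k$ is positive semidefinite, so it suffices to show it has no kernel. Using \autoref{lemma:sl2_commutation} we have $\Cm\Dm=\Dm\Cm+H$, and the computation in that proof shows that $H$ acts on $W_{k-1}$ as the scalar $m+2(k-1)$. Indeed, expanding $\Cm\Dm(X)$ and $\Dm\Cm(X)$ with the bosonic commutation relations gives $H|_{W_n}=(m+2n)\,\mathrm{id}$. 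Hence
\[
\Cm_k\Dm_k=\Dm_{k-1}\Cm_{k-1}+(m+2k-2)\,\mathrm{id}.
\]
The first term is positive semidefinite, again by adjointness, and $m+2k-2>0$. So $\Cm_k\Dm_k\succ0$, and $\Dm_k$ is injective.

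\textbf{Step 2: identifying the image.} By the induction hypothesis and repeated use of \autoref{prop:recursive_decomp}, $W_{k-1}\simeq\bigoplus_{r=0}^{k-1}\lambda_r^{(k-1)}$. Applying \autoref{lemma:lemma1} to the restriction of the injective intertwiner $\Dm_k$ (an intertwiner by \autoref{prop:app_interwiner_Ck}) to each $\lambda_r^{(k-1)}$, each image $\Dm_k(\lambda_r^{(k-1)})$ is nonzero and equals the unique irrep of $W_k$ equivalent to it. Since $\lambda_r^{(k-1)}$ has the same Young diagram as $\lambda_r^{(k)}$, that irrep is $\lambda_r^{(k)}$. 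Therefore
\[
\Dm_k(W_{k-1})=\bigoplus_{r=0}^{k-1}\lambda_r^{(k)}.
\]

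\textbf{Step 3: conclusion.} Because the decomposition of \autoref{lemma:irrep_decomp} is multiplicity-free, the isotypic components are mutually orthogonal and invariant. The invariant complement $\mathrm{Ker}(\Cm_k)$ appearing in \autoref{prop:recursive_decomp} must therefore be $\lambda_k^{(k)}$. A dimension count confirms this: $\dim W_k-\dim W_{k-1}=d_k^{(k)}$.

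\textbf{Main obstacle.} The delicate point is Step 1, namely verifying that $H$ acts as the scalar $m+2n$ on $W_n$. This requires a careful normal-ordering computation with $\hat N$, and I would carry it out explicitly on a generic $X\in W_n$, using the identities already derived in the proof of \autoref{lemma:sl2_commutation}. A secondary subtlety is the identification in Step 2 of the abstract label $r$ across sectors. It relies on the irreps $\lambda_r^{(n)}$ being pairwise inequivalent and on their type being independent of $n$, both given by \autoref{lemma:irrep_decomp}.
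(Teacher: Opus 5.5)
Your proof is correct, but it takes a different route from the paper's.

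\textbf{The paper's route.} It first proves that $\Cm_k$ is surjective, via injectivity of $\Dm_k$. It then shows $\lambda_k^{(k)}\subseteq\mathrm{Ker}(\Cm_k)$ by Schur's lemma: a nonzero intertwiner $\lambda_k^{(k)}\to W_{k-1}$ would make $\lambda_k^{(k)}$ equivalent to some $\lambda_r^{(k-1)}$ with $r<k$. It closes with the dimension count $\dim\mathrm{Ker}(\Cm_k)=\dim W_k-\dim W_{k-1}=d_k^{(k)}$.

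\textbf{Your route.} You identify the complement directly, $\mathrm{Im}(\Dm_k)=\bigoplus_{r<k}\lambda_r^{(k)}$, and take its orthogonal complement.

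\textbf{What your route gains.} The main gain is Step 1. The paper argues injectivity of $\Dm_k$ only by checking that distinct basis operators $\ketbra{p}{p'}$ have distinct images. That alone does not give injectivity of a linear map. Your identity $\Cm_k\Dm_k=\Dm_{k-1}\Cm_{k-1}+(m+2k-2)\,\mathrm{id}\succeq(m+2k-2)\,\mathrm{id}$ is a complete proof. It is also not really delicate: it is the one-line expansion $\Cm\Dm(X)=\Dm\Cm(X)+mX+\{\hat N,X\}$. This is the same computation that appears in the proof of \autoref{prop:CD_DC} before the kernel property is used. There is no circularity as long as you cite that computation and not that proposition's conclusion, whose proof relies on the present statement. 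Your Step 2 also gives $\Dm_k(\lambda_r^{(k-1)})=\lambda_r^{(k)}$ at no extra cost, which is the content of \autoref{prop:lower_irreps}.

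\textbf{What it costs.} You use the Young-diagram labelling of \autoref{lemma:irrep_decomp} to match $\lambda_r^{(k-1)}$ with $\lambda_r^{(k)}$, whereas the paper only needs that type $k$ is absent from $W_{k-1}$. You also need Hilbert--Schmidt orthogonality of the inequivalent components, which holds because $\omega_k$ is unitary.

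Finally, the induction on $k$ is unnecessary. Neither step uses the statement for $k-1$, since \autoref{lemma:irrep_decomp} already gives the decomposition of $W_{k-1}$.
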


\begin{proof}
    The first step in this proof consists in proving the the lowering map $\Cm_k$ is a surjective map. To do so, it is sufficient to prove that its adjoint map given by the raising map $\Dm_k$ is injective. Indeed, it is more convenient to study the raising map $\Dm_k$ because we don't have to deal with the subtlety of annihilating the vacuum state.

    Since $\{\ketbra{p}{p'}\}_{p,p' \in \mathcal{F}_{k-1}}$ forms a basis of $W_{k-1}$, proving the injectivity of the raising map  $\Dm_k$ boils down to proving that
\begin{align}
    \Dm_k(\ketbra{p}{p'}) = \Dm_k(\ketbra{q}{q'}) \implies p=q \text{ and } p'=q' \;.
\end{align}

Let $\ketbra{p}{p'}, \ketbra{q}{q'} \in W_{k-1}$ such that
\begin{equation}\label{eq:inj_cond}
    \Dm_k(\ketbra{p}{p'}) = \Dm_k(\ketbra{q}{q'})\;.
\end{equation}
 
The map $\Dm_k$ acts on $\ketbra{p}{p'}$ as follows
\begin{align}
    \Dm_k(\ketbra{p}{p'}) &= \sum_{s=1}^m a_s^\dagger \ketbra{p}{p'} a_s\\
    &= \sum_{s=1}^m \sqrt{p_s +1} \sqrt{p'_s +1}  \ketbra{p+e_s}{p'+e_s} \;.
\end{align}

Let us fix $t \in [|1,m|]$ and compute the Hilbert--Schmidt inner product between $\ketbra{p'+e_t}{p+e_t}$ and $\Dm_k(\ketbra{p}{p'}) - \Dm_k(\ketbra{q}{q'})$. From the condition in \autoref{eq:inj_cond} , we already know that this overlap should be zero, i.e.
\begin{align}
  0&=  \bra{p+e_t} \left(\Dm_k(\ketbra{p}{p'}) - \Dm_k(\ketbra{q}{q'})\right) \ket{p'+e_t} \\
  &= \bra{p+e_t} \left(\sum_{s=1}^m \sqrt{p_s +1} \sqrt{p'_s +1}  \ketbra{p+e_s}{p'+e_s} - \sqrt{q_s +1} \sqrt{q'_s +1}  \ketbra{q+e_s}{q'+e_s}\right) \ket{p'+e_t}\\
  &= \sqrt{p_t +1} \sqrt{p'_t +1} - \sum_{s=1}^m \sqrt{q_s +1} \sqrt{q'_s +1} \bra{p+e_t}\ket{q+e_s}  \bra{q'+e_s}\ket{p'+e_t}\;. \label{eq:D_inj}
\end{align}

We assume that $(p,p') \neq (q,q')$.
Let us now focus on identifying the conditions for which $\bra{p+e_t}\ket{q+e_s} =1$. For $s=t$, one can easily see that
\begin{equation}
    \bra{p+e_t}\ket{q+e_t} \bra{q'+e_t}\ket{p'+e_t}= \braket{p}{q} \braket{p'}{q'} = 0\;.
\end{equation}

For $s \neq t$, the overlap is equal to one if  
\begin{align} 
    q_r &= p_r \;, \forall r \notin \{s,t\} \label{eq:overlap1}\\
    q_s &= p_s - 1 \label{eq:overlap2}\\
    q_t &= p_t + 1\;. \label{eq:overlap3}
\end{align}
For a given pair $(p,q)$, there exists at most one pair $(s_{p,q}, t_{p,q})$ with $s_{p,q} \neq t_{p,q}$ and such that the conditions in \autoref{eq:overlap1}, \autoref{eq:overlap2}, and \autoref{eq:overlap3} are verified. If such a pair $(s_{p,q}, t_{p,q})$ exists, we use the notation $\Tilde{\delta}_{p,q}^{s^*,t^*} =1$.

By plugging in these conditions in the second term of \autoref{eq:D_inj}, we get
\begin{align}
     \sum_{s=1}^m \sqrt{q_s +1} \sqrt{q'_s +1} \bra{p+e_t}\ket{q+e_s}  \bra{q'+e_s}\ket{p'+e_t} &= \Tilde{\delta}_{p,q}^{s^*,t^*} \delta_{t,t^*} \delta_{s,s^*} \sqrt{q_{s^*} +1} \sqrt{q'_{s^*} +1}\\
     &=  \Tilde{\delta}_{p,q}^{s^*,t^*} \delta_{t,t^*} \delta_{s,s^*} \sqrt{p_{t^*} +2} \sqrt{p'_{t^*} +2}\;.\label{eq:eq1}
\end{align}

By plugging \autoref{eq:eq1} back in \autoref{eq:D_inj} , we get 
\begin{equation}
     0 = \sqrt{p_{t^*} +1} \sqrt{p'_{t^*} +1}  -  \Tilde{\delta}_{p,q}^{s^*,t^*} \delta_{t,t^*} \delta_{s,s^*} \sqrt{p_{t^*} +2} \sqrt{p'_{t^*} +2}  \;,
\end{equation}
which is a contradiction.
Consequently, this implies that $(p,p') = (q,q')$, thus the injectivity of the raising map $\Dm_k$ and the surjectivity of its adjoint $\Cm_k$.

    Having established the surjectivity of the lowering map $\Cm_k$, we are one step closer to proving the statement in \autoref{eq:top_irrep_ker}.

    Let us consider now the restriction of the contraction map to the irrep $\lambda_k^{(k)}$, i.e.
\begin{align}
    \Cm_k|_{\lambda_k^{(k)}} : \lambda_k^{(k)} \rightarrow W_{k-1}\;.
\end{align}
Clearly, the above map stays an intertwiner. Hence, its kernel ${\rm Ker}( \Cm_k|_{\lambda_k^{(k)}}) \subseteq \lambda_k^{(k)}$ is an invariant subspace. Or, $\lambda_k^{(k)}$ is an irrep. Consequently, we have $ {\rm Ker}( \Cm_k|_{\lambda_k^{(k)}}) \in \{\{0\}, \lambda_k^{(k)}\}$. 

Let us assume that $ {\rm Ker}( \Cm_k|_{\lambda_k^{(k)}})  = \{0\}$. This implies that ${\rm Im}(\Cm_k|_{\lambda_k^{(k)}}) \simeq \lambda_k^{(k)} $. By using \autoref{lemma:lemma1}, we get that there exists $0 \leq r \leq k-1$ such that $\lambda_r^{(k-1)} \simeq \lambda_k^{(k)}$, which is absurd (given that the lowering map is a non-zero map). Thus, we get $ {\rm Ker}( \Cm_k|_{\lambda_k^{(k)}})  = \lambda_k^{(k)}$ and consequently 
\begin{equation}\label{eq:ker_inclusion}
   \lambda_k^{(k)} \subseteq {\rm Ker}(\Cm_k)\;.
\end{equation}

Moreover, since $\Cm_k$ is surjective, we have 
\begin{align}
    \dim (Ker(\Cm_k)) &= \dim(W_k) - \dim(Im(\Cm_k))\\
    &= \dim(W_k) - \dim(W_{k-1})\\
    &= \dim(\lambda_k)\;.
\end{align}

Together with \autoref{eq:ker_inclusion}, we finally get ${\rm Ker}(\Cm_k) = \lambda_k^{(k)}$.
\end{proof}

\autoref{prop:kernel_top_irrep} identifies the top irreducible component of each subspace $W_k$ with the kernel of the lowering map $\Cm_k$. In particular, This shows that the outermost irrep contributions can be eliminated by iteratively applying the lowering map on a photon-number preserving bosonic observable. This observation forms the basis of the iterative projection procedure onto irreducible components.

Having established this characterization of the top component $\lambda_k^{(k)} \subset W_k$, we now turn to constructive properties of the lower irreducible components $\lambda_r^{(k)} $ for all $r<k$.

\begin{proposition}{(Irreducible component characterization).}\label{prop:lower_irreps}
For every fixed $r \geq 0$, the irreducible components $\lambda_r^{(i)}$ and $\lambda_r^{(j)}$ are isomorphic for all $i,j \geq r$. Moreover, for any $k \geq 0$ and every $r < k$, the irreducible component $\lambda_r^{(k)}$ is obtained from $\lambda_r^{(r)}$ by successive application of the raising maps, namely
\begin{equation}\label{eq:irrep_expression}
\lambda_r^{(k)} 
= \Dm_k \circ \Dm_{k-1} \circ \dots \circ \Dm_{r+1}\bigl(\lambda_r^{(r)}\bigr) \; .
\end{equation}
\end{proposition}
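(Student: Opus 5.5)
The plan is an induction on $k$ that uses three earlier results: the recursive decomposition $W_k=\mathrm{Ker}(\Cm_k)\oplus\Dm_k(W_{k-1})$ of \autoref{prop:recursive_decomp}, the identification $\mathrm{Ker}(\Cm_k)=\lambda_k^{(k)}$ of \autoref{prop:kernel_top_irrep}, and the injectivity of $\Dm_k$ established in the proof of that proposition.

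Because $\Dm_k$ is an injective $U(m)$-intertwiner (\autoref{prop:app_interwiner_Ck}), it maps $W_{k-1}$ isomorphically, as a $U(m)$-module, onto its image $\Dm_k(W_{k-1})\subseteq W_k$. Assume inductively that $W_{k-1}=\bigoplus_{r=0}^{k-1}\lambda_r^{(k-1)}$ with $\lambda_r^{(k-1)}=\Dm_{k-1}\circ\cdots\circ\Dm_{r+1}(\lambda_r^{(r)})$; the base case $W_0=\lambda_0^{(0)}$ is trivial. Then
\[
\Dm_k(W_{k-1})=\bigoplus_{r=0}^{k-1}\Dm_k(\lambda_r^{(k-1)}).
\]
The sum stays direct because $\Dm_k$ is injective. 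Each summand $\Dm_k(\lambda_r^{(k-1)})$ is an irreducible $U(m)$-submodule of $W_k$, isomorphic to $\lambda_r^{(k-1)}$, because it is the image of an irreducible module under an injective intertwiner.

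The key step is to identify each $\Dm_k(\lambda_r^{(k-1)})$ with the component $\lambda_r^{(k)}$ of the decomposition in \autoref{lemma:irrep_decomp}. I would apply \autoref{lemma:lemma1} to the nonzero intertwiner $\Dm_k|_{\lambda_r^{(k-1)}}:\lambda_r^{(k-1)}\to W_k$. Since $W_k$ is multiplicity-free, the image must be exactly one of the summands $\lambda_s^{(k)}$, and it must be isomorphic to $\lambda_r^{(k-1)}$. I label by Young diagram, and the diagram of $\lambda_r^{(n)}$ does not depend on $n$. Because the components of $W_k$ are pairwise inequivalent, the only possibility is $s=r$. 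This gives $\lambda_r^{(k)}=\Dm_k(\lambda_r^{(k-1)})=\Dm_k\circ\cdots\circ\Dm_{r+1}(\lambda_r^{(r)})$ and closes the induction for every $r<k$. The remaining summand is the kernel $\mathrm{Ker}(\Cm_k)=\lambda_k^{(k)}$, so $W_k=\bigoplus_{r=0}^k\lambda_r^{(k)}$ in the required form. The isomorphism $\lambda_r^{(i)}\cong\lambda_r^{(j)}$ for $i,j\ge r$ follows at once, since each is the image of $\lambda_r^{(r)}$ under a composition of injective intertwiners.

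The main obstacle is a possible circularity in the labeling. \autoref{lemma:irrep_decomp} labels components by Young diagrams, whereas the induction produces subspaces by construction. I would fix the labeling so that $\lambda_r^{(n)}$ denotes the unique isotypic component of $W_n$ of the type of $\lambda_r^{(r)}$. Multiplicity-freeness makes this component unique, so set equality follows from the isomorphism type alone. The only extra ingredient is the dimension count $\dim W_k-\dim W_{k-1}=d_k^{(k)}$, which was already used in \autoref{prop:kernel_top_irrep}. It guarantees that the new kernel has type $\lambda_k$ and therefore does not repeat a type that already occurs.
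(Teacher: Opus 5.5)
Your argument is correct and is essentially the paper's: both apply \autoref{lemma:lemma1} to the nonzero (injective) intertwiner $\Dm_k$ restricted to $\lambda_r^{(k-1)}$, identify its image with $\lambda_r^{(k)}$, and obtain the formula by composing raising maps. The differences are minor. The paper matches the index $r^*=r$ by a dimension argument and also treats $\Cm_k$ first, recording the scalars $c_{r,k}$ used later in \autoref{prop:CD_DC}. You instead match the index through the $n$-independent Young-diagram type and wrap the composition in an explicit induction, which is not needed for the argument.
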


\begin{proof}

Let us  consider the restriction of the lowering map $\Cm_k$ to the irreducible subspace $\lambda_r^{(k)}$ for a fixed $r<k$. The equivariance property of the restricted map $\Cm_k|_{\lambda_r^{(k)}}$ follows from the equivariance  of $\Cm_k$.
Hence, the map $\Cm_k|_{\lambda_r^{(k)}}$ is a $U(m)-$intertwiner.

Since ${\rm Ker}(\Cm_k) = \lambda_k^{(k)}$, it follows that for every $r <k$ the restricted maps $\Cm_k|_{\lambda_r^{(k)}}$  are non zero intertwiners.

Fix $r <k$. We thus have a non zero $U(m)-$intertwiner  between the irreducible representation $\omega_r^{(k)}$ acting on $\lambda_r^{(k)}$ and the completely reducible representation $\omega_{k-1}$ acting on $W_{k-1}$, which decomposes into inequivalent multiplicity-free irreducible components  as $W_{k-1} = \bigoplus_{r=0}^{k-1} \lambda_r^{(k-1)}$. Invoking \autoref{lemma:lemma1}, we conclude that  there exists an index $0 \leq r^* \leq k-1$ and an isomorphism $T^{(k-1,k)}: \lambda_r^{(k)} \rightarrow \lambda_{r^*}^{(k-1)}$ such that 
\begin{align}\label{eq:C_k_iso}
    \Cm_k: &\lambda_r^{(k)} \rightarrow \lambda_{r^*}^{(k-1)} \\
        &X \quad  \mapsto \quad c_{r,k} T^{(k-1,k)}X (T^{(k-1,k)})^{-1}\;,
\end{align}
for some scalar $ c_{r,k} \in \mathbb{C}$.
 Finally, by dimension argument, we have that $r^* = r$.

An analogous statement holds for the raising map $\Dm_k$, which satisfy the same properties as the lowering map. In particular, the map $\Dm_k$ is a $U(m)-$intertwiner by \autoref{prop:app_interwiner_Ck} and it is non-zero since it is  injective (see the proof of \autoref{prop:kernel_top_irrep}).
Therefore, the same reasoning as for the lowering map applies. We conclude that, for each $r \leq k-1$,
\begin{align}\label{eq:D_k_iso}
   \Dm_k: &\lambda_r^{(k-1)} \rightarrow \lambda_r^{(k)} \\
        &X \quad  \mapsto \quad d_{r,k-1} (T^{(k-1,k)})^{-1} X T^{(k-1,k)} \;,
\end{align}
for some scalar $d_{r,k-1} \in \mathbb{C}$.

Using the fact that $\Dm_k$ is an isomorphism from $\lambda_r^{(k-1)}$ to $ \lambda_r^{(k)}$, it follows that the iterative composition of  raising maps yield also an isomorphism. Specifically,  the map $\Dm_k \circ \dots \circ \Dm_{j+1}$ for $r\leq j<k$ is an isomorphism  between $\lambda_r^{(j)}$ and $\lambda_r^{(k)}$.
This implies that the irreducible components $\lambda_r^{(j)}$ and $\lambda_r^{(k)}$ are isomorphic. Moreover, for $j=r$, we get
\begin{equation}
    \lambda_r^{(k)} = \Dm_k \dots D_{r+1}(\lambda_r^{(r)})\;.
\end{equation}

\end{proof}

\autoref{prop:lower_irreps} shows that the irreducible component $\lambda_r^{(k)} \subset W_k$ for $r < k $ can be generated from its isomorphic representation $\lambda_r^{(r)} \subset W_r$, acting on the $r$ photons subspace . Moreover,  \autoref{prop:kernel_top_irrep} establishes that the irreducible component $\lambda_r^{(r)}$ coincides with the kernel of the lowering map $\Cm_r$. Consequently, for a fixed photon subspace $W_k$, each lower irreducible component $\lambda_r^{(k)}$, with $0 \leq r <k$ the lower is obtained from the top irreducible component of  $W_r$ through successive application of the raising map.

With these ingredients in place, we are one step away from establishing the recursive trace removal procedure. In the following proposition, we show that the composed maps $\Cm_k \circ \Dm_k$ and $\Dm_k \circ \Cm_k$ act as scalar multiples of identity on each irreducible component $\lambda_r^{(k)}$ and we identify the associated scalars. 

\begin{proposition}{(Eigenvalues of raising and lowering maps on irreducible components).}\label{prop:CD_DC}    
         The maps $\Cm_{k+1} \circ \Dm_{k+1} : W_{k} \rightarrow W_k$ and  $\Dm_k \circ \Cm_k : W_{k} \rightarrow W_k$ act as scalar multiples of the identity on each  irreducible component  $\lambda_r^{(k)}$ for $0 \leq r\leq k$, i.e.
        \begin{align}
            \Cm_{k+1} \circ \Dm_{k+1}|_{\lambda_r^{(k)}} &= \beta_{r,k} \Id_{\lambda_r^{(k)}} \\
            \Dm_k \circ \Cm_k|_{\lambda_r^{(k)}} &= \beta_{r,k-1} \Id_{\lambda_r^{(k)}}\;,
        \end{align}
        where $\beta_{r,k} =  (k-r+1) (m+r+k)$.    \end{proposition}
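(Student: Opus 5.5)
Both composites are $U(m)$-equivariant endomorphisms of $W_k$: by \autoref{prop:app_interwiner_Ck}, $\Cm_{k+1}$, $\Dm_{k+1}$, $\Cm_k$ and $\Dm_k$ are intertwiners, so $\Cm_{k+1}\circ\Dm_{k+1}$ and $\Dm_k\circ\Cm_k$ commute with $\omega_k$. By \autoref{lemma:irrep_decomp}, $W_k$ is multiplicity-free with pairwise inequivalent components $\lambda_r^{(k)}$. Schur's lemma therefore forces each composite to act as a scalar on each $\lambda_r^{(k)}$. Both composites are of the form $T^\dagger T$ by \autoref{prop:adjoint_Ck_Dk}, so these scalars are real and nonnegative. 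The work is to identify the scalars.

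I would use the $\mathfrak{sl}_2$ structure of \autoref{lemma:sl2_commutation}. The first step is to compute $H=[\Cm,\Dm]$ explicitly on $W_k$. Expanding $\Cm\Dm(X)=\sum_{s,t}a_s a_t^\dagger X a_t a_s^\dagger$ with $a_sa_t^\dagger=a_t^\dagger a_s+\delta_{st}$, exactly as in the proof of \autoref{lemma:sl2_commutation}, gives
\begin{equation}
\Cm\Dm(X)=\Dm\Cm(X)+mX+\hat N X+X\hat N .
\end{equation}
Since $X\in W_k$ commutes with $\hat N$, which equals $k$ on $\mathcal{F}_k$, this yields $H|_{W_k}=(m+2k)\,\Id$, consistent with the weight shift of $2$ per photon. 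Hence, on $W_k$,
\begin{equation}
\Cm_{k+1}\Dm_{k+1}=\Dm_k\Cm_k+(m+2k)\Id .
\end{equation}

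Next I would set up a recursion in $k$ for the scalars $\gamma_{r,k}$ of $\Dm_k\Cm_k$ on $\lambda_r^{(k)}$. For the base case, \autoref{prop:kernel_top_irrep} gives $\lambda_k^{(k)}=\ker\Cm_k$, so $\gamma_{k,k}=0$. For the inductive step, take $r<k$ and $X=\Dm_k(Y)$ with $Y\in\lambda_r^{(k-1)}$; this covers all of $\lambda_r^{(k)}$ by \autoref{prop:lower_irreps}. Then
\begin{equation}
\Dm_k\Cm_k\Dm_k(Y)=\Dm_k\bigl(\Dm_{k-1}\Cm_{k-1}+(m+2k-2)\bigr)Y=\bigl(\gamma_{r,k-1}+m+2k-2\bigr)\Dm_k(Y).
\end{equation}
This gives $\gamma_{r,k}=\gamma_{r,k-1}+m+2k-2$, and summing from $\gamma_{r,r}=0$:
\begin{equation}
\gamma_{r,k}=\sum_{j=r+1}^{k}(m+2j-2)=(k-r)(m+k+r-1)=\beta_{r,k-1}.
\end{equation}
Consequently $\Cm_{k+1}\Dm_{k+1}|_{\lambda_r^{(k)}}=\gamma_{r,k}+m+2k=(k-r+1)(m+k+r)=\beta_{r,k}$, as claimed. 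This also covers $r=k$, where the value is $m+2k$.

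The main obstacle I expect is the $H$ computation: tracking the normal-ordering terms and the sign convention relating $H$ to $\Cm\Dm-\Dm\Cm$. I would first check it on $k=0$, where $\Cm\Dm(\ketbra{0}{0})=m\ketbra{0}{0}$ and $\Cm(\ketbra{0}{0})=0$, matching $\beta_{0,0}=m$.
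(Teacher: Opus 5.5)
Your proof is correct and follows essentially the paper's route: Schur's lemma for the scalar action, the normal-ordering identity giving $H=(m+2k)\Id$, and the telescoping recursion $\beta_{r,k}-\beta_{r,k-1}=m+2k$ started from $\beta_{r,r-1}=0$ (which is your base case $\gamma_{r,r}=0$ from $\ker\Cm_r=\lambda_r^{(r)}$). The differences are minor. You evaluate $H$ on all of $W_k$ at once, so you never need to propagate the eigenvalue from $\lambda_r^{(r)}$ using $[H,\Dm]=2\Dm$. You also make explicit why the scalar of $\Dm_k\Cm_k$ on $\lambda_r^{(k)}$ equals that of $\Cm_k\Dm_k$ on $\lambda_r^{(k-1)}$, namely through $\Dm_k\Cm_k\Dm_k=\Dm_k(\Cm_k\Dm_k)$ and $\Dm_k(\lambda_r^{(k-1)})=\lambda_r^{(k)}$; the paper does the same step implicitly through its bookkeeping of $c_{r,k},d_{r,k}$.
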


\begin{proof}
     By combining the results from \autoref{eq:C_k_iso} and \autoref{eq:D_k_iso}, it follows that the maps $\Cm_{k+1} \circ \Dm_{k+1} : W_{k} \rightarrow W_k$ and  $\Dm_k \circ \Cm_k : W_{k} \rightarrow W_k$ are isomorphisms.
Moreover, they act as scalar multiples of identity non each irreducible component of $W_k$. Namely,

 \begin{align}
     \Cm_{k+1} \circ \Dm_{k+1}|_{\lambda_r^{(k)}} &= d_{r,k} c_{r,k+1} \Id_{\lambda_r^{(k)}}\\
     \Dm_k \circ \Cm_k |_{\lambda_r^{(k)}} &= d_{r,k-1} c_{r,k} \Id_{\lambda_r^{(k)}}\;.
 \end{align}
For convenience, we introduce the shorthand
\begin{equation}\label{eq:beta}
    \beta_{r,k}=d_{r,k} c_{r,k+1}\;.
\end{equation}

In what follows, we focus on computing the coefficients $ \beta_{r,k}$.
To this end, we exploit the $\mathfrak{sl}_2$ commutation structure established in  \autoref{lemma:sl2_commutation} for the raising and lowering maps acting on the full operator algebra $\mathcal{W}$. In particular, this structure allows one to generate eigenvectors of the map  $H= [\Cm,\Dm]$ together with their associated eigenvalues from a single eigenvector , as described in \autoref{lemma:sl2_commutation}.

    Let $X_r \in \lambda^{(r)}_r$. It follows from \autoref{prop:kernel_top_irrep} that $\Cm_r(X_r)=0$. Hence , we get
    \begin{align} \label{eq:H_on_ker}
        H(X_r) = [\Cm,\Dm](X_r) =  \Cm\Dm(X_r) =    \sum_{s,t=1}^m a_t  a_s^\dagger X_r a_s a_t^\dagger\;.
    \end{align}
  
  For $s,t =1 ,\dots,m$, we have
  \begin{align}\label{eq:commutators}
      a_t  a_s^\dagger X_r a_s a_t^\dagger =  ( a_s^\dagger a_t  + \delta_s^t \Id) X_r ( a_t^\dagger a_s + \delta_s^t \Id)   \;.   
  \end{align}
  Hence by summing over $s$ and $t$, we get
  \begin{align}
      \sum_{s,t=1}^m a_t  a_s^\dagger X_r a_s a_t^\dagger &= \sum_{s,t=1}^m ( a_s^\dagger a_t  + \delta_s^t \Id) X_r ( a_t^\dagger a_s + \delta_s^t \Id) \\
      &= \Dm\Cm(X_r) + \sum_{s=1}^m (  a_s^\dagger a_s X_r  + X_r + X_r a_s^\dagger a_s)\\
      &= m X_r + \{\hat{N}, X_r\}\\
      &= (m + 2r ) X_r := h_r X_r\;,\label{eq:useful_id}
  \end{align}
  where we introduced the shorthand $h_r= m + 2r$ in the last equality.
   This implies that $X_r$ is an eigenvector of $H$ with eigenvalue $h_r$.  

Here, one can easily show that if $X\in \mathcal{W}$ is eigenvector of $H$ with eigenvalue $h$, i.e. $H(X) = h X$, then $\Dm^k(X)$ is also an eigenvector of $H$ with eigenvalue $h+2k$.
Specifically, let  $X \in \mathcal{W}$ be an eigenvector of $H$  with eigenvalue $h$. We then show that $\Dm(X)$ is also an eigenvector of $H$. Precisely, we have
 \begin{align}
     H \Dm(X) &= [H,\Dm](X) + \Dm H(X) \\
    &= 2 \Dm(X) + h \Dm(X)\\
    &= (2+h) \Dm(X)\;,
 \end{align}
 where in the second equality we used the commutation relation from \autoref{eq:proof_D_Sl2}.

   Consequently, it follows  that the iterative application of the raising map on the vector $X_r \in \lambda^{(r)}_r$ also yields an eigenvector of $H$. 
   Specifically, by applying sequentially the raising map $k-r$ times, we have
   \begin{align}
       H (\Dm_{k} \dots \Dm_{r+1}(X_r)) = (m + 2r +2(k-r)) X_r= (m+2k) X_r
       :=  h_{k} X_r\;.
   \end{align}
Moreover, since irreducible components  $\lambda_r^{(k)}$ of the $k$-photon operator subspace are generated each from the isomorphic irreducible components from the $r$-photon operator subspace, then we have that the map $H$ acts as a scalar multiple of identity on each of these irreps where the scalar is given by $h_k$ i.e.
\begin{equation}
    H|_{\lambda_r^{(k)}} = h_k \Id_{\lambda_r^{(k)}}\;.
\end{equation}

Here, we notice that the eigenvalue on a given irreducible component $\lambda_r^{(k)}$  is independent of its index $r$ and only depend on the photon number $k$. 

Having identified the action of the the map $H= [\Cm,\Dm]$  restricted  on an irreducible component $\lambda_r^{(k)}$, we obtain a recursive relation of the coefficients $\beta_{r,k}$ introduced in \autoref{eq:beta}.
\begin{align}
    ( \beta_{r,k}- \beta_{r,k-1}) \Id|_{\lambda_r^{(k)}}  &=  \left(\Cm_{k+1}\Dm_{k+1}|_{\lambda_r^{(k)}} - \Dm_{k}\Cm_{k}|_{\lambda_r^{(k)}}\right) \\
    &= H_k|_{\lambda_r^{(k)}}\\
    &= h_k \Id|_{\lambda_r^{(k)}}\;,
\end{align}
hence the recursive relation
\begin{equation}
    h_k = \beta_{r,k}- \beta_{r,k-1}\;.
\end{equation}

By means of a telescopic sum, we obtain 
\begin{align}
    \sum_{j=r}^{k} h_{r,j}= \sum_{j=r}^{k} (\beta_{r,j}- \beta_{r,j-1} ) = \beta_{r,k}-  \beta_{r,r-1} = \beta_{r,k}\;,
\end{align}
where we used the fact that $\beta_{r,r-1} =0$.
Consequently, the expression of $\beta_{r,k}$ is given by 
\begin{equation}\label{eq:eigenvalues_irreps}
    \beta_{r,k} =  \sum_{j=r}^{k} m +2j = m(k-r+1) +2 \sum_{j=r}^{k} j = (k-r+1) (m+r+k)\;,
\end{equation}
which concludes the proof.

\end{proof}

\subsection{$U(m)$--$\mathfrak{sl}_2$ duality on the space of photon-number-preserving bosonic operators}\label{app:howe_proof}

Having identified the irreducible components of the operator space through the $\mathfrak{sl}_2$ ladder structure underlying its highest-weight irrep constructions, we are now ready to formally state the Howe-duality structure for the operator space $\mathcal{W}$. The following theorem is a formal version of \autoref{th:howe_duality} given in the main text. 

\begin{theorem}[{\texorpdfstring{$U(m)$--$\mathfrak{sl}_2$}{U(m)-sl2} Howe duality on $\mathcal W$}]\label{th:howe_duality_app}
Let $\mathcal W$ be the space of photon number preserving operators defined in \autoref{eq:subalgebra_commute}. Then the commuting actions of  $U(m)$ and $\mathfrak{sl}_2$ form a dual pair on $\mathcal{W}$. In particular, the space $\mathcal{W}$ admits a multiplicity-free decomposition into irreducible $U(m) \times \mathfrak{sl}_2$ modules of the form 

\begin{equation}
    \mathcal W = \bigoplus_{k\geq 0}\, \mathcal W_k\;,
    \qquad
    \mathcal W_k = \bigoplus_{n\geq k} \lambda_k^{(n)} \cong \lambda_k \otimes S_k\;,
\end{equation}
where $\lambda_k$ denotes an irreducible   $U(m)$-representation  whose multiplicity space $S_k$ corresponds to an irreducible $\mathfrak{sl}_2$ module generated by the highest weight space $\lambda_k^{(k)}$, given by
\begin{equation}\label{eq:highest_weight}
     \lambda_k^{(k)} = {\rm Ker}(\Cm|_{W_k})\;,
\end{equation}
and the full  $U(m) \times \mathfrak{sl}_2$ module denoted by $\mathcal{W}_k$ is generated by successive applications of the raising operator $\Dm$, yielding the ladder structure

\begin{equation}%\label{eq:ladder}
   \lambda_k^{(k)}  \xrightarrow{\Dm} \lambda_k^{(k+1)} \xrightarrow{\Dm} \lambda_k^{(k+2)} \xrightarrow{\Dm} \cdots %\lambda_r^{(n)}
\end{equation}
\end{theorem}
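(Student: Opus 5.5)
The proof assembles the structural results of \autoref{app:exp_irreps} into the dual-pair statement. First, I will group the sector decompositions \autoref{eq:comp_red} by label, defining $\mathcal W_k := \bigoplus_{n\geq k}\lambda_k^{(n)}$. By \autoref{prop:lower_irreps}, each $\lambda_k^{(n)}$ equals $\Dm_n\circ\cdots\circ\Dm_{k+1}(\lambda_k^{(k)})$, and the raising maps restrict to $U(m)$-isomorphisms $\lambda_k^{(n-1)}\to\lambda_k^{(n)}$. Hence $\mathcal W_k$ is generated from $\lambda_k^{(k)}$ by powers of $\Dm$, giving the stated ladder. By \autoref{prop:kernel_top_irrep}, $\lambda_k^{(k)}=\mathrm{Ker}(\Cm|_{W_k})$, which is \autoref{eq:highest_weight}.

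Next I will show that each $\mathcal W_k$ is $\mathfrak{sl}_2$-stable. Invariance under $\Dm$ is immediate. For $\Cm$, the proof of \autoref{prop:lower_irreps} shows that $\Cm_n$ maps $\lambda_k^{(n)}$ into $\lambda_k^{(n-1)}$ for $n>k$, and $\Cm$ kills $\lambda_k^{(k)}$. For $H$, \autoref{prop:CD_DC} gives $H|_{\lambda_k^{(n)}}=(m+2n)\,\Id$. So $\mathcal W_k$ is a $U(m)\times\mathfrak{sl}_2$ submodule, and \autoref{eq:full_W_decomp} yields $\mathcal W=\bigoplus_k\mathcal W_k$.

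To obtain the tensor factorization, I will fix a $U(m)$-isomorphism $\tilde T:\lambda_k\to\lambda_k^{(k)}$ and define $\lambda_k\otimes S_k\to\mathcal W_k$ by $v\otimes e_j\mapsto \Dm^{j}\tilde T v$, where $S_k=\mathrm{span}\{e_j\}_{j\geq0}$. The first paragraph makes this map bijective. The computations in \autoref{prop:CD_DC} then equip $S_k$ with an $\mathfrak{sl}_2$-action whose structure constants do not depend on $v$:
\begin{equation*}
\Dm e_j=e_{j+1},\qquad \Cm e_j=\beta_{k,k+j-1}e_{j-1},\qquad H e_j=(m+2k+2j)e_j .
\end{equation*}
I expect proving irreducibility of $S_k$ to be the main obstacle. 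I will argue via weights: any nonzero submodule contains an $H$-eigenvector, and since the eigenvalues $m+2k+2j$ are distinct, that eigenvector is some $e_j$. Because $\beta_{k,n}=(n-k+1)(m+n+k)\neq0$ for $n\geq k$, repeated application of $\Cm$ reaches $e_0$, and then $\Dm$ regenerates all of $S_k$. Thus $S_k$ is an irreducible lowest-weight Verma-type module with lowest weight $m+2k$, infinite-dimensional, with $\lambda_k^{(k)}$ as its extremal weight space. This is the same object the statement calls the highest-weight space, after the sign flip $\pi(f)=-\Cm$ and $h\mapsto -h$.

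Finally, I will establish multiplicity-freeness and the dual-pair property. The $U(m)$-types $\lambda_k$ are pairwise inequivalent across $k$, since their Young diagrams in \autoref{lemma:irrep_decomp} differ. The $\mathfrak{sl}_2$-modules $S_k$ are also pairwise inequivalent, since their lowest weights $m+2k$ differ. Therefore each irreducible of one factor occurs paired with exactly one irreducible of the other. By Schur's lemma applied to \autoref{th:commuting_action}, the $U(m)$-commutant restricted to the isotypic block $\lambda_k\otimes S_k$ is $\Id\otimes\mathrm{End}(S_k)$. The algebra generated by $\mathfrak{sl}_2$ acts densely on $S_k$: it is irreducible, and its endomorphisms preserve the one-dimensional weight spaces. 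This gives the mutual-centralizer statement, at least in the graded, finite-weight-space sense appropriate for infinite-dimensional $\mathcal W$. I would state this restriction explicitly rather than claim a full double-commutant theorem.
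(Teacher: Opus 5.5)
Your proposal is correct and follows the same route as the paper's proof. In both, you regroup the sectors by the label $k$, use \autoref{prop:lower_irreps} and \autoref{prop:kernel_top_irrep} to realize $\mathcal W_k$ as the $\Dm$-ladder over $\lambda_k^{(k)}=\mathrm{Ker}(\Cm|_{W_k})$, factor $\mathcal W_k\cong\lambda_k\otimes S_k$, and show $S_k$ is irreducible because every nonzero submodule contains the extremal vector. You go beyond the paper in several places:
\begin{itemize}
\item You justify that last step, which the paper only asserts: distinct $H$-eigenvalues $m+2k+2j$, together with $\beta_{k,n}=(n-k+1)(m+n+k)\neq 0$ from \autoref{prop:CD_DC}, force any submodule down to $e_0$.
\item You sort out the lowest- versus highest-weight convention.
\item You sketch the mutual-centralizer property via Schur's lemma and density, with the right caveat for infinite-dimensional $\mathcal W$. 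The paper's proof does not actually address this point.
\end{itemize}
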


\begin{proof}
We begin by combining the decomposition of the photon-number-preserving operator space
\begin{equation}
    \mathcal{W} = \bigoplus_{n \geq 0} W_n\;,
\end{equation}
with the multiplicity-free decomposition of each fixed photon-number sector $W_n$ into irreducible $U(m)$-modules (\autoref{lemma:irrep_decomp}). Reorganizing the resulting double sum yields
\begin{equation}\label{eq:ladder_decomp_proof}
    \mathcal{W} \cong \bigoplus_{k \geq 0} \bigoplus_{n \geq k} \lambda_k^{(n)} 
    = \bigoplus_{k \geq 0} \mathcal{W}_k\;,
\end{equation}
where we introduced the subspaces
\begin{equation}\label{eq:MathcalW_k}
    \mathcal{W}_k := \bigoplus_{n \geq k} \lambda_k^{(n)}\;.
\end{equation}

By \autoref{lemma:irrep_decomp}, for each fixed $n$, the irreducible components $\lambda_k^{(n)}$ with $0 \leq k \leq n$ are pairwise inequivalent $U(m)$-modules. Moreover, by \autoref{prop:lower_irreps}, these modules satisfy
\begin{equation}\label{eq:D_tower}
    \lambda_k^{(n)} = \Dm^{\,n-k}\bigl(\lambda_k^{(k)}\bigr),\;
\end{equation}
and they are isomorphic as $U(m)$-modules, i.e.
\begin{equation}
    \lambda_k^{(n)} \cong \lambda_k^{(k)}
    \quad \text{ for all } n \geq k\;.
\end{equation}

Consequently, each $\mathcal{W}_k$ is generated from the primitive component $\lambda_k^{(k)}$ by repeated applications of the raising operator $\Dm$.

Furthermore, \autoref{prop:kernel_top_irrep} shows that
\begin{equation}
    \lambda_k^{(k)} = \rm{Ker}(\Cm|_{W_k})\;,
\end{equation}
so that $\lambda_k^{(k)}$ identifies the highest-weight space of the $\mathfrak{sl}_2$-action within $W_k$.

\medskip

Since all irreducible components $\lambda_k^{(n)}$ with fixed $k$ are isomorphic as $U(m)$-modules, we introduce an abstract irreducible representation $\lambda_k \cong \lambda_k^{(k)}$ and write
\begin{equation}
    \lambda_k^{(n)} \cong \lambda_k \otimes \mathbb{C} e_n\;,
\end{equation}
where $\mathbb{C} e_n$ labels the copy of $\lambda_k$ appearing in the sector $W_n$.

This yields the factorization
\begin{equation}
    \mathcal{W}_k \cong \lambda_k \otimes M_k\;,
    \qquad
    M_k := \bigoplus_{n \geq k} \mathbb{C} e_n\;,
\end{equation}
and therefore
\begin{equation}\label{eq:abstract_decomp}
    \mathcal{W} \cong \bigoplus_{k \geq 0} \lambda_k \otimes M_k\;.
\end{equation}

\medskip

We now relate this decomposition to the abstract one induced by the commuting $U(m)$- and $\mathfrak{sl}_2$-actions. By \autoref{lemma:sl2_commutation} and the Peter--Weyl theorem, we have
\begin{equation}\label{eq:decomp_commuting}
    \mathcal{W} \cong \bigoplus_{\nu \in \widehat{U(m)}} V_\nu \otimes M_\nu\;,
\end{equation}
where $V_\nu$ are irreducible $U(m)$-modules and $M_\nu$ the corresponding multiplicity spaces.

Comparing \autoref{eq:abstract_decomp} with \autoref{eq:decomp_commuting}, and using the multiplicity-free structure of each $W_n$, we identify
\begin{equation}
\widehat{U(m)} = \{\, \lambda_k \mid k \geq 0 \,\}\;,
\qquad
V_\nu \cong \lambda_k\;,
\qquad
M_\nu \cong M_k\;.
\end{equation}

It remains to identify the $\mathfrak{sl}_2$-structure on $M_k$. The operators $\Cm$ and $\Dm$ preserve each subspace $\mathcal{W}_k$ and act trivially on the $U(m)$-factor, hence induce operators on $M_k$. In the basis $\{e_n\}_{n \geq k}$, the action is given by
\begin{equation}
\Dm(e_n) \in \mathbb{C} e_{n+1}\;, 
\qquad
\Cm(e_n) \in \mathbb{C} e_{n-1}\;,
\end{equation}
with $\Cm(e_k)=0$. Thus $e_k$ is a lowest-weight vector, and $M_k$ is generated by repeated applications of $\Dm$. Moreover, since any non-zero $\mathfrak{sl}_2$-invariant subspace must contain $e_k$, it follows that $M_k$ is irreducible.

Therefore,
\begin{equation}
M_k \cong S_k\;,
\end{equation}
where $S_k$ is the irreducible highest-weight $\mathfrak{sl}_2$-module generated by $e_k$.

We conclude that
\begin{equation}
\mathcal{W}_k \cong \lambda_k \otimes S_k\;,
\qquad
\mathcal{W} \cong \bigoplus_{k \geq 0} \lambda_k \otimes S_k\;,
\end{equation}
where each summand is an irreducible $U(m)\times \mathfrak{sl}_2$-module. This establishes the multiplicity-free $U(m)$–$\mathfrak{sl}_2$ decomposition and completes the proof.
\end{proof}

\section{Projection onto irreducible components and their norms}\label{app:projections}

In this section, we briefly recall the standard approach based on Clebsch--Gordan coefficients for constructing projections onto irreducible components of $\mathcal{W}$, and introduce a recursive method that bypasses their explicit computation. Although both projection methods rely on similar representation-theoretic structure, the latter operates directly at the level of invariant maps and is better suited to our purposes.

\subsection{Projection with CG coefficients}\label{sec:CG}

The standard technique for computing projections into irreducible representations of the operator space is based on the Clebsch-Gordon transform. The Clebsch-Gordon decomposition appears in the decomposition of sums of angular momenta in quantum mechanics \cite{sakurai_modern_2021}. In particular, the Clebsch-Gordon transformation is a unitary transformation $C$ that realises the basis change 
\begin{equation}
    C W_n C^\dagger = \bigoplus_{k = 0}^n \lambda_k\;,
\end{equation}
of \autoref{eq:comp_red}. A given irreducible component $\lambda$ admits an orthonormal basis, identified by so-called Gelfand-Tsetlin (GT) patterns
\cite{moshinsky_gelfand_1966}, that is to say, the projector onto $\lambda_k$ is given by 
\begin{equation}
    P_k = \sum_{M \in GT(\lambda_k)} \ketbra{M}\;.
\end{equation}
These basis elements $\ket{M}$ can be expressed in the Fock basis \cite{arienzo_bosonic_2024} using Clebsh-Gordon coefficients as 
\begin{equation}
    \ket{M} \simeq \sum_{S,T \in \Phi_m^n} C_{S, T}^M \ket{S,T}\;,
\end{equation}
where $\simeq$ indicates a basis change (from GT patterns of $\varphi$ and $\bar\varphi$ to the Fock basis). Each of these coefficients can be computed numerically efficiently \cite{alex_numerical_2011}. However, as the number of photons and modes grows the projectors need exponential space to store in memory \cite{thomas_shedding_2025}, as they act on the vectorised operator space via matrix-multiplication. In particular, as Fock states have support on all irreps, computing their projection via this methods is inefficient. Nonetheless, improvements could be done if one is only interested in irreps of polynomial dimension, but this methods is not scalable in general. 

\subsection{Iterative projection procedure}\label{app:iterative_removal}

While the Clebsch--Gordan construction provides an explicit, basis-level realization of the irreducible decomposition, it relies on coefficient-level computations and becomes cumbersome at large system sizes. By contrast, the commuting $U(m)$ and $\mathfrak{sl}_2$ actions ensure, via Howe duality, that irreducible components can be characterized intrinsically. The recursive projection method leverages the same highest-weight structure but operates directly at the level of invariant maps, isolating irreducible contributions through the ladder structure without explicit computations of coefficients. This basis-independent formulation is particularly well suited for evaluating quantities such as Hilbert--Schmidt norms.

To this end, we combine the structural results established in \autoref{app:properties} and \autoref{app:exp_irreps} to derive a recursive projection procedure. The key observation is that the kernel of the lowering map coincides with the lowest-weight (primitive) irreducible component of the operator subspace. Consequently, successive applications of the contraction map remove contributions from the highest irreps, allowing one to iteratively isolate lower components.

We also introduce the following shorthands $\forall n \geq 0 $ and $ 0 \leq  j \leq n$:
\begin{align}
    \Cm_{ j+1 \rightarrow n } &:=   \Cm_{j+1} \circ  \dots \circ \Cm_{n-1} \circ \Cm_n \;, \label{eq:comp_lowering_notation}\\
    \Dm_{n\rightarrow j+1} &:= \Dm_n \circ\dots \circ\Dm_{j+2} \circ \Dm_{j+1}\;. \label{eq:comp_raising_notation}
\end{align}
where we adopt the convention that $\Cm_{ n+1 \rightarrow n }$ and $\Dm_{ n+1 \rightarrow n }$ are the identity maps.

\begin{theorem}\label{th:iterative_removal}
Consider an  observable $O$ acting on the $n$-photon Fock sector, i.e. $O\in W_n$. 
The component $O_j^{(n)}$ of the observable $O$ on each of the irreducible representations $\lambda_j^{(n)}$ introduced in \autoref{lemma:irrep_decomp} is iteratively given by
    \begin{align}
    O_j^{(n)}&= \frac{1}{\alpha_{j,j+1,n}} \left(\Dm_{n\rightarrow j+1} \circ \Cm_{j+1\rightarrow n} (O) - \sum_{r=0}^{j-1} \alpha_{r,j+1,n} O_r^{(n)} \right)\;,
\end{align}
where for all $r < j\leq n$, the coefficients $\alpha_{r,j+1,n}$ are given by 
\begin{equation}\label{eq:alpha_exp_th_app}
    \alpha_{r,j+1,n}  = \frac{(n-r)! (m+n+r-1)!}{(j-r)! (j+m+r-1)!}\;.
\end{equation}

\end{theorem}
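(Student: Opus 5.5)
The plan is to expand $O$ along the multiplicity-free decomposition $W_n=\bigoplus_{r=0}^n \lambda_r^{(n)}$ of \autoref{lemma:irrep_decomp}, writing $O=\sum_{r=0}^n O_r^{(n)}$. We then show that the composite map $\Dm_{n\rightarrow j+1}\circ \Cm_{j+1\rightarrow n}$ annihilates every component with $r>j$ and acts as the scalar $\alpha_{r,j+1,n}$ on every component with $r\le j$. Once this is established, linearity gives
\begin{equation*}
\Dm_{n\rightarrow j+1}\circ \Cm_{j+1\rightarrow n}(O)=\sum_{r=0}^{j}\alpha_{r,j+1,n}\,O_r^{(n)}.
\end{equation*}
Isolating the $r=j$ term and dividing by $\alpha_{j,j+1,n}=(n-j)!\,(m+2j)_{n-j}$, which is nonzero, yields the stated recursion.

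\textbf{Step 1: how the lowering maps move each component.} By \autoref{prop:lower_irreps} (see \autoref{eq:C_k_iso}), for $r<k$ the lowering map $\Cm_k$ sends $\lambda_r^{(k)}$ isomorphically, up to a scalar, onto $\lambda_r^{(k-1)}$. By \autoref{prop:kernel_top_irrep}, $\Cm_k$ kills $\lambda_k^{(k)}$. Now take $r>j$ and apply $\Cm_n,\Cm_{n-1},\dots$ in turn. The component $O_r^{(n)}$ is carried through $\lambda_r^{(n-1)},\dots$ down to $\lambda_r^{(r)}$. Since $r\ge j+1$, the map $\Cm_r$ is among the maps applied in $\Cm_{j+1\rightarrow n}$, and it sends this component to zero. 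For $r\le j$, the component instead remains in the ladder: after all $n-j$ lowerings it lies in $\lambda_r^{(j)}$, and the subsequent raisings carry it back up through $\lambda_r^{(j+1)},\dots,\lambda_r^{(n)}$.

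\textbf{Step 2: computing the scalar for $r\le j$.} Peel the composite from the inside out. Write it as
\begin{equation*}
\Dm_n\circ\bigl(\Dm_{n-1}\circ\cdots\circ(\Dm_{j+1}\circ\Cm_{j+1})\circ\cdots\circ\Cm_{n-1}\bigr)\circ\Cm_n .
\end{equation*}
By \autoref{prop:CD_DC}, the innermost map $\Dm_{j+1}\circ\Cm_{j+1}$ acts on $\lambda_r^{(j+1)}$ as the scalar $\beta_{r,j}$. Inductively, suppose the bracketed inner part acts on $\lambda_r^{(i)}$ as a scalar $c$. Then $\Dm_{i+1}\circ(\text{inner})\circ\Cm_{i+1}$ restricted to $\lambda_r^{(i+1)}$ equals $c\,\Dm_{i+1}\Cm_{i+1}|_{\lambda_r^{(i+1)}}=c\,\beta_{r,i}$. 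This uses that $\Cm_{i+1}$ lands in $\lambda_r^{(i)}$, as established in Step 1. Hence the full composite acts on $\lambda_r^{(n)}$ as
\begin{equation*}
\prod_{i=j}^{n-1}\beta_{r,i}=\prod_{i=j}^{n-1}(i-r+1)(m+r+i)=\frac{(n-r)!}{(j-r)!}\cdot\frac{(m+n+r-1)!}{(m+j+r-1)!},
\end{equation*}
which is exactly $\alpha_{r,j+1,n}$. Setting $r=j$ gives $(n-j)!\,(m+2j)_{n-j}$.

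\textbf{Main obstacle.} The delicate point is the bookkeeping in Steps 1–2, namely ensuring that each intermediate image lies in the correct irrep so that \autoref{prop:CD_DC} applies at every stage. In particular, the lowering of $\lambda_r^{(k)}$ must land in $\lambda_r^{(k-1)}$ with the same label $r$, not merely in some isomorphic component. \autoref{prop:lower_irreps} provides this via the dimension argument together with the inequivalence of the components. I would also check the edge conventions: the empty compositions $\Cm_{n+1\rightarrow n}$ and $\Dm_{n+1\rightarrow n}$ are the identity, and the case $j=n$ gives $\alpha_{n,n+1,n}=1$. These confirm the formula at the boundary, where it reduces to $O_n^{(n)}=O-\sum_{r<n}O_r^{(n)}$.
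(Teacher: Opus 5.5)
Your proposal is correct and follows essentially the paper's argument. The components with $r>j$ are killed because the lowering chain carries them down to $\lambda_r^{(r)}=\ker\Cm_r$, and on the surviving components the composite acts as $\prod_{i=j}^{n-1}\beta_{r,i}=\alpha_{r,j+1,n}$ by \autoref{prop:CD_DC}; the only difference is bookkeeping. The paper writes $O=\sum_r\Dm_{n\rightarrow r+1}(O_r^{(r)})$ and evaluates $\Cm_{j+1\rightarrow n}\Dm_{n\rightarrow j+1}$ on $\lambda_r^{(j)}$ via the $\Cm\Dm$ eigenvalues before raising back, whereas you peel off $\Dm\Cm$ pairs directly on $\lambda_r^{(n)}$, which is equivalent and slightly more direct.
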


\begin{proof}
    The proof follows by combining the propositions established in \autoref{app:exp_irreps}.

    First, using the recursive decomposition of the fixed photon-number operator subspaces established in \autoref{prop:recursive_decomp}, i.e. $W_n = Ker(\Cm_n) \oplus \Dm_n(W_{n-1})$, it follows that there exist operators $O_n^{(n)}\in \rm \lambda_n^{(n)} \subseteq W_n$ and  $\hat{O}^{(n-1)} \in W_{n-1}$ such that 
\begin{equation}
    O = O_n^{(n)} + \Dm_n(\hat{O}^{(n-1)})\;.
\end{equation}

In this decomposition,  $O_n^{(n)}$ corresponds to the  component of the observable supported on the highest irreducible subspace $\lambda_n^{(n)}$, while $\hat{O}^{(n-1)} \in W_{n-1}$ is an operator in the lower photon-number subspace whose image under the raising map $\Dm_n$ reconstructs the remaining components of $O$ in $W_n$. 
We emphasize that $\hat{O}^{(n-1)}$ should not be confused with the projection of the original observable onto the $(n-1)$-photon sector; precisely  $\hat{O}^{(n-1)} \neq P_{n-1} \Tilde{O} P_{n-1}$.  

Iterating this decomposition, we obtain that for every $j \leq n$ there exist operators $O_j^{(j)} \in \rm \lambda_j^{(j)} \subset W_j$ and $\hat{O}^{(j)} \in W_{j}$ such that the following recursive relation holds for all $j\leq n$.
\begin{equation}\label{eq:decomp_Wj}
    \hat{O}^{(j)} = O_j^{(j)} + \Dm_j(\hat{O}^{(j-1)})\;, 
     \hat{O}^{(n)}:= O \;.
\end{equation}
Consequently, the operator $O$ decomposes as follows,
\begin{align}
    O &= O_n^{(n)} + \Dm_n(O_{n-1}^{n-1}) + \Dm_n \Dm_{n-1}(O_{n-2}^{n-2})+ \dots + \Dm_{n \rightarrow j+1}(O_j^{(j)}) + \dots + \Dm_{n \rightarrow 2}(O_1^{(1)})+ \Dm_{n \rightarrow 1}(O_0^{(0)}) \\
    &:= \sum_{r=0}^n \Dm_{n \rightarrow r+1}(O_r^{(r)})\;.\label{eq:obs_decomp}
\end{align}
Using the construction of the irreducible components by the iterative application of the raising map established in \autoref{prop:lower_irreps}, each element in the above sum exactly coincides with the original observable components. Specifically, the component of the original observable $O \in W_n$ on the irreducible component $\lambda_j^{(n)}$ is given by
\begin{equation}\label{eq:form_irrep_comp}
    O_j^{(n)} = \Dm_{n \rightarrow j+1}(O_j^{(j)})\;.
\end{equation}

   Given that the lowering map annihilates the highest irrep component according to \autoref{prop:kernel_top_irrep}, the consecutive application of the lowering map $n-j$ times eliminates all highest irrep components indexed from  $j+1$ onwards. This observation forms the basis of the iterative trace removal procedure. Precisely, we have 
   \begin{align}
       \Cm_{ j+1  \rightarrow n} (O) &= \sum_{r=0}^n \Cm_{j+1  \rightarrow n}\Dm_{n \rightarrow r+1}(O_r^{(r)})\\
       &= \sum_{r=0}^{j} \Cm_{j+1  \rightarrow n}\Dm_{n \rightarrow r+1}(O_r^{(r)}) + \sum_{r=j+1}^{n} \Cm_{j+1  \rightarrow n}\Dm_{n \rightarrow r+1}(O_r^{(r)})\\
       &= \sum_{r=0}^{j} \Cm_{j+1  \rightarrow n} \Dm_{n \rightarrow j+1} \Dm_{j \rightarrow r+1}(O_r^{(r)}) + \sum_{r=j+1}^{n} \Cm_{j+1  \rightarrow r} \Cm_{r+1  \rightarrow n}\Dm_{n \rightarrow r+1}(O_r^{(r)})\\
       &= \sum_{r=0}^{j} \Cm_{j+1  \rightarrow n} \Dm_{n \rightarrow j+1}( \Dm_{j \rightarrow r+1}(O_r^{(r)}))\;, \label{eq:contraction_O}
   \end{align}
   where in the first equality, we substitute the observable $O$ with its decomposition given in \autoref{eq:obs_decomp}.  In the last equality, the second sum vanishes. Indeed, the map  $\Cm_{r+1  \rightarrow n}\Dm_{n \rightarrow r+1}$ acts as a scalar multiple of the identity on $O_r^{(r)} \in \lambda_r^{(r)} \subset W_r$. Consequently, the resulting operator remains in $ \lambda_r^{(r)}$ and a single subsequent application of the lowering map $\Cm_r$ suffices to eliminate this term. Hence all terms with $r \geq j$ are eliminated.
   
   For the terms appearing in the sum in \autoref{eq:contraction_O},  the map $\Dm_{j \rightarrow r+1}$ applied to the operator $O_r^{(r)}\in \lambda_r^{(r)}\subset W_r$ raises it to  its isomorphic component $\lambda_r^{(j)}$ in the $j$-photon number operator subspace.  Then, the map $ \Cm_{j+1  \rightarrow n} \Dm_{n \rightarrow j+1}$ simply acts as a scalar multiple of identity on the operator $ \Dm_{j \rightarrow r+1}(O_r^{(r)})$.

    We therefore focus on determining the scalar associated to the map $ \Cm_{j+1  \rightarrow n} \Dm_{n \rightarrow j+1}$ for every $0\leq j \leq n$. For $0\leq r \leq j-1$, we introduce the following notation
    \begin{equation}\label{eq:alpha_coeffs}
        \Cm_{j+1  \rightarrow n} \Dm_{n \rightarrow j+1}|_{\lambda_r^{(j)}} = \alpha_{r,j+1,n} \Id_{\lambda_r^{(j)}}\;.
    \end{equation}
    Plugging in this identity 
    in \autoref{eq:contraction_O} yields
 \begin{align}
      \Cm_{ j+1  \rightarrow n} (O) &=  \sum_{r=0}^{j} \Cm_{j+1  \rightarrow n} \Dm_{n \rightarrow j+1}( \Dm_{j \rightarrow r+1}(O_r^{(r)})) \\
      &=  \sum_{r=0}^{j} \alpha_{r,j+1,n}  \Dm_{j \rightarrow r+1}(O_r^{(r)}) \;,
 \end{align}

    To compute the coefficients $\alpha_{r,j+1,n}$ , we invoke  \autoref{prop:CD_DC}, which gives the eigenvalues of the map $\Cm_{i+1} \Dm_{i+1}$ on the irreducible components $\lambda_r^{(i)}$ for every $r\leq i$. Specifically, 
    \begin{equation}\label{eq:eigenvalue_beta}
        \Cm_{i+1} \Dm_{i+1}|_{\lambda_r^{(i)}} = \beta_{r,i} \Id_{\lambda_r^{(i)}}\;,
    \end{equation}
    where $\beta_{r,i} =  (i-r+1) (m+r+i)$.

    Therefore, determining the  coefficients $ \alpha_{r,j+1,n}$ introduced in \autoref{eq:alpha_coeffs} reduces to iteratively applying the above relation.

    Let  $X_r^{(j)} \in \lambda_r^{(j)}$ with $r \leq j$. Then
    \begin{equation}
         \Cm_{j+1} \Dm_{j+1} (X_r^{(j)}) = \beta_{r,j} X_r^{(j)}\;,
    \end{equation}
    and 
    \begin{equation}
        \Cm_{j+1} \Cm_{j+2} \Dm_{j+2} \Dm_{j+1} (X_r^{(j)}) = \Cm_{j+1} (\Cm_{j+2} \Dm_{j+2} (\Dm_{j+1} (X_r^{(j)}))) = \beta_{r,j+1} \Cm_{j+1} \Dm_{j+1} (X_r^{(j)})  =   \beta_{r,j} \beta_{r,j+1} X_r^{(j)}\;.
    \end{equation}

    Proceeding inductively, we obtain
\begin{align} 
     \Cm_{ j+1 \rightarrow n } \Dm_{n\rightarrow j+1} (X_r^{(j)}) = \left(\prod_{k=j}^{n-1} \beta_{r,k}\right)  X_r^{(j)} \;.
\end{align}

Using the explicit expression of $\beta_{r,k}$, we find
\begin{align}
     \alpha_{r,j+1,n} &= \prod_{k=j}^{n-1} \beta_{r,k}\\
     &= \prod_{k=j}^{n-1} (k-r+1) (m+r+k)\\
     &= \prod_{k=j-r+1}^{n-r} k  \prod_{k=j+m+r}^{n+m+r-1} k \\
     &= \frac{(n-r)! (m+n+r-1)!}{(j-r)! (j+m+r-1)!}\;. \label{eq:alpha_exp}
\end{align}
 
The expression of the component $O_j^{n}$ can be therefore derived by applying the raising maps $\Dm_{ n  \rightarrow j+1}$ on top of \autoref{eq:contraction_O} and identifying the irreducible components of the original observable using the form in \autoref{eq:form_irrep_comp}. Precisely, we have

\begin{align}
   \Dm_{ n  \rightarrow j+1} \Cm_{ j+1  \rightarrow n} (O) 
      &=  \sum_{r=0}^{j} \alpha_{r,j+1,n} \Dm_{ n  \rightarrow j+1} \Dm_{j \rightarrow r+1}(O_r^{(r)})\\
       &=  \sum_{r=0}^{j} \alpha_{r,j+1,n} \Dm_{ n  \rightarrow r+1}(O_r^{(r)})\\
      &= \alpha_{j,j+1,n} \Dm_{ n  \rightarrow j+1}(O_j^{(j)}) + \sum_{r=0}^{j-1} \alpha_{r,j+1,n}  \Dm_{n \rightarrow r+1}(O_r^{(r)}) \label{eq:proj_without_D}\\
      &= \alpha_{j,j+1,n} O_j^{(n)} + \sum_{r=0}^{j-1} \alpha_{r,j+1,n}  O_r^{(n)}\;.
\end{align}

 Consequently, the expression of $ O_j^{(n)}$ is given by
   \begin{align}
    O_j^{(n)}= \frac{1}{\alpha_{j,j+1,n}} \left(\Dm_{n\rightarrow j+1} \circ \Cm_{j+1\rightarrow n} (O) - \sum_{r=0}^{j-1} \alpha_{r,j+1,n} O_r^{(n)} \right)\;.
\end{align}
This concludes the proof.

\end{proof}

\subsection{Closed form expression of irrep norms}\label{app:irrep_norms}

Having established the iterative projection procedure and its theoretical underpinnings in \autoref{th:iterative_removal}, we can  now proceed to compute the Hilbert--Schmidt norm of irreducible components of a photon-number preserving operator appearing in the expression of the second moment in \autoref{prop:sec_moment}.

\begin{theorem}{(Irrep norms).}\label{th:recursive_irrep_norm_app}
Let $O$ be an observable acting on the $n$-photon Fock sector, i.e., $O \in W_n$. 
For every $0 \le k \le n$, define
\begin{equation}\label{eq:g_k_th}
    g_k(O) = \Tr\!\left[\big(\Cm_{k+1 \rightarrow n}(O)\big)^2\right]\;,
\end{equation}
where $\Cm_{k+1 \rightarrow n}$ denotes the composition of lowering maps introduced in \autoref{eq:comp_lowering_notation}. 

Let $O_j^{(n)}$ denote the component of $O$ supported on the irreducible subspace $\lambda_j^{(n)}$ (see \autoref{lemma:irrep_decomp}). 
Then the expression of Hilbert--Schmidt norm of $O_j^{(n)}$ is given by
\begin{equation}
   \Tr\left[(O_j^{(n)})^2\right] = \frac{1}{\alpha_{j,j+1,n}} \sum_{l=0}^k \frac{(-1)^{k-l}}{(k-l)! (k+l+ m-1)_{k-l}} g_l\;,
\end{equation}
where $(x)_p$ denotes the Pochhammer symbol defined as $(x)_p := x (x+1) \dots (x+p-1)$ and $\alpha_{j,j+1,n} = \frac{(n-j)! (m+n+j-1)!}{(m+2j-1)!}$.
\end{theorem}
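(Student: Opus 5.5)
The plan is to compute the norms through the lowered operators $\Cm_{l+1\rightarrow n}(O)$, rather than through the recursive formula of \autoref{th:iterative_removal} directly. From \autoref{eq:obs_decomp} we have $O=\sum_{r=0}^n \Dm_{n\rightarrow r+1}(O_r^{(r)})$ with $O_r^{(n)}=\Dm_{n\rightarrow r+1}(O_r^{(r)})$. By \autoref{eq:contraction_O} and \autoref{eq:alpha_coeffs},
\begin{equation}
\Cm_{l+1\rightarrow n}(O)=\sum_{r=0}^{l}\alpha_{r,l+1,n}\,\Dm_{l\rightarrow r+1}(O_r^{(r)}).
\end{equation}
The summands lie in the mutually orthogonal irreducible components $\lambda_r^{(l)}$; they are inequivalent isotypic blocks, and the Hilbert--Schmidt inner product is $U(m)$-invariant.

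Hermiticity of $O$ is preserved by $\Cm$, and the projections are Hermitian. Hence $g_l=\sum_{r\le l}\alpha_{r,l+1,n}^2\,\|\Dm_{l\rightarrow r+1}(O_r^{(r)})\|_2^2$. Next, use the adjointness $\Dm_k=\Cm_k^\dagger$ (\autoref{prop:adjoint_Ck_Dk}) together with the eigenvalues of \autoref{prop:CD_DC}. These give
\begin{equation}
\|\Dm_{l\rightarrow r+1}X\|_2^2=\langle X,\Cm_{r+1\rightarrow l}\Dm_{l\rightarrow r+1}X\rangle=\alpha_{r,r+1,l}\|X\|_2^2
\end{equation}
for $X\in\lambda_r^{(r)}$, and likewise $\|O_r^{(n)}\|_2^2=\alpha_{r,r+1,n}\|O_r^{(r)}\|_2^2$. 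Writing $N_r:=\|O_r^{(n)}\|_2^2$, we obtain a lower-triangular linear system
\begin{equation}
g_l=\sum_{r=0}^{l} M_{lr}N_r,\qquad M_{lr}=\frac{\alpha_{r,l+1,n}^2\,\alpha_{r,r+1,l}}{\alpha_{r,r+1,n}}.
\end{equation}
With the explicit formula \autoref{eq:alpha_exp}, the products telescope: $\alpha_{r,l+1,n}\alpha_{r,r+1,l}=\alpha_{r,r+1,n}$. So $M_{lr}=\alpha_{r,l+1,n}=\frac{(n-r)!(n+m+r-1)!}{(l-r)!(l+m+r-1)!}$. The diagonal entries $M_{ll}=\alpha_{l,l+1,n}$ are nonzero, so the system is invertible.

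The main step is inverting this triangular system in closed form. Write $M_{lr}=a_r\,b_{lr}$ with $a_r=(n-r)!(n+m+r-1)!$ and $b_{lr}=\frac{1}{(l-r)!\,(l+m+r-1)!}$. We then need the inverse of the kernel $b$, and the claim is
\begin{equation}
(b^{-1})_{kl}=\frac{(-1)^{k-l}(m+2k-1)\cdots}{(k-l)!\,(k+l+m-1)_{k-l}}\times(\text{normalisation}).
\end{equation}
This is a Gould/Carlitz-type inverse pair, of the kind $f(l)=\sum_r \frac{F(r)}{(l-r)!\,(l+r+x)!}$ inverted by $F(k)=\sum_l (-1)^{k-l}\frac{(x+2k)\,(k+l+x-1)!}{(k-l)!}f(l)$ with $x=m-1$. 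I would verify it by substituting and reducing the required orthogonality $\sum_{l=r}^k \frac{(-1)^{k-l}(k+l+x-1)!}{(k-l)!\,(l-r)!\,(l+r+x)!}=\delta_{kr}/(x+2k)$ to a terminating ${}_2F_1$ at $1$, then applying Chu--Vandermonde. I expect this to be the main obstacle, mainly because the normalisations and the shifted factor $(m+2k-1)$ must be reconciled with the stated form. That form absorbs $a_k$ and $m+2k-1$ into $1/\alpha_{k,k+1,n}=\frac{(m+2k-1)!}{(n-k)!\,(m+n+k-1)!}$, and leaves the Pochhammer $(k+l+m-1)_{k-l}=\frac{(2k+m-2)!}{(k+l+m-2)!}$.

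As a fallback, I would prove the formula by induction on $k$ using the recursion of \autoref{th:iterative_removal}, checking the base cases $k=0,1$ explicitly.
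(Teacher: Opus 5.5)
Your proposal is correct. It arrives at the same triangular system as the paper, but by a more direct route.

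\textbf{How the paper gets the system.} The paper never writes $g_l$ directly in terms of irrep norms. It starts from the recursive projection formula and reduces to $\Tr[(\tilde O_j^{(j)})^2]$, where $\tilde O_j^{(j)}=\alpha_{j,j+1,n}O_j^{(j)}$. It then expands this square into $g_j$, a diagonal sum and a cross term. Next it introduces $f_k=\Tr[\Cm_{k+1\rightarrow n}(O)\,\tilde O_k^{(k)}]$ with the recursion $f_k=g_k-\sum_{l<k}a_{l,k}f_l$. Finally it needs a separate induction ($t_j=\sum_{k<j}a_{k,j}t_k$, $t_0=0$) to conclude $\Tr[(\tilde O_j^{(j)})^2]=f_j$.

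\textbf{How you get it.} Your Parseval step on $\Cm_{l+1\rightarrow n}(O)$ gives $g_l=\sum_{r\le l}\alpha_{r,l+1,n}N_r$ at once. Since $f_r=\alpha_{r,r+1,n}N_r$, this is exactly the paper's system $g=Af$, with $A_{lr}=\alpha_{r,l+1,n}/\alpha_{r,r+1,n}=(m+2r-1)!\,b_{lr}$. Your kernel $(b^{-1})_{kl}=(-1)^{k-l}(x+2k)(k+l+x-1)!/(k-l)!$ is $(m+2k-1)!$ times the paper's $B_{kl}$.

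\textbf{Inversion.} The paper checks $AB=I$ by induction on $B_{j+s,j}$, using a Pochhammer summation identity cited from the literature. You check the other product, $b^{-1}b=I$, which is more elementary. With $l=r+i$ and $p=k-r\ge 1$, your sum equals $\frac{(-1)^p(k+r+x-1)!}{p!\,(2r+x)!}\,{}_2F_1(-p,k+r+x;2r+x+1;1)$. Chu--Vandermonde then gives the factor $(1-p)_p/(2r+x+1)_p=0$; equivalently, it is a $p$-th finite difference of a polynomial of degree $p-1$ in $i$. So the fallback induction is unnecessary.

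Your route is shorter and makes the structure transparent: $g_l$ is a nonnegative combination of the irrep norms. The paper's route stays inside its explicit projection formalism, at the price of the auxiliary $f_k$ and $t_j$ bookkeeping.

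Two small points to tidy:
\begin{itemize}
\item Replace the vague display ``$(m+2k-1)\cdots\times(\text{normalisation})$'' with the precise kernel above.
\item Justify ``the projections are Hermitian''. You need this to identify $N_j=\|O_j^{(n)}\|_2^2$ with $\Tr[(O_j^{(n)})^2]$. Since $\Cm$ and $\Dm$ commute with $X\mapsto X^\dagger$, each $\lambda_r^{(l)}=\Dm_{l\rightarrow r+1}(\ker\Cm_r)$ is closed under adjoints. Hence the projections commute with taking adjoints.
\end{itemize}
The statement mixes the indices $j$ and $k$; your reading $k=j$ is the intended one.
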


\begin{proof}
    
To compute the Hilbert--Schmidt norm of the observable $O\in W_n$ projections onto the irreducible components $\lambda_j^{(n)}$ for $j \leq n$ denoted by $O_j^{(n)}$, we begin by invoking its expression derived through the iterative trace removal method in \autoref{th:iterative_removal}.

\begin{align}
    \Tr[(O_j^{(n)})^2]&= \Tr[\left(\frac{1}{\alpha_{j,j+1,n}} \left(\Dm_{n\rightarrow j+1} \circ \Cm_{j+1\rightarrow n} (O) - \sum_{k=0}^{j-1} \alpha_{k,j+1,n} O_k^{(n)} \right)\right)^2]\\
    &= \frac{1}{\alpha_{j,j+1,n}^2} \Tr[\left(\Dm_{n\rightarrow j+1}\left( \Cm_{j+1\rightarrow n} (O) - \sum_{k=0}^{j-1} \alpha_{k,j+1,n}\Dm_{j \rightarrow k+1}(O_k^{(k)}) \right)\right)^2]\;, \label{eq:exp1}
\end{align}
where in the last equality , we factored out the map $\Dm_{n\rightarrow j+1}$ since the lower components are given by $O_k^{(n)} = \Dm_{n\rightarrow j+1}(\Dm_{j \rightarrow k+1}(O_k^{(k)}))$. We also recall that
\begin{equation}
   O_j^{(j)} = \frac{1}{\alpha_{j,j+1,n}}  \left(\Cm_{j+1\rightarrow n} (O) - \sum_{k=0}^{j-1} \alpha_{k,j+1,n}\Dm_{j \rightarrow k+1}(O_k^{(k)})\right)\;,
\end{equation}
and introduce for convenience the associated operator $\tilde{O}_j^{(j)}$ defined recursively as
\begin{equation}\label{eq:component_tilde}
    \tilde{O}_j^{(j)} = \alpha_{j,j+1,n}  O_j^{(j)} = \Cm_{j+1\rightarrow n} (O) - \sum_{k=0}^{j-1} \alpha_{k,j+1,n}\Dm_{j \rightarrow k+1}(O_k^{(k)}) = \Cm_{j+1\rightarrow n} (O) - \sum_{k=0}^{j-1} \frac{\alpha_{k,j+1,n}}{\alpha_{k,k+1,n}}\Dm_{j \rightarrow k+1}(\Tilde{O}_k^{(k)})\;.
\end{equation}

Using this notation, \autoref{eq:exp1} becomes 
\begin{equation}
    \Tr[(O_j^{(n)})^2] = \frac{1}{\alpha_{j,j+1,n}^2} \Tr[\left(\Dm_{n\rightarrow j+1}\left( \tilde{O}_j^{(j)}\right)\right)^2]\;.
\end{equation}

An important property of the raising and lowering map that we will exploit in what follows is the fact that they are adjoint maps as shown in \autoref{prop:adjoint_Ck_Dk}. Moreover, one can easily see that the maps $\Cm_{j+1 \rightarrow n}$ and $\Dm_{n \rightarrow j+1}$  introduced in \autoref{eq:comp_lowering_notation} and \autoref{eq:comp_raising_notation} are similarly adjoint maps.

Using this property, we obtain

\begin{align}
   \Tr[(O_j^{(n)})^2] &= \frac{1}{\alpha_{j,j+1,n}^2} \Tr[\left(\Dm_{n\rightarrow j+1}\left( \tilde{O}_j^{(j)}\right)\right)^2] \\
   &= \frac{1}{\alpha_{j,j+1,n}^2} \Tr[ \tilde{O}_j^{(j)} \Cm_{j+1\rightarrow n}\Dm_{n\rightarrow j+1}\left( \tilde{O}_j^{(j)}\right)]\\
   &= \frac{1}{\alpha_{j,j+1,n}} \Tr[(\tilde{O}_j^{(j)})^2]\;,
\end{align}
where in the second equality we used the adjoint property and in the last equality we used the fact that the operator $\tilde{O}_j^{(j)}$ belongs to $\lambda_j^{(j)}$ and that the map $\Cm_{j+1\rightarrow n}\Dm_{n\rightarrow j+1}$ acts as a scalar multiple of identity on $\lambda_j^{(j)}$ with the scalar given by $\alpha_{j,j+1,n}$.

Now, we focus on further developing the expression of the Hilbert--Schmidt norm of the operator $\tilde{O}_j^{(j)}$.

\begin{align}\label{eq:reduced_purity_formula}
    \Tr\left[\left( \Tilde{O}_j^{(j)}  \right)^2\right] 
    &=  \Tr[\left( \Cm_{j+1\rightarrow n} (O) - \sum_{k=0}^{j-1} \frac{\alpha_{k,j+1,n}}{\alpha_{k,k+1,n}}\Dm_{j \rightarrow k+1}(\tilde{O}_k^{(k)})\right)^2] \\
    \\&= \Tr\left[\left( \Cm_{j+1\rightarrow n}(O)  \right)^2\right] + \Tr\left[\left( \sum_{k=0}^{j-1} \frac{\alpha_{k,j+1,n}}{\alpha_{k,k+1,n}} \Dm_{j\rightarrow k+1}( \tilde{O}_k^{(k)})  \right)^2\right] - 2 \Tr\left[ \Cm_{j+1\rightarrow n}(O)\left( \sum_{k=0}^{j-1} \frac{\alpha_{k,j+1,n}}{\alpha_{k,k+1,n}} \Dm_{j\rightarrow k+1}( \tilde{O}_k^{(k)})   \right)\right]\;. \label{eq:exp2}
\end{align}
For ease of notation, we introduce the shorthand $g_j$ for the first term appearing in the above sum.
\begin{equation}\label{eq:g_k}
    g_j := \Tr\left[\left( \Cm_{j+1\rightarrow n}(O)  \right)^2\right]\;.
\end{equation}
As will be detailed in later sections, this term can be easily computed for a given observable $O \in W_n$ since it simply consists of applying the lowering map $n-j$ times on $O$, removing at each application one photon in every mode.

Now, we focus on computing the other two terms appearing in the sum in \autoref{eq:exp2}.

The second term in \autoref{eq:exp2} further simplifies as 
\begin{align}
    \Tr\left[\left( \sum_{k=0}^{j-1} \frac{\alpha_{k,j+1,n}}{\alpha_{k,k+1,n}} \Dm_{j\rightarrow k+1}( \tilde{O}_k^{(k)})  \right)^2\right] &= \sum_{k=0}^{j-1} \left(\frac{\alpha_{k,j+1,n}}{\alpha_{k,k+1,n}}\right)^2 \Tr\left[  \Dm_{j\rightarrow k+1}( \tilde{O}_k^{(k)}) \Dm_{j\rightarrow k+1}( \tilde{O}_k^{(k)}) \right] \\
    & \quad + \sum_{k \neq p =0}^{j-1} \frac{\alpha_{k,j+1,n}}{\alpha_{k,k+1,n}} \frac{\alpha_{p,j+1,n}}{\alpha_{p,p+1,n}} \Tr\left[  \Dm_{j\rightarrow k+1}( \tilde{O}_k^{(k)}) \Dm_{j\rightarrow p+1}( \tilde{O}_p^{(p)}) \right]\\
    &= \sum_{k=0}^{j-1} \left(\frac{\alpha_{k,j+1,n}}{\alpha_{k,k+1,n}}\right)^2 \Tr\left[  \Dm_{j\rightarrow k+1}( \tilde{O}_k^{(k)}) \Dm_{j\rightarrow k+1}( \tilde{O}_k^{(k)}) \right]\\
     &=  \sum_{k=0}^{j-1} \left(\frac{\alpha_{k,j+1,n}}{\alpha_{k,k+1,n}}\right)^2 \Tr\left[ \tilde{O}_k^{(k)}  \Cm_{k+1 \rightarrow j}\Dm_{j\rightarrow k+1}( \tilde{O}_k^{(k)}) \right]\\
    &=  \sum_{k=0}^{j-1} \left(\frac{\alpha_{k,j+1,n}}{\alpha_{k,k+1,n}}\right)^2 \alpha_{k,k+1,j} \Tr[( \tilde{O}_k^{(k)})^2]\;. \label{eq:second_term}
\end{align}
In the second equality, the terms of the form $\Tr[  \Dm_{j\rightarrow k+1}( \tilde{O}_k^{(k)}) \Dm_{j\rightarrow p+1}( \tilde{O}_p^{(p)})]$  vanish since they are overlaps of two elements belonging to orthogonal subspaces, precisely $\lambda_k^{(j)}$ and  $\lambda_p^{(j)}$ for $k\neq p$. In the third equality, we use the adjoint property of the raising and lowering maps and in the fourth equality, we use the fact that the map $\Cm_{k+1 \rightarrow j}\Dm_{j\rightarrow k+1}$ acts as a scalar multiple of identity on the operator $\tilde{O}_k^{(k)} \in \lambda_k^{(k)}$ with the associated scalar being $\alpha_{k,k+1,j}$ according to \autoref{eq:alpha_coeffs}.

Using the expression of the coefficients of the form $\alpha_{k,j+1,n}$ defined in \autoref{eq:alpha_exp}, we further compute the coefficients appearing in the sum in \autoref{eq:second_term} for all $k\leq j-1$.
\begin{align}
    \left(\frac{\alpha_{k,j+1,n}}{\alpha_{k,k+1,n}}\right)^2 \alpha_{k,k+1,j} &= \left( \frac{(n-k)! (m+n+k-1)!}{(j-k)! (j+m+k-1)!} \frac{(m+2k-1)!}{(n-k)! (m+n+k-1)!} \right)^2 \frac{(j-k)! (j+m+k-1)!}{(m+2k-1)!}\\
    &= \left( \frac{(m+2k-1)!}{(j-k)! (j+m+k-1)!}  \right)^2 \frac{(j-k)! (j+m+k-1)!}{(m+2k-1)!}\\
    &= \frac{(m+2k-1)!}{(j-k)! (j+m+k-1)!}\;.
\end{align}
Here, we notice that the obtained result coincides with the ratio between the coefficients $\alpha_{k,j+1,n}$ and $\alpha_{k,k+1,n}$. Consequently, we have that the sum in \autoref{eq:second_term} can be rewritten as 
\begin{equation}\label{eq:second_term_not}
    \Tr\left[\left( \sum_{k=0}^{j-1} \frac{\alpha_{k,j+1,n}}{\alpha_{k,k+1,n}} \Dm_{j\rightarrow k+1}( \tilde{O}_k^{(k)})  \right)^2\right] =  \sum_{k=0}^{j-1} \frac{\alpha_{k,j+1,n}}{\alpha_{k,k+1,n}} \Tr[( \tilde{O}_k^{(k)})^2] =: \sum_{k=0}^{j-1} a_{k,j}\Tr[( \tilde{O}_k^{(k)})^2]\;,
\end{equation}
where we introduced the shorthand 
\begin{equation}\label{eq:a_kj}
    a_{k,j} = \frac{\alpha_{k,j+1,n}}{\alpha_{k,k+1,n}} = \frac{(m+2k-1)!}{(j-k)! (j+m+k-1)!}\;.
\end{equation}

Similarly, the third term in \autoref{eq:exp2}  simplifies as 
\begin{align}
    \sum_{k=0}^{j-1} \frac{\alpha_{k,j+1,n}}{\alpha_{k,k+1,n}} \Tr\left[ \Cm_{j+1\rightarrow n}(O)  \Dm_{j\rightarrow k+1}( \tilde{O}_k^{(k)})   \right] &=  \sum_{k=0}^{j-1} \frac{\alpha_{k,j+1,n}}{\alpha_{k,k+1,n}}\Tr[\Cm_{k+1\rightarrow j} \Cm_{j+1\rightarrow n}(O) \tilde{O}_k^{(k)}]\\
    &=  \sum_{k=0}^{j-1} \frac{\alpha_{k,j+1,n}}{\alpha_{k,k+1,n}}\Tr[\Cm_{k+1\rightarrow n}(O) \tilde{O}_k^{(k)}]\\
    &=:\sum_{k=0}^{j-1} a_{k,j}f_k\;,\label{eq:third_term_recursive}
\end{align}
where in the first equality, we used the fact that $\Cm_{k+1\rightarrow j} $ is the adjoint map of $\Dm_{j\rightarrow k+1}$ and in the last equality we used the shorthand $a_{k,j}$ introduced in \autoref{eq:a_kj} and further introduced the shorthand $f_k$ for every $k \leq j-1$ given by
\begin{equation}\label{eq:f_k_notation}
    f_k := \Tr[\Cm_{k+1\rightarrow n}(O) \tilde{O}_k^{(k)}]\;.
\end{equation}

By plugging the definition of the operator $\tilde{O}_k^{(k)}$ given in \autoref{eq:component_tilde} back into the expression of $f_k$ given in \autoref{eq:f_k_notation}, we observe a recursive pattern. Precisely, we have
\begin{align}
    f_k &:= \Tr[\Cm_{k+1\rightarrow n}(O) \tilde{O}_k^{(k)}]\\
    &= \Tr[\Cm_{k+1\rightarrow n}(O)  \left(\Cm_{k+1\rightarrow n} (O) - \sum_{l=0}^{k-1} \frac{\alpha_{l,k+1,n}}{\alpha_{l,l+1,k}}\Dm_{k \rightarrow l+1}(\Tilde{O}_l^{(l)})\right)]\\
    &= \Tr[(\Cm_{k+1\rightarrow n}(O))^2] - \sum_{l=0}^{k-1} \frac{\alpha_{l,k+1,n}}{\alpha_{l,l+1,k}} \Tr[\Cm_{k+1\rightarrow n}(O)\Dm_{k \rightarrow l+1}(\Tilde{O}_l^{(l)})]\\
    &= g_k(O) - \sum_{l=0}^{k-1} a_{l,k} f_l\;, \label{eq:third_term}
\end{align}
where in the last equality, we used the shorthands $g_k, a_{l,k}$ and $f_l$ introduced in \autoref{eq:g_k}, \autoref{eq:a_kj}, and \autoref{eq:f_k_notation} respectively.

By plugging in the notation from \autoref{eq:g_k} and the results from \autoref{eq:second_term_not} and \autoref{eq:third_term_recursive} back into \autoref{eq:exp2}, we finally obtain
\begin{align}\label{eq:norm_dumb}
    \Tr\left[\left( \Tilde{O}_j^{(j)}  \right)^2\right] &= g_j +  \sum_{k=0}^{j-1} a_{k,j} \Tr[( \tilde{O}_k^{(k)})^2]   - 2 \sum_{k=0}^{j-1} a_{k,j} f_k \;.
\end{align}

Moreover, by using the recursive definition of $f_k$ established in \autoref{eq:third_term}, the expression Hilbert--Schmidt norm in \autoref{eq:norm_dumb} becomes 
\begin{align}
      \Tr\left[\left( \Tilde{O}_j^{(j)}  \right)^2\right] &= f_j +  \sum_{k=0}^{j-1} a_{k,j} \Tr[( \tilde{O}_k^{(k)})^2]   -  \sum_{k=0}^{j-1} a_{k,j} f_k\\
      &= f_j +  \sum_{k=0}^{j-1} a_{k,j} (\Tr[( \tilde{O}_k^{(k)})^2]   -   f_k)\;.
\end{align}

By introducing the shorthand $t_j:= \Tr\left[\left( \Tilde{O}_j^{(j)}  \right)^2\right]-f_j$, the above equation simplifies to the following recursive form
\begin{equation}
    t_j = \sum_{k=0}^{j-1} a_{k,j} t_k\;.
\end{equation}

For $j=0$, we have $t_0 = \Tr\left[\left( \Tilde{O}_0^{(0)}  \right)^2\right]-f_0 =0$. Hence, we get that $t_j=0, \forall 0 \leq j \leq n$. This implies that the Hilbert--Schmidt norm of the observable projection into irreducible components is nothing but the term $f_j$ defined recursively in \autoref{eq:third_term}.

In what follows, we focus on deriving an explicit expression of $f_k$.
Concretely, the recursion in \autoref{eq:third_term}  can be rewritten in the following matrix form 
\begin{align}
    g = Af
\end{align}
where $A$ is an $(n+1 \times n+1)$ lower triangular matrix with ones on its diagonal given by
\begin{align}
    A_{k,k}&=1\\
    A_{k,l} &= a_{l,k} \;, \forall 0 \leq l <k\\
    A_{k,l} &=0 \;, \forall k < l \leq n\;.
\end{align}

Obtaining a closed form expression for $f$ boils down to computing the entries of the matrix $B:= A^{-1}$, i.e. $(AB)_{i,j} = \delta_{i,j}$. Since the matrix $A$ is lower triangular, its inverse $B$ will also be lower triangular. More precisely, we have that $\forall j \leq i$ the following identity holds
\begin{align}
   (AB)_{i,j} = \sum_{l=0}^n A_{i,l} B_{l,j} = \sum_{l=j}^i A_{i,l} B_{l,j} = \sum_{l=j}^i a_{l,i} B_{l,j} = \delta_{i,j}\;.
\end{align}

For $i=j$, we get $a_{i,i} B_{i,i} =1 $, which implies that $B_{i,i} = 1\;, \forall \; 0 \leq i \leq n$.

For a fixed $j$, we now derive the expression of $B_{j+s,j}$ for all $s \geq 0$  such that $ j+s \leq n$. In what follows, we show by induction that
\begin{equation}
    B_{j+s,j} =  \frac{(-1)^{s} \Gamma(2j + s+m-1)}{s! \Gamma(2j+2s+m-1)} = \frac{(-1)^s}{ s! (2j+s+m-1)_s}\;.
\end{equation}

To do so, we use the following recursive relation, i.e.
\begin{align}\label{eq:recursive_B}
    0 = (AB)_{j+s,j} = \sum_{l=j}^{j+s} a_{l,j+s} B_{l,j} = \sum_{l=0}^{s} a_{l+j,j+s} B_{l+j,j}\;.
\end{align}

For $s=0$, we have 
\begin{equation}
    B_{j+0,j}= \frac{1}{(2j+m-1)_0}= 1\;.
\end{equation}

Let us assume that $B_{j+l,j} =  \frac{(-1)^{l} \Gamma(2j + l+m-1)}{l! \Gamma(2j+2l+m-1)} = \frac{(-1)^l}{ l! (2j+l+m-1)_l}$ for all $0\leq l \leq s$ and let us prove the result for $l=s+1$.
Henceforth, we introduce the shorthand $x= 2j+m-1$. Using this notation we have 
\begin{align}
    B_{j+l,j} &= \frac{(-1)^l \Gamma(x+l)}{l! \Gamma(x+2l)} \;, \forall \; 0\leq l \leq s \\
    a_{l+j,j+s+1} &=  \frac{\Gamma(x+2l+1)}{(s+1-l)! \Gamma(x+l+s+2)}\;.
\end{align}

Under this assumption and using the recursive equation in \autoref{eq:recursive_B}, we get
\begin{align}
    B_{j+s+1,j} &= - \frac{1}{a_{j+s+1,j+s+1}} \sum_{l=j}^{j+s} a_{l,j+s+1} B_{l,j}\\
    &=  -  \sum_{l=0}^{s} a_{l+j,j+s+1} B_{l+j,j}\\
    &= - \sum_{l=0}^{s} \frac{(-1)^l \Gamma(x+l) \Gamma(x+2l+1)}{l! (s+1-l)! \Gamma(x+2l) \Gamma(x+l+s+2)}\\
     &= - \sum_{l=0}^{s} \frac{(-1)^l (x+2l)}{l! (s+1-l)! (x+l)_{s+2}}\\
     &= - \sum_{l=0}^{s+1} \frac{(-1)^l (x+2l)}{l! (s+1-l)! (x+l)_{s+2}} + \frac{(-1)^{s+1} (x+2s+2)}{(s+1)! (x+s+1)_{s+2}}\\
     &= - \sum_{l=0}^{s+1} \frac{(-1)^l \binom{s+1}{l} (x+2l)}{(s+1)! (x+l)_{s+2}} + \frac{(-1)^{s+1} }{(s+1)! (x+s+1)_{s+1}}\\
     &= \frac{(-1)^{s+1} }{(s+1)! (x+s+1)_{s+1}}\;,
    \label{eq:trinag_recursive}
\end{align}
where we used $\frac{\Gamma(x+k)}{\Gamma(x)} = (x)_k:= x (x+1) \dots (x+k-1)$ and in the last equality we used the property from \cite{szablowski2026identitiesintegralsinvolvingpochhammer}, which we recall here: %in \autoref{prop:sum_zero_hyper}.
\begin{equation}
    \sum_{j=0}^n (-1)^j \binom{n}{j} \frac{(x+2j -1)}{(x+j-1)_{(n+1)}}=\delta_{n,0}\;.
\end{equation}

Consequently, we obtain the following closed form of $f_k$ defined recursively in \autoref{eq:third_term}, i.e.
\begin{align}
    f_k &= \sum_{l=0}^k B_{k,l} g_l\\
    &= \sum_{l=0}^k \frac{(-1)^{k-l}}{(k-l)! (k+l+ m-1)_{k-l}} g_l\;.
\end{align}

\end{proof}

An important quantity appearing in the expression of irrep norms established in the previous  \autoref{th:recursive_irrep_norm_app} is $g_k(O)$ defined for each irrep $k$, as in \autoref{eq:g_k_th}. In the following Lemma, we compute this quantity for projectors onto Fock states and show how it can have a very simple expression for collision free and maximal bunching states.

\begin{lemma}{(Irrep purities  for Fock states).}\label{lemma:puR_fock}
    Consider a Fock state $R$ with a total number of $n$ photons, i.e $R \in \Phi_m^n$. For every $0\leq k \leq n$, the term $g_k(\ketbra{R}{R})$ defined in \autoref{eq:g_k_th} can be expressed as 
    \begin{equation}
        g_k(\ketbra{R}{R}) = ((n-k)!)^2 \sum_{\substack{|\boldsymbol{b}| = n-k\\0 \leq \boldsymbol{b}\leq R}} \prod_{i=1}^m \binom{R_i}{b_i}^2\;.
    \end{equation}

    For  collision free states $R_{\rm cf}$, the expression of $g_k$ simplifies to 
    \begin{equation}\label{eq:g_cf}
        g_k(\ketbra{R_{\rm cf}}{R_{\rm cf}}) = ((n-k)!)^2  \binom{n}{n-k}\;,
    \end{equation}
    whereas for  states with maximal bunching $R_b$, i.e.\ all photons are in a single mode, we get
    \begin{equation}
        g_k(\ketbra{R_b}{R_b}) = ((n-k)!)^2  \binom{n}{n-k}^2\;.
    \end{equation}
\end{lemma}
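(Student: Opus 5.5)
The plan is to compute $\Cm^{\,n-k}(\ketbra{R}{R})$ explicitly in the Fock basis and then take the trace of its square. First I will show by induction on the number of applications that, for $p \geq 0$,
\begin{equation}
    \Cm^{\,p}(\ketbra{R}{R}) = p! \sum_{\substack{|\boldsymbol{b}| = p\\ 0 \leq \boldsymbol{b} \leq R}} \prod_{i=1}^m \binom{R_i}{b_i} \, \ketbra{R-\boldsymbol{b}}{R-\boldsymbol{b}}\;.
\end{equation}
The base step uses $a_s\ket{R} = \sqrt{R_s}\ket{R-e_s}$, so that $\Cm(\ketbra{R}{R}) = \sum_s R_s \ketbra{R-e_s}{R-e_s}$. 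Since $\Cm$ maps diagonal Fock projectors to diagonal ones, all iterates stay diagonal in the Fock basis. Iterating, $\Cm^{\,p}(\ketbra{R}{R}) = \sum_{(s_1,\dots,s_p)} \prod_{t} (\text{current occupation of } s_t)\,\ketbra{R-\boldsymbol{b}}{R-\boldsymbol{b}}$, where the sum runs over ordered sequences of removed modes and $\boldsymbol{b}$ counts the removals per mode.

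Next I will collect the sequences that share the same multiset $\boldsymbol{b}$. For such a sequence, the product of occupations factorises over modes as $\prod_i R_i(R_i-1)\cdots(R_i-b_i+1) = \prod_i R_i!/(R_i-b_i)!$, independently of the ordering. The number of orderings is the multinomial $p!/\prod_i b_i!$. Multiplying the two gives $p!\prod_i \binom{R_i}{b_i}$, which completes the induction. This counting is the only real step; the main thing to check is that the weight is order-independent, which the factorisation makes clear.

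Finally, I set $p=n-k$. The projectors $\ketbra{R-\boldsymbol{b}}{R-\boldsymbol{b}}$ are orthogonal for distinct $\boldsymbol{b}$, because $\boldsymbol{b}\mapsto R-\boldsymbol{b}$ is injective. Hence $g_k = \Tr[(\Cm^{\,n-k}(\ketbra{R}{R}))^2]$ is the sum of the squared coefficients, which is the claimed formula.

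For the special cases:
\begin{itemize}
\item For $R_{\rm cf}$, every $R_i\in\{0,1\}$, so each product equals $1$. The sum then counts subsets of size $n-k$ of the $n$ occupied modes, giving $\binom{n}{n-k}$.
\item For $R_b$, only $\boldsymbol{b} = (n-k)e_{i_0}$ is allowed, where $i_0$ is the occupied mode. This gives $\binom{n}{n-k}^2$.
\end{itemize}
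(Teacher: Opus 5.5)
Your proposal is correct and follows essentially the same route as the paper. Both expand $\Cm^{\,n-k}(\ketbra{R}{R})$ as a sum over ordered sequences of modes and group the sequences by the removal vector $\boldsymbol{b}$ using the multinomial count together with the falling-factorial weights, which yields $(n-k)!\prod_i\binom{R_i}{b_i}$; the paper reorders the annihilation operators into $\prod_i a_i^{b_i}$ where you track occupations step by step. Both then use orthogonality of the distinct Fock projectors $\ketbra{R-\boldsymbol{b}}{R-\boldsymbol{b}}$ to evaluate the trace, and both specialise to $R_{\rm cf}$ and $R_b$ in the same way.
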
 

\begin{proof}
We begin by recalling the expression of $ g_k(\ketbra{R}{R})$ introduced in \autoref{eq:g_k_th}, which corresponds to the Hilbert Schmidt norm of the operator obtained after applying the lowering map $\Cm$, $n-k$ times on the Fock state $\ketbra{R}{R}$, i.e.

\begin{equation}\label{gk_R}
    g_k(\ketbra{R}{R}) = \Tr\!\left[\big(\Cm_{k+1 \rightarrow n}(\ketbra{R}{R})\big)^2\right]\;.
\end{equation}

    Let us focus on computing the action of applying the lowering map $k$ times on a Fock state  $\ketbra{R}{R}$. By exploiting the fact that creation/annihilation operators commute on different modes, we obtain
    \begin{align}
        \Cm^k(\ketbra{R}{R}) &= \sum_{s_1,\dots,s_k=1}^m a_{s_1} \dots a_{s_k} \ketbra{R}{R} a_{s_k}^\dagger   \dots a_{s_1}^\dagger\\
        &= \sum_{\substack{|\boldsymbol{b}| = k\\0 \leq \boldsymbol{b}\leq  R}} \binom{k}{b_1, \dots , b_m} \bigotimes_{i=1}^m \left( a_i^{b_i}  \ketbra{R_i}{R_i} (a_i^\dagger)^{b_i}\right)\\
        &= \sum_{\substack{|\boldsymbol{b}| = k\\0 \leq \boldsymbol{b}\leq R}} \binom{k}{b_1, \dots , b_m}  \left( \prod_{i=1}^m R_i (R_i-1) \dots (R_i- b_i +1) \right)   \ketbra{R - \sum_{i=1}^m b_i e_i}{R - \sum_{i=1}^m b_i e_i}\\
        &= \sum_{\substack{|\boldsymbol{b}| = k\\0 \leq \boldsymbol{b}\leq R}} k!  \left( \prod_{i=1}^m \frac{R_i!}{b_i! (R_i - b_i)!} \right)   \ketbra{R - \boldsymbol{b}}{R -\boldsymbol{b}}\\
        &= k! \sum_{\substack{|\boldsymbol{b}| = k\\0 \leq \boldsymbol{b}\leq R}} \prod_{i=1}^m \binom{R_i}{b_i} \ketbra{R - \boldsymbol{b}}{R -\boldsymbol{b}}\;.
        \label{eq:contract_Fock_state} 
    \end{align}
In the second equality, we used the multinomial identity and the fact that configurations $\boldsymbol{b} \in \Phi_m^k$ with non zero contribution corresponds to the ones satisfying the pairwise order $\boldsymbol{b} \leq R$. In the third equality, we simply used the fact that $a \ket{n} = \sqrt{n} \ket{n-1}$. Finally, in the final equality, we identify the product of binomial coefficients over all modes.

By substituting $k$ in \autoref{eq:contract_Fock_state} by $n-k$, we finally obtain 

\begin{align}\label{eq:result_g}
     g_k(\ketbra{R}{R}) := \Tr\!\left[\big(\Cm_{k+1 \rightarrow n}(\ketbra{R}{R})\big)^2\right] = (n-k)!^2 \sum_{\substack{|\boldsymbol{b}| = n-k\\0 \leq \boldsymbol{b}\leq R}} \prod_{i=1}^m \binom{R_i}{b_i}^2    
\end{align}

The generic expression established in   \autoref{eq:result_g} is clearly invariant under mode permutations. 

Consequently, the value of $g_k$ for all collision free states is the same and is given by 
\begin{align}
     g_k(\ketbra{R_{\rm cf}}{R_{\rm cf}}) = (n-k)!^2  \sum_{\substack{|\boldsymbol{b}| = n-k\\ \boldsymbol{b}\in \{0,1\}^{n}}} \prod_{i=1}^n \binom{1}{b_i}^2  =  (n-k)!^2  \sum_{\substack{|\boldsymbol{b}| = n-k\\ \boldsymbol{b}\in \{0,1\}^{n}}} 1 = (n-k)!^2  \binom{n}{n-k}\;.
\end{align}

For a state with maximal bunching, the permutation invariance implies that the result  for all such state is the same as $\ket{R_b}= \ket{n,0,\dots,0}$. In this case, we get
\begin{align}
    g_k(\ketbra{R_{\rm b}}{R_{\rm b}}) = (n-k)!^2 \binom{n}{n-k}^2\;. 
\end{align}

This concludes the proof.
    
\end{proof}

A useful consequence of the previous lemma is the evaluation of the sum of the quantity $g_k(\ketbra{R}{R})$ over all Fock states $R \in \Phi_m^n$. This quantity will play a key role in subsequent proofs of anti-concentration, and we formalize it in the following proposition.

\begin{proposition}\label{prop:sum_gk}
Consider the quantity $g_{n-k}(\ketbra{R}{R})$ introduced in \autoref{eq:g_k_th} for all Fock states $R \in \Phi_m^n$ and $0 \leq k\leq n $. Their sum is given by 
    \begin{equation}
       \sum_{R \in \Phi_m^n}  g_{n-k}(\ketbra{R}{R}) = |\Phi_m^n| \frac{n!k! }{ (n-k)!}  \frac{(n-k + (m+1)/2)_k}{((m+1)/2)_k}\;.
    \end{equation}
\end{proposition}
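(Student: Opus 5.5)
The plan is to interchange the sums over $R$ and $\boldsymbol{b}$ and evaluate the result through a generating function in which a Gegenbauer polynomial appears. By \autoref{lemma:puR_fock}, with $n-k$ replaced by $k$,
\[
g_{n-k}(\ketbra{R}{R}) = (k!)^2 \sum_{|\boldsymbol{b}|=k,\ \boldsymbol{b}\le R}\ \prod_{i=1}^m \binom{R_i}{b_i}^2 .
\]
The claim is therefore equivalent to
\[
S_{n,k}:=\sum_{R\in\Phi_m^n}\ \sum_{|\boldsymbol{b}|=k}\ \prod_{i=1}^m\binom{R_i}{b_i}^2=|\Phi_m^n|\binom{n}{k}\frac{(n-k+(m+1)/2)_k}{((m+1)/2)_k}.
\]
Two sanity checks support this reformulation. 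For $k=0$ both sides equal $|\Phi_m^n|$. For $m=1$, $k=n$ both sides equal $1$.

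Since the product factorizes over modes and there are no constraints other than $|R|=n$ and $|\boldsymbol{b}|=k$, we have $S_{n,k}=[x^ny^k]\,F(x,y)^m$. Here $F$ is the single-mode series
\[
F(x,y)=\sum_{r\ge0}x^r\sum_{b}\binom{r}{b}^2y^b=\bigl(1-2x(1+y)+x^2(1-y)^2\bigr)^{-1/2}.
\]
This is the standard Legendre-type generating function, and I would verify it by recognizing $\sum_b\binom rb^2y^b=(1-y)^rP_r\bigl(\tfrac{1+y}{1-y}\bigr)$. Next I substitute $X=x(1-y)$ and $t=(1+y)/(1-y)$, which gives $F^m=(1-2Xt+X^2)^{-m/2}$. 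This is exactly the Gegenbauer generating function with $\lambda=m/2$. Hence
\[
S_{n,k}=[y^k]\,(1-y)^nC_n^{(m/2)}\!\left(\tfrac{1+y}{1-y}\right).
\]

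For the Gegenbauer polynomial I would use the hypergeometric form $C_n^{(\lambda)}(t)=\frac{(2\lambda)_n}{n!}\,{}_2F_1\!\left(-n,n+2\lambda;\lambda+\tfrac12;\tfrac{1-t}{2}\right)$. Two observations make this convenient. First, $(1-t)/2=-y/(1-y)$. Second, $(2\lambda)_n/n!=(m)_n/n!=|\Phi_m^n|$, so the prefactor $|\Phi_m^n|$ appears automatically. Expanding gives
\[
S_{n,k}=|\Phi_m^n|\sum_{j=0}^{k}\frac{(-n)_j(n+m)_j}{((m+1)/2)_j\,j!}(-1)^j\,[y^{k-j}](1-y)^{n-j}.
\]
After extracting the coefficient, this becomes a single finite alternating sum over $j$ of the form $\sum_j(-1)^{k}\binom{n-j}{k-j}\frac{(-n)_j(n+m)_j}{((m+1)/2)_jj!}$. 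Rewriting $\binom{n-j}{k-j}$ with Pochhammer symbols as $\binom nk\frac{(-k)_j}{(-n)_j}$ up to sign turns it into $\binom nk\,{}_2F_1\!\left(-k,n+m;\tfrac{m+1}{2};1\right)$.

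The main obstacle is evaluating this last hypergeometric sum in closed form. My first attempt would be Chu--Vandermonde: ${}_2F_1(-k,b;c;1)=\frac{(c-b)_k}{(c)_k}$ with $b=n+m$ and $c=(m+1)/2$. This gives $\frac{((1-m)/2-n)_k}{((m+1)/2)_k}$. Using the reflection $(-a)_k=(-1)^k(a-k+1)_k$, the numerator becomes $(-1)^k(n-k+(m+1)/2)_k$, and the sign cancels against the $(-1)^k$ from the coefficient extraction. This yields the target. The places needing the most care are the sign and index bookkeeping in the $[y^{k}]$ extraction and in the Pochhammer rewriting of the binomial, so I would check them against the cases $k=0$, $k=1$ (where a direct count gives $\sum_R\sum_i R_i^2$), and $m=1$.
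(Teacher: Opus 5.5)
Your proposal is correct and follows essentially the paper's route: the same mode-factorized generating function, the Legendre/Gegenbauer identification via $t=(1+y)/(1-y)$ and $X=x(1-y)$, and the hypergeometric form of $C_n^{(m/2)}$ with prefactor $(m)_n/n!=|\Phi_m^n|$. The only difference is the last step. The paper applies the Pfaff transformation to obtain ${}_2F_1(-n,-n-\tfrac{m-1}{2};\tfrac{m+1}{2};v)$ and reads off the $v^k$ coefficient directly, whereas you expand and finish with Chu--Vandermonde; these are equivalent for terminating series, and the identity $\binom{n-j}{k-j}=\binom nk\frac{(-k)_j}{(-n)_j}$ holds exactly, with no extra sign.
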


\begin{proof}
  In \autoref{lemma:puR_fock}, we showed that for each Fock state $R \in \Phi_m^n$ , the expression of   $ g_k(\ketbra{R}{R})$ is given by
  \begin{equation}
      g_{n-k}(\ketbra{R}{R}) = (k!)^2 \sum_{\substack{|\boldsymbol{b}| = k\\0 \leq \boldsymbol{b}\leq R}} \prod_{i=1}^m \binom{R_i}{b_i}^2\;.
  \end{equation}

  In what follows, we show that when summing the above quantity over all possible Fock states with $n$ photons in $m$ modes, the sum can be identified with a coefficient of Gregenbauer polynomials whose coefficients are expressed as hypergeometric functions.

  To this end, we introduce the shorthand $S_{n,k}:= \sum_{\substack{|\boldsymbol{b}| = k\\0 \leq \boldsymbol{b}\leq R}} \prod_{i=1}^m \binom{R_i}{b_i}^2$ and 
  begin by  showing that $S_{n,k}$ corresponds to the coefficient of the following $2$--variate polynomial
  \begin{equation}
      S_{n,k}= [u^n v^k](1-2u(1+v)+u^2(1-v)^2)^{-\frac{m}{2}}\;.
  \end{equation}

  Let us introduce the $2$--variate generating function $F(u,v)$ given by 
  \begin{equation}\label{eq:f_gen}
      F(u,v) = \sum_{R \in \mathbb{N}^m} \sum_{ \boldsymbol{0}\leq \boldsymbol{b} \leq R} \prod_{i=1}^m \binom{R_i}{b_i}^2 u^{|R|} v^{|b|} \;.
  \end{equation}
 where $R \in \mathbb{N}^m$ denotes the multi-index of length $m$ whose components take positive integer values, $\boldsymbol{b}$ denotes as well a multi-index of length $m$ whose components are pairwise smaller than those of $R$ and $|R|$ and $|\boldsymbol{b}|$ correspond to the $1$--norm of each vector.

Here, one can easily see that the coefficient corresponding to the term $u^n v^k$ in the generating function $F(u,v)$ coincides with $S_{n,k}$, which we denote henceforth by 
\begin{equation}\label{eq:F_S}
    [u^n v^k]F(u,v) =  S_{n,k}\;.
\end{equation}

Since entries in distinct modes are independent in \autoref{eq:f_gen}, the generating function $F(u,v)$ can be simply rewritten as 
  \begin{align}
      F(u,v) = \prod_{i=1}^m \left( \sum_{r \geq 0} \sum_{b=0}^r \binom{r}{b}^2 u^r v^b \right):= G(u,v)^m\;.
  \end{align}

  Focusing on the single-mode generating function $G(u,v)$, its coefficient corresponding to the term $u^r$ can be expressed as a hypergeometric function. Concretely, we have 
  \begin{align}
     [u^r]G(u,v)= \sum_{b=0}^r \binom{r}{b}^2  v^b &= {}_2F_1(-r,-r;1;v)\;.
  \end{align}
  where we recall that the hypergeometric function ${}_2F_1(-r,b;c;v)$ is given by 
  \begin{equation}
      {}_2F_1(-r,a;c;v) = \sum_{b=0}^r (-1)^b \binom{r}{b} \frac{(a)_b}{(c)_b} v^b\;.
  \end{equation}

  Using the hypergeometric representation of Legendre polynomials together with the Pfaff transformation, one finds that ${}_2F_1(-r,-r;1;v) = (1-v)^r P_r\left(\frac{1+v}{1-v}\right)$ \cite{nationalinstituteofstandardsandtechnology_nist_2010}.
  
   This implies that the generating function $G(u,v)$ can be identified with the generating function of Legendre polynomials \cite{nationalinstituteofstandardsandtechnology_nist_2010}, given by 
  \begin{equation}\label{eq:G_func}
      G(u,v) = \sum_{r \geq 0} {}_2F_1(-r,-r;1;v) u^r = \frac{1}{\sqrt{1-2u(1+v)+u^2(1-v)^2}}\;.
  \end{equation}

   By combining \autoref{eq:F_S} and \autoref{eq:G_func},  we get that 
  \begin{align}
      S_{n,k} &= [u^n v^k]F(u,v)\\
      &= [u^n v^k]G(u,v)^m\\
      &= [u^n v^k](1-2u(1+v)+u^2(1-v)^2)^{-\frac{m}{2}}\;.\label{eq:S_exp1}
  \end{align}

The expression of \(G(u,v)\) given in \autoref{eq:G_func} can also be identified with the generating function of Gegenbauer polynomials \cite{nationalinstituteofstandardsandtechnology_nist_2010} given by 
\begin{equation}\label{eq:gen_grenbauer}
    \sum_{p \geq 0} C_p^{(\alpha)}(x) t^p = (1-2xt + t^2)^{-\alpha}\;,
\end{equation}
 where the Gegenbauer polynomials \(C_p^{(\alpha)}(x)\) admit the hypergeometric representation
\begin{equation}
    C_p^{(\alpha)}(x) = \frac{(2\alpha)_p}{p!} \, {}_2F_1\!\left(-p,p + 2 \alpha ; \alpha + \tfrac{1}{2};\frac{1-x}{2}\right).
\end{equation}

By introducing the change of variables
\begin{equation}\label{eq:subs}
    x= \frac{1+v}{1-v}\;,\quad t= (1-v)u\;,\quad \alpha=\frac{m}{2}\;,
\end{equation}
 \autoref{eq:S_exp1} can be brought into the canonical Gegenbauer generating function form, i.e.
\begin{equation}
    (1-2u(1+v)+u^2(1-v)^2)^{-\frac{m}{2}}
    = \sum_{p \geq 0} C_p^{(m/2)}\!\left(\frac{1+v}{1-v}\right) (1-v)^p u^p\;.
\end{equation}

Consequently, the expression of \(S_{n,k}\) given in \autoref{eq:S_exp1} can be connected to coefficients of Gegenbauer polynomials as follows:
\begin{align}
    S_{n,k}
    &= [u^n v^k](1-2u(1+v)+u^2(1-v)^2)^{-\frac{m}{2}}\nonumber\\
    &= [u^n v^k]\left( \sum_{p \geq 0} C_p^{(m/2)}\!\left(\frac{1+v}{1-v}\right) (1-v)^p u^p  \right)\nonumber\\
    &= [v^k]\left( C_n^{(m/2)}\!\left(\frac{1+v}{1-v}\right) (1-v)^n \right)\;.
\end{align}
Substituting the hypergeometric expression of \(C_n^{(m/2)}\), we obtain
\begin{equation}
    S_{n,k}
    = \frac{(m)_n}{n!} [v^k]\left( (1-v)^n \, {}_2F_1\!\left(-n,n + m ; \tfrac{m+1}{2};\frac{v}{v-1}\right) \right),
\end{equation}
where we used the standard relation between Gegenbauer and hypergeometric functions \cite{nationalinstituteofstandardsandtechnology_nist_2010}.

To simplify the argument of the hypergeometric function, we use the Pfaff transformation
\begin{equation}
    {}_2F_1(a,b;c;z)
    = (1-z)^{-a} \, {}_2F_1\!\left(a,c-b;c;\frac{z}{z-1}\right),
\end{equation}
a classical identity for  hypergeometric functions \cite{nationalinstituteofstandardsandtechnology_nist_2010}, which yields
\begin{align}
    S_{n,k}
    &= \frac{(m)_n}{n!} [v^k] \left( {}_2F_1\!\left(-n,-n - \tfrac{m-1}{2} ; \tfrac{m+1}{2};v\right) \right).
\end{align}
Using the series expansion of the hypergeometric function, the coefficient extraction is immediate, giving
\begin{align}
    S_{n,k}
    &= \frac{(m)_n}{n!} (-1)^k \binom{n}{k} \frac{(-n - (m-1)/2)_k}{((m+1)/2)_k}.
\end{align}
Finally, using the elementary identity for Pochhammer symbols \( (-a)_k = (-1)^k (a-k+1)_k \), we obtain
\begin{equation}
    S_{n,k}
    = \binom{n+m-1}{n} \binom{n}{k}
    \frac{(n-k + (m+1)/2)_k}{((m+1)/2)_k}\;.
\end{equation}
  
Multiplying by the prefactor $k!^2$, we finally obtain the desired result. Precisely, we have
\begin{equation}
     \sum_{R \in \Phi_m^n}  g_{n-k}(\ketbra{R}{R}) =  k!^2 S_{n,k} = k!^2 \binom{m+n-1}{n} \binom{n}{k} \frac{(n-k + (m+1)/2)_k}{((m+1)/2)_k} = |\Phi_m^n| \frac{n!k! }{ (n-k)!} \frac{(n-k + (m+1)/2)_k}{((m+1)/2)_k}
\end{equation}
where we used the shorthand  $|\Phi_m^n| = \binom{m+n-1}{n} $. This concludes the proof.

\end{proof}
\section{Second moment expressions for passive linear optics transformations}\label{app:sec_moment_proof}

In this section, we provide a proof for the \autoref{prop:sec_moment} which we recall here for completeness.

\begin{proposition}\label{prop:sec_moment_app}
    Consider a particle-number preserving bosonic observable $O \in \mathcal{W}$ and its  expectation value $f_U(\rho,O)$ of the form in \autoref{eq:linear_form} with respect to an  initial state $\rho$ on $m$ modes evolved under passive linear optics transformations. Further assume that either the initial state or the observable acts exactly on $n$ photons.  Then, the second moment over  Haar random interferometers $U \sim U(m)$ of $f_U(\rho,O)$ is given by  

    \begin{equation}\label{eq:sec_m_exp_app}
        \e[U \sim U(m)]{f_U(\rho,O)^2} =  \sum_{k=0}^n \frac{1}{d_{k}^{(n)}}\|P_k^{(n)}(\rho)\|_2^2 \|P_k^{(n)}(O)\|_2^2,
    \end{equation}
    where $P_k^{(n)}$ is a projection map onto the isotypic component $\lambda_k^{(n)}$ given in  \autoref{eq:irrep_diagram} and $d_{k}^{(n)}$ is its dimension given in \autoref{eq:irrep_dim_app}. 
\end{proposition}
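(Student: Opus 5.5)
The plan is to vectorize and twirl. First, reduce to the $n$-photon sector. Suppose, say, $\rho$ acts exactly on $n$ photons, so $\rho = P_n\rho P_n$. Since $\varphi(U)$ is block diagonal by \autoref{eq:phi_n}, we have
\[
f_U(\rho,O)=\Tr[\rho^{(n)}\varphi_n(U)^\dagger O^{(n)}\varphi_n(U)]=\dbra{O^{(n)}}\omega_n(U)\dket{\rho^{(n)}},
\]
with $O^{(n)}=P_nOP_n\in W_n$; the symmetric case where $O$ acts on $n$ photons is identical. Because $\rho$ and $O$ are Hermitian, $f_U$ is real, so
\[
f_U^2 = f_U\overline{f_U} = \dbra{O^{(n)}}\otimes\dbra{\overline{O^{(n)}}}\,\bigl(\omega_n(U)\otimes\overline{\omega_n(U)}\bigr)\,\dket{\rho^{(n)}}\otimes\dket{\overline{\rho^{(n)}}}.
\]
Taking the Haar average reduces the problem to the twirl $\int \omega_n(U)\otimes\overline{\omega_n(U)}\,dU$, which is the orthogonal projector onto the invariant subspace of $\omega_n\otimes\overline{\omega_n}$ acting on $W_n\otimes\overline{W_n}$.

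Second, identify that invariant subspace using \autoref{lemma:irrep_decomp}. Write $W_n=\bigoplus_k\lambda_k^{(n)}$, which is multiplicity free with pairwise inequivalent summands. Then
\[
\mathrm{Hom}_{U(m)}(\mathbb{C},W_n\otimes\overline{W_n})\cong\mathrm{End}_{U(m)}(W_n).
\]
By Schur's lemma this space is spanned by the $n+1$ projectors onto the $\lambda_k^{(n)}$. In vectorized form, the normalized invariant vectors are $\dket{\Phi_k}=d_k^{-1/2}\sum_i \dket{e_i^{(k)}}\otimes\dket{\overline{e_i^{(k)}}}$, where $\{e^{(k)}_i\}$ is an orthonormal basis of $\lambda_k^{(n)}$ (orthonormal for the Hilbert--Schmidt inner product, under which $\omega_n$ is unitary). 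These vectors are mutually orthogonal because the $\lambda_k^{(n)}$ are orthogonal, so the twirl equals $\sum_k \dketbra{\Phi_k}{\Phi_k}$. Equivalently, one can apply \autoref{th:G_twirl} to the representation $\omega_n$ and use the fact that $\lambda_k\otimes\overline{\lambda_{k'}}$ contains the trivial representation only when $k=k'$; this needs the inequivalence from \autoref{lemma:irrep_decomp}.

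Third, evaluate the overlaps:
\[
\dbra{O}\otimes\dbra{\overline O}\,\dket{\Phi_k}=d_k^{-1/2}\sum_i|\langle O,e_i^{(k)}\rangle|^2=d_k^{-1/2}\|P_k^{(n)}(O)\|_2^2,
\]
and similarly for $\rho$. Multiplying the two and summing over $k$ gives \autoref{eq:sec_m_exp_app}, with $d_k^{(n)}$ given by \autoref{eq:irrep_dim_app}.

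The main point needing care is the claim that the commutant of $\omega_n$ is exactly spanned by the irrep projectors, so that there are no cross terms between different $k$. This follows from multiplicity-freeness and inequivalence; if those failed, off-diagonal intertwiners would contribute. A secondary subtlety is complex conjugation: $\overline{\lambda_k}\cong\lambda_k^*$. I would use the pairing $\lambda_k\otimes\lambda_k^*\ni$ invariant, taken with respect to the conjugate-linear identification, rather than claiming self-duality. Because $O$ and $\rho$ are Hermitian and the $P_k^{(n)}$ commute with the adjoint $X\mapsto X^\dagger$ (the maps $L$ and $R$ commute with $\dagger$, so the irreps are $\dagger$-closed), working with $f_U^2$ directly is justified.
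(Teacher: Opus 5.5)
Your proposal is correct and follows essentially the paper's route. You reduce to the $n$-photon sector, vectorize, take the Haar twirl, and use Schur's lemma with the multiplicity-free, pairwise inequivalent decomposition of \autoref{lemma:irrep_decomp} to remove cross terms and obtain $\|P_k^{(n)}(\rho)\|_2^2\,\|P_k^{(n)}(O)\|_2^2/d_k^{(n)}$. The only difference is presentational: you twirl $\omega_n\otimes\overline{\omega_n}$ on $W_n\otimes\overline{W_n}$ and identify its invariants with the commutant, whereas the paper applies \autoref{th:G_twirl} to the conjugation channel $X\mapsto\int\omega_n(U)X\omega_n(U)^\dagger\,dU$ on $\mathrm{End}(W_n)$. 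These are the same object under vectorization, and your handling of the complex conjugation is if anything cleaner than the paper's.
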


\begin{proof}
    We consider an observable $O$ acting  exactly on $n$ photons, i.e.\ $O\in W_n$. This same assumption can be transferred to the initial state without loss of generality.
    
    Then its expectation value with respect to an arbitrary initial state evolved using a linear optical circuit is given by 
    \begin{align}\label{eq:vectorized_f}
        f_U(\rho,O) &= \Tr[\rho \varphi(U)^\dagger O  \varphi(U)]\;.
    \end{align}

    Using the vectorization formalism, the above expectation value can be expressed in the following forms 
    \begin{align}
        f_U(\rho,O) =\vbra{O}  \omega(U)\vket{\rho} = \vbra{\rho}  \omega(U)^\dagger\vket{O}\;,
    \end{align}
    where we recall that $\omega(U)$ is the (vectorized) adjoint representation of the photonic homomorphism introduced in \autoref{eq:adjoint_rep} for $U\in U(m)$.

    Given that $O\in W_n$ and $\omega(U)$ decomposes block-diagonally in the Fock basis as detailed in \autoref{eq:decomp_omega_m}, only the contribution from the $n$-photon sector survive, i.e.
    \begin{align}\label{eq:vectorized_f_n}
         f_U(\rho,O) =\vbra{O}  \omega_n(U) \hat{P}^{(n)}\vket{\rho} = \vbra{\rho} \hat{P}^{(n)}  \omega_n(U)^\dagger\vket{O}\;,
    \end{align}
    where $\hat{P}^{(n)}$ is the orthogonal projector onto the Fock sector $\mathcal{F}_n$, acting on the vectorized space $W_n = \mathcal{F}_n \otimes \mathcal{F}_n^*$. In particular, by considering the projector onto the Fock sector $\mathcal{F}_n$ denoted by $\Pi^{(n)}= \sum_{R \in \Phi_{n,m} } \ketbra{R}{R}$, the vectorized projector $\hat{P}^{(n)}$ acting on $W_n$ can be expressed as follows
    \begin{equation}
        \hat{P}^{(n)} := \Pi^{(n)} \otimes \Pi^{(n)}\;.
    \end{equation}

    The second moment of $f_U(\rho,O)$ with respect to Haar random interferometers $U\sim U(m)$ is thus given by
    \begin{align}
        \mathbb{E}_{U\sim U(m)}[f_U(\rho,O)^2]&=   \mathbb{E}_{U\sim U(m)}[\vbra{O}  \omega_n(U)\hat{P}^{(n)}\vket{\rho} \vbra{\rho} \hat{P}^{(n)} \omega_n(U)^\dagger\vket{O}]\\
        &= \mathbb{E}_{U\sim U(m)}\left[\Tr[\vket{O}\vbra{O}  \omega_n(U)\hat{P}^{(n)}\vket{\rho} \vbra{\rho} \hat{P}^{(n)} \omega_n(U)^\dagger]\right]\;,
    \end{align}
    where in the first equality we used the vectorized forms of $f_U$ given in \autoref{eq:vectorized_f_n} and in the second equality we used the cyclicity of the trace.

    To further evaluate the average over Haar random interferometers, we invoke the $G$- twirl Theorem which we recall in \autoref{th:G_twirl} for the compact group $U(m)$ and its unitary representation $\omega_n$ acting on the $n$-particle operator space $W_n$. 

    In \autoref{lemma:irrep_decomp}, we established that the representation $\omega_n$ and subsequently the space $W_n$ decompose into multiplicity free irreps as
    \begin{equation}
        W_n \simeq \bigoplus_{k=0}^n \lambda_k^{(n)}\;.
    \end{equation}

    Applying the $G$-twirl \autoref{th:G_twirl} yields that the second moment is given by
    \begin{align}
         \mathbb{E}_{U\sim U(m)}[f_U(\rho,O)^2] &=  \Tr[\vket{O}\vbra{O} \int_{U \sim U(m)} \omega_n(U) \hat{P}^{(n)}\vket{\rho}\vbra{\rho}\hat{P}^{(n)} \omega_n(U)^\dagger]\\
         &= \sum_{k=0}^n \frac{\Tr[\hat{P}_k^{(n)}\vket{O}\vbra{O} ] \Tr[\hat{P}_k^{(n)}\vket{\rho}\vbra{\rho} ]}{d_k^{(n)}} \;,
    \end{align}
where $\hat{P}_k^{(n)}$ is the vectorized projection map onto the irreducible component $\lambda_k^{(n)}$. Precisely, by considering the projection operator $\Pi_k^{(n)}$ acting on the operator space $W_n$, the expression of $\hat{P}_k^{(n)}$ is given by 
\begin{equation}
    \hat{P}_k^{(n)}:= \Pi_k^{(n)} \otimes (\Pi_k^{(n)})^*\;.
\end{equation}

By plugging this definition back inside the trace, we obtain
\begin{align}
    \Tr[\hat{P}_k^{(n)}\vket{O}\vbra{O} ]  &= \vbra{O} \Pi_k^{(n)} \otimes (\Pi_k^{(n)})^* \vket{O}\\
    &= \Tr[O^\dagger \Pi_k^{(n)} O \Pi_k^{(n)} ]\\
    &= \Tr[O_k^{(n) \dagger} O_k^{(n)}]\;,
\end{align}
where $O_k^{(n)}$ denotes the component of the the observable O supported on the irreducible subspace $\lambda_k^{(n)}$, i.e. ${O_k^{(n)}:= \Pi_k^{(n)} O \Pi_k^{(n)}}$.

\end{proof}

\section{Anti-concentration of boson sampling}\label{sec:anticoncentrationFockBS}

\subsection{Anti-concentration in hardness arguments}\label{app:anticon_hardness}

In this section, we review the role of anticoncentration in complexity-theoretic arguments, particularly for hardness of boson sampling in the saturated regime, and clarify the distinction between forms of anticoncentration.

\paragraph{Sampling-to-counting reduction and error structure.}

The hardness of boson sampling is based on a reduction from approximate sampling to average-case estimation of output probabilities. Fix an interferometer $U \in U(m)$ and assume there exists an efficient classical algorithm $\mathcal{A}$ that samples from a distribution $q_U$ that is $\varepsilon$-close in total variation distance to the target distribution $p_U$, i.e.,
\begin{equation}
    \|q_U - p_U\|_1 \leq \varepsilon.
\end{equation}
Using Stockmeyer's approximate counting algorithm, one can then estimate $q_U(\s)$ for any outcome $S \in \Phi_m^n$ by a value $\tilde{q}_U(S)$ satisfying
\begin{equation}
    |\tilde{q}_U(S) - q_U(S)| \leq \frac{1}{\mathrm{poly}(m,n)}\, q_U(S) \leq \frac{1}{\mathrm{poly}(m,n)} p_U(S) + \frac{1}{\mathrm{poly}(m,n)}|q_U(S)-p_U(S)|.
\end{equation}

The deviation between the target output probability and the estimated one through Stockmeyer's algorithm can be bounded  as

\begin{align}
      |\tilde{q}_U(S) - p_U(S)| &\leq  |\tilde{q}_U(S) - q_U(S)| +  |q_U(S) - p_U(S)|\\
      &\leq \frac{1}{\mathrm{poly}(m,n)} p_U(S) +  \left(1+ \frac{1}{\mathrm{poly}(m,n)}\right) |q_U(S)-p_U(S)|.
\end{align}

Combining the above error with 
 Markov’s inequality yields that, with probability at least $1-\delta$ over $(U,S)$,
\begin{align}\label{eq:stockmeyer_error_app}
    |\tilde{q}_U(S) - p_U(S)|
    \leq \frac{1}{\mathrm{poly}(m,n)}\, p_U(S)
    + \frac{2\varepsilon}{\delta |\Phi_m^n|}\left(1 + \frac{1}{\mathrm{poly}(m,n)}\right).
\end{align}
Thus, the reduction produces an estimate combining both multiplicative and additive errors. Establishing hardness of estimating output probabilities within this error threshold would rule out the existence of an efficient classical sampler $\mathcal{A}$. The key remaining step is therefore to control the additive contribution relative to $p_U(S)$.

\paragraph{Anticoncentration.}

To convert the error bound in \autoref{eq:stockmeyer_error_app} into a purely multiplicative approximation, one requires that typical output probabilities are not too small relative to their mean value:
\begin{equation}\label{eq:weak_ac}
    \Pr_{\substack{U \sim U(m)\\ S \sim \Phi_m^n}}
    \left[
    p_U(S) \geq \frac{\alpha}{|\Phi_m^n|}
    \right]
    \geq \gamma(\alpha),
\end{equation}
for some constants $\alpha, \gamma(\alpha)> 0$. 
The above formulation implies that for almost all pairs $(U,S)$ (specifically a constant fraction), the output probabilities are at least a constant fraction of their mean value $1/|\Phi_m^n|$. 

Under this condition, the additive term in the error bound in \autoref{eq:stockmeyer_error_app} becomes of the same order as $p_U(S)$
and can therefore be absorbed into the multiplicative contribution. More precisely, with probability $\gamma(\alpha)$ we get the upper bound 
\begin{equation}
    \frac{p_U(S)}{\alpha} \geq \frac{1}{|\Phi_m^n|}.
\end{equation}
Set 
\begin{equation}\label{eq:defEpsilonConc}
    \epsilon(\alpha, \delta) = \frac{1}{\mathrm{poly}(m,n)}
    + \frac{2\varepsilon}{\alpha\delta}\left(1 + \frac{1}{\mathrm{poly}(m,n)}\right), 
\end{equation}
then, combining the above in \autoref{eq:stockmeyer_error_app} yields the multiplicative error bound
\begin{equation}
    \Pr_{\substack{U \sim U(m)\\ S \sim \Phi_m^n}}
    \left[ |\tilde{q}_U(S) - p_U(S)|  \leq \epsilon(\alpha, \delta) p_U(S)\right]
    \geq \gamma(\alpha) - \delta.
\end{equation}

Given that $\delta$ is by definition also of the order $\delta \sim 1/ \rm poly(n)$, the estimate of the target output probability becomes a purely multiplicative approximation for almost all inputs $U$ and output configurations $S$.

While this condition guarantees that a non-negligible (constant) fraction of outcomes have probabilities of the correct scale, it does not suffice to eliminate the additive term with arbitrarily high probability. Instead, the conversion to multiplicative error only holds on this subset of \emph{good} instances, yielding a multiplicative approximation with constant probability rather than with overwhelming probability. 
This form of anti-concentration still supports a reduction to multiplicative estimation, albeit in a weaker sense than what considered for hardness proofs of boson sampling \cite{aaronson_computational_2011,bouland2025exponentialimprovementsaveragecasehardness}. In this setting, to rule out an efficient classical sampler, one must establish multiplicative hardness restricted for a larger fraction of outputs, namely a polynomially large fraction of the outcomes. This corresponds to a strictly stronger requirement on the underlying hardness assumption, as it demands that multiplicative estimation remains intractable even when restricted to a comparatively smaller fraction of instances. More precisely, the hardness proof demand the following form of anti-concentration,
\begin{equation}
    \Pr_{\substack{U \sim U(m)\\ S \sim \Phi_m^n}}
    \left[
    p_U(S) \geq \frac{1}{\text{poly}(n)|\Phi_m^n|}
    \right]
    \geq 1-\frac{1}{\text{poly}(n)},
\end{equation}
which we do not tackle in this work. This form of anti-concentration is essential for establishing approximate sampling hardness based on high-probability multiplicative hardness of probability estimation in the context of boson sampling. However, it has been so far only conjectured \cite{aaronson_computational_2011,nezami2021permanentrandommatricesrepresentation}. 
It is nonetheless important to stress that other sampling schemes, whose computational hardness proof rely on similar properties, instead conjecture anti-concentration, as defined in \autoref{eq:weak_ac} and proved in \autoref{cor:ac_app}
\cite{oszmaniec_fermion_2022,bremner_averagecase_2016,Hangleiter_2018}.

In the collision-free regime, collision-free states dominate the output distributions. Thus, it is sufficient to consider uniformly sampled collision-free configurations in the above analysis. Moreover, the permutation invariance of such states along with the Haar invariance preserved under the photonic homomorphism action, we get that 
\begin{equation}
\Pr_{\substack{U \sim U(m)\\ S \sim \Omega_m^n}}[S] = \Pr_{U \sim U(m)}[S_{\text{ref}}],
\end{equation}
 where $\Omega_m^n$ denotes the set of collision-free configurations and $S_{\text{ref}}$ is a reference state..
 In this regime, the multiplicative hardness conjecture reduces to the multiplicative hardness of estimating permanents of Gaussian matrices with iid entries. However in the saturated regimes, the conjecture is rather formalized for permanents of submatrices of Haar random unitaries over $U(m)$ of size $n \times n$ with repeated rows and columns. This problem is dubbed by $|\rm SUPER|^2_\times$ in \cite{bouland_complexitytheoretic_2023}, which we recall here, adapted to our settings, for completeness.

 \begin{conjecture}[|$\text{SUPER}|_\times$, \cite{bouland_complexitytheoretic_2023}]\label{cj:multiplicative}
    The following family of problems, denoted |$\text{SUPER}|_\times$, is \#P-hard: given $U \sim U(m)$
as above, output $z \in \mathbb{C}$ such that 
\begin{equation}
    \left|z- |p_U(S)|^2\right| \leq \epsilon(\alpha, \delta) p_U(S) 
\end{equation}
with probability $\geq \gamma(\alpha) - \delta$ over the choice of $U$ and $S$, where $\epsilon$ is defined in \autoref{eq:defEpsilonConc}.
 \end{conjecture}

The significance of the above conjecture, together with a formal proof of anti-concentration in this regime, lies in providing strong evidence toward closing the robustness gap in complexity-theoretic hardness arguments for the saturated regime. In \cite{bouland_complexitytheoretic_2023}, the authors develop alternative techniques to establish hardness of sampling when collisions are prevalent and standard Gaussian approximations break down. However, as in other sampling hardness results, their proof still exhibits a robustness gap, stemming from the mismatch between the additive error achievable for average-case probability estimation (this problem is denoted by $|\text{SUPER}|_\pm$ in \cite{bouland_complexitytheoretic_2023}) and that required for hardness. Establishing the multiplicative hardness conjecture formulated in \autoref{cj:multiplicative}, together with anti-concentration, would bridge this gap.

\subsection{Normalized average outcome collision probability}\label{sec:P2}

The weak form of anti-concentration defined in \autoref{eq:weak_ac} is usually diagnosed through second moments of the output distribution, and in particular through the normalized average outcome collision probability
\begin{equation}\label{eq:P2}
    P_2(m,n) = |\Phi_m^n| \sum_{S \in \Phi_m^n} \mathbb{E}_{U \sim U(m)}[p_U(S)^2].
\end{equation}
Computing $P_2(m,n)$ via second-moment methods allows one to establish lower bounds of the form in \autoref{eq:weak_ac} through inequalities such as Paley–Zygmund. Indeed, Paley–Zygmund yields 
\begin{equation}\label{eq:PZ_app}
     \Pr_{\substack{U \sim U(m)\\ S \sim \Phi_m^n}}
    \left[
    p_U(S) \geq \frac{\alpha}{|\Phi_m^n|}
    \right] \geq \frac{(1-\alpha)^2}{P_2(m,n)},
\end{equation}
as detailed in the proof of \autoref{cor:ac_app}. 

This expression of $P_2(m,n)$ as the ratio of moments explains its lengthy denomination; normalized average outcome collision probability. Specifically, we the outcome collision probability refers to the probability that two independent samples coincide, averaged over the outcome configurations $S \sim \Phi_m^n$ and random interferometer $U \sim U(m)$. The normalization arises from dividing by the square of the first moment, ensuring that the resulting quantity is bounded by one.

Beyond its role in diagnosing anti-concentration, the average outcome collision probability arises naturally in linear cross entropy benchmarking. Specifically, linear cross-entropy benchmarking is a method for assessing the fidelity of a sampling device by comparing experimentally observed output probabilities with the ideal distribution \cite{boixo_characterizing_2018}.In the ideal, noise-free case,  linear cross-entropy benchmarking score is given by
\begin{equation}
    F_{\rm XEB} = |\Phi_m^n| \sum_{S \in \Phi_m^n} p_U(S) \tilde{p}_U(S) -1\;,
\end{equation}
where $p_U(S)$ is the output probability of the ideal distribution and  $\tilde{p}_U(S)$ is the probability of the potentially noisy experiment. Therefore, the expected value of the ideal cross-entropy over all possible
unitaries is therefore expressed in terms of the normalized average outcome collision probability, i.e.
\begin{equation}
    \mathbb{E}_{U \sim U(m)}[F_{\rm XEB}^{\rm ideal}] = |\Phi_m^n| \sum_{S \in \Phi_m^n} \mathbb{E}_{U \sim U(m)}[p_U(S)^2] -1 := P_2(m,n)-1\;.
\end{equation}

In the collision-free regime, where $m$ scales at least quadratically in $n$, the dominance of collision-free output configurations provide a key simplification for analysing second moments and thus establishing anti-concentration via the computation of the normalized average outcome collision probability. Specifically, in this setting one can only consider uniformly sampling from collision-free states replace the mean over all the Fock space by the mean over collision-free configurations in the anti-concentration definition given in \autoref{eq:PZ_app}. Consequently, a rough approximation, which has been used in \cite{Ehrenberg_2025}, of the normalized average collision outcome probability yields 
\begin{equation}
     P_2(m,n) \approx |\Omega_m^n| \sum_{S \in \Omega_m^n} \mathbb{E}_{U \sim U(m)}[p_U(S)^2],
\end{equation}
where $ \Omega_m^n := \{S \in \Phi_m^n \,:\, s_i \in \{0,1\}, \;\forall 1\leq i \leq m\}$ denote the set of collision-free output configurations. Indeed,  approximation error occurring from restricting the sum to collision-free states converges to zero thanks to the bosonic birthday paradox \cite{aaronson_computational_2011,arkhipov_bosonic_2012}.

Moreover, by exploiting the Haar invariance under mode permutation (which still holds after the application of the photonic homomorphism), the sum over collision-free states reduces to
\begin{equation}\label{eq:simple_P2}
     P_2(m,n) \approx |\Omega_m^n|^2 \mathbb{E}_{U \sim U(m)}[p_U(S_{\rm cf})^2],
\end{equation}
for any reference collision free configuration $S_{\rm cf} \in \Omega_m^n$.

Beyond this regime, however, this approximation breaks down, as collision events become prevalent. To formalize this statement, we analyse the ratio $|\Omega_m^n|/|\Phi_m^n|$ in different regimes of $m$ as a function of $n$.

We begin by recalling that the cardinality of collision-free states  is given by
\begin{equation}
    |\Omega_m^n| = \binom{m}{n},
\end{equation}
while the full Fock space has cardinality is 
\begin{equation}
    |\Phi_m^n| = \binom{m+n-1}{n}.
\end{equation}
Hence, the fraction of collision-free states is
\begin{equation}\label{eq:ratio_cf_full}
    \frac{|\Omega_m^n|}{|\Phi_m^n|}
    = \frac{\binom{m}{n}}{\binom{m+n-1}{n}}
    = \prod_{j=0}^{n-1}\frac{m-j}{m+j}.
\end{equation}
This quantity also coincides with the probability of observing no collisions in the bosonic birthday paradox.
In the asymptotic limit, we obtain
\begin{equation}
    \log \frac{|\Omega_m^n|}{|\Phi_m^n|} = \sum_{j=0}^{n-1} \log\left(\frac{m-j}{m+j}\right) = \sum_{j=0}^{n-1} \left(-\frac{2j}{m}+ \mathcal{O}\left(\frac{j^3}{m^3}\right)\right) = -\frac{n(n-1)}{m} +\mathcal{O}\left(\frac{n^4}{m^3}\right)
\end{equation}

We now specialize to the scaling  $m = c n^\beta$ with $c \geq 1$.

\paragraph*{Collision-free regime \texorpdfstring{(\(\beta>2\))}{(beta>2)}.}
In this regime, one has
\begin{equation}\label{eq:collision_free}
   \frac{|\Omega_m^n|}{|\Phi_m^n|}
    = 1 - \frac{n(n-1)}{m} + O\!\left(\frac{n^2}{m^2}\right)
    = 1 - \frac{1}{c}n^{2-\beta} + o\!\left(n^{2-\beta}\right),
\end{equation}
so the collision-free sector asymptotically exhausts the full Fock space.

\paragraph*{Quadratic regime \texorpdfstring{(\(\beta=2\))}{(beta=2)}.}
At the bosonic birthday paradox threshold, one finds
\begin{equation}\label{eq:quadratic}
    \frac{|\Omega_m^n|}{|\Phi_m^n|}
    \longrightarrow e^{-1/c}.
\end{equation}
Thus, collision-free configurations retain a finite asymptotic weight, but no longer dominate the sample space.

\paragraph*{Intermediate saturated regime \texorpdfstring{(\(1<\beta<2\))}{(1<beta<2)}.}
In this regime,
\begin{equation}\label{eq:intermediate1}
    \log \frac{|\Omega_m^n|}{|\Phi_m^n|}
    = -\frac{n(n-1)}{m} + O\!\left(\frac{n^4}{m^3}\right)
    = -\frac{1}{c}n^{2-\beta} + o\!\left(n^{2-\beta}\right),
\end{equation}
and therefore
\begin{equation}\label{eq:intermediate2}
    \frac{|\Omega_m^n|}{|\Phi_m^n|}
    = \exp\!\left(-\frac{1}{c}n^{2-\beta} + o\!\left(n^{2-\beta}\right)\right).
\end{equation}
Hence the fraction of collision-free states is already asymptotically negligible.

\paragraph*{Linear regime \texorpdfstring{(\(\beta=1\), \(m=cn\))}{(beta=1, m=cn)}.}
Using Stirling's formula, one obtains
\begin{equation}\label{eq:linear}
    \frac{|\Omega_m^n|}{|\Phi_m^n|}
    \sim
    \sqrt{\frac{c+1}{c-1}}\,
    \exp\!\Big(
    n\big[
    2c\log c -(c-1)\log(c-1) -(c+1)\log(c+1)
    \big]
    \Big),
\end{equation}
for \(c>1\). Since
\begin{equation}
    2c\log c -(c-1)\log(c-1) -(c+1)\log(c+1) < 0,
\end{equation}
the ratio decays exponentially in \(n\). In particular, in the linear regime, restricting to collision-free outputs discards an exponentially large portion of the sample space.

Finally, the computation of the single expectation value for the reference collision-free state appearing in \autoref{eq:simple_P2} further simplifies under the \emph{hiding} property, which is expected to hold in collision-free regimes.
Informally, this property states that a sufficiently small submatrix of a Haar-random interferometer $U\in U(m)$ behaves approximately as a matrix with independent complex Gaussian entries \cite{aaronson_computational_2011}. Since output amplitudes of boson sampling are given by permanents of such submatrices (See \autoref{eq:permanent}), this allows one to replace analysis over Haar-random unitaries by analysis over Gaussian random matrices, for which powerful tools from random matrix theory are available. Concretely, we have 
\begin{equation}
    \mathbb{E}_{U \sim U(m)}[p_U(S_{\rm cf})^2] \simeq \frac{1}{m^{2n}} \mathbb{E}_{X \sim \mathcal{G}^{n \times n}} [|\rm Per(X)|^2]
\end{equation}
where $\mathcal{G}^{n \times n}$ denotes the normalized Gaussian distribution over $n \times n$ matrices. This follows from establishing that this gaussian distribution in close in total variation distance to the distribution induced by the Haar measure over $U(m)$ on submatrices of size $n \times n$. 
In Aaronson and Arkhipov’s seminal work, this property has been proven in the regime where $m = \Omega(n^5)$ and conjectured to be correct in the quadratic regime $m \sim n^2$. Follow up works have established the hiding property in regimes such as  $n^3 = O(m)$ \cite{Leverrier_2018} and $n^2 = O(m)$ for random orthogonal matrices \cite{jiang_distances_2019}.

This analysis yields that in the regime where collision free outcomes dominate and the hiding property holds that the output distribution of a Fock state boson sampler is anti-concentrated over a polynomially vanishing fraction of instances $U \in U(m)$, as shown in \cite{aaronson_computational_2011}.

In the more experimentally relevant linear regime $m = O(n)$, the hiding property breaks down due to the presence of photon collisions. Consequently, one must treat different classes of outcomes separately, and the analysis becomes significantly more involved. While recent works have established hardness of sampling in this regime using alternative techniques \cite{bouland_complexitytheoretic_2023}, anti-concentration properties have so far been investigated primarily through numerical evidence.

\subsection{Proof of anti-concentration beyond the collision-free regime}\label{app:P2}

In this section, our goal is to establish \autoref{cj:scalingP}, proving anti-concentration of boson sampling in the linear regime.

To this end, we first derive a series of results on the average outcome collision probability, which will serve as the main ingredient in obtaining anti-concentration via an application of the Paley–Zygmund inequality, as outlined in \autoref{sec:ac_back}.

The section is organized as follows. We begin by deriving closed-form expressions for the outcome collision probability. Beyond their role in proving anti-concentration, these expressions are of independent interest, as this quantity also appears in linear cross-entropy benchmarking. We present these equivalent formulations both in the order in which they arise in our derivation and because each provides a distinct perspective that can be used to evaluate the quantity, allowing practitioners—particularly in linear cross-entropy benchmarking—to adopt the most convenient representation for their purposes.

In \autoref{th:P2_hypergeometric}, we express the outcome collision probability as a sum of hypergeometric functions, where each term corresponds to the contribution of a given irreducible representation. In \autoref{pr:TkExpBeta}, we show that these contributions can be identified with moments of the Beta distribution. Moreover, we  derive an integral representation of the full outcome collision probability, which enables us to analyse its scaling across different regimes of modes and photons.
Building on these results, we finally establish the asymptotic scaling of the normalized average outcome collision probability in \autoref{th:asymptotoic_P2_app} and hence the
desired anti-concentration statement in \autoref{cor:ac_app}.

\begin{theorem}[Average outcome collision probability]\label{th:P2_hypergeometric}
    The normalized average outcome-collision probability for linear optical
circuits can be expressed as
 \begin{equation}
     P_2(m,n) =  \sum_{k=0}^n \frac{(m+k-1)_{k}}{(m+n)_k} {}_2 F_1(-k,n-k+1;2-m-2k;-1)\;.
 \end{equation}
 The shorthand $(x)_p:= x (x+1) \dots (x+p-1)$ denotes the Pochhammer symbol and ${}_2 F_1$ is the hypergeometric function defined as 
 \begin{equation}
     {}_2 F_1(-k,a;b;z) = \sum_{p=0}^k (-1)^p \binom{k}{p} \frac{(b)_p}{(c)_p} z^p\;,
 \end{equation}
 for any positive integer $k$.
\end{theorem}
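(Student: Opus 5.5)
We will compute $P_2(m,n)$ by applying \autoref{prop:sec_moment_app} termwise with $\rho=\ketbra{S_{\rm cf}}{S_{\rm cf}}$ and $O=\ketbra{S}{S}$. This gives
\begin{equation}
P_2(m,n)=|\Phi_m^n|\sum_{k=0}^n \frac{1}{d_k^{(n)}}\,\|P_k^{(n)}(\ketbra{S_{\rm cf}}{S_{\rm cf}})\|_2^2\sum_{S\in\Phi_m^n}\|P_k^{(n)}(\ketbra{S}{S})\|_2^2 .
\end{equation}
Both irrep norms come from \autoref{th:recursive_irrep_norm_app}. The first factor only needs $g_l(\ketbra{S_{\rm cf}}{S_{\rm cf}})=((n-l)!)^2\binom{n}{l}$ from \autoref{lemma:puR_fock}. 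The second factor is linear in the $g_l$. We therefore sum over $S$ before anything else, using \autoref{prop:sum_gk}:
\begin{equation}
\sum_S g_l(\ketbra{S}{S})=|\Phi_m^n|\,\frac{n!(n-l)!}{l!}\,\frac{(l+(m+1)/2)_{n-l}}{((m+1)/2)_{n-l}} .
\end{equation}
The problem then reduces to evaluating two finite alternating sums in $l$ for each $k$, each weighted by $(-1)^{k-l}/\big((k-l)!\,(k+l+m-1)_{k-l}\big)$ and the prefactor $1/\alpha_{k,k+1,n}$.

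For the collision-free factor, substitute $g_l=n!\,(n-l)!/l!$. The alternating sum should then collapse to a single term. Our guess is that it is proportional to $\binom{n}{k}$ times a ratio of Pochhammer symbols, following the pattern of the identity of \cite{szablowski2026identitiesintegralsinvolvingpochhammer} used to invert the matrix $A$. We would verify this on small $k$ and then prove it with Chu--Vandermonde, writing the sum as a terminating ${}_2F_1$ at argument $1$. We expect the result to be $\|P_k^{(n)}(\ketbra{S_{\rm cf}}{S_{\rm cf}})\|_2^2 = d_k^{(n)}\times(\text{simple ratio})$, so that $1/d_k^{(n)}$ cancels, up to factorials, against the collision-free norm. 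This cancellation is what makes the $k$-th term tractable.

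The second factor is the main obstacle. After the substitution, the summand contains $(l+(m+1)/2)_{n-l}/((m+1)/2)_{n-l}$ together with $(k+l+m-1)_{k-l}^{-1}$, and we need to reshape this into a terminating ${}_2F_1(-k,n-k+1;2-m-2k;-1)$. First we reindex with $p=k-l$. Then we rewrite each factor as a Pochhammer symbol in $p$, using $(-a)_p=(-1)^p(a-p+1)_p$ to convert falling factorials. The half-integer shifts $(m+1)/2$ must be removed by the duplication formula $(x)_{2p}=4^p(x/2)_p((x+1)/2)_p$, since $(k+l+m-1)_{k-l}$ and the half-integer symbols should combine into $(2-m-2k)_p$-type terms. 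Where a direct match fails, we would apply a quadratic or Pfaff transformation to move to argument $-1$. As a sanity check, we would confirm the final coefficient $(m+k-1)_k/(m+n)_k$ and the ${}_2F_1$ against brute-force numerics for small $(m,n)$, such as $n=1,2$. For $n=1$ we must recover $P_2=2m/(m+1)$.

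Finally, we collect the terms. The factor $|\Phi_m^n|$ from \autoref{prop:sum_gk} cancels the normalization $|\Phi_m^n|$ in \autoref{eq:P2}, and we check that the remaining $\alpha_{k,k+1,n}^{-2}$ and factorial prefactors reduce to $(m+k-1)_k/(m+n)_k$. Summing over $k$ then gives the stated formula.
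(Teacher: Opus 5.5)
\textbf{There is a genuine gap: you have reversed the roles of the two alternating sums, so the central step of your plan fails.} Your setup matches the paper's. You apply the second-moment formula with $\rho=\ketbra{S_{\rm cf}}{S_{\rm cf}}$, use the irrep-norm formula and the Fock-state $g_l$, and sum over $S$ first via the Gegenbauer identity.

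The collision-free sum does not collapse to a single term. Reindex $j=k-l$ and use $k!/(k-j)!=(-1)^j(-k)_j$, $(n-k+j)!=(n-k)!\,(n-k+1)_j$ and $(m+2k-1-j)_j=(-1)^j(2-m-2k)_j$. The sum is then \emph{exactly}
\begin{equation*}
\sum_{l=0}^k \frac{(-1)^{k-l}}{(k-l)!\,(k+l+m-1)_{k-l}}\,\frac{n!\,(n-l)!}{l!}=\frac{n!\,(n-k)!}{k!}\;{}_2F_1(-k,n-k+1;2-m-2k;-1).
\end{equation*}
This is precisely the hypergeometric factor in the theorem, and it has no product form. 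For $k=2$ it equals $\frac{n!(n-2)!}{2}\cdot\frac{(m-n)^2+5m-3n+4}{(m+1)(m+2)}$. The numerator does not split into linear factors in $m,n$, because the quadratic part $(m-n)^2$ forces a linear part proportional to $m-n$. So the sum is not a Pochhammer ratio, and no Chu--Vandermonde evaluation can exist. Consequently $1/d_k^{(n)}$ cannot cancel against the collision-free norm.

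It is the factor averaged over all outputs that collapses. Set $a=\frac{m+1}{2}$, write $\frac{(a+l)_{n-l}}{(a)_{n-l}}=\frac{(a+n-l)_l}{(a)_l}$ and reindex $j=k-l$. The sum becomes $\frac{n!(n-k)!}{k!}\frac{(a+n-k)_k}{(a)_k}\,{}_3F_2(-k,\,n-k+1,\,1-a-k;\,2-m-2k,\,a+n-k;\,1)$. This series is balanced, so the Pfaff--Saalsch\"utz formula evaluates it to the product $\frac{n!(n-k)!}{k!}\,\frac{m-1}{m+2k-1}\,\frac{(m+n)_k}{(m+k-1)_k}$.

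This product is combined with $|\Phi_m^n|^2/\alpha_{k,k+1,n}^2$. Note that the two factors of $|\Phi_m^n|$ multiply rather than cancel. Together they absorb $d_k^{(n)}=\frac{2k+m-1}{m-1}\frac{(m-1)_k^2}{k!^2}$ and produce the prefactor $(m+k-1)_k/(m+n)_k$.

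Your plan to turn this averaged factor into ${}_2F_1(\cdot\,;-1)$ via the duplication formula or quadratic transformations aims at the wrong target. The factor is not proportional to that ${}_2F_1$: already at $k=1$ it equals $n!(n-1)!\frac{(m-1)(m+n)}{m(m+1)}$, whereas ${}_2F_1=\frac{m-n}{m}$. The missing ingredient is Pfaff--Saalsch\"utz applied to the averaged factor. The argument-$(-1)$ series is then simply read off, unevaluated, from the collision-free factor.
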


\begin{proof}

We begin by recalling the definition of the average outcome collision probability $P_2(m,n)$ introduced in \autoref{eq:P2},
\begin{equation}\label{eq:P2_app}
    P_2(m,n) = |\Phi_m^n| \sum_{R \in \Phi_m^n} \mathbb{E}_{U\sim U(m)} [p_U(R)^2]\;,
\end{equation}
where $p_U(R)$ denotes the probability of starting with the collision free state $\ket{R_{\rm cf}}= \ket{1,\dots,1,0,\dots,0}$ and measuring $R \in \Phi_m^n$ at the output of the interferometer given by
\begin{equation}\label{eq:prob}
    p_U(R) =  \Tr[\ketbra{R_{\rm cf}}{R_{\rm cf}} \varphi_n(U)^\dagger \ketbra{R}{R}\varphi_n(U)]\;.
\end{equation}

One can easily notice that the above expression of $p_U(R)$ is of the form in \autoref{eq:linear_form_n} for $\rho := \ketbra{R_{\rm cf}}{R_{\rm cf}}$ and $O := \ketbra{R}{R}$.

Consequently, by combining the general second moment from  \autoref{prop:sec_moment} and the reduced purities expression established in \autoref{th:recursive_irrep_norm}, the expectation value appearing in the expression of $P_2(m,n)$ in \autoref{eq:P2_app} can be expressed as 
\begin{align}
    \mathbb{E}_{U\sim U(m)} [p_U(R)^2] &= \sum_{k=0}^n \frac{\Tr[\left(P_k^{(n)}\left(\ketbra{R_{\rm cf}}{R_{\rm cf}}\right)\right)^2] \Tr[\left(P_k^{(n)}(\ketbra{R}{R})\right)^2]}{d_k^{(n)}}\\
    &=  \sum_{k=0}^n \frac{1}{ d_k^{(n)} \alpha_{k,k+1,n}^2} \left(\sum_{l=0}^k B_{k,l} g_l(\ketbra{R_{\rm cf}}{R_{\rm cf}})\right) \left(\sum_{l=0}^k B_{k,l} g_l(\ketbra{R}{R})\right)\;,\label{eq:pp1}
\end{align}
where we introduced the shorthand $B_{k,l}$
\begin{equation}
     B_{k,l} :=\frac{(-1)^{k-l}}{(k-l)! (k+l+ m-1)_{k-l}}\;,
\end{equation}
and recalling the expressions of $\alpha_{k,k+1,n}$ and $d_k^{(n)}$,   respectively, 
\begin{align}
     \alpha_{k,k+1,n} &= (n-k)! (m+2k)_{n-k}\\
      d_k^{(n)} &= \frac{2k+m-1}{m-1} \binom{k+m-2}{k}^2\;.
\end{align}

According to \autoref{lemma:puR_fock} and in particular \autoref{eq:g_cf}, we have 
\begin{equation}\label{eq:free_gl}
    g_l(\ketbra{R_{\rm cf}}{R_{\rm cf}}) = (n-l)!^2 \binom{n}{n-l} = (n-l)!^2 \frac{n!}{(n-l)! l!} = \frac{n! (n-l)!}{l!}\;.
\end{equation}

Consequently, by plugging in \autoref{eq:free_gl} in \autoref{eq:pp1} and further summing over $R \in \Phi_m^n$, we obtain the following expression for $P_2(m,n)$

\begin{align}
     P_2(m,n) &= |\Phi_m^n| \sum_{R \in \Phi_m^n} \mathbb{E}_{U\sim U(m)} [p_U(R)^2]\\
     &= |\Phi_m^n| \sum_{R \in \Phi_m^n} \sum_{k=0}^n \frac{1}{ d_k^{(n)} \alpha_{k,k+1,n}^2} \left(\sum_{l=0}^k B_{k,l} \frac{n! (n-l)!}{l!}\right) \left(\sum_{l=0}^k B_{k,l} g_l(\ketbra{R}{R})\right)\\
      &= |\Phi_m^n|  \sum_{k=0}^n \frac{1}{ d_k^{(n)} \alpha_{k,k+1,n}^2} \left(\sum_{l=0}^k B_{k,l} \frac{n! (n-l)!}{l!}\right) \left(\sum_{l=0}^k B_{k,l} \sum_{R \in \Phi_m^n} g_l(\ketbra{R}{R})\right)\;. \label{eq:P2_exp2}
\end{align}

In the above equation, we identify the term $\sum_{R \in \Phi_m^n} g_l(\ketbra{R}{R})$ which has been further evaluated in \autoref{prop:sum_gk}. 
Consequently, \autoref{eq:P2_exp2} becomes
\begin{equation}
    P_2(m,n) = |\Phi_m^n|^2 \sum_{k=0}^n \frac{1}{ d_k^{(n)} \alpha_{k,k+1,n}^2} \left(\sum_{l=0}^k B_{k,l} \frac{n! (n-l)!}{l!}\right) \left(\sum_{l=0}^k B_{k,l} S_{m,n,n-l}\right)  \;,
\end{equation}
where we introduced the shorthand 

\begin{equation}
     S_{m,n,n-l}  := \frac{n! (n-l)!}{l!} \frac{  (l + (m+1)/2)_{n-l}}{((m+1)/2)_{n-l}}\;.
\end{equation}

For convenience, we further introduce the shorthand
\begin{equation}\label{eq:T_k_def}
    T_k := \frac{|\Phi_m^n|^2 }{d_k^{(n)} ((n-k)! (m+2k)_{n-k})^2}  \left( \sum_{l=0}^k B_{k,l} \frac{n! (n-l)!}{l!}\right)\left(\sum_{l=0}^k B_{k,l} S_{m,n,n-l}\right)\;.
\end{equation}

We note that the term $\sum_{l=0}^k B_{k,l} \frac{n! (n-l)!}{l!}$ can be identified with a hypergeometric function as follows
\begin{align}\label{eq:cf_sum}
    \sum_{l=0}^k B_{k,l} \frac{n! (n-l)!}{l!} = n! \frac{(n-k)!}{k!} {}_2 F_1(-k,n-k+1;2-m-2k;-1)\;.
\end{align}

Now, we focus on further simplifying the sum $\sum_{l=0}^k B_{k,l} S_{m,n,n-l}$.
First , we introduce the change of variable $a=\frac{m+1}{2}$ and show the following identity
\begin{align}
    \frac{(a+l)_{n-l}}{(a)_{n-l}} = \frac{\Gamma(a+n)\Gamma(a)}{\Gamma(a+l)\Gamma(a+n-l)} = \frac{(a+n-l)_l}{(a)_l}\;.
\end{align}
Consequently, the expression of $S_{m,n,n-l}$ can be rewritten as 
\begin{align}
    S_{m,n,n-l} =\frac{n! (n-l)!}{l!} \frac{(a+n-l)_l}{(a)_l}\;.
\end{align}

By plugging this into the sum we obtain 
\begin{align}
    \sum_{l=0}^k B_{k,l} S_{m,n,n-l} &= n! \sum_{l=0}^k \frac{(-1)^{k-l}}{(k-l)! (k+l+ m-1)_{k-l}} \frac{ (n-l)!}{l!} \frac{(a+n-l)_l}{(a)_l}\\
    &= n! \sum_{j=0}^k \frac{(-1)^{j}}{(j)! (2k-j+ m-1)_{j}} \frac{ (n-k+j)!}{(k-j)!} \frac{(a+n-k+j)_{k-j}}{(a)_{k-j}}\\
    &=  n! (n-k)! \sum_{j=0}^k \frac{(-1)^{j}}{j! (k-j)!}  \frac{(n-k+1)_j(a+n-k+j)_{k-j}}{(2k-j+ m-1)_{j}(a)_{k-j}}\;.\label{eq:S_sum}
\end{align}

Using the following identities for $j\leq k$
\begin{align}
    (x)_{k} &= (x)_j (x+j)_{k-j}\\
    (x)_k &= (x)_{k-j} (x+k-j)_j\;,
\end{align}
and applying it to the numerator and denominator in the term $\frac{(a+n-k+j)_{k-j}}{(a)_{k-j}}$ yields
\begin{align}
    \frac{(a+n-k+j)_{k-j}}{(a)_{k-j}} = \frac{(a+n-k)_k}{(a+n-k)_j} \frac{(a+k-j)_j}{(a)_k} = \frac{(a+n-k)_k}{(a)_k} \frac{(a+k-j)_j}{(a+n-k)_j} = \frac{(a+n-k)_k}{(a)_k} \frac{(-1)^j (1-a-k)_j}{(a+n-k)_j}\;,
\end{align}
where we used the identity 
\begin{equation}\label{eq:pochhammer_neg}
    (x-j)_j = (-1)^j (1-x)_j\;.
\end{equation}

By plugging this back into \autoref{eq:S_sum}, we obtain
\begin{align}
    \sum_{l=0}^k B_{k,l} S_{m,n,n-l} &= n! (n-k)! \sum_{j=0}^k \frac{(-1)^{j}}{j! (k-j)!}  \frac{(n-k+1)_j}{(2k-j+ m-1)_{j}} \frac{(a+n-k)_k}{(a)_k} \frac{(-1)^j (1-a-k)_j}{(a+n-k)_j}\\
    &= n! (n-k)! \frac{(a+n-k)_k}{(a)_k} \sum_{j=0}^k \frac{1}{j! (k-j)!}  \frac{(n-k+1)_j}{(2k-j+ m-1)_{j}} \frac{(1-a-k)_j}{(a+n-k)_j}\\
    &= n! (n-k)! \frac{(a+n-k)_k}{(a)_k} \sum_{j=0}^k \frac{(-1)^j}{j! (k-j)!}  \frac{(n-k+1)_j}{(2-m-2k)_{j}} \frac{(1-a-k)_j}{(a+n-k)_j}\\
    &=\frac{n! (n-k)!}{k!} \frac{(a+n-k)_k}{(a)_k} \sum_{j=0}^k   \frac{(-k)_j(n-k+1)_j}{j! (2-m-2k)_{j}} \frac{(1-a-k)_j}{(a+n-k)_j}\;.
\end{align}

The above sum can be identified with the terminating hypergeometric series ${}_3F_2$
\begin{align}
    \sum_{j=0}^k   \frac{(-k)_j(n-k+1)_j}{j! (2-m-2k)_{j}} \frac{(1-a-k)_j}{(a+n-k)_j} &= {}_3 F_2\left(\begin{matrix}&-k, n-k+1,1-a-k\\&2-m-2k,a+n-k&\end{matrix};1\right)\;.
\end{align}
Here, we notice that the parameters of the hypergeometric function  obey the following relation 
\begin{equation}
    a+n-k = 1 + (n-k+1) + (1-a-k) - (2-m-2k) -k\;. 
\end{equation}

Consequently, we can invoke the Pfaﬀ--Saalschutz summation formula for a terminating ${}_3F_2$ \cite{nationalinstituteofstandardsandtechnology_nist_2010}
\begin{equation}
    {}_3 F_2\left(\begin{matrix}&-k, A,B\\&C,1+A+B-C-k&\end{matrix};1\right) = \frac{(C-A)_k (C-B)_k}{(C)_k (C-A-B)_k}\;,
\end{equation}
and obtain 
\begin{align}
    \sum_{l=0}^k B_{k,l} S_{m,n,n-l}  &= \frac{n! (n-k)!}{k!} \frac{(a+n-k)_k}{(a)_k} {}_3 F_2\left(\begin{matrix}&-k, n-k+1,1-a-k\\&2-m-2k,a+n-k&\end{matrix};1\right)\\
    &= \frac{n! (n-k)!}{k!} \frac{(a+n-k)_k}{(a)_k} \frac{(1-m-n-k)_k (2-a-k)_k}{(2-m-2k)_k (1-a-n)_k}\;.\label{eq:S_sum_1}
\end{align}

Using the identity given in \autoref{eq:pochhammer_neg}, we have
\begin{align}
    (1-m-n-k)_k &= (-1)^k (m+n)_k\\
    (2-a-k)_k &= (-1)^k (a-1)_k\\
    (2-m-2k)_k &= (-1)^k (m+k-1)_k\\
    (1-a-n)_k &= (-1)^k  (a+n-k)_k\;.
\end{align}

Consequently, \autoref{eq:S_sum_1} simplifies to 
\begin{align}
    \sum_{l=0}^k B_{k,l} S_{m,n,n-l} &= \frac{n! (n-k)!}{k!} \frac{(a+n-k)_k}{(a)_k} \frac{(1-m-n-k)_k (2-a-k)_k}{(2-m-2k)_k (1-a-n)_k}\\
    &= \frac{n! (n-k)!}{k!} \frac{(a+n-k)_k}{(a)_k} \frac{(m+n)_k (a-1)_k}{(m+k-1)_k (a+n-k)_k}\\
    &= \frac{n! (n-k)!}{k!} \frac{(a-1)_k}{(a)_k} \frac{(m+n)_k }{(m+k-1)_k }\\
    &= \frac{n! (n-k)!}{k!} \frac{m-1}{m+2k-1} \frac{(m+n)_k }{(m+k-1)_k} \;.\label{eq:sum2_simp}
\end{align}

Combining \autoref{eq:cf_sum} and \autoref{eq:sum2_simp}, the expression of $T_k$ simplifies to
\begin{align}
    T_k &= \frac{\binom{m+n-1}{n}^2 }{d_k^{(n)} ((n-k)! (m+2k)_{n-k})^2}  \left( \sum_{l=0}^k B_{k,l} \frac{ n!(n-l)!}{l!}\right)\left(\sum_{l=0}^k B_{k,l} S_{m,n,n-l}\right)\\
    &= \left( \frac{ (m)_n(n-k)!}{k! (n-k)! (m+2k)_{n-k}} \right)^2 \frac{1}{d_k^{(n)}} \frac{(m-1)}{(m+2k-1)} \frac{(m+n)_k }{(m+k-1)_{k}} {}_2 F_1(-k,n-k+1;2-m-2k;-1)\\
    &=  \left(\frac{(m)_n}{k!  (m+2k)_{n-k}} \right)^2  \frac{k!^2}{(m-1)_k^2} \left( \frac{(m-1)}{(m+2k-1)}\right)^2 \frac{(m+n)_k }{(m+k-1)_{k}} {}_2 F_1(-k,n-k+1;2-m-2k;-1)\\
    &=   \left(\frac{(m)_n}{(m)_{k-1}  (m+2k)_{n-k}  (m+k-1)_{k} (2k+m-1)} \right)^2 \frac{(m+k-1)_{k}}{(m+n)_k } {}_2 F_1(-k,n-k+1;2-m-2k;-1)\\
    &= \frac{(m+k-1)_{k}}{(m+n)_k} {}_2 F_1(-k,n-k+1;2-m-2k;-1)\;.\label{eq:Tk_final}
\end{align}

Summing over terms $T_k$ from $k=0$ to $n$ concludes the proof.
\end{proof}

In the following proposition, we show that the terms $T_k$ defined in \autoref{eq:T_k_def} whose sum corresponds to the average collision outcome probability can be expressed as moments of Beta distribution. This result allows us to obtain a compact integral form of the average outcome collision probability.

\begin{proposition}[Integral form of average outcome collision probability]\label{pr:TkExpBeta}
For $0\leq k \leq n $, the terms $T_k$ defined in \autoref{eq:Tk_final} can be expressed as moments of the Beta distribution. In particular, we have
    \begin{equation}
        T_k = \e[X \sim \text{Beta}(n-k+1, m+k-1)]{(1-2X_n)^k}\;,
    \end{equation}
    where $\text{Beta}(\alpha, \beta)$ is the Beta distribution with support over the interval $[0, 1]$ whose probability density function is given by
    \begin{equation}
        f(\alpha, \beta) = \frac{x^{\alpha - 1}(1-x)^{\beta-1}}{B(\alpha, \beta)}\;,
    \end{equation}
    with $B(\alpha, \beta) = \frac{\Gamma(\alpha)\Gamma(\beta)}{\Gamma(\alpha+\beta)}$  the \emph{Beta} function.

    Moreover, the average outcome collision probability admits the integral form 
    \begin{equation}\label{eq:P2_int_app}
                P_2(m, n) = (n+m-1)\int_0^{\pi/2} d\theta \cos(\theta)^{n+m-2} \sin((n+1)\theta)\;.
    \end{equation}
\end{proposition}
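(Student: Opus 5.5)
The plan has two parts. First I prove the Beta-moment identity for each $T_k$ by hypergeometric transformations of terminating series. Then I sum over $k$ and change variables to obtain the integral form.

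\textbf{Beta-moment form of $T_k$.} Start from the right-hand side. Expand $(1-2X)^k=\sum_{p=0}^k\binom{k}{p}(-2)^pX^p$ and use the Beta moments $\mathbb{E}[X^p]=(\alpha)_p/(\alpha+\beta)_p$ with $\alpha=n-k+1$ and $\alpha+\beta=n+m$. This gives
\begin{equation*}
\mathbb{E}\big[(1-2X)^k\big]={}_2F_1(-k,n-k+1;n+m;2).
\end{equation*}
It remains to show this equals the expression \autoref{eq:Tk_final}, namely $\frac{(m+k-1)_k}{(m+n)_k}\,{}_2F_1(-k,n-k+1;2-m-2k;-1)$. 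I would use three standard identities for terminating series:
\begin{enumerate}
\item Apply the Pfaff transformation at $z=2$. Here $z/(z-1)=2$ and $(1-z)^k=(-1)^k$, so the series becomes $(-1)^k\,{}_2F_1(-k,m+k-1;n+m;2)$.
\item Reverse the order of summation $j\mapsto k-j$. For a terminating series, ${}_2F_1(-k,b;c;z)=\frac{(b)_k(-z)^k}{(c)_k}\,{}_2F_1(-k,1-c-k;1-b-k;1/z)$. This yields $\frac{(m+k-1)_k2^k}{(n+m)_k}\,{}_2F_1(-k,1-n-m-k;2-m-2k;1/2)$.
\item Apply Pfaff once more at $z=1/2$. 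This produces a factor $2^{-k}$ and the new argument $z/(z-1)=-1$. The new upper parameter is $(2-m-2k)-(1-n-m-k)=n-k+1$.
\end{enumerate}
After step 3 the right-hand side is exactly $\frac{(m+k-1)_k}{(m+n)_k}{}_2F_1(-k,n-k+1;2-m-2k;-1)=T_k$. One point needs care: the lower parameter $2-m-2k$ is a negative integer. The reversal formula must therefore be applied with the convention that the series terminates at $j=k$ before any division by zero occurs. Since $|2-m-2k|\ge k$ for $m\ge 2$, this is fine.

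\textbf{Integral form.} By \autoref{th:P2_hypergeometric}, $P_2=\sum_{k=0}^nT_k$. I write each Beta moment as an integral, using $1/B(n-k+1,m+k-1)=(n+m-1)\binom{n+m-2}{n-k}$:
\begin{equation*}
P_2=(n+m-1)\int_0^1(1-x)^{m-2}\sum_{k=0}^n\binom{n+m-2}{n-k}x^{n-k}\big[(1-x)(1-2x)\big]^k\,dx .
\end{equation*}
The inner sum is a truncated binomial sum, so it cannot be summed directly. Instead I would pass to trigonometric variables. Substitute $x=\sin^2\theta$, so that $1-x=\cos^2\theta$, $1-2x=\cos2\theta$ and $dx=2\sin\theta\cos\theta\,d\theta$. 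Separately, expand the target integrand as
\begin{equation*}
\cos^{N}\theta\,\sin((n+1)\theta)=\operatorname{Im}\!\big[\cos^N\theta\,e^{i(n+1)\theta}\big],\qquad N=n+m-2 .
\end{equation*}
Writing $e^{i(n+1)\theta}=e^{i\theta}(\cos\theta+i\sin\theta)^{n}$ turns the target into a finite binomial expansion in $\cos\theta$ and $\sin\theta$. The plan is to match the two sides after integration term by term. Concretely, I would compare both as finite combinations of the integrals $\int_0^{\pi/2}\cos^a\theta\sin^b\theta\,d\theta=\tfrac12B\big(\tfrac{b+1}2,\tfrac{a+1}2\big)$. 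As a fallback, I would show that both sides satisfy the same recursion in $n$, or the same generating function in an auxiliary variable $t$ summed over $n$. Both checks are mechanical once the right form is found.

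\textbf{Main obstacle.} I expect this matching of the truncated binomial sum with $\sin((n+1)\theta)$ to be the hard step. The Beta-moment identity for $T_k$ is routine bookkeeping with Pochhammer symbols by comparison. The first thing I would try is the generating function in $n$, since on the trigonometric side $\sum_n t^n\sin((n+1)\theta)$ is an elementary rational function of $t$ and $e^{i\theta}$.
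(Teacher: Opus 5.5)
Your Beta-moment identity for $T_k$ is correct. The chain of Pfaff at $z=2$, then reversal, then Pfaff at $z=1/2$ checks out, and your caveat about the negative-integer lower parameter $2-m-2k$ is the right one. It is a detour compared with the paper. The paper rewrites $(2-m-2k)_j=(-1)^j(m+2k-j-1)_j$ to get $T_k=\sum_{j}(-1)^j\binom{k}{j}(n-k+1)_j(m+k-1)_{k-j}/(n+m)_k$. It then recognizes each term as $\mathbb{E}[X^j(1-X)^{k-j}]$, i.e.\ the binomial expansion of $((1-X)-X)^k$.

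The genuine gap is the second claim: the integral formula is never established. You stop at a list of strategies (term-by-term matching after $x=\sin^2\theta$, a recursion in $n$, a generating function). None is carried out, and none is routine from the representation you set up. Both Beta parameters depend on $k$, so keeping the density in $x$ leaves the truncated binomial sum over $k$ that you cannot close. Also, $x=\sin^2\theta$ does not turn your integrand into $\cos^{n+m-2}\theta\,\sin((n+1)\theta)$. The two integrands are in general different functions with equal integrals, so ``matching'' would need a double-sum identity you have not identified. In the paper's derivation the angle comes from a different integration variable altogether.

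The missing idea is to decouple $k$ from the integration variable. Expand $(1-2X)^k$ in raw moments, $\mathbb{E}[X^j]=\binom{n-k+j}{j}/\binom{n+m-1+j}{j}$ for $X\sim\mathrm{Beta}(n-k+1,m+k-1)$, and note that $k$ enters only the numerator. Exchange the sums over $k$ and $j$. The inner $k$-sum is the convolution $\sum_{k=j}^{n}\binom{k}{j}\binom{n-k+j}{j}=\binom{n+j+1}{2j+1}$, which gives $P_2(m,n)=\sum_{j=0}^n(-2)^j\binom{n+j+1}{2j+1}/\binom{n+m-1+j}{j}$.

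Now introduce a \emph{fresh} Beta integral, $1/\binom{N+j}{j}=N\int_0^1 t^j(1-t)^{N-1}\,dt$ with $N=n+m-1$, whose parameters no longer involve $k$. Sum under the integral using the Chebyshev expansion $\sum_{j=0}^n\binom{n+j+1}{2j+1}(-2t)^j=U_n(1-t)=\sin((n+1)\theta)/\sin\theta$ with $t=1-\cos\theta$. Since $dt=\sin\theta\,d\theta$ and $(1-t)^{N-1}=\cos^{n+m-2}\theta$, the stated integral over $[0,\pi/2]$ follows immediately.
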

\begin{proof}

    Using the Pochhammer identity $(2-m-2k)_j = (-1)^j (m+2k-j-1)_j$,
    we express the terms $T_k$, defined in \autoref{eq:Tk_final},  as 
    \begin{align}
       T_k &= 
        \frac{(m+k-1)_{k}}{(m+n)_k} {}_2 F_1(-k,n-k+1;2-m-2k;-1)\\
        & =  \frac{(m+k-1)_{k}}{(m+n)_k} \sum_{j=0}^k\binom{k}{j}\frac{(-1)^j(n-k+1)_j}{(m+2k-j-1)_j} \label{eq:secondToLastStepTkToBeta}\\
        & = \sum_{j=0}^k(-1)^j \binom{k}{j} \frac{(n-k+1)_j(m+k-1)_{k-j}}{(m+n)_k} \label{eq:lastStepTkToBeta}\;,
    \end{align}
    where the last step follows from 
    \begin{equation}
        \frac{(m+k-1)_{k}}{(m+2k-j-1)_j} = (m+k-1)_{k-j}\;.
    \end{equation}
    From definition of the Beta function $B(\alpha, \beta) = \frac{\Gamma(\alpha)\Gamma(\beta)}{\Gamma(\alpha+\beta)}$,
    we obtain the following identity:
    \begin{equation}\label{eq:betaPoch}
        \frac{(\alpha)_j(\beta)_{k-j}}{(\alpha + \beta)_k} = \frac{B(\alpha + j, \beta+k-j)}{B(\alpha, \beta)}.
    \end{equation}
    Moreover, the Beta function admits an integral definition as 
    \begin{equation}\label{eq:betaInt}
        B(\alpha, \beta) = \int_0^1 x^{\alpha - 1}(1-x)^{\beta - 1}dx.
    \end{equation}
    We set $a = n-k+1$ and $b = m+k-1$, such that $a +b = m+n$. Thus, plugging \autoref{eq:betaPoch} and \autoref{eq:betaInt} into \autoref{eq:lastStepTkToBeta} yields
    \begin{align}
        T_k & = \frac{1}{B(a, b)} \sum_{j=0}^{k}(-1)^j \binom{k}{j}\int_0^1x^{a+j-1}(1-x)^{n+k-j-1} dx\\
        & = \frac{1}{B(a, b)} \int_0^1x^{a-1}(1-x)^{b-1} \sum_{j=0}^{k} \binom{k}{j} (-1)^jx^k (1-x)^{k-j} dx \\
        & = \frac{1}{B(a, b)}  \int_0^1 x^{a-1}(1-x)^{b-1} (1-2x)^k dx \\
        & = \e[X_n \sim \text{Beta}(a, b)]{(1-2X_n)^k}\;,
    \end{align}
    where the first step exchange the sum and the integral, the second step uses the binomial expansion of $(1-2x)^k$, and the last step exploits the probability density function of the Beta distribution. We keep track of $n$ by considering the random variable $X_n$ explicitly.

    Recall that we expressed $P_2(m)$ as 
    \begin{equation}
        P_2(m, n) = \sum_{k=0}^n T_k\;.
    \end{equation}

    Moreover,  moments of a Beta-distributed random variable are given by \cite{johnson_continuous_1994}
    \begin{equation}
        \e[X \sim \text{Beta}(a,b)]{X^j} = \frac{\binom{a-1+j}j}{\binom{a+b-1+j}j}\;.
    \end{equation}
    Thus, we obtain 
    \begin{align}
        P_2(m, n)
            & =\sum_{k=0}^n\e[X \sim \text{Beta}(a,b)]{(1-2X)^j} \\
            & =\sum_{k=0}^n\sum_{j=0}^k(-2)^j\binom kj \e[X \sim \text{Beta}(a,b)]{X^j} \\
            & = \sum_{k=0}^n\sum_{j=0}^k(-2)^j\binom kj \frac{\binom{n-k+j}{j}}{\binom{n + m-1+j}{j}} \\
            & =\sum_{j=0}^n(-2)^j\frac{1}{\binom{n + m-1+j}{j}}\sum_{k=j}^n \binom kj \binom{n-k+j}{j} \\
            & =\sum_{j=0}^n(-2)^j\frac{\binom{n+j+1}{2j+1}}{\binom{n + m-1+j}{j}}\label{eq:lastStepFirstSN}\;,
    \end{align}
    where the last step follows from Vandermonde's identity \cite{nationalinstituteofstandardsandtechnology_nist_2010}.
    As 
    \begin{equation}
    \frac{1}{\binom{N+j}{j}} = N \int_0^1 t^j (1-j)^{N-1} dt\;,
    \end{equation}
    setting $N = n + m - 1$ in \autoref{eq:lastStepFirstSN} yields
    \begin{equation}\label{eq:SnIntegralLast}
        P_2(m, n) = N \int_0^1 dt (1-j)^{N-1} \sum_{j=0}^n(-2t)^j\binom{n+j+1}{2j+1}\;.
    \end{equation}
    Using the identity 
    \begin{equation}
        \sum_{j=0}^n \binom{n+j+1}{2j+1}\left(x + \frac{1}{x} - 2\right)^j 
            = \frac{x^{n+1} - \frac{1}{x^{n+1}}}{x - \frac{1}{x}}\;,
    \end{equation}
    we set $x = e^{\imath \theta}$ giving $(x  + \frac{1}{x} - 2)^j = 2(\cos\theta - 1)$. Thus, when $t = 1 - \cos \theta$ we have $x + \frac{1}{x} - 2 = -2t$, and using Chebyshev's polynomial expansion via hypergeometric function, we obtain 
    \begin{equation}
        \sum_{j=0}^n\binom{n+j+1}{2j+1}(-2t)^j=\frac{x^{n+1}-\frac{1}{x^{n+1}}}{x-\frac{1}{x}}=\frac{\sin((n+1)\theta)}{\sin(\theta)}\;.
    \end{equation}
    We therefore make the change of variable $t = 1 - \cos\theta$ in \autoref{eq:SnIntegralLast}, thus $dt = \sin \theta d\theta$ and we obtain our final expression for $P_2(m, n)$ as
    \begin{equation}
        P_2(m, n) = (n+m-1)\int_0^{\pi/2} d\theta \cos(\theta)^{n+m-2} \sin((n+1)\theta)\;.
    \end{equation}
\end{proof}

Having established the integral form of the normalized average outcome collision probability, we are now ready to prove (a slightly more general version of) \autoref{thm:asymptoticP2}:

\begin{theorem}[Scaling  of normalized average outcome collision probability]\label{th:asymptotoic_P2_app}
Let $n \in \mathbb N$ and fix $m = cn^\beta\ge n$ for constants $c>0$ and $\beta \geq 1$.  Then the normalized average outcome-collision probability $P_2(m,n)$ satisfies the following asymptotic behaviour.

\begin{enumerate}

\item In the dilute regime with $\beta > 2$, the collision probability grows linearly with $n$, i.e.
\begin{equation}
    P_2(m,n) \sim n + o(1);.
\end{equation}

\item In the quadratic regime corresponding to $\beta = 2$, the scaling remains linear in $n$ with a constant prefactor depending on the constant $c$, i.e.
\begin{align}
    P_2(m,n) \sim \sqrt{2c}\,D_+\!\left(\frac{1}{\sqrt{2c}}\right)n + o(1)\;,
\end{align}
where $D_+$ denotes the Dawson function \cite{ARMSTRONG196761} defined as 
\begin{equation}
     D_+(y) = \frac{1}{2} \int_0^{+ \infty} dt \exp(-\frac{t^2}{4})\sin(yt)\;.
\end{equation}

\item In the non-dilute regime with
 $\beta<2$, the collision probability is dominated by the ratio $m/n$, i.e.
\begin{equation}\label{eq:mn}
    P_2(m,n)=\frac{m}{n}+1+O(n^{-\beta'/2})\;,
\end{equation}
for any $0<\beta'<\beta$.
In particular, in the linear regime $m=cn$, for any $0<\delta<1$, it holds that 
\begin{equation}\label{eq:P2mIsCn_app}
    P_2(m,n) =  c + 1 +  O(n^{-\delta/2})\;.
\end{equation}
\end{enumerate}
\end{theorem}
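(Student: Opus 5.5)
The plan is to work directly from the integral representation of \autoref{pr:TkExpBeta},
\begin{equation*}
P_2(m,n) = N\int_0^{\pi/2} d\theta\, \cos(\theta)^{N-1}\sin((n+1)\theta),\qquad N:=n+m-1,
\end{equation*}
and carry out a Laplace-type analysis. Since $N\ge 2n-1\to\infty$, the weight $\cos(\theta)^{N-1}$ concentrates at $\theta=0$ on the scale $\theta\sim N^{-1/2}$.

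The first step is to localize. I would split the integral at $\theta_0=N^{-1/2+\epsilon}$. On $[\theta_0,\pi/2]$ the bound $\cos\theta\le e^{-\theta^2/2}$ gives a contribution at most $N e^{-(N-1)\theta_0^2/2}$, which is superpolynomially small. On $[0,\theta_0]$ I would write $\cos(\theta)^{N-1}=e^{-(N-1)\theta^2/2}\,(1+O(N\theta^4))$. Substituting $\theta=t/\sqrt{N-1}$ then reduces the problem to
\begin{equation*}
P_2 \approx \frac{N}{\sqrt{N-1}}\int_0^\infty e^{-t^2/2}\sin(y t)\,dt,\qquad y=\frac{n+1}{\sqrt{N-1}},
\end{equation*}
which is, up to rescaling, the Dawson function. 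Precisely, $\int_0^\infty e^{-t^2/2}\sin(yt)\,dt=\sqrt2\,F(y/\sqrt2)$ with $F$ the standard Dawson integral, which equals $D_+$ in the paper's normalization after a variable change. I will need to check the constants carefully against the stated $\sqrt{2c}\,D_+(1/\sqrt{2c})$.

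The regimes are then distinguished by the size of $y\approx n/\sqrt{m}$.
\begin{itemize}
\item For $\beta>2$, $y\to 0$. Using $\sin(yt)\approx yt$ gives $P_2\approx \frac{N}{N-1}(n+1)$, so $P_2\sim n$.
\item For $\beta=2$, $y\to 1/\sqrt{c}$ is constant, and the prefactor $N/\sqrt{N-1}\sim n\sqrt{c}$ yields the stated multiple of $n$ involving $D_+(1/\sqrt{2c})$.
\item For $\beta<2$, $y\to\infty$. The Dawson asymptotic $F(z)=\frac{1}{2z}+\frac{1}{4z^3}+\dots$ gives $\int_0^\infty e^{-t^2/2}\sin(yt)\,dt=1/y+1/y^3+O(y^{-5})$, so $P_2\approx \frac{N}{\sqrt{N-1}}\cdot\frac{\sqrt{N-1}}{n+1}=\frac{N}{n+1}=\frac{m+n-1}{n+1}$, which is $\frac{m}{n}+1+O(\text{small})$.
\end{itemize}
Note that $N/(n+1)=(m+n-1)/(n+1)$ matches $m/n+1$ only up to $O(m/n^2)$. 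That discrepancy is $o(1)$ exactly when $\beta<2$, consistent with the claim.

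For $\beta<2$ I would avoid the Gaussian approximation, because its relative error $N\theta^4$, multiplied against the oscillatory integrand, is delicate. Instead I would integrate by parts in $\sin((n+1)\theta)$. Using $\int_0^{\pi/2}\cos^{N-1}\theta\,\sin(k\theta)\,d\theta$ with boundary terms, the first boundary term at $\theta=0$ gives exactly $N/(n+1)$. The remaining terms involve derivatives of $\cos^{N-1}$, each carrying a factor $N/k^2\sim m/n^2$ times a bounded Laplace integral. This yields $P_2=\frac{N}{n+1}\bigl(1+O(m/n^2)\bigr)$.

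The hardest step is the error term: obtaining the stated $O(n^{-\beta'/2})$, and $O(n^{-\delta/2})$ in the linear case, uniformly. My plan is to control the oscillatory remainder after one integration by parts by bounding $\int|\frac{d}{d\theta}\cos^{N-1}\theta|\,d\theta=1$ together with the localization cutoff $\theta_0$. The loss of an arbitrary $\epsilon$ from the cutoff is what produces the $\beta'<\beta$ and $\delta<1$ slack in the statement.
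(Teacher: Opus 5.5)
Your treatment of $\beta\ge 2$ is essentially the paper's argument: a cutoff near $\theta=0$, Gaussian replacement, then the Dawson function. The constants match: $\frac{N}{\sqrt{N-1}}\sqrt2\,F(y/\sqrt2)$ with $F=D_+$ and $y\to 1/\sqrt c$ gives $\sqrt{2c}\,D_+(1/\sqrt{2c})\,n$, read as asymptotic equivalence, as the $\sim$ indicates.

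The gap is in the regime $\beta<2$, and it occurs in two places.

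\textbf{The remainder bound is vacuous.} Let $f=\cos^{N-1}$ and $k=n+1$. One integration by parts gives $P_2=\frac Nk+\frac Nk\int_0^{\pi/2}f'(\theta)\cos(k\theta)\,d\theta$. Using $\int_0^{\pi/2}|f'|=1$ only bounds the remainder by $N/k$, which is the size of the main term, so the oscillation has to be exploited again. Integrate by parts three more times. The boundary terms at $\pi/2$ vanish for $N\ge 5$, because $f$ has a zero of order $N-1$ there, and $f''(0)=-(N-1)$. This yields the exact identity
\[
P_2=\frac{N}{k}+\frac{N(N-1)}{k^3}+\frac{N}{k^4}\int_0^{\pi/2}f^{(4)}(\theta)\sin(k\theta)\,d\theta,\qquad \int_0^{\pi/2}\bigl|f^{(4)}\bigr|\,d\theta=O(N^{3/2}).
\]

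\textbf{The relative error is not an additive $o(1)$.} The identity shows that your correct relative error $O(m/n^2)$ is an \emph{additive} error of order $\frac Nk\cdot\frac{N}{k^2}\approx m^2/n^3=c^2n^{2\beta-3}$. This is exactly the $1/y^3$ term of your own expansion $\int_0^\infty e^{-t^2/2}\sin(yt)\,dt=1/y+1/y^3+\dots$, multiplied by the prefactor $N/\sqrt{N-1}$, and you dropped it. The Gaussian replacement you wanted to avoid is actually harmless, since its error is $O(N^{-1/2})$. So the Dawson route and your integration-by-parts route agree on this extra term.

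\textbf{Consequence for item 3.} For $1<\beta<2$ one has $P_2=\frac mn+1+c^2n^{2\beta-3}\bigl(1+o(1)\bigr)$, so item 3 cannot be proved as stated:
\begin{itemize}
\item the claimed rate $O(n^{-\beta'/2})$ fails as soon as $\beta>6/5$;
\item the deviation from $\frac mn+1$ tends to $c^2\neq 0$ at $\beta=3/2$;
\item for $3/2<\beta<2$ the deviation diverges, and only $P_2\sim m/n$ survives.
\end{itemize}

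\textbf{The paper's proof has the same omission.} It replaces $D_+(y)$ by $1/(2y)$ and discards the $O(y^{-3})$ remainder without multiplying it by the prefactor $2(n+m-1)/\sqrt{2(n+m-2)}\sim\sqrt{2m}$. It also identifies $\frac{n+m-1}{n+1}$ with $\frac mn+1$, although they differ by about $m/n^2$; you spotted that discrepancy correctly.

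\textbf{What your route does prove.} Your integration-by-parts route needs no cutoff, so the $\epsilon$-slack you attribute to the cutoff does not arise. From the display it proves cleanly:
\begin{itemize}
\item in the linear regime, $P_2=c+1+O(1/n)$, which is stronger than the claimed $O(n^{-\delta/2})$;
\item more generally, $P_2=\frac mn+1+O(n^{2\beta-3})$ for $1\le\beta<3/2$.
\end{itemize}
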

\begin{proof}
  
   We invoke the integral form of $P_2(n,m)$ established in \autoref{pr:TkExpBeta}.
    To find its limit, we split the interval of integration for some $\delta_n$ dependent on $n$ as 
    \begin{equation}
        P_2(m, n) = (n+m-1)\left[\int_0^{\delta_n} d\theta \cos(\theta)^{n+m-2} \sin((n+1)\theta) + \int_{\delta_n}^{\pi / 2} d\theta \cos(\theta)^{n+m-2} \sin((n+1)\theta)\right]\;.
    \end{equation}
    It holds that 
    \begin{equation}
        (n+m-1)\int_{\delta_n}^{\pi / 2} d\theta \cos(\theta)^{n+m-2} \sin((n+1)\theta) \leq \frac{\pi}{2}(n+m-1)\cos(\theta)^{n+m-2}\;,
    \end{equation}
    and moreover, the Taylor-Maclaurin series expansion of cosine is
    $\cos(\delta_n)  = 1 - \frac{1}{2}\delta_n^2 + O(\delta_n^4)$, thus
    \begin{equation}
        (\cos\delta_n)^{n+m-2}=\exp((n+m-2)\log(\cos\delta_n))=\exp(-\frac12(n+m-2)\delta_n^2+O(n^\beta \delta_n^4))\;.
    \end{equation}
    Picking $\delta_n = n^{-\alpha}$ with $\alpha \in (\frac{1}{4}, \frac{1}{2})$ ensures that 
    \begin{equation}
        \frac{\pi}{2}(n+m-1)\cos(\theta)^{n+m-2}
        \xrightarrow[n \to \infty]{}0\;.
    \end{equation}
    On the other hand, denote by
    \begin{equation}
        I_n=(n+m-1)\int_0^{\delta_n}d\theta(\cos\theta)^{n+m-2}\sin((n+1)\theta)=(n+m-1)\int_0^{\delta_n}d\theta\exp((n+m-2)\log(\cos\theta))\sin((n+1)\theta)\;.
    \end{equation}
    For $\theta \in [0, \delta_n]$, it holds that using cosine series expansion that
    \begin{align}
        \exp((n+m-2)\log(\cos\theta))
        & =\exp(-\frac12(n+m-2)\theta^2+O(n^\beta\theta^4)) \\
        & =\exp(-\frac12(n+m-2)\theta^2)\exp(O(n^\beta\theta^4))\\
        & =\exp(-\frac12(n+m-2)\theta^2)(1+O(n^\beta\theta^4))\;,
    \end{align}
    thus we obtain
    \begin{align}
        I_n
        & = (n+m-1)\int_0^{\delta_n}d\theta\exp(-\frac12(n+m-2)\theta^2)(1+O(n\theta^4))\sin((n+1)\theta)\\
        & = (n+m-1)\int_0^{\delta_n}d\theta\exp(-\frac12(n+m-2)\theta^2)\sin((n+1)\theta)+O(n^{2\beta}\delta_n^5)\;.
    \end{align}
    Taking $\delta_n=n^{-\alpha}$ with $\alpha\in(\frac{2\beta}5,\frac\beta2)$ ensures that the error term goes to $0$ as $n$ goes to infinity. Making the change of variable $x=\theta\sqrt{2(n+m-2)}$, we have 
    \begin{equation}
        I_n=\frac{n+m-1}{\sqrt{2(n+m-2)}}\int_0^{\delta_n\sqrt{2(n+m-2)}}dx\exp(-\frac14x^2)\sin(\frac{(n+1)x}{\sqrt{2(n+m-2)}})\;.
    \end{equation}
    Since $\delta_n=n^{-\alpha}$ with $\alpha\in(\frac{2\beta}5,\frac\beta2)$, we have $\lim_{n \to \infty} \delta_n\sqrt{2(n+m-2)} = \infty$, and
    \begin{equation}\label{eq:upperBoundErrExp}
        \left|\int_{\delta_n\sqrt{2(n+m-2)}}^{+\infty}dx\exp(-\frac14x^2)\sin(\frac{(n+1)x}{\sqrt{n+m-2}})\right|\le\exp(-\frac14(n+m-2)\delta_n^2)\int_{\delta_n\sqrt{2(n+m-2)}}^{+\infty}dx\exp(-\frac18x^2)\;.
    \end{equation}
    Hence, $I_n$ reads
    \begin{equation}\label{eq:InSeeDawson}
        I_n=\frac{n+m-1}{\sqrt{2(n+m-2)}}\int_0^{+\infty}dx\exp(-\frac14x^2)\sin(\frac{(n+1)x}{\sqrt{2(n+m-2)}}) + O(n^{2\beta}\delta_n^5)\;,
    \end{equation}
     up to the error term of \autoref{eq:upperBoundErrExp} which goes exponentially fast to $0$.
     We recognise in \autoref{eq:InSeeDawson} the Dawson function $D_+$ \cite{ARMSTRONG196761} defined as 
     \begin{equation}
         D_+(y) = \frac{1}{2} \int_0^{+ \infty} dt \exp(-\frac{t^2}{4})\sin(yt)\;,
     \end{equation}
     whose asymptotic expansion reads
     \begin{equation}\label{eq:expDawsonOrigin}
         D_+(y) = y + O(y^3)\;,
     \end{equation}
     near the origin and 
     \begin{equation}\label{eq:expDawsonInfty}
         D_+(y) = \frac{1}{2y} + O\left(\frac{1}{y^3}\right)\;,
      \end{equation}
    for large $y$.
     In particular, 
     \begin{equation}
         I_n = \frac{2(n+m-1)}{\sqrt{2(n+m-2)}}D_+\left(\frac{(n+1)}{\sqrt{2(n+m-2)}}\right) +O(n^{2\beta}\delta_n^5)\;. 
     \end{equation}
     When $\beta < 2$, we conclude using the asymptotic expansion of the Dawson function at large values (see \autoref{eq:expDawsonInfty}), giving
     \begin{equation}
         I_n=\frac{2(n+m-1)}{\sqrt{2(n+m-2)}}\frac{\sqrt{2(n+m-2)}}{2(n+1)}+O(n^{2\beta}\delta_n^5)=\frac mn+1+O(n^{2\beta}\delta_n^5)\;,
     \end{equation}
     which is valid for any $\delta_n = n^{-\alpha}$ with $\alpha \in (\frac{2\beta}5,\frac\beta2)$. In particular, 
     taking $\alpha$ as close as possible to $\beta / 2$ shows that the error term is $O(n^{-\beta '/2})$ for any $\beta '<\beta$.
     
     When $\beta > 2$, we conclude using the asymptotic expansion of the Dawson function at the origin (see \autoref{eq:expDawsonOrigin}), giving
     \begin{equation}
         I_n = \frac{(n+m-1) (n+1)}{(n+m-2)} + O (n^{2\beta}\delta_n^5) \sim n\;.
     \end{equation}
     Finally, when $\beta = 2$, setting $m=cn^2$ yields
    \begin{equation}
        I_n = \frac{2(n+c n^2-1)}{\sqrt{2(n+c n^2-2)}}D_+\left(\frac{(n+1)}{\sqrt{2(n+c n^2-2)}}\right) +O(n^{2\beta}\delta_n^5) \sim \sqrt{2c}D_+ \left(\frac{1}{\sqrt{2c}}\right)n\;.
    \end{equation}
\end{proof}

\begin{corollary}[Anti-concentration beyond the dilute regime]\label{cor:ac_app}
Let $n \in \mathbb N$ and for constants $c\geq 1$ and $1 \leq \beta < 2$, fix $m = cn^\beta$. In this regime, the output distribution of a boson sampler is anti-concentrated in the asymptotic limit, i.e.

\begin{equation}
    \Pr_{\substack{U \sim U(m)\\ S \sim \Phi_m^n}}
    \left[
    p_U(S) \geq \frac{\alpha}{|\Phi_m^n|}
    \right] \geq (1-\alpha)^2 \frac{1}{1+ m/n}\;.
\end{equation}
    
\end{corollary}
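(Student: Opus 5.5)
The plan is to apply the Paley--Zygmund inequality to the random variable $Z := p_U(S)$, where the randomness is the joint law of $U \sim U(m)$ (Haar) and $S \sim \Phi_m^n$ (uniform). The asymptotics of $P_2(m,n)$ from \autoref{th:asymptotoic_P2_app} will then be plugged in. Three steps: compute the first moment, identify the second moment with $P_2$, and invoke the scaling theorem.

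\emph{First moment.} For every fixed $U$, $\sum_{S\in\Phi_m^n} p_U(S) = 1$, because $\varphi_n(U)$ is unitary on $\mathcal{F}_n$ and the Fock projectors $\ketbra{S}{S}$ resolve the identity there. Hence $\mathbb{E}[Z] = 1/|\Phi_m^n|$ exactly, for every $U$, with no Haar averaging needed.

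\emph{Second moment.} By definition,
\begin{equation}
\mathbb{E}[Z^2] = \frac{1}{|\Phi_m^n|}\sum_S \mathbb{E}_U[p_U(S)^2] = \frac{P_2(m,n)}{|\Phi_m^n|^2}.
\end{equation}
This gives \autoref{eq:paley_back}: $\mathbb{E}[Z]^2/\mathbb{E}[Z^2] = 1/P_2(m,n)$.

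Since $Z \geq 0$, Paley--Zygmund applies for $\alpha\in(0,1)$:
\begin{equation}
\Pr[Z \geq \alpha\,\mathbb{E}Z] \geq (1-\alpha)^2\,\frac{\mathbb{E}[Z]^2}{\mathbb{E}[Z^2]},
\end{equation}
and the event $Z\ge \alpha\,\mathbb{E}Z$ is exactly $p_U(S)\geq \alpha/|\Phi_m^n|$. This yields the non-asymptotic bound $\Pr \geq (1-\alpha)^2/P_2(m,n)$, valid for all $m,n$. Substituting the exact formula of \autoref{th:P2_hypergeometric} keeps it exact.

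\emph{Asymptotic step.} For $m = cn^\beta$ with $1\leq\beta<2$ and $c\geq 1$, the condition $m \geq n$ holds, so case 3 of \autoref{th:asymptotoic_P2_app} applies and gives $P_2(m,n) = m/n + 1 + O(n^{-\beta'/2})$. Therefore
\begin{equation}
\Pr \geq (1-\alpha)^2\,\bigl(1+m/n+o(1)\bigr)^{-1},
\end{equation}
which is the stated bound in the asymptotic limit, in the form written in \autoref{cj:scalingP}.

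The main obstacle is that the corollary writes $(1-\alpha)^2/(1+m/n)$ with no $o(1)$ correction. Dropping it is only legitimate if $P_2(m,n)\le 1+m/n$ eventually, or if the statement is read asymptotically. I would first try to read the sign of the error off the Dawson expansion, $D_+(y) = \tfrac{1}{2y} + \tfrac{1}{4y^3} + \dots$, whose next term is positive, suggesting $P_2$ slightly exceeds $1+m/n$. Since that is the wrong sign for an exact bound, I would keep the $(1+o(1))$ factor, matching \autoref{eq:final_ac}. Equivalently, the stated inequality holds with $\alpha$ replaced by any slightly smaller constant for $n$ large enough.

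Finally, I would note what the bound gives in each regime. For $\beta=1$ it is the constant $\gamma(\alpha) = (1-\alpha)^2/(c+1)$, i.e.\ standard anti-concentration. For $1<\beta<2$ it is $\Theta(n^{1-\beta})$, i.e.\ weak anti-concentration in the sense of \autoref{def:anticon_def}.
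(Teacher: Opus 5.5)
Your argument follows the paper's proof. You apply Paley--Zygmund to $p_U(S)$ under the joint law of $(U,S)$, use first moment $1/|\Phi_m^n|$ and ratio of moments $1/P_2(m,n)$, then invoke \autoref{th:asymptotoic_P2_app}. The one difference in the moment step is how you get the first moment: from $\sum_S p_U(S)=1$ for every fixed $U$. The paper instead twirls over $U$ for each fixed $S$, using irreducibility of $\varphi_n$. Your route is more elementary and gives the same joint moment.

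You depart from the paper at the last line, and you are right to. The appendix proof replaces $(1-\alpha)^2/P_2(m,n)$ by $(1-\alpha)^2/(1+m/n)$ without comment, and this step is not justified. Integrating the integral form of \autoref{pr:TkExpBeta} by parts against $\sin((n+1)\theta)$ four times (the boundary terms at $\pi/2$ vanish) gives
\[
P_2(m,n)=\frac{n+m-1}{n+1}+\frac{(n+m-2)(n+m-1)}{(n+1)^3}+O\Bigl(\frac{(n+m)^{5/2}}{n^{4}}\Bigr).
\]
For $m=cn$ this gives $P_2=c+1+\frac{c^2+c-1}{n}+O(n^{-3/2})$, which exceeds $c+1$ for large $n$; numerically, already $P_2(10,10)\approx 2.16>2$. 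So only the version with the $o(1)$, as in \autoref{eq:final_ac}, follows.

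Your sign heuristic, however, is incomplete. The leading term $\frac{n+m-1}{n+1}$ is itself below $\frac mn+1$, by $\frac{m+2n}{n(n+1)}$; the paper's step equating the two is not exact. That deficit is of the same order $1/n$ as the $\frac{1}{4y^3}$ Dawson contribution, so the sign of the correction comes from comparing $(c+1)^2$ with $c+2$, not from the Dawson term alone.

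Two further precisions. First, prefer the multiplicative form $P_2=(1+m/n)(1+o(1))$ to the additive one you display. For $3/2\le\beta<2$, the same $y^{-3}$ term contributes $\Theta(c^2 n^{2\beta-3})$, which does not vanish. So case 3 of \autoref{th:asymptotoic_P2_app} claims a vanishing additive error that fails in that range. The relative correction, $O(m/n^2)$, does tend to zero for every $\beta<2$, so the multiplicative form holds throughout. Second, your ``slightly smaller $\alpha$'' reformulation is correct only if you lower the threshold in the event (equivalently, raise the $\alpha$ in $(1-\alpha)^2$). Lowering $\alpha$ on both sides enlarges the right-hand side, which Paley--Zygmund does not give once $P_2>1+m/n$.
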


\begin{proof}
    The proof is primarily based on invoking the Paley–Zygmund inequality given by 
\begin{equation}\label{eq:PZ_app_proof}
     \Pr_{\substack{U \sim U(m)\\ S \sim \Phi_m^n}}
    \left[
    p_U(S) \geq \alpha \mathbb{E}[ p_U(S)]
    \right] \geq (1-\alpha)^2 \frac{\mathbb{E}[ p_U(S)]^2}{\mathbb{E}[ p_U(S)^2]}\;,
\end{equation}
where expectations are taken over $U \sim U(m)$ and $S \sim \Phi_m^n$.

Let us focus on showing that the ratio of first and second moments appearing in the right hand side of \autoref{eq:PZ_app_proof}.
The first moment of output probabilities over Haar random unitaries and uniformly samples configurations reads 
as
\begin{align}
    \mathbb{E}[ p_U(S)] = \frac{1}{|\Phi_m^n|} \sum_{S \in \Phi_m^n}\mathbb{E}_{U\sim U(m)}[ p_U(S)]\;.
\end{align}

For any outcome $S \in \Phi_m^n$, the first moment of its output probabilities can be easily computed by invoking the twirl theorem which we recall in \autoref{th:G_twirl} and the fact that the irreducibility of  representation $\varphi_n(U)$ . 
We obtain 
\begin{align}\label{eq:first_moment}
    \mathbb{E}_{U\sim U(m)}[ p_U(S)] = \mathbb{E}_{U\sim U(m)}[ \Tr[\ketbra{R_{\rm cf}}{R_{\rm cf}} \varphi_n^\dagger(U) \ketbra{S}{S} \varphi_n(U)]]= \frac{\Tr[(\ketbra{R_{\rm cf}}{R_{\rm cf}})^2] \Tr[(\ketbra{S}{S})^2]}{|\Phi_m^n|} = \frac{1}{|\Phi_m^n|}\;.
\end{align}

Consequently, the ratio of first and second moments can be expressed in terms of the average outcome collision probability as 
\begin{align}\label{eq:ratio}
    \frac{\mathbb{E}[ p_U(S)]^2}{\mathbb{E}[ p_U(S)^2]} &= \frac{|\Phi_m^n|}{|\Phi_m^n|^2  \sum_{S \in \Phi_m^n}\mathbb{E}_{U\sim U(m)}[ p_U(S)^2]}= \frac{1}{|\Phi_m^n|  \sum_{S \in \Phi_m^n}\mathbb{E}_{U\sim U(m)}[ p_U(S)^2]} := \frac{1}{P_2(m,n)}\;.
\end{align}

Hence, the probability lower bound in \autoref{eq:PZ_app_proof} becomes
\begin{equation}\label{eq:ac_end}
     \Pr_{\substack{U \sim U(m)\\ S \sim \Phi_m^n}}
    \left[
    p_U(S) \geq \frac{\alpha}{|\Phi_m^n|}
    \right] \geq (1-\alpha)^2 \frac{1}{P_2(m,n)}\;.
\end{equation}

Combining \autoref{eq:ac_end} with the asymptotic scaling of $P_2(m,n)$ established in \autoref{th:asymptotoic_P2_app} for the regime $m= c n^\beta$ with $1 \leq \beta <2$ (See \autoref{eq:mn}), we obtain the final result
\begin{equation}
    \Pr_{\substack{U \sim U(m)\\ S \sim \Phi_m^n}}
    \left[
    p_U(S) \geq \frac{\alpha}{|\Phi_m^n|}
    \right] \geq (1-\alpha)^2 \frac{1}{1+ m/n}\;.
\end{equation}

\end{proof}

\end{document}